\newcommand{\arrs}[2]{#1_{1\ldots #2}}
\newcommand{\arrsP}[3]{#1_{#3_1\ldots #3_#2}}
\newcommand{\arrsM}[3]{#1_{#2 \ldots #3}}
\newcommand{\symmze}[2]{\bm{\Sigma}^{(#1)}_{#2}}
\newcommand{\formalPS}[1]{\llbracket #1 \rrbracket}
\newcommand{\symbDistArgs}[2]{\left(#2|#1\right)}
\newcommand{\evaluateP}[2]{\mathfrak{e}_{#1}^{#2}}
\newcommand{\evaluatePD}[1]{\mathfrak{E}_{#1}}
\newcommand{\intD}[2]{\mathfrak{I}_{#1}^{#2}}
\newcommand{\extD}[2]{\mathfrak{A}_{#1}^{#2}}
\newcommand{\restrD}[2]{\mathfrak{R}_{#1}^{#2}}
\newcommand{\intDH}[2]{\widehat{\mathfrak{I}}_{#1}^{#2}}
\newcommand{\extDH}[2]{\widehat{\mathfrak{A}}_{#1}^{#2}}
\newcommand{\restrDH}[2]{\widehat{\mathfrak{R}}_{#1}^{#2}}
\newcommand{\restrictH}[2]{\mathcal{R}_{#1}^{#2}}
\newcommand{\MltM}[1]{\mathfrak{M}_{#1}}
\newcommand{\MltMn}[2]{\mathfrak{M}_{#1}^{(#2)}}
\newcommand{\MltS}[1]{\mathfrak{M}^{#1}}
\newcommand{\permK}[2]{\mathfrak{p}^{(#1)}_{#2}}
\newcommand{\DiagM}[1]{{D_{#1}}^*}
\newcommand{\Fourier}[2]{\mathfrak{F}^{#2}_{#1}}
\newcommand{\intSimplex}[1]{J^{(#1)}}
\newcommand{\intUSimplex}[2]{J^{(#1)_{#2}}}
\newcommand{\timeorder}[1]{{\mathrm{\bf{T}}}\left\{#1\right\}}
\newcommand{\TRegion}[1]{\mathcal{T}_{#1}}
\newcommand{\RR}[1]{\mathbb{R}^{#1}}
\newcommand{\SRR}[1]{\mathcal{S}\left(\RR{#1}\right)}
\newcommand{\Mlt}[1]{\mathcal{O_M}\left(\RR{#1}\right)}
\newcommand{\MltT}[1]{\mathcal{O_M}\left(\TRegion{#1}\right)}
\newcommand{\SMltT}[2]{\mathcal{S}\mathcal{O_M}\left(\RR{#1}, \TRegion{#2}\right)}
\newcommand{\SMlt}[2]{\mathcal{SO_M}\left(\RR{#1},\RR{#2}\right)}
\newcommand{\SpRR}[1]{\mathcal{S}'\left(\RR{#1}\right)}
\newcommand{\DS}{\mathcal{D}_{\mathcal{S}}}
\newcommand{\DSS}[2]{\DS^{#1}\SRR{#2}}
\newcommand{\DSMlt}[2]{\DS^{#1}\Mlt{#2}}
\newcommand{\DSSMlt}[3]{\DS^{#1}\SMlt{#2}{#3}}
\newcommand{\LL}{\mathcal{L}}
\newcommand{\LLfirst}{\widecheck{\mathcal{L}}}
\newcommand{\LDS}{\LL(\mathcal{D}_{\mathcal{S}})}
\newcommand{\permNKS}[2]{\mathfrak{p}_{#1}^{*(#2)}}
\newcommand{\symmgr}[1]{\mathfrak{S}_{#1}}
\newcommand{\WightTf}[3]{\mathcal{W}^{(#3)}_{\arrs{#2}{#1}}}
\newcommand{\PWightmanRng}[2]{W^{\mathrm{R}[#2]}_{#1}}
\newcommand{\PWightmanRnag}[3]{W^{\mathrm{R}[#3]}_{#1;\arrs{#2}{#1}}}
\newcommand{\GreenUng}[2]{\mathcal{G}^{[#2]}_{#1}}
\newcommand{\GreenUnag}[3]{\mathcal{G}^{[#3]}_{#1;\arrs{#2}{#1}}}
\newcommand{\WightmanRnagP}[4]{\mathcal{W}^{\mathrm{R}[#3]}_{#1;\arrsP{#2}{#1}{#4}}}
\newcommand{\WightmanRnag}[3]{\mathcal{W}^{\mathrm{R}[#3]}_{#1;\arrs{#2}{#1}}}
\newcommand{\WightmanUnag}[3]{\mathcal{W}^{[#3]}_{#1;\arrs{#2}{#1}}}
\newcommand{\WightmanUng}[2]{\mathcal{W}^{[#2]}_{#1}}
\newcommand{\WightmanRnaV}[3]{\mathcal{W}^{\mathrm{R}(#3)}_{#1;\arrs{#2}{#1}}}
\newcommand{\WightmanRnaVarr}[3]{\mathcal{W}^{\mathrm{R}(\arrs{#3}{#1})}_{#1;\arrs{#2}{#1}}}
\newcommand{\WightmanRnaVarrG}[4]{\mathcal{W}^{\mathrm{R}(\arrs{#3}{#1});#4}_{#1;\arrs{#2}{#1}}}
\newcommand{\PWightmanRnaV}[3]{W^{\mathrm{R}(#3)}_{#1;\arrs{#2}{#1}}}
\newcommand{\PWightmanRnaVarr}[3]{W^{\mathrm{R}(\arrs{#3}{#1})}_{#1;\arrs{#2}{#1}}}
\newcommand{\PWightmanRnaVarrS}[3]{W^{\mathrm{R},\Sigma(\arrs{#3}{#1})}_{#1;\arrs{#2}{#1}}}
\newcommand{\WightmanRnagAd}[3]{\mathcal{W}^{\mathrm{R}[#3]}_{#1;\arrs{#2}{#1}}}
\newcommand{\WightmanUnagAd}[3]{\mathcal{W}^{[#3]}_{#1;\arrs{#2}{#1}}}
\newcommand{\GreenUnagAd}[3]{\mathcal{G}^{[#3]}_{#1;\arrs{#2}{#1}}}
\newcommand{\GreenEUnagAd}[3]{\widetilde{\mathcal{G}}^{[#3]}_{#1;\arrs{#2}{#1}}}
\newcommand{\hotimes}{\widehat{\otimes}}
\newcommand{\tlambda}{\widetilde{\lambda}}
\newcommand{\ii}{\mathrm{i}}
\newcommand{\uwhat}[1]{\underline{\widehat{#1}}}
\numberwithin{equation}{section}
\newtheorem{thm}{Theorem}[section]
\newtheorem{prop}[thm]{Proposition}
\newtheorem{lem}[thm]{Lemma}
\newtheorem{=>}[thm]{Corollary}
\newtheorem{notn}[thm]{Notation}
\theoremstyle{definition} 
\newtheorem{rmk}[thm]{Remark}
\newtheorem{Def}[thm]{Definition}
\newtheorem{exmp}[thm]{Example}
\newcommand{\NN}{\mathbb{N}}
\newcommand{\sign}{\mathrm{sign}}
\title{Ultraviolet finite non-local Hamiltonian perturbation Quantum Field Theories and their weak adiabatic limit}
\author{Aleksei Bykov }
\date{November 2022}
\begin{document}

\maketitle
\begin{abstract}
    The non-local quantum field theories attract interest in the mathematical and physical community as candidates for effective description of the reality taking into account the quantum gravity effects. The standard methods, developed for the conventional local quantum field theories, often are not applicable. In this paper the Hamiltonian approach, originally proposed for quantization of the non-local effective theory inspired by non-commutative geometry of the Doplicher-Fredenhagen-Roberts Quantum Spacetime, is generalized to a large class of ultraviolet finite non-local Hamiltonian perturbative Quantum Field Theories. General properties of such theories are discussed. The weak adiabatic limit existence is proved for the whole class of theories. 
\end{abstract}
\section{Introduction}
Local quantum field theory is a powerful framework to build models of physical reality. In particular, the Standard Model is known for its extremely precise predictions. Its modifications, known as ``Beyond Standard Model Physics", are also local quantum field theories. For the models of quantum gravity, local quantum field theories serve as low-energy effective theories. 
 Despite this success, it is very unlikely that a physical theory including all known interactions can be built within that framework. In particular, it was argued by many authors, starting from the early days of the quantum field theory \cite{Bronstein, Klein}, that quantum gravity can not be local. The physical reason of that is the interplay between Heisenberg uncertainty relations and the role played by the energy and momentum in General Relativity. As a result, any attempt of a too much localized measurement inevitably produces a dramatic change of the geometry such as a black hole formation, making the measurement impossible. Assuming that the main object of the theory is a system of (in principle) measurable observables, we conclude that the language of local QFT is not appropriate. There is a related long-standing conjecture that the gravity-induced non-locality may be a cure of the ultra-violet divergences of local QFT \cite{Deser}, since the later are caused by the singular products of point-localised observables, or, equivalently, by point-localized elementary scatterings.
 \par 
 For this reason, non-local quantum field theories attract interest. In particular, a lot of non-local models are considered as effective theories of quantum field theory on quantum (non-commutative) spacetimes, e.g. \cite{DFR,NekrasovNCQFT}. 
 \par 
\begin{rmk}
 \label{rmk:Intro/TwoLocalities}
 Locality plays a crucial role in both the mathematical and physical aspects of Quantum Field Theory. The meaning of this word is, however, slightly different. In mathematics, in particular in algebraic Quantum Field Theory, by locality one usually means commutativity of the observables related to measurements performed in space-like separated regions of spacetime. In physics one is usually concerned with the locality of the interactions. The precise meaning of this varies from approach to approach, but one expects that the terms describing the interaction (e.g. the higher than quadratic part of the Hamiltonian or Lagrangian) are local non-linear functionals of quantum fields. In the language of Feynman graphs, it may be interpreted as the localization of all ``elementary scatterings" at points of spacetime. These two meanings are not unrelated. For example, in causal perturbation theory the ``physical" locality together with causality implies the ``algebraic one" \cite{pAQFT}. Throughout this paper we by default understand non-locality in the ``physical" sense.    
 \end{rmk}
 The main problem of this area of mathematical physics is the lack of a universal understanding of what a non-local quantum field theory is. The ambiguities arise even at the perturbation theory level.
 In conventional quantum field theory, there are different approaches, involving various mathematical constructions but describing essentially the same physics. This equivalence is lost in the absence of locality, as well as the applicability of the most popular methods just can not be used anymore. For example, the approaches based on the functional integrals, widely used in physics, fail to provide a unitary theory in the non-local case \cite{NekrasovNCQFT}\footnote{In this paper we consider the physical Lorentzian signature of the spacetime metrics only. In the Euclidean signature, for which the functional integral approach, as usual, is much more reasonable and is widely studied, e.g. \cite{EucNCQFT}. Yet, the Wick rotation trick, normally used to pass from one signature to another does not work for non-local theories \cite{BahnsWick} at least in a straightforward way.}.
 Mathematical approaches to perturbative quantum field theory \cite{EG73}, based on the Einstein causality can not be applied directly to non-local theories.
 \par 
 Therefore, less standard ideas turn out to be useful for non-local theories.
 In this paper we deal with the so-called Hamiltonian approach, first suggested in \cite{DFR} and further developed in \cite{BahnsPhD,BahnsEtAl}. The basic idea is to apply the standard Hamiltonian perturbation theory to a non-local interaction part of the Hamiltonian. As usual, it produces a time-ordered product of interacting Hamiltonians, integrated over spacetime (see Section \ref{HpQFT}).
 Unlike the local theories, here the time-ordered product makes sense without UV regularization. The integration over spacetime is instead divergent and has to be regularized through the multiplication of all coupling constants with the adiabatic cut-off function depending on the spacetime coordinates and vanishing far away from the origin. This is a typical trick of perturbative quantum field theory, unavoidable in general due to the Haag theorem \cite{HaagThm}. To obtain a physically meaningful result one has to pass to the limit of the adiabatic cut-off function going to the unity, known as the adiabatic limit.  
 \par 
 Among the advantages of the Hamiltonian approach are the explicit expression for the scattering operator and its manifest unitarity\footnote{See, however, Remark \ref{rmk:HpQFT/loc-alg}.}
 . A notable restriction of this method is that all theories constructed in this way (if they are not local) have a selected reference frame, in which residual locality persists (see Remark \ref{rmk:HpQFT/loc-alg}). 
 \par
 Originally in \cite{BahnsEtAl,BahnsPhD} the Hamiltonian approach was applied to a class of non-local quantum field theories built as an effective description of quantum field theories in the Doplicher-Frednhagen-Roberts quantum (non-commutative) spacetime defined in \cite{DFR}. The latter is a simple Lorentz-covariant realization of the physically motivated spacetime uncertainty relations also derived in \cite{DFR}.  A set of Feynman rules for this theory was derived in \cite{BahnsPhD} and it was shown that the non-locality smears out the typical ultraviolet singularities of local theories. However, it was observed that the adiabatic limit of the scattering operator (so-called strong adiabatic limit) does not exist in general due to the external lines correction. This is not a surprise. In fact, in the local case, the strong adiabatic limit exists only when particular renormalization conditions are specified \cite{EG73}.
 \par
  In the author's Ph.D. thesis \cite{PhDThesis} a class of non-local quantum ultraviolet finite  Hamiltonian perturbative Quantum Field Theories (HpQFT) was defined. This class includes the theories on the Doplicher-Fredenhagen-Roberts quantum spacetime considered in \cite{BahnsPhD,BahnsEtAl}. It was shown that the weak adiabatic limit always exists as in the local case.
  \par 
  In this paper, we reproduce this result using significantly different techniques for the computations. Namely, instead of unbounded operators (operator-valued functions, operator-valued distributions, etc.) on the Hilbert space, we deal with continuous operators on the spaces of Schwartz functions. Then the vectors, vector-valued, operator-valued functions, distributions, and parameter-dependent distributions can be treated uniformly as linear continuous operators between the spaces of Schwartz functions. Furthermore, we introduce the second quantization map (see Subsection \ref{DS/2Q}), allowing us to do all computations with the spaces of a fixed finite number of particles only. Na\"ive formulas, such as the Wick theorem and its generalizations for operator-valued distributions can be provided with a direct sense in this formalism. Here we use only a small part of this framework. In the subsequent paper devoted to the strong adiabatic limit, a suitable generalization of this formalism will be used much more intensively. From this perspective, besides giving an improved exposition of the constructions of \cite{PhDThesis} this paper may be considered as an illustration of this new formalism.
  \par
  The paper is organized as follows. The rest of this section contains a summary of the notation and terminology used. In the second Section we explain how different objects, such as vector-valued, operator-valued, and parameter-dependent (as well as mixes of these kinds) distributions can be treated uniformly as linear operators between appropriately chosen spaces. In the third Section, the standard domain of the quantum field theory $\DS$ is defined. We also construct classes of operators, operator-valued functions, distributions, etc. on $\DS$ and introduce the second quantization procedure.  In the Fourth Section, motivated by the time-dependent perturbation field theory in quantum mechanics, we describe the class of non-local quantum field theories we deal with. In the last, fifth Section, the weak adiabatic limit existence is formulated and proved. The Appendix contains several formulations of Feynman rules in the adiabatic limit convenient for practical computations.
  
  \paragraph{Acknowledgements}
  This work is a reformulation and continuation of the PhD project and later partially supported by the PostDoc fellowship, both granted by the University of Rome ``Tor Vergata" and performed under the advisory of G. Morsella.

\subsection{Conventions, notation and terminology}\label{Intro/prelim}
\paragraph{General}
Three-dimensional vectors semantically related to the physical coordinate or momentum space are denoted with the vector symbol, e.g. $\vec{p}\in\RR{3}$. Other finite-dimensional vectors are denoted with bold letters, e.. $\bm{x}\in\RR{n}$.
 \par
 We use the standard notation 
$$
\partial^{\alpha}=\partial_{x}^{\alpha}=\frac{\partial^{|\alpha|}}{\partial x_1^{\alpha_1} \cdots \partial x_n^{\alpha_n}}
$$
for the partial derivatives, where $\alpha=(\alpha_1,\ldots,\alpha_n)\in\NN_0^{n}$ is a multi-index and $|\alpha|=\sum_{j=1}^n \alpha_j$. 
\par 
The permutation group of $k$ elements is denoted with $\symmgr{k}$. A permutation $\sigma\in\symmgr{k}$ is understood as a functions $\{1,\ldots,k\}\rightarrow \{1,\ldots,k\}$, $i\mapsto \sigma(i)=\sigma_i$. 
\par 
We use the following abbreviated notations for lists of repeating variables:
$$\arrs{x}{n} \Longleftrightarrow x_1,x_2,\ldots, x_n$$ 
$$\arrsM{x}{n}{k} \Longleftrightarrow x_n,x_{n+1},\ldots, x_k,$$
$$\arrsP{x}{k}{\sigma} \Longleftrightarrow x_{\sigma(1)},\ldots,x_{\sigma(k)}, \, (\sigma\in\symmgr{k}),$$
$$
(x_i)_{i=1\ldots n} \Longleftrightarrow x_1,\ldots, x_n,
$$
$$
(x_{i,j})_{i=1\ldots n,j=1\ldots m_j} \Longleftrightarrow x_{1,1},x_{1,2},\ldots, x_{1,m_1},x_{2,1},x_{2,2},\ldots,x_{2,m_2},\ldots,x_{n,m_n}.
$$
The last two notations are used for more composite construction depending on $i$ (or $(i,j)$) in place of $x_i$ (or $x_{i,j}$). Inside integrals we use 
 $$
 d^{nk}\arrs{\bm{x}}{k}=\prod_{j=1}^k d^n \bm{x}_j
 $$ 
 for $\arrs{\bm{x}}{n}\in\RR{n}$. Note that we always indicate the dimensions of the space explicitly. We write $\arrs{\bm{x}}{k}\in\RR{n}$ to express that  $\bm{x}_i\in\RR{k}$ for each $i=1,\ldots,n$ in contrast with $(\arrs{\bm{x}}{k})\in\RR{n\cdot k}$.
\par
For two topological spaces $X$, $Y$ we use $\LL(X,Y)$ to denote the space of all linear continuous operators from $X$ to $Y$. We set $\LL(X)=\LL(X,X)$.
\par
If $f$ is a function on $\RR{n+m}$ we may interchangeably write $f(\bm{x},\bm{y})$ or $f(\bm{z})$ for $\bm{x}\in\RR{n},\bm{y}\in\RR{m}$and $\bm{z}\in\RR{n+m}$.
\par 
We use the ordering convention for non-commutative products
$$
\prod_{i=n}^{m} A_i=A_n A_{n+1}\cdots A_{n+m}.
$$
\paragraph{Quantum field theories}
By a quantum field theory we in general mean a (local or non-local, not necessarily Lorentz-covariant) bosonic real scalar perturbative quantum field theory. All results can be easily generalized to the case of several particle species and the Fermionic statistics. The spin plays no role in absence of the Lorentz invariance, so different polarizations can be treated as separate species. The same can be said about the antiparticles in absence of the locality.
\par 
By a wave function we mean a vector of state of such theory with a fixed number of particles and presented as a function of the momenta.
\par
The Quantum Field theories we consider are not necessarily Lorentz-invariant, both because no reasonable non-local Lorentz-invariant interaction is known and due to immanent breaking of Lorentz invariance by the non-local Hamiltonian formalism (see Remark \ref{rmk:HpQFT/loc-alg}). Yet, in \cite{BahnsPhD,BahnsEtAl} the free theory is assumed to be Lorentz-invariant. We instead allow it to be broken from the beginning as it produces no additional technical problems. For this reason, we need to deal with generalized dispersion relations. We require that both the energy and its inverse are smooth functions of the momentum which have at most polynomial growth at infinity together with all their partial derivatives. The space of such functions $\Mlt{3}$ is defined in Subsection \ref{Framework/Mlt}.
\begin{Def}\label{def:dispRel}
By a \emph{(massive) dispersion function} we mean a positive function $\omega_0\in\Mlt{3}$ such that $(\vec{k}\mapsto \frac{1}{\omega_0(\vec{k})})\in\Mlt{3}$, and\footnote{The $M>0$ condition is introduced to eliminate exotic dispersion functions of polynomial decaying at the infinity. The possibility of zero energy at a finite point is already excluded by the smoothness of the inverse energy.}
$$
M=\inf_{\vec{k}\in\RR{3}}\omega_0({\vec{k}})>0.
$$
The constant $M$ is called the \emph{mass} of the dispersion function. We always assume $\omega_0(-\vec{p})=\omega_0(\vec{p})$ for the sake of simplicity. This condition can be easily relaxed.
\end{Def}
\paragraph{Feynman graphs and Feynman rules}
We use the Feynman graphs to keep track of combinatorics in operator products computations.
\par
The terminology used is mostly standard and intuitive, but it makes sense to summarize it in one place. We introduce various versions of the Feynman rules both in the main text and in Appendix based on Feynman graphs with different labeling and auxiliary structures.
\par
In general, a \emph{Feynman graph} is a graph with marked vertices and edges (interchangeably called lines). We always exclude the graphs containing an edge starting and ending at the same vertex. 
\par 
We consider \emph{partially} or \emph{totally ordered} and \emph{unordered} Feynman graphs, which come with partial, total or no order on the set of vertices respectively. When dealing with ordered graphs, we use terms like \emph{precede}, \emph{earlier}, \emph{earliest} and so on (respectively, \emph{follows}, \emph{later}, \emph{latest} etc.) to describe the relative order of vertices and use the symbols $\prec$ and $\succ$ respectively for these two relations\footnote{This terminology have clear interpretation if we keep in mind that each vertex corresponds to an operator. Then earlier vertices correspond to operators which act first. In computations of the Green functions (and the scattering amplitudes \cite{PhDThesis}) the order can be physically interpreted as the order of the ``elementary scatterings" in the physical time. The physical interpretation is lost in the Wightman functions computation, where some parts of the operator product are anti-time ordered.}. 
\par
A total order, when prescribed, induces the natural (up to an overall flip) orientation of the edges, making the graph directed acyclic. For definiteness, we choose the orientation from the earlier to the later vertices. Partial orders in general do not define orientations of edges, but we consider only such of them that do, and moreover are generated by an acyclic orientation of the edges\footnote{In other words, instead of partially ordered graphs we should be speaking of directed acyclic ones. But the partial order plays a major role in the Feynman rules, which we have emphasized in the terminology.}. In the figures we always draw earlier vertices to the left of the later ones.
\par 
We distinguish external and internal vertices. The external vertices may be incident to only one line, for internal ones there are no limitations\footnote{Of course, the contribution of a diagram may vanish if the corresponding term is absent in the interaction.}. 
\par 
The vertices can be enumerated or not, and the numeration can agree with the order or not. When it does, rather counter-intuitively, in ordered graphs we enumerate the vertices from the \emph{last to the earliest}\footnote{This choice of numeration is dictated by the traditional right-to-left composition notation and numeration of the factors in products from left to right}.
\par
To the lines we often assign a flux of the momentum and sometimes the energy flow. The flow is directed, i.e. it is incoming for some vertices and outgoing for others.  Graphically we draw an arrow parallel to the line to show the flow direction. For ordered graphs we assume that the flux always goes from the earlier to the later vertex. We call the difference of the sum of all outgoing and the sum of all incoming momenta (respectively,  energy) at a vertex the \emph{momentum} (respectively, \emph{energy}) \emph{defect} of the vertex. In some cases we make a distinguishing between the on-shell (i.e. constrained by a dispersion relation) and off-shell (i.e. an independent free variable) energies, and respectively the on-shell and off-shell energy defects.
\par 
We say that a connected component of a Feynman graph is a \emph{vacuum component} if it contains no external vertices.
\par

The \emph{Feynman rules} consist of the following ingredients: 
\begin{itemize}
    \item Description of the classes of relevant Feynman graphs (e.g. ordered graphs with a fixed number of internal and external enumerated vertices with no vacuum energy corrections);
    \item Labeling rules, i.e. the list of free variables assigned to each line (e.g. spatial momentum) or each vertex (e.g. timestamp);
    \item Prescription o the factors corresponding to each line and each vertex;
    \item The integration and summation range for the free parameters and additional overall factor if necessary.
\end{itemize}
To each graph corresponds an expression constructed as a product of factors corresponding to the components of the graph according to the Feynman rules and the overall factor (if it is present), integrated and summed with respect to the free parameters and
divided by the order of the automorphisms group of the graph\footnote{We assume that the automorphisms preserve all defined structure of the graphs (including order and numeration of vertices if they are defined) except for free parameters assignment. The combinatoric factors are not at all important for the weak adiabatic limit, so we do not discuss the symmetry factor, but it can always be reconstructed. Yet, we note that this factor is different from the one of \cite{BahnsPhD} because of different normalization of the vertex factors. Our conventions instead agree with the standard physics textbooks, e.g. \cite{PS} to facilitate eventual comparisons.}. 
\par 
We say that some object can be computed by Feynman rules if it is equal to the sum of expressions corresponding to all relevant graphs according to that Feynman rules.
\paragraph{Formal power series}
Let $g$ be a formal parameter. 
For a vector space $X$, the space of $X$-valued formal power series in terms of $g$ is defined as the vector space of $X$-valued functions on $\NN_0$,
$$
X\formalPS{g}=X^{\NN_0}=\{x:\NN_0\rightarrow X\}
$$
with pointwise linear operations. Elements of $X\formalPS{g}$ we symbolically present as
$$
x(g)=\sum_{n=0}^{\infty}x_{(n)}g^n. 
$$
\par
If $X$, $Y$ and $Z$ are vector spaces and $*:X\times Y\rightarrow Z$ is a bilinear operation, then for $x\in X\formalPS{g}$, $y\in Y\formalPS{g}$ we set 
$$
(x*y)(g)=\sum_{n=0}^{\infty}\left(\sum_{m=0}^{n}x_{(m)}*x_{(n-m)}\right)g^n\in Z\formalPS{g}.
$$
The limits involving formal power series should be understood in the sense of pointwise convergence of the underlying functions on $\NN_0$. Consequently, integrals of power series are defined.
\par 
To avoid confusion we write
$(x(g))_{g^n}$ instead of $x_n$

\paragraph{Test functions, distributions, and related objects}
We use the standard symbol for the space of Schwartz function $\SRR{n}$ and the space of tempered distributions $\SpRR{n}=\LL(\SRR{n},\mathbb{C})$ dual to it (here $n\in\mathbb{N}_0$ with $\SRR{0}=\SpRR{0}=\mathbb{C}$ for convenience). The seminorms of $\SRR{n}$ are denoted as follows:
$$
||f||_{\SRR{n},m,k}=\sup_{\bm{x}\in\RR{n},|a|\leq k}(1+||\bm{x}||)^m|\partial^{a}f(\bm{x})|, \,\forall f\in\SRR{n},\forall m,k\in\mathbb{N}_0.
$$
We often omit the adjective tempered, as no other classes of distributions appear in this paper.
We use the square brackets to denote the evaluation of a distribution on a test function (or, more generally, the action of an operator on a vector\footnote{In selected cases we omit the square bracket in this sense.}). The round brackets after a symbol of distribution are used in the symbolic integral notation which is explained in Subsection \ref{Framework/Symbolic}.  
\par 
A lot of non-standard notation is introduced in Section \ref{Framework}. In particular, the operator $\evaluateP{}{}$ appears in  Subsection \ref{Framework/VectorTest}; the operation $\extDH{}{}{}$ is explained Subsection \ref{Framework/OpVal}; the operations $\intD{}{}$ and $\restrD{}{}$ is introduced in Subsection \ref{Framework/Restrict}; the operations $\extD{}{}$ and the tensor product $\otimes$ are defined in Subsection \ref{eq:Framework/extProd}. Most of the other operators used in the paper can be found in Subsection \ref{Framework/exmp}
\par
\par 
 For the secondary quantization map $A\mapsto \widehat{A}$, and an explanation of the underlining notation see Subsection {\ref{DS/2Q}}. As a general rule, the hated underlined symbols denote the objects in their standard sense. For example, one can think of $\uwhat{\phi}_0$ as the usual free quantum field, defined as an operator-valued distribution. The original object $\phi_0\in\LLfirst(\DSS{}{4},\DS{})$ (notation of Section \ref{DS} is used) does not have a direct sense but is convenient to operate with.  

\section{Framework for operator-valued parameter-dependent distributions}\label{Framework}
As usual in Quantum Field Theory, operator-valued distributions are essential for construction. Standard general formalism (e.g. \cite{SimonReed, NN}) is, however, not suitable for our needs, because we deal with a certain class of operators and distributions only. In particular, instead of the Hilbert space, we may always deal with finer spaces of Schwartz functions $\SRR{n}$, $n\in\NN$. This section is devoted to $\SRR{n}$ and $\LL(\SRR{n},\SRR{m})$-valued functions and (in general, parameter-dependent) distributions. In Section \ref{DS} we explain how this construction leads to the usual notion of distributions, valued in unbounded operators on the Hilbert space.
\par 
\par 
The material presented here is not original, but nowhere presented in the form suitable enough for us. In the parts devoted to parameter-dependent distributions and restrictions of distributions we mostly follow \cite{NN}. The later concept is an older ``cheap" version of the microlocal analysis\cite{Lars}, allowing to restrict the distribution to fixed values of the parameters. There are several reasons why we stick to it. First of all, it is just simpler to verify the existence conditions in the situations we deal with (especially taking into account that otherwise we would be forced to use generalized to the tempered distributions case). The second reason is that this way our framework is more uniform, allowing to treat vector-valued and parameter-dependent distributions together. Finally, the strong adiabatic limit, which is not considered here, but will be presented in the subsequent paper \cite{PAP1}, requires more general spaces, and hence the microlocal analysis would require further generalization. Instead, we aim to use as few properties of the space $\SRR{n}$ as possible\footnote{In broad terms, we use only that it is Fr`echet and that the partial evaluations are continuous in a siutable sense}.
\par
In description of vector-valued functions we partially follow  \cite{Treves}. The part on the multiplier spaces follows \cite{Amman}.
\subsection{Vector-valued test functions}\label{Framework/VectorTest}
We start by the following observation\footnote{See chapter 40 of \cite{Treves}}.
\begin{rmk}\label{rmk:Framework/VectorTest}
For any $n,m\in\NN$ there is an isomorphism between the following topological vector spaces
\begin{enumerate}
    \item The space $\SRR{n+m}$;
    \item The space $$\mathcal{S}(\RR{n},\SRR{m})= $$
    $$
    \left\{f: \RR{n}\rightarrow \SRR{m}\Big|     ||f||_{\mathcal{S}(\RR{n},\SRR{m}),s,k,l,r} < \infty, \quad \forall  l,r,s,k\in\mathbb{N}_0\right\},
    $$
    where for a function $f:  \RR{n}\rightarrow \SRR{m}$ and $l,r,s,k\in\mathbb{N}_0$ we set
    \begin{equation}
    ||f||_{\mathcal{S}(\RR{n},\SRR{m}),s,k,l,r}=\sup_{\bm{x}\in\RR{n},|a|\leq k}(1+||\bm{x}||)^s||(\partial^{\alpha}f)(\bm{x})||_{\SRR{m},l,r}
    \end{equation}
    with the topology induced by these seminorms.
    \end{enumerate}
    The isomorphism is given by $\evaluateP{n+1\ldots n+m}{n+m}\in \LL(\SRR{n+m},\mathcal{S}(\RR{m},\SRR{n}))$:
    $$
    \evaluateP{n+1\ldots n+m}{n+m}f(\bm{x})(\bm{y})=f(\bm{x},\bm{y}), \forall f\in\SRR{n+m}, \forall \bm{x}\in \RR{n}, \forall \bm{y} \in \RR{m}
    $$
    $$
    \left(\evaluateP{n+1\ldots n+m}{n+m}\right)^{-1}g(\bm{x},\bm{y})=g(\bm{x})(\bm{y}), \, \forall g\in \mathcal{S}(\RR{m},\SRR{n}),\forall \bm{x}\in \RR{n}, \forall \bm{y}\in \RR{m}.
    $$
\end{rmk}
\begin{proof}
The linear map $\evaluateP{n+1\ldots n+m}{n+m}$ is clearly well-defined.
We have to check that $(\evaluateP{n+1\ldots n+m}{n+m})^{-1}$ is indeed valued in $\SRR{n+m}$, or more precisely the joint continuity of $(\evaluateP{n+1\ldots n+m}{n+m})^{-1}g$ together with its partial derivatives for any $g\in \mathcal{S}(\RR{m},\SRR{n})$. For that we write
$$
|g(\bm{x}')(\bm{y}')-g(\bm{x})(\bm{y})|\leq |g(\bm{x}')(\bm{y})-g(\bm{x})(\bm{y})|+|g(\bm{x}')(\bm{y}')-g(\bm{x}')(\bm{y})|\leq 
$$
$$||g||_{\mathcal{S}(\RR{m},\SRR{n}),0,1,0,0}||\bm{x}-\bm{x}'||+
||g||_{\mathcal{S}(\RR{m},\SRR{n}),0,0,0,1}||\bm{y}-\bm{y}'||,
$$
$$
\forall \bm{x},\bm{x}'\in\RR{n}, \bm{y},\bm{y}'\in\RR{m}
$$
The right-hand sides can be made anyhow small for sufficiently small $||\bm{x}-\bm{x}'||$ and $||\bm{y}-\bm{y}'||$, so $g$ is jointly smooth. The derivatives can be treated in the same way. The continuity of $\evaluateP{n+1\ldots n+m}{n+m}$ and $\evaluateP{n+1\ldots n+m}{n+m}$ follows by the standard inequalities
\begin{equation}
(1+||(\bm{x},\bm{y})||)\leq (1+||\bm{x}||)(1+||\bm{y}||), \forall \bm{x}\in\RR{n}, \forall \bm{y}\in\RR{m},
\label{eq:xy<x}  
\end{equation}
\begin{equation}
(1+||\bm{x}||)\leq (1+||(\bm{x},\bm{y})||), \forall \bm{x}\in\RR{n}, \forall \bm{y}\in\RR{m},
\label{eq:x<xy}
\end{equation}
$$
(1+||\bm{y}||)\leq (1+||(\bm{x},\bm{y})||), \forall \bm{x}\in\RR{n}, \forall \bm{y}\in\RR{m}.
$$
\end{proof}
We conclude that the space $\SRR{n+m}$ can be interpreted as the space of $\SRR{m}$-valued  Schwartz functions on $\RR{n}$. The identification is not unique, in general we define $\evaluateP{i_{1}\ldots i_{m}}{n+m}$ extracting the arguments on positions $i_{1},\ldots,i_{m}$ as parameters. This operator is also of great use in the technical definition below in this section. 
\par 
In what follows we do all real computations with more understandable space $\SRR{n+m}$, but sometimes use the aforementioned interpretation to streamline the presentation and to simplify comparison with standard techniques.

\subsection{Vector-valued and parameter-dependent distributions}\label{Framework/VV=PD}
There are two related spaces important for Section \ref{DS}.
First one may consider the space of \emph{vector-valued distributions} $\mathcal{S}'(\RR{n},\SRR{m})$ which is natural to identify with $\LL(\SRR{n},\SRR{m})$.
The second one is the space of \emph{parameter-dependent distributions of Schwartz class}\footnote{We use terminology of \cite{NN}} $\mathcal{S}(\RR{m},\SpRR{n})$. Following \cite{NN}, we define it as a space of maps $\RR{m}\rightarrow \SpRR{n}$ 
$$
F: \RR{m} \ni \bm{x} \rightarrow F_y\in \SpRR{n}
$$
such that
$$
(y\mapsto F_y[f]) \in \SRR{m}, \quad \forall f\in \SRR{n}.
$$
Indeed, it is nothing but a tautological rewriting of differentiability and fast decay (together with all partial derivatives) properties of $F$, considered as a distribution-valued function on $\RR{n}$\footnote{Recall that we assume the weak-* topology for $\SpRR{n}$, so all limits including $F$ should be computed after evaluation on a test function.}.
\par 
These two notions coincide. 
\begin{rmk}[Variant of Exercise 2.43 of \cite{NN}]\label{rmk:Framework/VV=PD}
Fix $n,m\in\NN$. Then there is a natural bijection between $\mathcal{S}'(\RR{n},\SRR{m})$ and $\mathcal{S}(\RR{m},\SpRR{n})$. It is given by identifying
$$
F_{\bm{x}}[f]=F[f](\bm{x}), \quad, \forall \bm{x}\in \RR{m},\forall f\in \SRR{n},
$$
where in the left and the right-hand sides $F$ is treated as a parameter-dependent and vector-valued distribution.
\end{rmk}
\begin{proof}
The only thing to prove is that a parameter-dependent distribution of class $\SRR{m}$ defines a continuous map $\SRR{n}\rightarrow\SRR{m}$. As $\SRR{n}$ and $\SRR{m}$ are Fr\'echet spaces, by the closed graph theorem, it is enough to show for any sequence $f_j\in\SRR{n}$, $j=1,\ldots,n$ the conditions
\begin{equation}
f_j\underset{j\rightarrow \infty}{\longrightarrow}{f}\in \SRR{n},
\label{eq:proof:rmk:Framework/VV=PD/f} 
\end{equation}
\begin{equation}
(y\mapsto F_y[f_j])\underset{j\rightarrow \infty}{\longrightarrow}{g}\in \SRR{m},
\label{eq:proof:rmk:Framework/VV=PD/F[f]}
\end{equation}
imply $g=(y\mapsto F_y[f])$. Since convergence in $\SRR{m}$ implies the pointwise one,  (\ref{eq:proof:rmk:Framework/VV=PD/f}-\ref{eq:proof:rmk:Framework/VV=PD/F[f]}) and the continuity of $F_y$ for each $y\in\RR{m}$ give
$$
g(y)=\lim_{j\rightarrow\infty}F_y[f_j]=F_y[f],\, \forall y\in\RR{m}
$$ 
as needed.
\end{proof}
From now on we make no distinguishing between vector-valued and parameter-dependent distributions and use the notation $\LL(\SRR{n},\SRR{m})$ for both. 
\begin{rmk}

Combining Remarks \ref{Framework/VV=PD} and \ref{rmk:Framework/VectorTest}, for $m,n,k\in\NN_0$ we can interpret $\LL(\SRR {n},\SRR{m+k})$ as the space of parameter-dependent  $\SRR{m}$-valued distributions on $\RR{n}$ of class $\SRR{k}$. 

\end{rmk}
\subsection{Operator-valued (parameter-dependent) distributions}\label{Framework/OpVal}
\begin{rmk}
Take $n,m,k\in\NN_0$.
There is a correspondence between elements of the space $\LL(\SRR{n+m},\SRR{k})$ and $\LL(\SRR{m},\SRR{k})$\emph{-valued distributions on $\RR{n}$}. It is given by defining for each $F\in\LL(\SRR{m},\SRR{k})$ an operator-valued distribution $\evaluatePD{1,\ldots,n}F\in \LL(\SRR{n},\LL(\SRR{m},\SRR{k}))$ by
$$
\evaluatePD{1,\ldots,n}F[f][\psi]=F[f\otimes \psi],\quad \forall f\in\SRR{n},\forall \psi\in\SRR{m}.
$$
Here we assume the pointwise topology on $\LL(\SRR{m},\SRR{k})$.
\end{rmk} 
\begin{proof}
For $f,\psi$ as in the statement, the map $\psi\mapsto f\otimes \psi$ is continuous, so $$\evaluatePD{1,\ldots,n}F[f]\in\LL(\SRR{m},\SRR{k}), \quad, \forall f\in\SRR{n}.$$
Similarly, we get continuity of the map $f\mapsto F[f][\psi]$ for any $\psi\in\SRR{m}$ making $\evaluatePD{1,\ldots,n}F$ into a distribution. 
\par
Conversely, any distribution $G\in \LL(\SRR{n},\LL(\SRR{m},\SRR{k}))$ defines a separately continuous bilinear map 
$$\SRR{n}\times \SRR{m}\rightarrow \SRR{k},$$
$$\SRR{n}\times \SRR{m} \ni (f,\psi)\mapsto G[f][\psi].$$
Each such map is jointly continuous (corollary of Theorem 34.1 in \cite{Treves}) and defines ${\evaluatePD{1,\ldots,n}}^{-1}G$ such that
$$
{\evaluatePD{1,\ldots,n}}^{-1}G[f\otimes \psi]=G[f][\psi].
$$
\end{proof}
\begin{rmk}
Combining this result with Remark \ref{rmk:Framework/VectorTest}, for $n,m,k,l\in\NN_0$ we may interpret $\LL(\SRR{n+m},\SRR{k+l})$ as the space of \emph{parameter-dependent operator-valued distributions on $\RR{n}$, taking values in $\LL(\SRR{m},\SRR{k})$ of class $\SRR{l}$}.
\end{rmk}
As in Subsection \ref{Framework/VectorTest} the identification is not unique and, in general, we define $\evaluatePD{i_1,\ldots,i_n}F$ analogously. 
\par 
As before, we prefer to perform computations within the spaces of linear operators, but use the interpretation presented above to make their sense more clear. 

\subsection{Restriction of a distribution to fixed values of arguments}\label{Framework/Restrict}
Occasionally we prefer to treat some of the arguments interchangeably as distributional one parametric ones. It is easy to make the latter into the former using the following operation.
\begin{notn}
Take $n,k,m\in\mathbb{N}_0$, $i_1,\ldots,i_k\in \{1,\ldots, m+k\}$, $i'_1,\ldots,i'_k\in \{1,\ldots, n+k\}$ and 
 $F\in\LL(\SRR{n},\SRR{k+m})$. We  define
 \begin{equation}\label{eq:Framework/RestrExt}
     \intD{\arrs{i}{k}}{\arrs{i'}{k}}F\in \LL(\SRR{k+n},\SRR{m}), \quad \intD{\arrs{i}{k}}{\arrs{i'}{k}}F[f](\bm{y})=
 \end{equation}
 $$
 \int \evaluateP{\arrs{i}{k}}{m+k}\left[F\left[\left(\evaluateP{\arrs{i'}{k}}{n+k}\left[f\right]\right)\left(\bm{x}\right)(\bm{y})\right]\right](\bm{x})(\bm{y})d^k\bm{x}, \quad \forall \bm{y}\in\RR{k},\,\forall f\in\SRR{n}.
$$
\end{notn}
The meaning of this operator is that it makes the parameters on the places $i_1,\ldots,i_k$ into distributional arguments inserted on positions $i'_1,\ldots,i'_k$. 
To understand the meaning of the rather formal construction above let us consider a simple example.
\begin{exmp}
Take $m\in\NN$, $h\in\SRR{m}=\LL(\SRR{0},\SRR{m})$. Then
$$\intD{1,\ldots,m}{1,\ldots,m}h[f]=\int{h(\bm{x})f(\bm{x})d^m\bm{x}}.$$
\end{exmp}
We introduce the partially defined inverse operation
$$
\restrD{\arrs{i'}{k}}{\arrs{i}{k}}=\left(\intD{\arrs{i}{k}}{\arrs{i'}{k}}\right)^{-1}.
$$
If for $G\in\LL(\SRR{n+k},\SRR{m})$ the operator $\restrD{\arrs{i'}{k}}{\arrs{i}{k}}G$
exists, we call it a \emph{restriction of $G$} because effectively it allows us to evaluate $G$ at fixed values of its distributional arguments. 
\begin{rmk}
Here we again follow \cite{NN}. Note that the microlocal analysis suggests a more sophisticated notion of restriction. One may show that whenever both are applicable, the results coincide, but the existence conditions are not in general equivalent.
\end{rmk}

\begin{prop}[Analog of Proposition 2.11 in \cite{NN}] For the existence of $$\restrD{\arrs{i'}{k}}{\arrs{i}{k}}G $$ it is sufficient to find a map $F: \SRR{n}\rightarrow \SRR{m+k}$ (not a priori continuous) such that $G[f]=\intD{\arrs{i}{k}}{\arrs{i'}{k}}F[f]$ where the right-hand sides is formally defined by (\ref{eq:Framework/RestrExt}).   
\end{prop}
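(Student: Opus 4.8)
The plan is to show that the hypothesised map $F$ is automatically continuous, which is the only thing standing between the given data and the existence of $\restrD{\arrs{i'}{k}}{\arrs{i}{k}}G$. Indeed, by the definition of $\restrD{}{}$ as the inverse of $\intD{}{}$, to produce $\restrD{\arrs{i'}{k}}{\arrs{i}{k}}G$ it suffices to exhibit some $F\in\LL(\SRR{n},\SRR{m+k})$ with $\intD{\arrs{i}{k}}{\arrs{i'}{k}}F = G$; so once $F$ is known to be continuous, we are done. Thus the real content is a closed-graph argument, exactly as in Remark \ref{rmk:Framework/VV=PD}: since $\SRR{n}$ and $\SRR{m+k}$ are Fr\'echet spaces, it is enough to check that the graph of $F$ is closed.

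First I would set up the closed-graph test: take a sequence $f_j\to f$ in $\SRR{n}$ with $F[f_j]\to h$ in $\SRR{m+k}$, and aim to show $h=F[f]$. The key is that the operator $\intD{\arrs{i}{k}}{\arrs{i'}{k}}$, viewed as a map $\SRR{n+k}\to\SRR{m}$ built out of the continuous partial-evaluation operators $\evaluateP{}{}$ from Remark \ref{rmk:Framework/VectorTest} and an integration in $k$ variables against a Schwartz profile, is continuous; this was essentially established in Subsection \ref{Framework/Restrict}. Applying it to both sides and using $G[f_j]=\intD{\arrs{i}{k}}{\arrs{i'}{k}}F[f_j]$ together with continuity of $G\in\LL(\SRR{n},\SRR{m})$, we get $\intD{\arrs{i}{k}}{\arrs{i'}{k}}h = \lim_j \intD{\arrs{i}{k}}{\arrs{i'}{k}}F[f_j] = \lim_j G[f_j] = G[f] = \intD{\arrs{i}{k}}{\arrs{i'}{k}}F[f]$. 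Since $\intD{\arrs{i}{k}}{\arrs{i'}{k}}$ is injective (this is precisely what makes $\restrD{}{}$ well-defined as its inverse), we conclude $h=F[f]$, so the graph is closed and $F$ is continuous.

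There is one point to be careful about, which I expect is the main obstacle: convergence in $\SRR{m+k}$ must be strong enough to pass through the integration against the Schwartz profile in the definition (\ref{eq:Framework/RestrExt}) of $\intD{}{}$, i.e. one needs the dominated-convergence step for $\int \evaluateP{\arrs{i}{k}}{m+k}[\cdots](\bm{x})(\bm{y})\, d^k\bm{x}$ to be uniform in the remaining variables and in all the relevant derivatives. This is where the Schwartz seminorm estimates and the elementary inequalities (\ref{eq:xy<x}), (\ref{eq:x<xy}) enter, exactly as in the proof of Remark \ref{rmk:Framework/VectorTest}: one bounds the integrand by an integrable majorant obtained from a single seminorm of $F[f_j]-h$ times a fixed rapidly decreasing weight, and lets $j\to\infty$. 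Granting this (standard) uniformity, the argument above closes.

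Finally I would note that the statement is only a sufficient condition, matching Proposition 2.11 of \cite{NN}: we are not claiming that every restrictable $G$ arises this way, merely that producing such an $F$ (with no a priori continuity) is enough. The whole proof is therefore short, the only genuinely technical ingredient being the continuity of $\intD{\arrs{i}{k}}{\arrs{i'}{k}}$ and its behaviour under limits, which is already implicit in Subsection \ref{Framework/Restrict}.
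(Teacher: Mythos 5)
Your proposal is correct and follows essentially the same route as the paper: a closed-graph argument in the Fr\'echet spaces $\SRR{n}$ and $\SRR{m+k}$, comparing the two ways of computing the limit and using that an element of $\SRR{m+k}$ is determined by its integrals against test functions in the distinguished $k$ variables. The only cosmetic point is that $G$ acts on $\SRR{n+k}$, so the comparison must be carried out on tensor products $f_j\otimes g$ with $g\in\SRR{k}$ (as the paper does explicitly); once this is written out, the ``dominated convergence'' step you flag as the main obstacle is immediate, since the pairing $\SRR{m+k}\times\SRR{k}\rightarrow\SRR{m}$ is continuous by elementary seminorm estimates.
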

\begin{proof}
The reasoning is similar to Remark \ref{rmk:Framework/VV=PD}.
Without any loss of generality, we assume
$$
i_{j}=i'_{j}=j, \quad j=1,\ldots,k.
$$
Assume $G\in\LL(\SRR{n+k},\SRR{m})$ and $F$ is as in the statement. Our goal is to prove that $F$ is continuous. 
To do that we evaluate $G$ on a tensor product
\begin{equation}\label{eq:proof:Framework/Restr}
G[h\otimes g](\bm{y})=\intD{1,\ldots,k}{1,\ldots,k}F[h\otimes g](\bm{y})=
\int{F[h](\bm{x},\bm{y})g(\bm{x})d^k\bm{x}}, 
\end{equation}
$$
\forall g\in\SRR{k},
\forall h\in\SRR{n},\,\forall \bm{y}\in\RR{m},
$$
and apply the closed graph theorem. Indeed, assume that there is a sequence $h_{j}\in\SRR{n}$, $j\in\NN$ such that
\begin{equation}\label{eq:proof:Framework/Restr/closedGraph}
    \lim_{j\rightarrow \infty}h_j=h, \, \lim_{j\rightarrow \infty}F[h_j]=u.
\end{equation}
On the one hand, for any $g\in\SRR{k}$ by (\ref{eq:proof:Framework/Restr},\ref{eq:proof:Framework/Restr/closedGraph}) we get
\begin{equation}\label{eq:proof:Framework/Restr/outetLim}
    \lim_{j\rightarrow \infty}G[h_j\otimes g]=G[h\otimes g].
\end{equation}
On the other hand, the continuity of $G$ and $h\mapsto h\otimes g$ for any $g\in\SRR{k}$ together with (\ref{eq:proof:Framework/Restr/closedGraph}) gives
\begin{equation}\label{eq:proof:Framework/Restr/innerLim}\lim_{j\rightarrow \infty}G[h_j\otimes g](\bm{y})=
\int{u(\bm{x},\bm{y})g(\bm{x})d^k\bm{x}}.
\end{equation}
Combining (\ref{eq:proof:Framework/Restr/outetLim}) with  (\ref{eq:proof:Framework/Restr/innerLim}), we conclude that whenever (\ref{eq:proof:Framework/Restr/closedGraph}) is satisfied,
$$
\int{u(\bm{x},\bm{y})g(\bm{x})d^k\bm{x}}=\int{F[h](\bm{x},\bm{y})g(x)d^k\bm{x}}, \,\forall g\in \SRR{k}, \forall g\in\SRR{k},\forall \bm{y}\in\RR{m}.
$$
By assumption $F[h],u\in\SRR{m+k}$, so the last identity is possible only if $F[h]=u$ as desired.
\end{proof}

\subsection{Augmentations and products}\label{Framework/Ext}
Developing Remark \ref{rmk:Framework/VectorTest} we may act by an operator from $\LL(\SRR{n},\SRR{m})$ on a vector from $\SRR{n+k}$ treating the rest $k$ arguments as parameters. This leads to the following construction.
\begin{rmk}\footnote{Variant of Theorem 2.1.3 in \cite{Lars}.}
Let $n,m,k\in\mathbb{N}$, $F\in\LL(\SRR{n},\SRR{m})$. 
Then we may introduce 
$\extD{1,\ldots,k}{1,\ldots,k}F\in \LL(\SRR{n+k},\SRR{m+k}) 
$ as
\begin{equation}
\extD{1,\ldots,k}{1,\ldots,k}F[f](\bm{x},\bm{y})=F[\evaluateP{1,\ldots,k}{n+k}f(\bm{x})](\bm{y}), \qquad \forall f\in\SRR{n+k}, \,\forall \bm{x}\in\RR{k},\,\forall \bm{y}\in\RR{m}
\label{eq:Framework/extSimp}    
\end{equation}
More generally, take two sequences of pairwise different integers  $$i_1,\ldots,i_k\in\{1,\ldots,n+k\}, \qquad
i'_1,\ldots,i'_k\in\{1,\ldots,m+k\}.$$
Then there is $\extD{\arrs{i}{k}}{\arrs{i'}{k}}F\in\LL(\SRR{n+k},\SRR{m+k})$ defined by
\begin{equation}
\extD{\arrs{i}{k}}{\arrs{i'}{k}}F=\left(\evaluateP{\arrs{i'}{k}}{m+k}\right)^{-1}\left(\RR{k}\ni \bm{x}\mapsto F[\evaluateP{\arrs{i'}{k}}{n+k}f(\bm{x})]\right).
\label{eq:Framework/extGen}    
\end{equation}

\end{rmk}
\begin{proof}
Clearly, it is enough to consider $\extD{1,\ldots,k}{1,\ldots,k}F$. Equivalence of (\ref{eq:Framework/extGen}) and (\ref{eq:Framework/extSimp}) follows by construction of $\left(\evaluateP{\arrs{i'}{k}}{m+k}\right)^{-1}$ in Remark \ref{rmk:Framework/VectorTest}. 
\par 
By Remark \ref{rmk:Framework/VV=PD} we only have to show that the right-hand sides of (\ref{eq:Framework/extSimp}) is in $\SRR{m+k}$ as a function of $(\bm{x},\bm{y})$ and is continuous with respect to $f$ when $(\bm{x},\bm{y})$ is fixed. Both facts follow from Remark \ref{rmk:Framework/VectorTest} and continuity of $F$ (which in particular allows to commute it with derivatives with respect to $\bm{x}$ and estimate seminorms of $F[\evaluateP{1,\ldots,k}{n+k}f(\bm{x})]$ in terms of the ones of $f$). 
\end{proof}
\begin{rmk}\label{rmk:Framewor/ExtUniversal}
We use the same notation as in the previous remark.
Equivalently $\extD{1,\ldots,k}{1,\ldots,k}F$ can be characterised as a unique operator in $\LL(\SRR{n+k},\SRR{m+k})$ such that for any $f\in\SRR{k}$ and $g\in\SRR{n}$ holds
\begin{equation}\label{eq:Framework/extProd}
\extD{1,\ldots,k}{1,\ldots,k}F[f\otimes g]=g\otimes F[g].
\end{equation}
\end{rmk}
\begin{proof}
Clearly, (\ref{eq:Framework/extSimp}) implies (\ref{eq:Framework/extProd}). Uniqueness of the operator satisfying (\ref{eq:Framework/extProd}) follows by density of $\SRR{k}\otimes_{\mathrm{alg}}\SRR{n}$ in $\SRR{n+k}$.
\end{proof}
\begin{rmk}\label{rmk:Framework/ProdComm}
From the last remark follows that ``operators acting on different variables commute". In particular, for $F\in\LL(\SRR{n},\SRR{m})$ and $F'\in\LL\left(\SRR{n'},\SRR{m'}\right)$ we have $F\otimes F'\in \LL\left(\SRR{n+n'},\SRR{m+m'}\right)$ defined as
$$
F\otimes F'=\left(\extD{n+1,\ldots,n+m'}{n+1,\ldots,n+m'}F\right)\circ \left(\extD{1,\ldots,n}{1,\ldots,n}F'\right)=\left(\extD{1,\ldots,m}{1,\ldots,m}F'\right)\circ\left(\extD{n+1,\ldots,n+n'}{n+1,\ldots,n+n'}F\right).
$$
This operation coincides with the ordinary tensor product of test functions for $n=n'=0$ and of distributions for $m=m'=0$.
\end{rmk}
\begin{rmk}\label{rmk:Framework/Augment-as-tensor}
The augmentations can be expressed via the tensor product introduced above. In particular, let $n,m,k\in \NN_0$. Then
$$
\left(\extD{n+1,\ldots,n+k}{m+1,\ldots,m+k}F\right)=F\otimes \mathbb{1}_{\LL(\SRR{k})},
$$
$$
\left(\extD{1,\ldots,k}{1,\ldots,k}F\right)
= \mathbb{1}_{\LL(\SRR{k})}\otimes F.
$$
\end{rmk} 
\begin{rmk}\label{rmk:Framework/Augment/algebra}
The following two simple properties of augmentation are useful:
\begin{itemize}
    \item Consecutive augmentations can be combined, e.g.  for $n,m\in\NN_0$ and $F\in\LL(\SRR{n},\SRR{m})$
    $$
    \extD{1,\ldots,l}{1,\ldots,l}(\extD{1,\ldots,k}{1,\ldots,k}F)=\extD{1,\ldots,l+k}{1,\ldots,l+k}F,\,\forall k,l\in\NN_0; 
    $$
    \item Augmentation is distributive with respect to compositions, e.g. for $n,m\in\NN_0$ and $F\in\LL(\SRR{n},\SRR{m})$, $G\in\LL(\SRR{k},\SRR{n})$
    $$
    \extD{1,\ldots,k}{1,\ldots,k}(F\circ G)=(\extD{1,\ldots,k}{1,\ldots,k}F)\circ (\extD{1,\ldots,k}{1,\ldots,k}G), \forall k\in\NN_0.
    $$
    \item As a consequence of the above and Remark \ref{rmk:Framework/ProdComm}, if $n,n',n'',m,m',m''\in\NN_0$, 
    $$F\in\LL\left(\SRR{n'},\SRR{n''}\right),\,F'\in\LL\left(\SRR{n},\SRR{n'}\right),$$
    $$G\in\LL\left(\SRR{m'},\SRR{m''}\right),\,G'\in\LL\left(\SRR{m},\SRR{m'}\right),$$
    then
    $(F\otimes G)\circ (F'\otimes G')=(F\circ F')\otimes (G\circ G')$.
\end{itemize}
\end{rmk}

\subsection{Multipliers}\label{Framework/Mlt}
For now, we considered only parameter-dependent distributions fast decaying at the high values of the parameters. This condition is too restrictive. For example, the free quantum field is known to be an operator-valued distribution restrictable to the fixed values of time (see \cite{SimonReed} for standard treatment and Subsection \ref{DS/2Q} for construction in our formalism) but it does not decay in time. In this subsection we define a large enough space containing such objects.
\par
We recall that the space of multipliers \cite{NN}
$\Mlt{n}$ consists of functions having limited growth together with all their derivatives,
$$
\Mlt{n}=\{f\in C^{\infty}(\RR{n})| \forall k\in\mathbb{N}_0\, \exists  m\in \mathbb{N}_0: \, \sup_{x\in\RR{n},|a|\leq k} (1+||x||)^{-m}|\partial^{a}f(x)|\leq \infty \}.
$$
The name comes from the following alternative characterization of $\Mlt{n}$. For a function $f$ on $\RR{n}$ define pointwise multiplication by $f$ as

$$
(\MltS{f} g)(\bm{x})=f(\bm{x})g(\bm{x})\, (\forall g\in\SRR{n}, \forall \bm{x}\in\RR{n}). 
$$
Then\footnote{See \cite{NN} and \cite{Amman}, or generalization of this fact in Remark \ref{rmk:Framework/Mlt/Alt} below.} 
$$
\Mlt{n}=\{f:\RR{n}\rightarrow \mathbb{C}| \MltS{f}\in \LL(\SRR{n})\}.
$$
Note that by Remark \ref{rmk:Framework/VV=PD} it is enough to show that $\MltS{f}$ leaves $\SRR{n}$ invariant. We also define a complementary operator $\MltM{g}: \Mlt{n}\rightarrow \SRR{n}$ defined for each $g\in\SRR{n}$ and defined by
$$
\MltM{g}[f]=f\cdot g. 
$$
\par 
There is a natural topology on $\Mlt{n}$ generated by seminorms\footnote{We follow \cite{Amman}.}
$$
||f||_{\Mlt{n},g,m,k}=||\MltM{g} f||_{\SRR{n},m,k},\quad g\in\SRR{n},\, m,k\in\mathbb{N}_0
$$
making all $\MltM{g}$ for each $g\in\SRR{n}$ continuous.
Conversely, for some topological vector space $X$ an operator $F: X\rightarrow \Mlt{n}$ is continuous whenever 
$$
\MltM{g}\circ F\in\LL(X,\SRR{n}), \quad \forall g\in\SRR{n}.
$$
This way we can always replace $\Mlt{n}$-valued operators with $\SRR{n}$ ones. We postpone it to Remark \ref{rmk:Framework/Mlt/Strategy} to introduce first a generalization of the space of multipliers.
\par 
For $m,n\in\NN_0$ we introduce the mixed space $\SMlt{n}{m}$ as 
$$
\SMlt{n}{m}=\{f\in C^{\infty}(\RR{n+m})| \forall k,l\in\mathbb{N}_0\,$$
$$ \exists  r\in \mathbb{N}_0: \, \sup_{\substack{\bm{x}\in\RR{n},y\in\RR{m},\\|a|+|b|\leq k}} (1+||\bm{x}||)^{l}(1+||\bm{y}||)^{-r}|\partial_{\bm{x}}^{a}\partial_{\bm{y}}^{b}f(\bm{x},\bm{y})|\leq \infty \}.
$$
\begin{rmk}\label{rmk:Framework/Mlt/Alt}
There is an equivalent 
characterization of $\SMlt{n}{m}$, $n,m\in\NN{0}$ as a space of all functions $f$ that define a map $\MltS{f}:\SRR{m}\rightarrow \SRR{n+m}$ by pointwise multiplication
$$
\MltS{f}[g](\bm{x},\bm{y})=f(\bm{x},\bm{y})g(\bm{y}).
$$
Furthermore,
$$\MltS{f}\in\LL(\SRR{m},\SRR{n+m}), \forall f\in\SMlt{n}{m}.$$
For any $g\in\SRR{m}$ there is a map $\MltMn{g}{n}\in\LL(\SMlt{n}{m},\SRR{m+n})$ such that
$$
\MltMn{g}{n}[f]=\MltS{f}[g],
$$
where the topology on $\SMlt{n}{m}$ is defined by 
$$
||f||_{\SMlt{n}{m},g,l,k}=||\MltS{f}[g]||_{\SRR{n+m},l,k},\quad g\in\SRR{m},\, k,l\in\NN_0.
$$
Conversely, for an operator $F:X\rightarrow \SMlt{n}{m}$ with some topological vector space $X$ we have $F\in\LL(X,\SMlt{n}{m})$ if and only if
$$
\MltS{g}\circ F\in\LL(X,\SRR{n+m})\,\forall g\in\SRR{m}.
$$
\end{rmk}
\begin{proof}\footnote{The presented proof follows \cite{Amman} with minimal changes.}
If $f\in\SMlt{n}{m}$ and $g\in\SRR{m}$, then $\MltS{f}[g]\in\SRR{n+m}$ essentially due to (\ref{eq:xy<x}). Conversely, if a function $f$ on $\RR{n+m}$ induces a map $\MltS{f}: \SRR{m}\rightarrow \SRR{m}$, then by Remark \ref{rmk:Framework/VV=PD} this map is continuous. Let $g\in\SRR{m}$ be such that 
$$g(\bm{x})=1, \forall \bm{x}\in U$$ 
for some neighborhood of the origin $U$ and for $\bm{y}\in\RR{m}$ set $g_{\bm{y}}\in\SRR{m}$, $g_{\bm{y}}(\bm{x})=g(\bm{x}-\bm{y})$. Then 
$$
f(\bm{x})=\MltS{f}[g_y](\bm{x}), \quad \forall \bm{x}\in \bm{y}+U.
$$
The smoothness of $f$ is clear. By continuity of $\MltS{f}$ for any multi-index $\alpha\in\NN_0^{n+m}$ and any $r\in\NN{0}$ there are some $C>0$ and $l,b\in\NN_0$ such that for every $x\in\RR{n}$
$$
|\partial_{\alpha}f(\bm{x})|(1+||\bm{x}||)^r \leq C ||g_{\bm{y}}||_{\SRR{m},l,b}.
$$
Finally, for some $C'>0$
$$
||g_{\bm{y}}||_{\SRR{m},l,b}\leq C'(1+||\bm{y}||)^{l}
$$
by 
$$1+||\bm{x}-\bm{y}||\leq (1+||\bm{x}||)(1+||\bm{y}||).$$
Thus $f\in\SMlt{n}{m}$. The rest of the statement is trivial.
\end{proof}

\begin{rmk}\label{rmk:Framework/Mlt/VectorTest}
One can prove an analogue of Remark \ref{rmk:Framework/VectorTest}, identifying the $\SRR{n}$-valued vector functions of class $\SMlt{m}{k}$ with elements of $\SMlt{n+m}{k}$.
\end{rmk}
\begin{rmk}\label{rmk:Framework/Mlt/Strategy}
The key strategy of dealing with the operators like $$F: \SRR{n}\rightarrow \SMlt{m}{k}$$
for $n,m,k\in\NN_0$ is to always work with $\MltM{f}\circ F$ instead, where $f\in\SRR{n}$. For example, similarly to Subsection \ref{Framework/VV=PD} we may consider the space of parameter-dependent distributions $\mathcal{O_M}(\RR{m},\SpRR{n})$ defined along the same lines as $\mathcal{S}(\RR{m},\SpRR{n})$. Any  $F\in \mathcal{O_M}(\RR{m},\SpRR{n})$ defines a map $F:\SRR{n}\rightarrow \Mlt{m}$ which by Remark \ref{rmk:Framework/Mlt/Alt}
is continuous if $\MltM{f}\circ F\in\LL(\SRR{m},\SRR{n})$ for every $f\in\SRR{n}$, for which we use Remark \ref{rmk:Framework/VV=PD}. This way we get that Subsections \ref{Framework/VV=PD} - \ref{Framework/Restrict} can be generalized to the case then the target space is of the form $\SMlt{m}{k}$, $m,k\in\NN_0$.
\end{rmk}
We should separately take care of the augmentations of Subsection \ref{Framework/Ext}, because in general operators acting from the multiplier and mixed spaces may appear. We consider only one simple case, the generalizaiton is straightforward.
\begin{exmp}
 Let $m,n,k\in \NN_0$ and $F\in\LL(\SRR{m}, \SRR{n})$.  Then  $\extD{m+1,\ldots,m+k}{n+1,\ldots,n+k}F$ can be continued to an element of $\LL(\SMlt{m}{k},\SMlt{n}{k})$ such that
 \begin{equation}
\extD{m+1,\ldots,m+k}{n+1,\ldots,n+k}F[g](\bm{x},\bm{y})=F[\evaluateP{m+1,\ldots,m+k}{m+k}g(\bm{y})](\bm{x}),
\label{eq:Framework/Mlt/Ext-Expm}  
\end{equation}
$$\forall \bm{x}\in\RR{n},\forall \bm{y}\in\RR{k},\forall g\in\SMlt{m}{k}.
 $$
 Alternatively, it can be defined by the relation
 \begin{equation}
 \MltMn{f}{n}\circ\extD{m+1,\ldots,m+k}{n+1,\ldots,n+k}F=
 \extD{m+1,\ldots,m+k}{n+1,\ldots,n+k}F\circ \MltMn{f}{m}, \forall f\in\SRR{k},
 \label{eq:Framework/Mlt/Ext-Formal}  
 \end{equation}
 
 where in the right-hand sides  $\extD{m+1,\ldots,m+k}{n+1,\ldots,n+k}F\in\LL(\SRR{m+k},\SRR{n+k})$ is defined as in Subseciton \ref{Framework/Ext}.
\end{exmp}
\begin{proof}
First take some $f\in\SRR{k}$ evaluate the right-hand sides of (\ref{eq:Framework/Mlt/Ext-Formal}) on a function $g\in\SMlt{m}{k}$:
$$
\extD{m+1,\ldots,m+k}{n+1,\ldots,n+k}F\circ \MltMn{f}{m}[g](\bm{x},\bm{y})=F[\evaluateP{m+1,\ldots,m+k}{m+k}g(\bm{y})](\bm{x})f(\bm{y}).
$$
The left-hand side is a Schwartz function of $(\bm{x},\bm{y})$, so pointwise multiplication by a function $\extD{m+1,\ldots,m+k}{n+1,\ldots,n+k}F[g]$ defined by (\ref{eq:Framework/Mlt/Ext-Expm}) induces a map $\SRR{k}\rightarrow \SRR{k+n}$, so by Remark \ref{rmk:Framework/Mlt/Alt} we have $\extD{m+1,\ldots,m+k}{n+1,\ldots,n+k}F[g]\in\SMlt{n}{k}$ and it automatically is the unique solution of (\ref{eq:Framework/Mlt/Ext-Formal}). The right-hand side of (\ref{eq:Framework/Mlt/Ext-Formal}) belongs to $\LL(\SMlt{n}{k},\SRR{m+k})$, so applying once again Remark \ref{rmk:Framework/Mlt/Alt} we get $\extD{m+1,\ldots,m+k}{n+1,\ldots,n+k}F\in\LL(\SMlt{n}{k},\SMlt{m}{k})$.
\end{proof}

\subsection{Elementary examples}\label{Framework/exmp}
A lot of standard operations often arising in the distributions theory and QFT are operators in $\LL\left(\SMlt{n}{m},\SMlt{n'}{m'}\right)$ for some $n,m,n',m'\in\NN_0$. We have already constructed the pointwise multiplication operators. Here we list some other classes of operators which are convenient to use later in the text.
\paragraph{Partial derivatives} For $n\in\NN$ and a multi-index $\alpha\in\NN_0^{n}$ we have $\partial^{\alpha}\in\LL(\SRR{n},\SRR{n})$. Likewise, the derivatives can be defined for 
$\SMlt{n}{m}$, $n,m\in\NN_0$

\paragraph{Integration with moving boundary}
We can define $J_{+},J_{-}\in\LL(\SRR{})$ and $J\in\LL(\SRR{},\SRR{2})$ as
\begin{equation}\label{eq:Framework/exmp/J-}
J_{-}[f](t)=\int_{-\infty}^{t}f(t)dt, \, \forall t\in\RR{}, \forall f\in\SRR{};
\end{equation}
\begin{equation}
    J_{+}[f](t)=\int_{t}^{+\infty}f(t)dt, \, \forall t\in\RR{}, \forall f\in\SRR{};
\end{equation}
\begin{equation}\label{eq:Framework/exmp/J}
J_{}[f](t,t')=\int_{t}^{t'}f(t)dt, \, \forall t,t'\in\RR{}, \forall f\in\SRR{}.
\end{equation}

These operators interact with the derivatives in obvious ways, e.g. 
$$\partial\circ J_{+}=-J_{+}\circ \partial=\mathbb{1}_{\LL(\SRR{})},$$
\begin{equation}\label{eq:Framework/exmp/Int-dif}
\partial_1\circ J=-\partial_2\circ J=\mathbb{1}_{\LL(\SRR{})}.    
\end{equation}
\par 
It is worth noting that the trivial property $J_{+}[f](t)=\lim_{t'\rightarrow +\infty}J[f](t,t')$ survives the augmentation, i.e. for any $n\in\NN$ and $f\in\SRR{n}$ we have 
\begin{equation}\label{Framework/exmp/IntLim}
    \lim_{t'\rightarrow \infty} \evaluateP{2}{1}(\extD{2,\ldots,n}{3,\ldots,n+1}J[f])(t')=\extD{2,\ldots,n}{2,\ldots,n}J_{+}[f],
\end{equation}
and similarly for analogous relations.
\par 
For future use we introduce more complicated operators
$
\intSimplex{n}\in\LL(\SRR{n},\SRR{2})
$
and $\intUSimplex{n}{\pm}\in\LL(\SRR{n},\SRR{1})$
defined by
$$
\intSimplex{n}[f](t,t')=\int_{t}^{t'}dt_1\int_{t}^{t_2}dt_2\cdots \int_{t}^{t_{n-1}}dt_n f(\arrs{t}{n}),\, \forall f\in\SRR{n},\,\forall t,t'\in\RR{};
$$
$$
\intUSimplex{n}{+}[f](t)=\int_{t}^{+\infty}dt_1\int_{t}^{t_2}dt_2\cdots \int_{t}^{t_{n-1}}dt_n f(\arrs{t}{n}),\, \forall f\in\SRR{n},\,\forall t,t'\in\RR{}
$$
$$
\intUSimplex{n}{-}[f](t)=\int_{-\infty}^{t}dt_1\int_{t}^{t_2}dt_2\cdots \int_{t}^{t_{n-1}}dt_n f(\arrs{t}{n}),\, \forall f\in\SRR{n},\,\forall t,t'\in\RR{}.
$$
Analogues of  (\ref{Framework/exmp/IntLim}) can be written for this class of operators.

\begin{rmk}\label{rmk:Framework/exmp/Strong}
If we treat $\SRR{n+k}$ as $\mathcal{S}(\RR{k},\SRR{n})$ and think of $\SRR{n}$ as a subspace of $L^2(\RR{n})$, then the partial derivatives with respect to the parameters define partial derivatives of the vector-valued functions with respect to the parameters. Moreover, the integration with respect to the parameters\footnote{Here we suppose that the first $k$ arguments are parameters as in Subsection \ref{eq:Framework/extProd}} such as
$$
\extD{k+1,\ldots,k+n}{k+1,\ldots,k+n}J[f] \, (f\in\SRR{n+k})
$$
coincides with the Bochner integral of the vector-valued function $\evaluateP{1,\ldots,k}{1,\ldots,k}f$. This follows from approximation of $f\in\SRR{n+k}$ by
$$
f=\lim_{n\rightarrow \infty} g_n\otimes h_n, g_n\in\SRR{k}, h_n\in\SRR{n}
$$
and then an approximation of $h_n$ by sequences of the simple functions. 
\end{rmk}

\paragraph{Precompositions}
Clearly, if $A$ is an injective\footnote{If $A$ is not injective, then $A^*$ is valued in the multiplier or mixed space} linear operator\footnote{We limit ourselves to the linear functions for simplicity. In general a function should belong to the multipliers space (see Section \ref{Framework/Mlt}) together with its inverse, see Exercise 2.8 of \cite{NN}.} $\RR{n}\rightarrow\RR{m}$, then it defines an operator\footnote{This notation, of course, is in conflict with the one for adjoint operators. The later, however, is used here only episodically, since another concept of conjugated operator is more important for us. Anyway, we never introduce adjoint of a finite-dimensional operators, so no confusion is possible.} $A^*\in\LL(\SRR{m},\SRR{n}),$
$$
A^*[f](\bm{x})=f(A\bm{x}), \, \forall\in\SRR{n}, \, \forall\bm{x}\in\RR{m}.
$$
In particular, the diagonal map $D_n: \RR{n}\rightarrow\RR{2n}$, 
$$D_n(\bm{x})=(\bm{x},\bm{x}), \, \forall \bm{x}\in\RR{n}$$ defines $\DiagM{n}\in\LL(\SRR{2n},\SRR{n})$ such that $\DiagM{n}f(\bm{x})=f(\bm{x},\bm{x})$ for $f\in\SRR{n}$ and $\bm{x}\in\RR{n}$.
\par
Another important class of maps on comes from the permutation group. For $\sigma\in\symmgr{n}$ and $k\in\NN$ we set $(\sigma)_{(k)}: \RR{n\cdot k}\rightarrow \RR{n\cdot k}$ as $(\sigma)_{(k)}(\arrs{\bm{x}}{n})=(\arrsP{\bm{x}}{n}{\sigma}),\, \bm{x}_j\in\RR{k}, j=1,\ldots, n$. We define $\permK{k}{\sigma}=(\sigma^{-1})_{(k)}^*$ defining a representation of $\symmgr{n}$ on $\SRR{n\cdot k}$ and 
$$
\symmze{k}{n}=\frac{1}{n!}\sum_{\sigma\in\symmgr{n}}\permK{k}{\sigma}.
$$
\paragraph{Functions as operators, tensor and pointwise multiplication}
Recall that we identify $\SRR{n}$ with $\LL(\SRR{0},\SRR{n})$ by treating 
 $f\in\SRR{n}$ as a map  
$$\mathbb{C}\ni c\mapsto c f.$$
\par 
The augmentation define the tensor multiplication by $f$, e.g. $$\extD{1,\ldots,m}{1,\ldots,m}f[g]=g\otimes f,\forall f\in\SRR{m},\forall g\in\SRR{n}$$.
The following example illustrates usage of this operation together with the pointwise multiplication of Subsection \ref{Framework/Mlt}. The subsequent remark points out the issue of symbolic notation of the next subsection which one should be aware of.
\begin{exmp}\label{exmp:Framework/2products}
There is an important relation between the tensor multiplication above and the pointwise multiplication of Subsection \ref{Framework/Mlt}. For example, let $n,m,k\in\NN_0$, $f\in\SRR{n+k}$ and consider a map $F\in\LL(\SRR{m+k},\SRR{n+m+k})$
$$
F[g](\bm{x},\bm{y},\bm{z})=g(\bm{x},\bm{y})f(\bm{y},\bm{z}), \,\forall g\in\SRR{m+k}, \forall \bm{x}\in\RR{m},\forall \bm{y}\in\RR{k},\forall \bm{z}\in \RR{n}.
$$
On the one hand, we may first tensor with $f$
$$
\extD{1,\ldots,m+k}{1,\ldots,m+k}f[g](\bm{x},\bm{y},\bm{y'},\bm{z})=f(\bm{x},\bm{y})g(\bm{y}',\bm{z})\,\forall g\in\SRR{m+k}, \forall \bm{x}\in\RR{m},\forall \bm{y},\bm{y}'\in\RR{k},\forall \bm{z}\in \RR{n},
$$
and then collapse $\bm{y}$ and $\bm{y}'$ into one variable,
\begin{equation}
\label{eq:Framework/exmp/mult=tensor}
F=\extD{1,\ldots,m,m+2k+1,\ldots,m+2k+n}{1,\ldots,m,m+k+1,\ldots,m+k+n}\DiagM{k}\circ \extD{1,\ldots,m+k}{1,\ldots,m+k}f.
\end{equation}
On the other hand, we may start from tensoring with a constant function by $\extD{1,\ldots,m+k}{1,\ldots,m+k}\mathbb{1}_{\Mlt{n}}\in\LL(\SRR{m+k},\SRR{n+m+k}),$
$$
\extD{1,\ldots,m+k}{1,\ldots,m+k}\mathbb{1}_{\Mlt{n}}[g](\bm{x},\bm{y},\bm{z})=g(\bm{x},\bm{y})\,\forall g\in\SRR{m+k}, \forall \bm{x}\in\RR{m},\forall \bm{y}\in\RR{k},\forall \bm{z}\in \RR{n},
$$
and then multiply by $f$, ignoring first $n$ variables ($\bm{x}$ in the notation above), getting 
\begin{equation}
F=\extD{1,\ldots,n}{1,\ldots,n}\MltS{f}\circ \extD{1,\ldots,m+k}{1,\ldots,m+k}\mathbb{1}_{\Mlt{n}}.\label{eq:Framework/exmp/tensor=mult}    
\end{equation}
\end{exmp}
\begin{rmk}\label{rmk:exmp:Framework/2products}
In the setting of the previous example the forms (\ref{eq:Framework/exmp/mult=tensor}) and (\ref{eq:Framework/exmp/tensor=mult}). However, only the later generalizes easily to the case $f\in\Mlt{n}$. Indeed, the map $\DiagM{k}$ does maps $\SMlt{k}{k}$ into $\Mlt{k}$ rather than $\SRR{k}$, so in the form (\ref{eq:Framework/exmp/mult=tensor}) it takes a separate effort to prove that $F\in\LL(\SRR{m+k},\SRR{n+m+k})$. At the same time, $\MltS{f}$ can be replaced by  $\MltM{f}$ which extends to a continuous operator $\Mlt{n}\rightarrow\Mlt{n}$ for obvious reasons.
\end{rmk}

\paragraph{The Fourier transform}
$$
\Fourier{n}{\eta}[f](\bm{x})=\int{e^{\eta \ii \bm{k}\cdot \bm{x}}f(\bm{k})d^{k}\bm{x}}, \forall f\in \SRR{n},\forall \bm{k}\in\RR{n}
$$
is a well-known invertible operator in $\SRR{n}$. Here $\eta=\pm 1$.
Note that $\Fourier{n}{\eta}$ has a restriction $\restrD{1,\ldots,n}{n+1,\ldots,2n}\Fourier{n}{\eta}\in\Mlt{2n}$,
\begin{equation}
\restrD{1,\ldots,n}{n+1,\ldots,2n}\Fourier{n}{\eta}(\bm{x},\bm{k})=e^{i\eta\bm{k}\cdot \bm{x}}.
\label{eq:Framework/exmp/Fourier}    
\end{equation}
\subsection{Symbolic notation and calculus}\label{Framework/Symbolic}
In this subsection we generalize the standard symbolic integral notation from the distributions theory
\begin{equation}
F[f]=\int F(\bm{x})f(\bm{x})d^n\bm{x}, \, F\in\SpRR{n}, \,f\in\SRR{n}.
\label{Framework/symb/std}  
\end{equation}
We introduce all the notions and operations for the spaces $\SRR{n}$, $n\in\NN_0$, but use their generalization to the multiplier and mixed spaces as explained in Subsection \ref{Framework/Mlt}. Before giving it in the full generality let us first make one step from (\ref{Framework/symb/std})
\begin{notn}\label{notn:symb-dist-eval}
We write
\begin{equation}
F[f](\bm{y})=\int F\symbDistArgs{\bm{x}}{\bm{y}}f(\bm{x})d^n\bm{x}, 
\label{eq:Framework/Symb/F[f]}    
\end{equation}
where $F\in\LL(\SRR{n},\SRR{m})$, $f\in\SRR{n}$ and $y\in\RR{m}$.
We call $\bm{x}$ the \emph{distributional argument} in contrast to the \emph{parameters} $\bm{y}$ of $F$. The compose symbol $F\symbDistArgs{\bm{x}}{\bm{y}}$ we call the \emph{symbolic kernel} of $F$.
\end{notn}
Note that  (\ref{Framework/symb/std}) reduces to (\ref{eq:Framework/Symb/F[f]}) if $m=0$.
\begin{rmk}
By the Schwartz kernel theorem we can could treat the symbolic kernel of $F\in\LL(\SRR{n},\SRR{m})$ as an element of $\SpRR{n+m}$, keeping (\ref{eq:Framework/Symb/F[f]}). But this is not in the line with the goals we set in the beginning of the section: to use minimally special properties of the Schwartz spaces and to highlight that all the objects we deal with are, in one way or another, linear operators between the Schwartz spaces.
\end{rmk}
The notation (\ref{eq:Framework/Symb/F[f]}) resembles action of a matrix on a vector.
So, it is natural to use the matrix product-like notation for the symbols of compositions of operators. 
\begin{notn}\label{notn:symb-dist-compose}
Let $n\in\NN$. For each $j=1,\ldots,n$ take $F_j\in\LL(\SRR{m_j},\SRR{m_{j-1}})$, where $m_j\in\NN_0$, $j=0,\ldots,n$. Then we write
\begin{equation}\label{eq:notn:symb-dist-compose}
(F_1\circ F_2\circ\cdots\circ F_n)\symbDistArgs{\bm{x}_{n}}{\bm{x_0}}=    
\end{equation}
$$\int F_1\symbDistArgs{\bm{x}_{1}}{\bm{x_0}}F_1\symbDistArgs{\bm{x}_{2}}{\bm{x_1}}\cdots F_n\symbDistArgs{\bm{x}_{n-1}}{\bm{x_n}}\prod_{j=1}^{n-1}d^{m_j}\bm{x}_j,\,\bm{x}_0\in\RR{m_0},\,\bm{x}_n\in\RR{m_n}.
$$
\end{notn}
We wrote for one time the product of the symbolic kernels explicitly to underline that it is ordered in a particular way. As these expressions are formal, we have no right to permute the multipliers for now, eventhough essentially they are numerical distributions.  At the same time, we may safely use the ``Fubini's theorem" for symbolic integrals by construction (and associativity of the composition). As we make no distinguishment between $\SRR{n}$ and $\LL(\SRR{0},\SRR{n})$, we naturally identify $F\in \LL(\SRR{0},\SRR{n})$ with the function $F\in\SRR{n}$. Then Notation \ref{notn:symb-dist-compose} reduces to Notation \ref{notn:symb-dist-eval} when $n=2$ and $m_2=0$. We underline that the ordering of symbolic kernels in (\ref{eq:notn:symb-dist-compose}) is not arbitrary, instead, it reproduces the order of operators in the composition. 
\par 
The next step is to introduce separate notation for special classes of operators.
\begin{notn}\label{notn:Framework/symb/stdop}
Fix $n,m\in\NN_0$ and $F\in\LL(\SRR{n},\SRR{m})$. We set:
\begin{itemize}
    \item For $\alpha\in\NN_0^m$,
    $$\partial_{\bm{y}}^{\alpha}F\symbDistArgs{\bm{x}}{\bm{y}}=(\partial^{\alpha}\circ F )\symbDistArgs{\bm{x}}{\bm{y}} (\bm{x}\in\RR{n},\,\bm{y}\in\RR{m});$$
    \item For an injective map $A\in\LL(\RR{k},\RR{m})$ we set\footnote{As before, the non-injective maps are allowed, but lead to multiplier (or mixed) space-valued operators}
    $$
    F\symbDistArgs{A\bm{x}}{\bm{y}}=({A}^*\circ F)\symbDistArgs{\bm{x}}{\bm{y}}\, (\bm{x}\in\RR{n},\,\bm{y}\in\RR{m});
    $$
    \item For $f\in\Mlt{m+k}$ ($k\in\NN_0$) we set\footnote{Clearly, $\MltM{f}\circ \extD{1,\ldots,m}{1,\ldots,m}\mathbb{1}_{\Mlt{k}}\circ F\LL(\SRR{n},\SRR{n+m+k})$. We add this case despite our decision to not discuss the multiplier and mixed spaces, because it concerns the issues not appearing for the Schwartz spaces. We will need this operation to deal with quantum fields and Hamiltonian densities.}
    $$
    f(\bm{y},\bm{z})F\symbDistArgs{\bm{x}}{\bm{y}}=(\MltM{f}\circ \extD{1,\ldots,m}{1,\ldots,m}\mathbb{1}_{\Mlt{k}}\circ F)\symbDistArgs{\bm{x}}{\bm{y},\bm{z}} (\bm{x}\in\RR{n},\,\bm{y}\in\RR{m},\bm{z}\in\RR{k});
    $$
    \item For $n',m'\in\NN_0$ and $F'\in \LL\left(\SRR{n'},\SRR{m'}\right)$:
    $$
    (F\otimes F')\symbDistArgs{\bm{x},\bm{x'}}{\bm{y},\bm{y'}}=
    F\symbDistArgs{\bm{x}}{\bm{y}}F'\symbDistArgs{\bm{x}'}{\bm{y}'},\,\bm{x}\in\RR{n},\bm{y}\in\RR{m},\bm{x}'\in\RR{n'},\bm{y}'\in\RR{m'}.
    $$
\end{itemize}
\end{notn}
\begin{rmk}
Let us look on the consistency issues of the notation above. 
\begin{itemize}
    \item Notation \ref{notn:Framework/symb/stdop} becomes tautological if we put $m=0$;
    \item If $F$ and $A$ are as in Notation \ref{notn:Framework/symb/stdop} and $B\in\LL(\RR{l},\RR{k})$, then
    $$
    F\symbDistArgs{\bm{x}}{BA\bm{y}}
    $$
    has two interpretations
    $$
    ((BA)*\circ F)\symbDistArgs{\bm{x}}{\bm{y}} \,\rm{and}\, (B^*\circ A^*\circ F)\symbDistArgs{\bm{x}}{\bm{y}}.
    $$
    But clearly $(BA)^*=B^*\circ A^*$, so the interpetations coincide;
    \item for $F,F',A$ as in Notation \ref{notn:Framework/symb/stdop} and $A'\in\LL\left(\RR{k'},\RR{m'}\right)$ the expression 
    $$
    F\symbDistArgs{\bm{x}}{A\bm{y}}F'\symbDistArgs{\bm{x}'}{A'\bm{y}'}
    $$
    has two possible interpretations:
    $$
    (A\oplus A')^*\circ (F\otimes F') \quad \rm{and}\quad  (A^*\circ F)\otimes ({A'}^* \circ F')=(A^*\otimes {A'}^*)\circ (F\otimes F'),
    $$
    where we have used Remark \ref{rmk:Framework/Augment/algebra} to rewrite the second version. By a direct computation $(A\oplus A')^*=A^*\otimes {A'}^*$, so the interpretations are equivaent;
    \item For $F$ as in Notation \ref{notn:Framework/symb/stdop} and $f\in\SRR{m+k}$  expression
    $$
    f(\bm{y},\bm{z})F\symbDistArgs{\bm{x}}{\bm{y}}
    $$
    has two possible interpretations:
    $$
    (\MltM{f}\circ \extD{1,\ldots,m}{1,\ldots,m}\mathbb{1}_{\Mlt{k}}\circ F)\symbDistArgs{\bm{x}}{\bm{y},\bm{z}}  \,\rm{and}\, \extD{k+1,\ldots,k+m,2k+m+1,\ldots,2k+m+n}{k+1,\ldots,k+m+n}\DiagM{k}\circ(f\otimes F).
    $$
    Here in the first case we treat $f$ as an element of $\Mlt{m+k}\supset \SRR{m+k}$, while in the second case as an operator in $\LL(\SRR{0},\SRR{m+k})$.
    They coincide by Example  \ref{exmp:Framework/2products}. As explained in Remark \ref{rmk:exmp:Framework/2products}, the separate definition of pointwise multiplication for the multiplier spaces can not be discarded.
\end{itemize}
\end{rmk}
\begin{rmk}\label{rmk:Framework/symb/stdop-props}
The operations introduced above satisfy the following:
\begin{itemize}
    \item The partial derivatives commute among themselves;
    \item Derivatives interact with the linear changes of variables as usual, according to the chain rule;
    \item The Leibnitz rule for partial derivatives hods;
    \item  For $F,F'$ as in Notation \ref{notn:Framework/symb/stdop}
    \begin{equation}\label{eq:Framework/ProdComm-symb}
   F\symbDistArgs{\bm{x}}{\bm{y}}F'\symbDistArgs{\bm{x}'}{\bm{y}'}= F'\symbDistArgs{\bm{x}'}{\bm{y}'} F\symbDistArgs{\bm{x}}{\bm{y}}      
    \end{equation}
    \item Let $n,n'\in\NN$. Take
    $$
    F_j\in\LL\left(\SRR{m_j},\SRR{m_{j-1}}\right), j=1,\ldots,n
    $$
    $$
    F'_j\in\LL\left(\SRR{m'_j},\SRR{m'_{j-1}}\right), j=1,\ldots,n',
    $$
    where $\arrsM{m}{0}{n},\arrsM{m'}{0}{n'}\in\NN_0$. Then
    \begin{equation}
    \left(\int \left(\prod_{j=1}^{n}F_j\symbDistArgs{\bm{x}_{j}}{\bm{x}_{j-1}}\right) \prod_{j=1}^{n-1} d^{m_j}\bm{x}_j\right)
    \left(\int \left(\prod_{j=1}^{n'}F'_j\symbDistArgs{\bm{x}'_{j}}{\bm{x}'_{j-1}}\right) \prod_{j=1}^{n'-1} d^{m'_j}\bm{x}'_j\right)=
    \label{eq:symb-tensor-compos-distr}
    \end{equation}
    $$\int 
     \left(\prod_{j=1}^{n}F_j\symbDistArgs{\bm{x}_{j}}{\bm{x}_{j-1}}\right)  \left(\prod_{j=1}^{n'}F'_j\symbDistArgs{\bm{x}'_{j}}{\bm{x}'_{j-1}}\right) \left(\prod_{j=1}^{n}d^{m_j}\bm{x}_j\right) \left(\prod_{j=1}^{n'}d^{m'_j}\bm{x}'_j\right),
    $$
    $$
    \bm{x}_0\in \RR{m_0}, \quad \bm{x}_n\in \RR{m_n}
    $$
\end{itemize}
\end{rmk}
\begin{proof}
Most of the statements are trivial, so we only mark the way of the proof for some of the claims. The Leibnitz rule for the defining form of symbolic expression (\ref{eq:notn:symb-dist-compose}) is trivial by definition of the partial derivatives. To deal with tensor products use 
$$
\partial^{(a,b)}=\partial^{a}\otimes \partial^{b},
$$
where $a\in\NN_0^m$ and $b\in\NN_0^{m'}$ are two multiindices and $(a,b)\in\NN_0^{m+m'}$ is their concatenation ($m,m' \in\NN_0$) and the
Remark \ref{rmk:Framework/Augment/algebra}. 
\par 
 The commutativity (\ref{eq:Framework/ProdComm-symb}) is due to Remark \ref{rmk:Framework/ProdComm}. The relation (\ref{eq:symb-tensor-compos-distr}) holds By Remark \ref{rmk:Framework/Augment/algebra}.
\end{proof}

It is natural to write the symbolic kernel of $\mathbb{1}_{\LL(\SRR{n})}$ ($n\in\NN_0$) as\footnote{It can be also considered as a special case of Notation \ref{notn:Framework/symb/restrict}}
$$
\mathbb{1}_{\LL(\SRR{n})}\symbDistArgs{\bm{x}}{\bm{y}}=\delta^{(n)}(\bm{x}-\bm{y}),\,\bm{x},\bm{y}\in\RR{n}.
$$
Then we can introduce the folowing formal rules.
\begin{notn}\label{notn:symb-dist-1}
$$
\int \mathbb{1}_{\LL(\SRR{n})}\symbDistArgs{\bm{x}}{\bm{y}} F\symbDistArgs{\bm{z}}{\bm{x},\bm{u}} d^n\bm{x}= d^n\bm{x}=F\symbDistArgs{\bm{z}}{\bm{y},\bm{u}},
$$
and
$$
\int  F'\symbDistArgs{\bm{y},\bm{z}}{\bm{u}} \mathbb{1}_{\LL(\SRR{n})}\symbDistArgs{\bm{x}}{\bm{y}} d^n\bm{x}= d^n\bm{x}=F\symbDistArgs{\bm{z},\bm{x}}{\bm{u}},
$$
for appropriately chosen $F$ and $F'$. We assume that such an integral can be always separated from the long symbolic expression by ``Fubini theorem" and that the factors with independent variables commute as in Remark \ref{rmk:Framework/symb/stdop-props}.
\end{notn}
\begin{exmp}By Remark \ref{rmk:Framework/Augment-as-tensor},for $m,m',n',n,k\in\NN_0$, $F\in\LL(\SRR{n+k},\SRR{m})$, and $G\in\LL\left(\SRR{n'},\SRR{m'+k}\right)$ we have
$$\extD{{m+1,\ldots,m+m'}}{n+1,\ldots,n+m'}F\circ   \extD{1,\ldots,n}{1,\ldots,n}G=(F\otimes \mathbb{1}_{\LL(\SRR{m'})})\circ(\mathbb{1}_{\LL(\SRR{n})}\otimes G),$$
so
\begin{equation}\label{eq:Framework/Symbolic/ExtProd}
    (\extD{{m+1,\ldots,m+m'}}{n+1,\ldots,n+m'}F\circ   \extD{1,\ldots,n}{1,\ldots,n}G)\symbDistArgs{\bm{x},\bm{x}'}{\bm{y},\bm{y}'}=
   \end{equation}
    $$
    \int F\symbDistArgs{\bm{x}'',\bm{z}}{\bm{y}}\mathbb{1}_{\LL(\SRR{m'})}\symbDistArgs{\bm{y}''}{\bm{y}'} G\symbDistArgs{\bm{x}'}{\bm{y}'',\bm{z}}\mathbb{1}_{\LL(\SRR{n})}\symbDistArgs{\bm{x}}{\bm{x}''}d^{k}\bm{z}d^{m'}\bm{y}''d^{n}\bm{x}'=
    $$
    $$
    \int F\symbDistArgs{\bm{x}',\bm{z}}{\bm{y}} G\symbDistArgs{\bm{x}'}{\bm{y}',\bm{z}}d^{k}\bm{z},
    $$
$$\bm{x}\in\RR{n},\bm{x}'\in\RR{n'},\bm{y}\in\RR{m},\bm{y}'\in\RR{m'}.
$$
 \end{exmp}
 Finally, to restore all the power of the standard symbolic notation for distributions, we have to allow formal transformations under the integral sign.
\begin{notn}\label{notn-symb-dist}
Let $n,m,r,k,\arrs{m}{k},\arrs{m'}{k}\in\NN_0$, 
$$F_j\in\LL\left(\SRR{m_j},\SRR{m'_j}\right),\,
x_j\in\LL(\RR{n+m+r},\RR{m_j})\, y_j\in\LL(\RR{n+m+r},\RR{m'_j})
$$
and $F\in\LL(\SRR{n},\SRR{m})$. We write
$$
F\symbDistArgs{\bm{x}}{\bm{y}}=\int{\prod_{j=1}^k F_j\symbDistArgs{x_j(\bm{x},\bm{y},\bm{z})}{y_j(\bm{x},\bm{y},\bm{z})}d^k\bm{z}}
$$
if for general $f\in\SRR{n}$ and $\bm{y}\in\RR{m}$ the expression
$$
\int{\prod_{j=1}^k F_j\symbDistArgs{x_j(\bm{x},\bm{y},\bm{z})}{y_j(\bm{x},\bm{y},\bm{z})}f(\bm{x})d^k\bm{z} d^n\bm{x}}
$$
can be transformed to the one, translating to $F[f](\bm{y})$ by Notations \ref{notn:symb-dist-compose}, \ref{notn:Framework/symb/stdop}, and \ref{notn:symb-dist-1} using 
\begin{itemize}
    \item Formal linear invertible transformation of integration variables,
    \item Formal integration by parts.
\end{itemize}
\end{notn}
It is easy to see that the necessary change of variables is essentially unique (up to irrelevant renaming), so the construction above is well-defined. The ``Fubini theorem" and properties stated in Remark \ref{rmk:Framework/symb/stdop-props} are still valid and can be extended to new operations, linear transformation of the distributional arguments and partial derivatives with respect to them.
\begin{rmk}\label{rmk:symb-prod-cond}
Note that the order of formal kernels in the product is still not arbitrary: after all changes of variables it should be presented in the form (\ref{eq:notn:symb-dist-compose}), where, in particular, each integration variable appears first (if we are counting from left to right) as a distributional argument, then as a parameter. Notations \ref{notn:Framework/symb/stdop} and  \ref{notn:symb-dist-1} do not affect this rule.  One may show that if the order can be changed so that the conditions of Notation \ref{notn-symb-dist} are preserved, than the resulting operator does not change\footnote{Roughly speaking, such a reordering is possible only for tensor products, for which the commutativity was already found in Remark \ref{rmk:Framework/symb/stdop-props}}. In this light, we may postulate that the products of symbolic kernels is commutative, assuming that expression is defined if it can be reordered so that conditions of Notation \ref{notn-symb-dist} are satisfied. However, to avoid possible confusions, we will rarely use this possibility.
\end{rmk}
The restrictions can be naturally incorporated into this calculus.
\begin{notn}\label{notn:Framework/symb/restrict}
For $n,m,k$ let $F\in\LL(\SRR{n},\SRR{m+k})$ and $G=\intD{1,\ldots,k}{1,\ldots,k}F$ (or equivalently $F=\restrD{1,\ldots,k}{1,\ldots,k}G$). Then  
$$
G(\bm{x},\bm{y}|\bm{z})=F(\bm{y}|\bm{x},\bm{z}), \bm{x}\in\RR{k},\bm{y}\in\RR{n},\bm{z}\in\RR{m}.
$$
\end{notn}
 However, this identification makes the notation even more confusing, so we will use it only is selected situations.
In particular, Notation \ref{notn:Framework/symb/restrict} and (\ref{eq:Framework/exmp/Fourier})
$$
\Fourier{n}{\eta}(\bm{x},\bm{k})=e^{i\eta\bm{k}\cdot \bm{x}}.
$$
which makes the standard integral notation for the (partial) Fourier transform of distributions into a special case of (\ref{eq:Framework/extProd}).
\par 
\begin{rmk}\label{rmk:Framework/symb/disc-restriction}
In some cases it is convenient to go further and identify the formal symbol with a possibly discontinuous function making (\ref{eq:Framework/Symb/F[f]}) precise. For example, we write
$$
J_{+}\symbDistArgs{t'}{t}=\theta(t-t'), \, J_{-}\symbDistArgs{t'}{t}=\theta(t'-t),$$
$$J\symbDistArgs{t'}{t_1,t_2}=\theta(t-t_1)\theta(t_2-t)-\theta(t-t_2)\theta(t_1-t),
$$
and treat the Heaviside function just as $\theta\in L^{\infty}(\RR{})$. Note that we always are interested in $L^{\infty}$ class of such discontinuous kernels only. This identification, unlike the one in Notation \ref{notn:Framework/symb/restrict}, is ill-behaved if the derivatives are involved. We also point out that in this form it is absolutely not clear that $J_{\pm}$ and $J$ are smooth with respect to its parameters, although from direct definitions (\ref{eq:Framework/exmp/J-}-\ref{eq:Framework/exmp/J}) we know that they are.
\end{rmk}
Finally, we also use another intuitive notation for the integral operators of Subsection \ref{Framework/exmp}, writing, for example,
$
\int_{-\infty}^t (\ldots)dt'
$ instead of $\int J_+\symbDistArgs{t'}{t}(\ldots)dt'$ and similarly for $J$, $J_-$, $\intSimplex{n}{}$ and $\intUSimplex{n}{\pm}$ ($n\in\NN_0$).
\par 
To sum up, we see that all the operations on operators we have introduced have a simple intuitive presentation in the language of symbolic integrals. Most of the formal manipulations with such integrals are allowed by the facts we have established. However, overuse of this notation may hide under cover some non-trivial operations, so we often provide at least one precise translation for the formal notation used to avoid any confusion.

\section{The domain \texorpdfstring{$\DS$}{DS} and the secondary quantization}\label{DS}
\subsection{Construction of \texorpdfstring{$\DS$}{DS}}

To begin with, we define the space of $l$-particle states. 
We start from making the space $\SRR{3l}$ into a pre-Hilbert space with the scalar product
 \begin{equation}\label{eq:DS/SProd}
  (\Psi,\Psi')=\int
\overline{\Psi\left(\arrs{\vec{p}}{l}\right)} \Psi'\left(\arrs{\vec{p}}{l}\right)d^{3l}\arrs{\vec{p}}{l}, \, \forall \Psi,\Psi'\in\SRR{3l}.   
 \end{equation}
\begin{rmk}
In the Lorentz invariant theories a $\frac{1}{2\omega_0(\vec{k})}$ 
multiplier\footnote{Here we $\omega_0$ is the dispersion function defined below.} is often added to the measure of integration to make the measure Lorentz invariant (sometimes together with a  power of $2\pi$ for convenience). Here we do not assume the Lorentz invariance and chose the simplest possible measure of integration. Note, that as long as the dispersion function $\omega_0$ is as in Definition \ref{def:dispRel}, such a factor can be always absorbed into a redefinition of the wave function, without leaving the $\SRR{3l}$ class. 
\end{rmk}

\begin{rmk}\label{rmk:DS/tau-cont}
$$
|(\Psi,\Psi)|\leq C||\Psi||_{\SRR{3l}, k,0}^2
$$
for sufficiently large $C$ and $k$. Thus the tautological map from $\SRR{3l}$ (with the standard topology) to the homonymous pre-Hilbert space is continuous.
\end{rmk}
Now we take into account the bosonic statistics and define the 
$$
\DS^{l}=\symmze{3}{}\SRR{3l}
$$
and 
\begin{equation}
\label{eq:DS-def}
\DS={\bigoplus_{l=0}^{\infty}}_{\mathrm{alg}}\DS^{l}.
\end{equation}
For $\Psi\in\DS$ we denote by $\Psi_l\in\DS^{l}$ its $l$th component.
The scalar product on $\DS$ is given by
$$
(\Psi,\Psi')=\sum_{l=0}^{\infty}(\Psi_l,\Psi'_l), \, \forall \Psi,\Psi'\in\DS.
$$
The completion of $\DS$ with respect to that product is the Hilbert space of physical states $$\overline{\DS}=\mathcal{H}_{\mathrm{phys}}.$$ 
\begin{rmk}
The space $\mathcal{H}_{\mathrm{phys}}$  is important for the physical interpretation of the Hamiltonian perturbative QFT construction in Section \ref{HpQFT}. It is also relevant for the conjugated operator notion (Definition \ref{def:DS/conj}) and in particular for the unitarity condition in Subsection \ref{HpQFT/AdmTh}.  Besides these two contexts, we ignore the pre-Hilbert space structure and use instead the topology of the summands of (\ref{eq:DS-def}) inherited from $\SRR{3l}$.
\end{rmk}
It is convenient to introduce the truncated spaces
$$
\DS^{\leq l}={\bigoplus_{l'=0}^{l}}\DS^{l'}.
$$
We also define the vacuum vector as $$\Omega_0=1\in\mathbb{C}=\DS^{0}.$$
Clearly,
$$
(\Omega_0,\Omega_0)=1.
$$
\subsection{Operators on \texorpdfstring{$\DS$}{DS}}\label{DS/op}
As $\DS^{l}$, $l\in\NN$ are topological spaces, the space of operators $\LL(\DS^{l},\DS^{l'})$ can be defined as usual. The following characterization of it will be convenient for us later. 
\begin{rmk}\label{rmk:DS/op/sigma-proj}
There is a one-to-one correspondence between $\LL(\DS^{l},\DS^{l'})$ and the set
$$
\left\{A\in\LL(\SRR{3l},\SRR{3l'})| \symmze{3}{}[A]=A\right\}=\symmze{3}{}\left[\LL(\SRR{3l},\SRR{3l'})\right],
$$
where for $A\in\LL(\DS^{l},\DS^{l'})$ 
$$
\symmze{3}{}[A]=\symmze{3}{}\circ A \circ \symmze{3}{}.
$$
\end{rmk}
\begin{proof} The correspondence is almost tautological. 
If $A\in \LL(\SRR{3l},\SRR{3l'})$ satisfies $\symmze{3}{}[A]=A$, then it is valued in $\DS^{l'}$ and in particular its restriction to $\DS^{l'}$ belongs to $\LL\left(\DS^{l},\DS^{l'}\right)$. Conversely, if $A'\in \LL\left(\DS^{l},\DS^{l'}\right)$, then we set $A=A'\circ \symmze{3}{}\in\LL\left(\SRR{3l},\SRR{3l'}\right)$ and get $\symmze{3}{}[A]=A$. It is trivial to see that these two maps are inverse to each other.
\end{proof}

\begin{Def}\label{def:DS/LDS}
The algebra $\LDS$ consists of all operators $A: \DS\rightarrow \DS$ such that for any $l\in\NN_0$ there is $l'\in\NN_0$,
$$
A_{\DS^{l}}\in\LL\left(\DS^{l},\DS^{\leq l'}\right).
$$
Here $A_{\DS^{l}}$ is a restriction of $A$ to the space $\DS^{l}$.
\end{Def}
\begin{rmk}
One may show that the notation $\LDS$ becomes precise if we endow $\DS$ with the final topology induced by the inclusions $\DS^{\leq l}\rightarrow \DS$. Still, the direct characterization above is more convenient. 
\end{rmk}
\begin{rmk}\label{rmk:DS/LDS=unbounded}
All elements of $\LDS$ are at the same time densely-defined (unbounded) operators on $\mathcal{H}_{\mathrm{phys}}$. Their domain of definition is $\DS$, and they leave it invariant. 
\end{rmk}
\begin{Def}\label{def:DS/LDS1}
The space $\LLfirst(\DS)
$
is defined by 
$$
\LLfirst(\DS)={\bigoplus_{l,l'=0}^{\infty}}_{\mathrm{alg}}\LL(\DS^{l},\DS^{l'})
$$
\end{Def}
\begin{rmk}
There is a natural way to embed $\LLfirst(\DS)$ into $\LDS$ as an ideal, but we do not need that. It serves as a space of ``unquantized" operators on which the second quantization operation (see Section \ref{DS/2Q}) is defined.
\end{rmk}
In the context of Remark \ref{rmk:DS/LDS=unbounded}, it seems natural to look for the adjoint operators $\LDS^*$. For obvious reasons, $\LDS^*$ is not a subspace of $\LDS$, and furthermore, elements of $\LDS^*$ could be not defined on $\DS$ at all.
So, we introduce an alternative operation on $\LDS$.
\begin{Def}\label{def:DS/conj}
Fix an operator $A\in\LDS$. We say that it \emph{has a conjugate operator} if there is  $A^{\dag} \in\LDS$  such that
$$
(\Psi',A \Psi)=(A^{\dag}\Psi', \Psi), \quad \forall \Psi ,\Psi'\in\DS.
$$
\end{Def}
\begin{rmk}
If $A$ has a conjugate, then by definition $A^{\dag}\subset A^{*}$ is the restriction of the adjoint operator to the space $\DS$. In general.
The operators satisfying $A^{\dag}=A$ are exactly the symmetric operators defined on $\DS$.
\end{rmk}
\begin{rmk}\label{rmk:DS/LDS/conj-product}
If $A,B\in\LL(\DS)$ have conjugates, then so does $A\circ B$, and, moreover, $(A\otimes B)^{\dag}=B^{\dag}\otimes A^{\dag}$
\end{rmk}
\begin{rmk}
With almost no changes Definition can be applied for $A\in\LL(\DS^{l},\DS^{l'})$ and $A^{\dag}\in \LL(\DS^{l'},\DS^{l})$ for some $l,l'\in\mathbb{N}$ or to the case then both belong to $\LLfirst(\DS)$. We use this generalization.
\end{rmk}

\subsection{\texorpdfstring{$\DS$, $\LDS$ and $\LLfirst(\DS)$}{DS, L(DS) and Lv(DS)}-valued functions and distributions}
Lead by the ideas of Section \ref{Framework}, we introduce the following notions.
\begin{Def}
Fix $n,l,m\in\NN_0$. 
The space of $\DS^{(l)}$-valued functions of class $\SMlt{n}{m}$ is\footnote{Recall that in notation introduced in Section \ref{Framework} the operator $\extD{3l+1\ldots 3l+m+n}{3l+1\ldots 3l+m+n}\symmze{3}{}$ symmetrizes the function with respect to permutations of the first $3l$ arguments grouped by three and ignores the rest} 
$$
\DSSMlt{l}{n}{m}=\left(\extD{3l+1\ldots 3l+m+n}{3l+1\ldots 3l+m+n}\symmze{3}{}\right)\SMlt{3l+n}{m}.
$$
We use abbreviate notation 
$$
\DSMlt{l}{m}=\DSSMlt{l}{0}{m}, \, \DSS{l}{n}=\DSSMlt{l}{n}{0}.
$$
\end{Def}
\begin{rmk}
The spaces $\DSSMlt{l}{n}{m}$ inherit the topology of $\SMlt{3l+n}{m}$, so $\LL\left(\DSS{l}{n},\DSSMlt{l'}{n'}{m}\right)$ and $\LL\left(\SRR{n},\DSSMlt{l'}{n'}{m}\right)$ are well-defined for any $l,l',m,n,n'\in\NN_{0}$. In the sense explained in Section \ref{Framework} they play role of the parameter-dependent distributions on $\RR{n}$ of class $\SMlt{n'}{m'}$ valued in $\LL(\DS^{l},\DS^{l'})$ and $\DS^{l'}$ respectively.  
\end{rmk}
\begin{rmk}
Similarly to Remark \ref{rmk:DS/op/sigma-proj} for each $\textbf{}$ we identify
$$
\LL\left(\DSS{l}{n},\DSSMlt{l'}{n'}{m'}\right)=$$
$$\symmze{3}{}\left[\LL(\SRR{3l+n},\SMlt{3l'+n'}{n'}{m'})\right],
$$
and
$$
\LL\left(\SRR{n},\DSSMlt{l'}{n'}{m'}\right)=$$
$$\left(\extD{3l'+1\ldots 3l'+m+n}{3l'+1\ldots 3l'+m+n}\symmze{3}{}\right)\circ \LL(\SMlt{3l+n}{m},\SMlt{3l'+n'}{m'}),
$$
where for $A\in \LL(\SMlt{3l+n}{m},\SMlt{3l'+n'}{n'}{m'})$ we set
$$
\symmze{3}{}[A]=
\left(\extD{3l'+1\ldots 3l'+m'+n'}{3l'+1\ldots 3l'+m'+n'}\symmze{3}{}\right)\circ A\circ \left(\extD{3l+1\ldots 3l+m+n}{3l+1\ldots 3l+m+n}\symmze{3}{}\right).
$$
\end{rmk}
Generalization of Definitions \ref{def:DS/LDS} and \ref{def:DS/LDS1} is straightforward.
\begin{Def} The space $\LL\left(\DSS{}{n},\DSSMlt{}{n'}{m'}\right),$ consists of all operators $A: \DSS{}{n}\rightarrow \DSSMlt{}{n'}{m'}$ such that for any $l\in\NN_0$ there is $l'\in\NN_0$, for which
$$
A_{\DSS{l}{n}}\in\LL\left(\DSS{ l}{n},\DSSMlt{\leq l'}{n'}{m'}\right),
$$
where $A_{\DSS{l}{n}}$ is the restriction of $A$ on the space $\DSS{l}{n}$.
\end{Def}
\begin{Def}
For $n,n',m'\in\NN$ we define
$$
\LLfirst\left(\DSS{}{n},\DSSMlt{}{n'}{m'}\right)=$$
$${\bigoplus_{l,l'=0}^{\infty}}_{\mathrm{alg}}\LL\left(\DSS{l}{n},\DSSMlt{l'}{n'}{m'}\right)
$$
and 
$$
\LL\left(\SRR{n},\DSSMlt{}{n'}{m'}\right)={\bigoplus_{l'=0}^{\infty}}_{\mathrm{alg}}\LL\left(\SRR{n},\DSSMlt{l'}{n'}{m'}\right).
$$
\end{Def}

\begin{notn}\label{notn:DS/opval-symb}
For the sake of readability, we eventually use the symbolic notation generalizing the on of Subsection \ref{Framework/Symbolic}. 
\begin{itemize}
    \item If $\Psi\in\DSSMlt{l}{n}{m}$, ($l,n,m\in\mathbb{N}_0$) we set 
    $$\underline{\Psi}=\evaluateP{3l+1\ldots 3l+n+m}{3l+n+m}\Psi;$$
    \item If $A\in\LL\left(\SRR{k},\DSSMlt{l}{n}{m}\right)$ ($l,n,m,k\in\NN_0$), then
    $$
    \underline{A}=\evaluateP{3l+1\ldots 3l+n+m}{3l+n+m}\circ A;
    $$
    \item If $A\in\LL(\DSS{l}{k},\DSSMlt{l'}{n}{m})$, ($l,l',,n,m,k\in\NN_0$), then $\underline{A}$ is the operator-valued parameter-dependent distributions defined by 
    $$
    \underline{A}=    \evaluateP{3l'+1,\ldots,3l'+n+m}{3l'+n+m}\circ (\evaluatePD{1,\ldots,k}{A})$$
\end{itemize}
We assume that these maps are defined for, respectively, $\Psi\in\DSSMlt{}{n}{m}$, $LL\left(\SRR{k},\DSSMlt{}{n}{m}\right)$ and $\LL(\DSS{}{k},\DSSMlt{}{n}{m})$ by linearity.
\end{notn}
\begin{rmk}\label{rmk:DS/opval-symb}
\begin{itemize}
    \item For $\Psi\in\DSSMlt{}{n}{m}$, ($n,m\in\mathbb{N}_0$) we have $\underline{\Psi}\in\mathcal{SO_M}(\RR{n},\RR{m};\mathcal{H}_{\rm{phys}})$\footnote{$\mathcal{SO_M}(\RR{n},\RR{m};\mathcal{H}_{\rm{phys}})$ can be defined similar to Remark \ref{rmk:Framework/VectorTest};.} ;
    \item For $A\in\LL\left(\SRR{k},\DS\right)$ ($k\in\NN_0$), is a vector-valued distribution in the sense of \cite{NN};
    \item For $A\in\LL(\DSS{}{k},\DS{})$, ($k\in\NN_0$),  $\underline{A}$ is an operator-valued distribution in the sense of \cite{NN}.
\end{itemize}
More general versions define parameter-dependent vector-valued distributions.
\end{rmk}
\begin{proof}
The key ingredient is Remark \ref{rmk:DS/tau-cont} combined with the results of Section \ref{Framework}. 
\end{proof}

Now we want to generalize the operations of Section \ref{Framework} to the operator-valued distributions. In the light of Notation \ref{notn:DS/opval-symb} we just have to ignore the first $3l$ arguments.
\begin{notn}\label{notn:DS/hat-op}
\begin{itemize}
    \item For $n,m\in\NN_{0}$, two tuples of pairwise distinct numbers  $\arrs{i}{k}\in\{1,\ldots,n\}$, $\arrs{i'}{k}\in\{1,\ldots,m\}$,  and  $A\in\LL(\DSS{}{n},\DSS{}{m})$ we define 
    $\extDH{\arrs{i}{k}}{\arrs{i'}{k}}A \in\LL(\DSS{}{n+k},\DSS{}{m+k})$ by setting for each $l,l'\in\NN_{0}$
    $$
    (\extDH{\arrs{i}{k}}{\arrs{i'}{k}}A\Psi)_{l'}=
    (\extD{(3l+i_j)_{j=1,\ldots,k}}{(3l'+i_j)_{j=1,\ldots,k}}A\Psi)_{l'}, \,\forall \Psi \in\DSS{l}{n+k} 
    $$
    \item  For $n,m\in\NN_{0}$, two pairwise distinct tuples $\arrs{i}{k}\in\{1,\ldots,n\}$, $\arrs{i'}{k}\in\{1,\ldots,m\}$, and  $A\in\LL(\DSS{}{n},\DSS{}{m+k})$ we define \\
    $\intDH{\arrs{i}{k}}{\arrs{i'}{k}}A \in\LL(\DSS{}{n+k},\DSS{}{m})$ by setting for each $l,l'\in\NN_{0}$
    $$
    (\intDH{\arrs{i}{k}}{\arrs{i'}{k}}A\Psi)_{l'}=
    (\intD{(3l+i_j)_{j=1,\ldots,k}}{(3l'+i_j)_{j=1,\ldots,k}}A\Psi)_{l'}, \,\forall \Psi \in\DSS{l}{n+k}; 
    $$
    We set $\restrDH{\arrs{i}{k}}{\arrs{i'}{k}}=\left(\intDH{\arrs{i'}{k}}{\arrs{i}{k}}\right)^{-1}.$
    \item For $n,m,n',m,m'\in\NN_0$,  $F\in\LLfirst(\DSS{}{n},\DSS{}{m})$ and\\
    $F'\in\LLfirst\left(\DSS{}{n'},\DSS{}{m'}\right)$ we set 
    $$
    F\hotimes F'=\left(\extDH{n+1,\ldots,n+m'}{n+1,\ldots,n+m'}F\right)\circ \left(\extDH{1,\ldots,n}{1,\ldots,n}F'\right)=$$
    $$\left(\extDH{1,\ldots,m}{1,\ldots,m}F'\right)\circ\left(\extDH{n+1,\ldots,n+n'}{n+1,\ldots,n+n'}F\right)
\in \LLfirst\left(\DSS{}{n+n'},\DSS{}{m+m'}\right).
    $$
    \item If $A\in\LL(\SRR{n},\SRR{m})$, then we define\footnote{As we will see in Example \ref{exmp:DS/2Q/numdist} this notation is consistent with the secondary quantization formalism of the next Subsection.} $\widehat{A}\in\LL(\SRR{n},\SRR{m})$ by setting for each $l\in\NN$
    $$
    \widehat{A}[\Psi]=\extD{1,\ldots,3l}{1,\ldots,3l}A[\Psi]\, \forall \Psi\in\DS^{l}.
    $$
\end{itemize}
\end{notn}
\begin{rmk}
For $n,n'\in\NN_0$,  $F\in\LLfirst(\DSS{}{n},\DS{})$ and $F'\in\LLfirst\left(\DSS{}{n'},\DS{}\right)$ we have
$$
\underline{F\hotimes F'}=\underline{F}\otimes \underline{F'},
$$
where the tensor product of the operator-valued distributions in the right-hand sides is understood as in \cite{NN}.
\end{rmk}
\begin{proof}
Recall that in \cite{NN} the tensor product of distributions is characterised as a uninque distribution such that
$$(\underline{F}\underline{F'})[f\otimes f']=
\underline{F}[f] \circ \underline{F'}[f'], \forall f\in\SRR{n},\forall f'\in\SRR{n'}.
$$
The rest follows by Remarks \ref{rmk:Framewor/ExtUniversal} and \ref{rmk:Framework/Augment/algebra}.
\end{proof}
\begin{notn}\label{notn:DS/symb-opval}
We define the symbolic notation of Subsection \ref{Framework/Symbolic} for the operator-valued distributions defined in Notation \ref{notn:DS/opval-symb} by replacing all the operations ($\extD{}{}$, $\intD{}{}$, $\restrD{}{}$, $\otimes$, pre- and post-compositions with auxiliary operators) by their hatted versions of Notation \ref{notn:DS/hat-op}. 
\end{notn}
\begin{rmk}
As was established in Section \ref{Framework} the correspondences behind Notation \ref{notn:DS/opval-symb} and Remark \ref{rmk:DS/opval-symb} are one-to-one. In particular, it means that the ``underlining" is an invertible operation. 
This allows to define the operators via symbolic expressions with their underlined counterparts. For example, for $F\in\LL(\DSS{}{n},\DSS{}{m})$ and $\Psi\in\DSS{}{n}$ we may write
$$
\underline{\Phi}(\bm{y})=\int \underline{F}\symbDistArgs{\bm{x}}{\bm{y}} \underline{\Psi}(\bm{x})d^n \bm{x}, \, \bm{y}\in\RR{m}
$$
instead of $\Phi=F\circ \Psi$. We will see more complicated examples in which, unlike the trivial one above, the symbolic notation is more insightful.
\end{rmk}
In the light of the last remark, the following definition is well formulated.
\begin{Def}
For $A\in \LL(\DSS{}{n},\DSSMlt{}{m}{k})$ we set $A^{\dag}\in \LL(\DSS{}{n},\DSSMlt{}{m}{k})$ to be a unique operator, characterized by
$$
\underline{A^{\dag}}[f](\bm{x})=\underline{A}[\overline{f}](\bm{x})^{\dag},\,\forall f\in\RR{n}, \forall \bm{x}\in\RR{m+k}.
$$
 (if it exists).
\end{Def}
\begin{rmk}\label{rmk:DS/LDS/conj-product-gen}
Remark \ref{rmk:DS/LDS/conj-product} does not generalize to this setting directly, because the operators, in general, appear in incompatible order. However, the following is true:
\begin{itemize}
    \item If $A\in\LL(\DSS{}{n},\DSS{}{m})$, $B\in\LL(\DSS{}{k},\DSS{}{l})$ have conjugated operators ($n,m,k,l\in\NN_0$) and
    $$
    C=A\hotimes B,
    $$
    then $C$ has a conjugated operator, defined by
    $$
    \underline{C}\symbDistArgs{\bm{x},\bm{x}'}{\bm{y},\bm{y'}}=\underline{B}^{\dag}\symbDistArgs{\bm{x}'}{\bm{y'}}\underline{A}^{\dag}\symbDistArgs{\bm{x}}{\bm{y}},\,\bm{x}\in\RR{n},\bm{y}\in\RR{m},\bm{x}'\in\RR{k},\bm{y'}\in\RR{l}.
    $$
    \item If $F\in\LL(\SRR{n},\SRR{m})$ and $A\in\LL(\DSS{}{m},\DSS{}{k})$ $A$ has a conjugated operator, that so does $F\circ A$. Moreover, $(F\circ A)^{\dag}=\overline{F}\circ A^{\dag}$. Here $\overline{F}$ is a unique operator in $\LL(\SRR{n},\SRR{m})$ such that
    $$
    \overline{F}[f]=\overline{F[\overline{f}]}, \,\forall f\in\SRR{n}.
    $$
\end{itemize}
\end{rmk}
The proof is straightforward. It is convenient to use Remark \ref{rmk:Framewor/ExtUniversal}. 
\subsection{Secondary quantization}\label{DS/2Q}

In perturbative quantum field theory we deal with a very special kind of operators in $\LDS$ (and thus very special kinds of operator-valued functions and distributions). The relevant operators are constructed by smearing tensor (or Wick) products of the creation and annihilation operator-valued distributions with some test functions and, more generally, with distributions with particular types of singularities (see e.g. \cite{pAQFT}).  We instead first characterize the explicitly by their action on $\DS$ and then show correspondence with more traditional approaches. 
\par
We use the name second quantization for this procedure resembles the way how the many-particle operators (say the energy and the momentum) are constructed from the single-particle ones in the second quantization in physical literature. We first present the construction in details for the operators and then explain how it generalizes to operator-valued functions and distributions (including parameter-dependent ones).
\subsubsection{Basic construction}

\begin{Def}\label{def:DS/2Q/AhatDS}Let $l,l'\in\mathbb{N}_0$ and $ A\in \LL(\DS^{l},\DS^{l'})$. The \emph{second quantization} of $A$ is $\widehat{A}\in\LDS$ defined by
\begin{equation}\label{eq:def:DS/2Q/LDS}
\widehat{A}\Psi=\frac{\sqrt{n!(n-l+l')!}}{(n-l)!}\left(\symmze{3}{}\circ \left(\extD{3l+1,\ldots ,3n}{3l'+1,\ldots, 3n-3l+3l'}A\right)\right)\Psi, \forall\Psi\in \DS^{n}, \qquad n\geq l.
\end{equation}
$$
\widehat{A}\Psi=0,\qquad \forall\Psi\in \DS^{n}, \qquad n<l.
$$
\end{Def}
\begin{rmk}
Recall that
$$
\extD{3l+1,\ldots ,3n}{3l'+1,\ldots, 3n-3l+3l'}A=A\otimes \mathbb{1}_{\LL(\SRR{3(n-l)})}.
$$
This form is less direct but more intuitive.
\end{rmk}
\begin{rmk}\label{rmq:DS/2Q/SmoothKernel}

To understand the meaning of the rather technical Definition let us introduce, a bit in advance, the standard creation and annihilation operator-valued distributions which we denote with $\underline{\widehat{a}_+}$ and $\underline{\widehat{a}_-^{R}}$ as they will appear in Examples \ref{exmp:DS/2Q/apm} and \ref{exmp:DS/2Q/amR}   \footnote{We use a non-Lorentz-invariant normalization leading to the commutation relation (\ref{eq:DS/CCR}).}. Then Definition \ref{def:DS/2Q/AhatDS} is a formalization of\footnote{See Example \ref{rmk:DS/2Q/symbolic} for the rigorous meaning of (\ref{eq:DS/2Q-symbolic})}
\begin{equation}
\widehat{A}=\int\left(\prod_{j=1}^{l'}\underline{\widehat{a}_+}(\vec{p}_j')\right)
A\symbDistArgs{\arrs{\vec{p}}{l}}{\arrs{\vec{p}'}{l'}}
\left(\prod_{j=1}^{l}\underline{\widehat{a}_-^{R}}(\vec{p}_j)\right)d^{3l}\arrs{\vec{p}}{l}d^{3l'}\arrs{\vec{p'}}{l'}
\label{eq:DS/2Q-symbolic}
\end{equation}

Here we took into account that the annihilation operator is in fact a smooth function of its parameter (in our terminology it means that it has a restriction which we denoted with $a_{-}^R$). The correspondence becomes straightforward in the case then $A$ has a smooth kernel (in our terminology, A has an  $\SRR{3(l+l')}$-class restriction with respect to all its distributional arguments). Then coincidence of (\ref{eq:DS/2Q-symbolic}) and (\ref{eq:def:DS/2Q/LDS}) follows by comparison of their value on generic $\Psi\in\DS$. The operators we consider in this paper are of this type only, but for the operator-valued distributions defined in the next subsection the general construction of Definition (\ref{eq:def:DS/2Q/LDS}) is more convenient. It is worth noting that operators with a smooth kernel could never realize momentum or energy conservation law, so in the context of the strong adiabatic limit the full construction will be necessary.
\end{rmk}
\begin{rmk}
The map $A\mapsto\widehat{A}$ is obviously linear. It has a straightforward linear extension to $\LLfirst(\DS)$ which we assume to be defined from now on.
\end{rmk}
\begin{rmk}\label{rmk:DS/2Q/injective-sym}
We can extend the secondary quantization from the space $\LL(\DS^{l},\DS^{l'})$ to $\LL\left(\SRR{3l},\SRR{3l'}\right)$ keeping (\ref{eq:def:DS/2Q/LDS}) untouched. It is easy that $\widehat{A}=\widehat{\symmze{3}{}[A]}$. Our choice is dictated by injectivity of the second quantization acting defined on $\LLfirst(\DS)$\footnote{To see injectivity of second quantization on $\LL\left(\DS^{l},\DS^{l'}\right)$ consider its action on $\DS^{l}$. For the whole space $\LLfirst(\DS)$, proceed by induction in $l$.}. 
\end{rmk}
The main advantage of this formalism is the fact that all operations on such operators can be done on the level of $\LLfirst(\DS,\DS)$ without keeping track of the extra variables and combinatoric factors. In particular, the conjugation and the products can be computed that way.
\begin{rmk}\label{rmk:DS/2Q/2Q-conj}
For any $A\in\LLfirst(\DS,\DS)$ one has
$$
\widehat{A^{\dag}}=\widehat{A}^{\dag}.
$$
In other words, second quantization commutes with conjugation.
\end{rmk}
\begin{proof}
It is enough to consider $A\in\LL(\DS^{l},\DS^{l'})$ for all possible $l,l'\in\NN_0$ and show for any $n\in\mathbb{N}_0$, $n\geq l$, any $\Psi\in \DS^{n}$ and $\Psi'\in\DS^{n-l+l'}$ that 
$$
(\Psi',\extD{l+1,\ldots ,n}{l'+1,\ldots, n-l+l'}A\Psi)={(\extD{l'+1,\ldots, n-l+l'}{l+1,\ldots ,n}A^{\dag}\Psi,\Psi')}.
$$
Note that the combinatoric factors canceled out and we put away the symmetrization, for both $\Psi$ and $\Psi'$ being symmetric already. The last line is equivalent to
$$
\int
\left(\evaluateP{3l'+1,\ldots,3(n-l+l')}{n-l+l'}\Psi'(\arrs{\vec{p}}{n-l}),A\evaluateP{3l+1,\ldots,3n}{n}\Psi(\arrs{\vec{p}}{n-l})\right)d^{3(n-l)}\arrs{\vec{p}}{n-l}=
$$
$$\int \left(A^{\dag}\evaluateP{3l'+1,\ldots,3(n-l+l')}{n-l+l'}\Psi'(\arrs{\vec{p}}{n-l}),\evaluateP{3l+1,\ldots,3n}{n}\Psi(\arrs{\vec{p}}{n-l})
\right)d^{3(n-l)}\arrs{\vec{p}}{n-l},
$$
which holds by Definition \ref{def:DS/conj}.
\end{proof}
Let us now compute the product $\widehat{A}\widehat{B}$ for some $A,B\in\LLfirst(\DS)$. First of all, note that the creation and operators in (\ref{eq:DS/2Q-symbolic}) are normally ordered. Then to present $\widehat{A}\widehat{B}$ again in the form (\ref{eq:DS/2Q-symbolic}) we need some kind of the Wick theorem. So, we need a formalization of the Wick product with contractions.  Looking once again on  (\ref{eq:DS/2Q-symbolic}) we identify the tensor product in $\LLfirst(\DS)$ with the Wick product:
$$
:\widehat{A}\widehat{B}:=
\widehat{A\otimes B}.
$$
Inspired by that we introduce the following
\begin{Def}\label{def:DS/2Q/contractions}
Let
$A\in\LL(\DS^{l_A},DS^{l'_A})$, $B\in\LL(\DS^{l_B},DS^{l'_B})$ and let $r\leq \min(l_A,l'_B)$. Then the \emph{tensor product with $r$ contractions} 
$$A\otimes_r B\in \LL(\DS^{l_A+l_B-r},DS^{l'_A+l'_B-r})$$
is
$$
A\otimes_r B=\symmze{3}{}\left[(\extD{3(l_A-r)+1,\ldots,3(l_A'+l'_B-r)}{3(l'_A-r)+1,\ldots,3(l'_A-l_B+r)}A)\circ{}(\extD{1,\ldots,3(l_A-r)}{1,\ldots,3(l_A-r)}B)\right].
$$
\end{Def}
\begin{rmk}\label{rmk:DS/2Q/contractions}
For clarity, we provide more readable versions of Definition \ref{def:DS/2Q/contractions}. First of all, reading the proof of Proposition \ref{prop:DS/2Q/Wick} it may be convenient to keep in mind that  
$$
A\otimes_{r}B=
\symmze{3}{}\left[\left(A\otimes \mathbb{1}_{\LL\left(\SRR{3(l'_B-r)}\right)}\right)\left( 1_{\LL\left(\SRR{3(l_A-r)}\right)}\otimes B\right)\right].
$$
To see that this is indeed the Wick product with $r$ contractions, the symbolic form is more suitable. For the sake of clarity, we omit the symmetrization operator which is not relevant by Remark \ref{rmk:DS/2Q/injective-sym}.
$$
A\otimes_{r}B=\symmze{3}{}\left[A\otimes'_{r}B\right],
$$
$$
\left(A\otimes'_{r}B\right)\symbDistArgs{\arrs{\vec{p}}{l_A+l_B-r}}{\arrs{\vec{p}'}{l'_A+l'_B-r}}=
\int A\symbDistArgs{\arrs{\vec{p}}{l_A-r},\arrs{\vec{k}}{r}}{\arrs{\vec{p}'}{l'_A}}\times
$$
$$
B\symbDistArgs{\arrsM{\vec{p}}{l_A-r+1}{l_A+l_B-r}}{\arrs{\vec{k}}{r},\arrsM{\vec{p}'}{l'_A+1}{l'_A+l'_B-r}}d^{3r}\arrs{\vec{k}}{r}.
$$
Comparing it with (\ref{eq:DS/2Q-symbolic}) we see that our terminology is natural. Finally, such a product may be presented graphically as in Fig. \ref{fig:DS/2Q/contractions}. Here each dot denotes an operator and each line incident to it is an argument of the corresponding operator, distributional (for lines coming from the right), or parametric (for the lines going to the left). The arguments, connecting two dots are subject to integration. Note that the graph is totally ordered  \footnote{see Subsection \ref{Intro/prelim} for terminology related to the Feynman graphs}. Clearly, consequent products may be given such representation in terms of totally ordered graphs.
\end{rmk}
\begin{figure}
    \centering
    \begin{tikzpicture}
    \filldraw[black] (0,0) circle (2pt) node[anchor=south]{A};
        \filldraw[black] (2,-2) circle (2pt) node[anchor=north]{B};
        \draw (-1,1) -- (0,0);
        \draw (-1,0.5) -- (0,0);
        \node[] at (-0.8,0) {$\vdots$};
        \draw (-1,-0.8) -- (0,0);
        \draw [decorate,
    decoration = {brace, amplitude=5pt}] (-1.1,-0.8) --  (-1.1,1);
        \node[] at (-1.4,0) {$l'_A$};
        \draw (3,1) -- (0,0);
        \draw (3,0.5) -- (0,0);
         \node[] at (2.8,0) {$\vdots$};
         \draw (3,-0.8) -- (0,0);
         \draw [decorate,
    decoration = {brace, amplitude=5pt}] (3.1,1) --  (3.1,-0.8);
        \node[anchor=west] at (3.3,0) {$l_A-r$};
        \draw (0,0) to[bend left=30] (2,-2);
        \draw (0,0) to[bend right=45] (2,-2);
        \draw (0,0) to[bend left=15] (2,-2);
        \node[] at (1,-1) {$\iddots$};
        \draw [decorate,
    decoration = {brace, amplitude=5pt}] (1.6,-0.8) --  (0.6,-1.7);
        \node[] at (1.4,-1.4) {$r$};
        \draw (-1,-2) -- (2,-2);
        \draw (-1,-2.2) -- (2,-2);
         \node[] at (-0.8,-2.4) {$\vdots$};
         \draw (-1,-3) -- (2,-2);
         \draw [decorate,
    decoration = {brace, amplitude=5pt}] (-1.1,-3) --  (-1.1,-2);
        \node[anchor=east] at (-1.2,-2.5) {$l'_B-r$};
        \draw (3,-1.5) -- (2,-2);
        \draw (3,-1.7) -- (2,-2);
         \node[] at (2.8,-2) {$\vdots$};
         \draw (3,-2.7) -- (2,-2);
        \draw [decorate,
    decoration = {brace, amplitude=5pt}] (3.1,-1.5) --  (3.1,-2.7);
        \node[anchor=west] at (3.2,-2.2) {$l_B$};
        
    \end{tikzpicture}
    \caption{Graphical presentation of the contracted product $A\otimes_{r}B\in \LLfirst(\DS^{l_A+l_B-r},\DS^{l'_A+l'_B-r})$, $A\in\LLfirst(\DS^{l_A},\DS^{l'_A})$, $B\in\LLfirst(\DS^{l_B},\DS^{l'_B})$. }
    \label{fig:DS/2Q/contractions}
\end{figure}

\begin{prop}\label{prop:DS/2Q/Wick}
Let
$A\in\LL(\DS^{l_A},\DS^{l'_A})$, $B\in\LL(\DS^{l_B},\DS^{l'_B})$. Then
$\widehat{A}\widehat{B}=\widehat{A\otimes_{\bullet}B}$ with
$$
A\otimes_{\bullet}B=\sum_{r=0}^{\min(l_A,l'_B)}r! \binom{l_A}{r}\binom{l'_B}{r} A\otimes_r B\in\LLfirst(\DS).
$$
\end{prop}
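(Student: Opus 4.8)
The plan is to verify the identity after restricting both sides to a space $\DS^{n}$ of fixed particle number, thereby reducing the statement to an elementary combinatorial computation. Since $\widehat{A}\widehat{B}\in\LDS$ and each $\widehat{A\otimes_r B}\in\LDS$, it suffices to prove that the two sides agree on $\DS^{n}$ for every $n\in\NN_0$. Fix such an $n$. If $n<l_B$ then $\widehat{B}$ annihilates $\DS^n$, and each $A\otimes_r B$ has $l_A+l_B-r\ge l_B>n$ input particles, so $\widehat{A\otimes_r B}$ also annihilates $\DS^n$; likewise both sides vanish on $\DS^n$ when $m:=n-l_B+l'_B<l_A$. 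Hence I may assume $n\ge l_B$ and $m\ge l_A$; then $\widehat{B}$ maps $\DS^n$ into $\DS^m$ and every term maps $\DS^n$ into $\DS^{m'}$ with $m'=m-l_A+l'_A=n-l_B+l'_B-l_A+l'_A$.

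\textbf{Unfolding the two second quantizations.} Applying Definition \ref{def:DS/2Q/AhatDS} twice (recall $\extD{3l+1,\ldots}{3l'+1,\ldots}A=A\otimes\mathbb 1$) gives, for $\Psi\in\DS^n$,
\[
\widehat{A}\widehat{B}\Psi=c_A c_B\;\symmze{3}{}\bigl((A\otimes\mathbb 1)\,\symmze{3}{}\,(B\otimes\mathbb 1)\bigr)\Psi ,\qquad c_B=\tfrac{\sqrt{n!\,m!}}{(n-l_B)!},\quad c_A=\tfrac{\sqrt{m!\,m'!}}{(m-l_A)!},
\]
where the outer $\symmze{3}{}$ acts on $\DS^{m'}$ and the middle one on $\DS^{m}$. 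On the other hand, by Remark \ref{rmk:DS/2Q/injective-sym} the symmetrization hidden inside $A\otimes_r B$ (Definition \ref{def:DS/2Q/contractions}) is harmless under the hat, so $\widehat{A\otimes_r B}\Psi=\tfrac{\sqrt{n!\,m'!}}{(n-l_A-l_B+r)!}\,\symmze{3}{}\bigl((A\otimes'_r B)\otimes\mathbb 1\bigr)\Psi$ with $A\otimes'_r B$ the unsymmetrized product of Remark \ref{rmk:DS/2Q/contractions}.

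\textbf{Decomposing the middle symmetrizer.} This is the heart of the proof. Write $\Phi=(B\otimes\mathbb 1)\Psi$ and note that $\Phi$ is invariant under permutations of its first $l'_B$ particle-slots among themselves and of its last $n-l_B$ slots among themselves, because $B$ is $\DS^{l'_B}$-valued and $\Psi$ is symmetric. Expanding $\symmze{3}{}=\frac1{m!}\sum_{\sigma\in\symmgr{m}}\permK{3}{\sigma}$ on $\DS^m$, I group the $\sigma$ by the number $r$ of the first $l'_B$ slots that $\sigma$ carries into the first $l_A$ slots, i.e. into the arguments acted on by $A$. For fixed $r\in\{0,\dots,\min(l_A,l'_B)\}$ there are exactly $\binom{l'_B}{r}\binom{n-l_B}{l_A-r}\,l_A!\,(m-l_A)!$ such permutations (and, by Vandermonde, these counts sum to $m!$). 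Using the partial invariance of $\Phi$, the symmetry of $\Psi$, and the outer symmetrizer on $\DS^{m'}$, one checks that \emph{every} such $\sigma$ contributes the very same vector $\symmze{3}{}\bigl((A\otimes'_r B)\otimes\mathbb 1\bigr)\Psi$: the $\sigma$-dependent choice of which input particles of $\Psi$ get consumed is absorbed by the symmetry of $\Psi$, and the reshuffling of the remaining output slots is absorbed by the outer $\symmze{3}{}$. Hence
\[
\widehat{A}\widehat{B}\Psi=c_A c_B\sum_{r=0}^{\min(l_A,l'_B)}\frac{\binom{l'_B}{r}\binom{n-l_B}{l_A-r}\,l_A!\,(m-l_A)!}{m!}\;\symmze{3}{}\bigl((A\otimes'_r B)\otimes\mathbb 1\bigr)\Psi .
\]

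\textbf{Matching coefficients, and the main obstacle.} It then remains to check the numerical identity
\[
c_A c_B\,\frac{\binom{l'_B}{r}\binom{n-l_B}{l_A-r}\,l_A!\,(m-l_A)!}{m!}=r!\binom{l_A}{r}\binom{l'_B}{r}\,\frac{\sqrt{n!\,m'!}}{(n-l_A-l_B+r)!}
\]
for every $r$; substituting $c_A,c_B$ and simplifying, both sides reduce to $\dfrac{l_A!\,\sqrt{n!\,m'!}}{(l_A-r)!\,(n-l_A-l_B+r)!}$, so all the $n$- and $m$-dependence not already carried by the normalization of the second quantization cancels, which in particular re-proves that $A\otimes_\bullet B$ does not depend on $n$. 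The genuinely delicate points are: first, the claim in the previous paragraph that each permutation of a given class reproduces precisely the $r$-fold contracted product of Definition \ref{def:DS/2Q/contractions} — one must follow which momenta get identified and compare with the symbolic form of $A\otimes'_r B$ in Remark \ref{rmk:DS/2Q/contractions}; second, the patient factorial bookkeeping above. A useful cross-check, once the creation/annihilation picture (\ref{eq:DS/2Q-symbolic}) with the commutation relation (\ref{eq:DS/CCR}) is in hand, is that the coefficient $r!\binom{l_A}{r}\binom{l'_B}{r}$ is exactly the number of ways of pairing $r$ of the $l_A$ annihilation kernels of $\widehat{A}$ with $r$ of the $l'_B$ creation kernels of $\widehat{B}$ during normal ordering, which is the graphical rule depicted in Fig.~\ref{fig:DS/2Q/contractions}.
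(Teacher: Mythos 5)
Your proof is correct and follows essentially the same route as the paper's: unfold both second quantizations, expand the middle symmetrizer over $\symmgr{m}$, classify permutations by the number $r$ of first-$l'_B$ slots landing in the first $l_A$ positions (your count $\binom{l'_B}{r}\binom{n-l_B}{l_A-r}\,l_A!\,(m-l_A)!$ agrees with the count in Lemma \ref{lem:prop:DS/2Q/Wick/Comb/Sigmas}), argue that each permutation in a class is absorbed by the symmetries of $A$, $B$, $\Psi$ and the outer symmetrizer, and match coefficients. The only cosmetic slip is that the common value to which both sides of your final numerical identity reduce should carry the factor $\binom{l'_B}{r}$, which is present on both sides and cancels, so the verification stands.
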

\begin{rmk}
The binomial coefficients are nothing but the number of selecting $r$ pairs from the creation operators of $B$ and annihilation operators of $A$, so the above is indeed a form of the Wick theorem.
\end{rmk}
The proof of Proposition \ref{prop:DS/2Q/Wick} is straightforward, but technical. It is convenient to separate the following combinatoric facts.
\begin{lem}\label{lem:prop:DS/2Q/Wick/Comb/Sigmas}
Let $n,l_A,l_B,l_A',l_B'\in\NN_0$ and $n\geq \min(l_A,l_A+l_B-l_B')$. Introduce
$R: \symmgr{n-l_B+l'_B}\longrightarrow \NN_0$ as
$$
 R(\sigma)=\left|\{j=1,\ldots,l_B'|\sigma(j)\leq l_A\}\right|.
$$
For each $r\in\{\max(0,l_A+l_B-n),\max(0,l_A+l_B-n)+1,\ldots,\min(l_A,l_B')\}$ define $\sigma^{(r)}$ by setting
$\sigma^{(r)}\in \symmgr{n-l_B+l_B'}$ as
$$
\sigma^{(r)}(j)=j+l_A-r, \qquad 1\leq j \leq l_B',
$$
$$
\sigma^{(r)}(j)=j-l_B', \qquad l_B'< j \leq l_A+l_B'-r,
$$
$$
\sigma^{(r)}(j)=j, \qquad l_A+l_B'-r< j \leq n.
$$
Then
\begin{enumerate}
    \item $\max(0,l_A+l_B-n)\leq R(\sigma)\leq \min(l_A,l_B')$ $\forall \sigma\in\symmgr{n-l_B+l_B'}$;
    \item $R(\sigma^{(r)})=r$, $r\in\{\max(0,l_A+l_B-n),\ldots,\min(l_A,l_B')\} $
    \item For any $\sigma\in\symmgr{n-l_B+l_B'}$ there is a (not necessary unique) decomposition
    \begin{equation}
    \sigma=(\sigma_1\times \sigma_2)\circ \sigma^{(R(\sigma))}\circ (\sigma_3\times \sigma_4)
    \label{eq:lem:prop:DS/2Q/Wick/Comb/Sigma}  
    \end{equation}
    
    with $\sigma_1\in \symmgr{l_A}$, $\sigma_2\in \symmgr{n-l_B+l_B'-l_A}$, $\sigma_3\in \symmgr{l_B'}$, $\sigma_4\in\symmgr{n-l_B}$, 
    Here we assume the natural inclusion $\symmgr{k}\times \symmgr{k'}\subset \symmgr{k+k'}$ with the first and second factors acting, respectively, on the first $k$ the rest $k'$ elements; 
    \item \begin{equation}\label{eq:lem:prop:DS/2Q/Wick/Comb/count}
        \big|\{\sigma \in \mathfrak{S}_{n+l_B'-l_B}|R(\sigma)=r\}\big|=\binom{l_B'}{r}\binom{l_A}{r}r! \frac{(n+l_B'-l_B-l_A)!(n-l_B)!}{(n-l_B-l_A+r)!}.
    \end{equation}
\end{enumerate}
\end{lem}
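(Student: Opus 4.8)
The plan is to recognise the statement as the elementary theory of $H_1$--$H_2$ double cosets in a symmetric group, where $H_1$ and $H_2$ are two two-block Young (parabolic) subgroups. Write $N=n-l_B+l_B'$, so that everything takes place in $\symmgr{N}$ (the hypotheses on $n$ guarantee $N\ge 0$ and non-negativity of the auxiliary integers below). Let $H_1=\symmgr{l_A}\times\symmgr{N-l_A}$ act on $\{1,\ldots,N\}$ regarded as the codomain, split into its first $l_A$ and last $N-l_A$ points, and let $H_2=\symmgr{l_B'}\times\symmgr{N-l_B'}$ act on the domain, split into its first $l_B'$ and last $n-l_B$ points. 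Claims (1)--(2) then say that $R$ ranges over the stated interval and that $\sigma^{(r)}$ realises the value $r$; claim (3) says $R$ is constant on, and separates, the double cosets $H_1\tau H_2$, with $\sigma^{(r)}$ as representatives; and claim (4) computes the size of the coset with invariant $r$. I would take the steps in the order (1), (2), (3), (4).

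For (1): $R(\sigma)\le l_B'$ is immediate, and $R(\sigma)\le l_A$ because $\sigma$ injects the $R(\sigma)$ relevant indices into $\{1,\ldots,l_A\}$; dually the $l_B'-R(\sigma)$ indices $j\le l_B'$ with $\sigma(j)>l_A$ inject into the $(N-l_A)$-element set $\{l_A+1,\ldots,N\}$, so $l_B'-R(\sigma)\le N-l_A=n+l_B'-l_B-l_A$, i.e. $R(\sigma)\ge l_A+l_B-n$. For (2): I would first check that the three prescribed pieces of $\sigma^{(r)}$ have pairwise disjoint domains and pairwise disjoint images, both covering $\{1,\ldots,N\}$ --- the inequalities $0\le r\le l_A$ and $r\ge\max(0,l_A+l_B-n)$ are precisely what makes the relevant index intervals well-posed --- so that $\sigma^{(r)}$ is a permutation; then for $1\le j\le l_B'$ one has $\sigma^{(r)}(j)=j+l_A-r\le l_A\iff j\le r$, and since $r\le l_B'$ this gives $R(\sigma^{(r)})=r$.

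For (3) the invariance of $R$ is a one-line check: a right factor from $H_2$ permutes $\{1,\ldots,l_B'\}$ within itself, hence does not change $\{\sigma(j):j\le l_B'\}$; a left factor from $H_1$ permutes $\{1,\ldots,l_A\}$ within itself, hence preserves the condition $\sigma(k)\le l_A$; so $R(h_1\sigma h_2)=R(\sigma)$. For the factorisation it suffices to produce, for any $\sigma$ with $R(\sigma)=r$, block-diagonal $h_1=\sigma_1\times\sigma_2\in H_1$ and $h_2=\sigma_3\times\sigma_4\in H_2$ with $h_1\sigma h_2=\sigma^{(r)}$, since then $\sigma=h_1^{-1}\sigma^{(r)}h_2^{-1}$ gives (\ref{eq:lem:prop:DS/2Q/Wick/Comb/Sigma}). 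Set $S=\{j\le l_B':\sigma(j)\le l_A\}$ and $U=\{j>l_B':\sigma(j)\le l_A\}$, so $|S|=r$, $|U|=l_A-r$, and $\sigma(S\cup U)=\{1,\ldots,l_A\}$. Pick $\sigma_3\in\symmgr{l_B'}$ carrying $\{1,\ldots,r\}$ onto $S$ and $\sigma_4\in\symmgr{n-l_B}$ carrying $\{l_B'+1,\ldots,l_B'+l_A-r\}$ onto $U$ inside the block $\{l_B'+1,\ldots,N\}$; after pre-composing with $h_2$, the domain index $j$ is sent into $\{1,\ldots,l_A\}$ exactly on $\{1,\ldots,r\}\cup\{l_B'+1,\ldots,l_B'+l_A-r\}$, which is also the index set on which $\sigma^{(r)}$ does so, and on this set both $\sigma h_2$ and $\sigma^{(r)}$ restrict to bijections onto $\{1,\ldots,l_A\}$. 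Hence $\sigma_1:=\sigma^{(r)}\circ(\sigma h_2)^{-1}$ makes sense as a bijection of $\{1,\ldots,l_A\}$; a symmetric construction on the complementary index set gives $\sigma_2\in\symmgr{N-l_A}$, and $h_1\sigma h_2=\sigma^{(r)}$.

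For (4) I would count the $\sigma$ with $R(\sigma)=r$ directly: choose $S$ ($\binom{l_B'}{r}$ ways), its image $T=\sigma(S)\subseteq\{1,\ldots,l_A\}$ ($\binom{l_A}{r}$ ways) and the bijection $S\to T$ ($r!$ ways); the remaining $l_B'-r$ first-block indices are then forced into $\{l_A+1,\ldots,N\}$, contributing $(N-l_A)!/(N-l_A-l_B'+r)!$ injections; and the remaining $n-l_B$ second-block indices map bijectively onto the remaining $n-l_B$ codomain points, contributing $(n-l_B)!$. Using $N-l_A=n+l_B'-l_B-l_A$ and $N-l_A-l_B'+r=n-l_B-l_A+r$ this product equals (\ref{eq:lem:prop:DS/2Q/Wick/Comb/count}); alternatively one could derive the same count from (3) via the orbit--stabiliser formula for double-coset sizes. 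The main obstacle is (3): the argument is conceptually routine, but matching the block-diagonal correction permutations to the specific representative $\sigma^{(r)}$ (rather than to an arbitrary element of its double coset) requires careful bookkeeping of which subsets of the two domain blocks go to which codomain block and in which internal order, with a little extra care in the degenerate cases where one of the four blocks is empty.
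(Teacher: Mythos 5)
Your proposal is correct and follows essentially the same route as the paper: the bounds in (1) via injectivity of the complementary index set, the direct verification of (2), and the count in (4) by choosing $S$, its image, the bijection between them, the placement of the remaining first-block indices, and the final bijection are all exactly the paper's argument. Your double-coset framing and the explicit construction of the block-diagonal corrections $h_1,h_2$ in (3) are somewhat more detailed than the paper's terse "the described procedure gives a decomposition of type (\ref{eq:lem:prop:DS/2Q/Wick/Comb/Sigma})", but the underlying combinatorics is identical.
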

\begin{proof}
\begin{enumerate}
    \item The upper bound $R(\sigma)\leq \min(l_A,l_B')$ follows directly from the definition. At the same time $$
    l_B'-R(\sigma)=\left|\{j=1,\ldots,l_B'|l_A+1\leq \sigma(j) \leq n-l_B+l_B'\}\right|,
    $$
    so $l_B'-R(\sigma)\leq n-l_B+l_B'-l_A$ giving the desired lower bound.
    \item Directly by construction of $\sigma^{(r)}$ 
    $$
    \{j=1,\ldots,l_B'|\sigma(j)\leq l_A\}=\{1,\ldots,r\},
    $$
    so $R(\sigma^{(r)})=r$.
    \item[3,4] Let us classify all permutations $\sigma$ with fixed $R(\sigma)=r$. To fix such a permutation we do the following:
    \begin{itemize}
    \item We select $r$ elements among the first $l_B'$ numbers, find for them places among the first $l_A$ numbers, and chose a bijection between the former and the latter. This gives $$\binom{l_B'}{r}\binom{l_A}{r}r!$$ 
    possibilities;
    \item The rest $l_B'-r$ numbers from $1$ to $l_B'$ should be placed somewhere from $l_A+1$ to $n+l_B'-l_B$ leading to   $$\frac{(n+l_B'-l_B-l_A)!}{(n-l_B-l_A+r)!}$$
    possibilities;
    \item Finally, we should fix a bijection between the numbers from $l_B'+1$ to $n+l_B'-l_B$ and the yet free places in $$(n-l_B)!$$ 
    ways. 
    \end{itemize}
    The product of these factor give (\ref{eq:lem:prop:DS/2Q/Wick/Comb/count}), and the described procedure gives a decomposition of type (\ref{eq:lem:prop:DS/2Q/Wick/Comb/Sigma}).
\end{enumerate}
\end{proof}
\begin{proof}[Proof of Proposition \ref{prop:DS/2Q/Wick}]
Take first $n< \max(l_B,l_A+l_B-l'_B)$ and $\Psi\in \DS^{n}$. We get $\widehat{A}
\widehat{B}\Psi=0$ immediately as well as $\widehat{A\otimes_{r}B}\Psi$ for any $r\in\{1,\ldots,\min(l_A,l'_B)\}$.
\par 
Now take $n\geq  \max(l_B,l_A+l_B-l'_B)$ and again $\Psi\in\DS^{n}$. 
We have
$$
\widehat{A}
\widehat{B}\Psi=\frac{\sqrt{n!(n-l_B+l'_B)!}}{(n-l_B)!}\frac{\sqrt{(n-l_B+l'_B)!(n-l_A+l'_A-l_B+l'_B)!}}{(n-l_B+l'_B-l_A)!}\times$$
$$\symmze{3}{}\left(\extD{3l_A+1,\ldots ,3n-3l_B+3l_B'}{3l'_A+1,\ldots, 3n_A-3(l_A+l_B)+3(l'_A+l'_B)}A\right)\symmze{3}{}\left(\extD{3l_B+1,\ldots ,3n}{3l_B'+1,\ldots, 3n-3l_B+3l_B'}B\right)\Psi.
$$
Now we present the operator $\symmze{3}{}$ in the middle explicitly:
$$
\widehat{A}
\widehat{B}\Psi=c\sum_{\sigma\in\mathfrak{S}_{n-l_B+l_B'}}\symmze{3}{}\left(\extD{3l_A+1,\ldots ,3n-3l_B+3l_B'}{3l'_A+1,\ldots, 3n_A-3(l_A+l_B)+3(l'_A+l'_B)}A\right)\circ$$
$$\permNKS{n-l_B+l'_B}{3}[\sigma]
\left(\extD{3l_B+1,\ldots ,3n}{3l_B'+1,\ldots, 3n-3l_B+3l_B'}B\right)\Psi,
$$
$$
c=\frac{\sqrt{n!(n-l_A+l'_A-l_B+l'_B)!}}{(n-l_B)!(n-l_B+l'_B-l_A)!}.
$$
We apply Lemma \ref{lem:prop:DS/2Q/Wick/Comb/Sigmas} to classify the permutations. By the decomposition \ref{eq:lem:prop:DS/2Q/Wick/Comb/Sigma}
$$
\permNKS{n-l_B+l'_B}{3}[\sigma]=(\permNKS{l_A}{3}[\sigma_1]\otimes \permNKS{n-l_B+l'_B-l_A}{3}[\sigma_2])\permNKS{n-l_B+l'_B}{3}[\sigma^{(R(\sigma))}](\permNKS{l_B'}{3}[\sigma_3]\otimes \permNKS{n-l_B}{3}[\sigma_4]).
$$
Since $\symmze{3}{}[B]=B$ and $\Psi=\symmze{3}{}\Psi$ we get \footnote{We use that the symmetrization operator absorbs any permutations.}, 
$$
(\permNKS{l_B'}{3}[\sigma_3]\otimes \permNKS{n-l_B}{3}[\sigma_4])\left(\extD{3l_B+1,\ldots ,3n}{3l_B'+1,\ldots, 3n-3l_B+3l_B'}B\right)\Psi=
$$
$$
\left(\extD{3l_B+1,\ldots ,3n}{3l_B'+1,\ldots, 3n-3l_B+3l_B'}B\right)(
\extD{1,\ldots ,3l_B}{1,\ldots ,3l_B} \permNKS{n-l_B}{3}[\sigma_4])\Psi=\left(\extD{3l_B+1,\ldots ,3n}{3l_B'+1,\ldots, 3n-3l_B+3l_B'}B\right)Psi.
$$
Similarly, using $\symmze{3}{}[A]=A$ we get
$$\symmze{3}{}\left(\extD{3l_A+1,\ldots ,3n-3l_B+3l_B'}{3l'_A+1,\ldots, 3n_A-3(l_A+l_B)+3(l'_A+l'_B)}A\right)
(\permNKS{l_A}{3}[\sigma_1]\otimes \permNKS{n-l_B+l'_B-l_A}{3}[\sigma_2])=
$$
$$
\symmze{3}{}
\left(\extD{3l_A+1,\ldots ,3n-3l_B+3l_B'}{3l'_A+1,\ldots, 3n_A-3(l_A+l_B)+3(l'_A+l'_B)}A\right)$$
Finally,
$$
\left(\extD{3l_A+1,\ldots ,3n-3l_B+3l_B'}{3l'_A+1,\ldots, 3n_A-3(l_A+l_B)+3(l'_A+l'_B)}A\right)\left(\extD{3l_B+1,\ldots ,3n}{3l_B'+1,\ldots, 3n-3l_B+3l_B'}B\right)=$$
$$
\extD{3l_A+3l_B+1,\ldots,3n}{3l'_A+3l_B'+1,\ldots, 3n-3l_A-3l_B+3l'_A+3l_B'}\left(A\otimes_r B\right).
$$
Putting this all back and using $\widehat{A\otimes_{r} B}\Psi=0$ whenever $n<l_A+l_B-r$ we get
$$
\widehat{A}
\widehat{B}\Psi=\sum_{r=0}^{\min(l_A,l_B')}C_r\widehat{A\otimes_{r} B}\Psi ,
$$
where 
$$C_r=c\big|\{\sigma \in \mathfrak{S}_{n+l_B'-l_B}|R(\sigma)=r\}\big|
\left(\frac{\sqrt{n!(n-l_A+l'_A-l_B+l'_B)!}}{(n-l_A-l_B+r)!}\right)^{-1}=
$$
$$
\big|\{\sigma \in \mathfrak{S}_{n+l_B'-l_B}|R(\sigma)=r\}\big| \frac{(n-l_A-l_B+r)!}{(n-l_B)!(n-l_B+l'_B-l_A)!}.
$$

Thus by the last assertion of Lemma \ref{lem:prop:DS/2Q/Wick/Comb/Sigmas}
$C_r=\binom{l_B'}{r}\binom{l_A}{r}r!$ as in the statement.
\end{proof}
\begin{rmk}\label{rmk:DS/2Q/formal-alg}
We can consider $\LLfirst(\DS)$ as a formal involutive algebra with the product $\otimes_{\bullet}$. Then by Proposition \ref{prop:DS/2Q/Wick} and Remark \ref{rmk:DS/2Q/2Q-conj} the second quantization is a homomorphism of such algebras. On the other hand, it means that the whole analysis (see also Remark \ref{rmk:DS/2Q/formal-alg-gen}) of this paper can be done completely within the formal algebra, without mentioning its particular representation by unbounded operators on the physical Hilbert space (similar to perturbative AQFT \cite{pAQFT}).
\end{rmk}

\subsubsection{Generalization to  operator-valued functions and (possibly parameter-dependent) distributions}
From this and the previous subsections it is clear that it is enough to define the secondary quantization for parameter-dependent operator-valued distribution as a map 
$$\LLfirst(\DSS{}{n},\DSSMlt{}{m}{k})$$
for $n,m,k\in\NN_0$.
Then the case of operator-valued functions follows by setting $n=0$ , while parameterless distributions correspond to $m=k=0$. 
\par 
Generalization of Definition \ref{def:DS/2Q/AhatDS} is straightforward:
\begin{Def}\label{def:DS/2Q/AhatGen}
Let $n,m,k,l,l'\in\mathbb{N}_0$ and $ A\in \LL(\DSS{l}{n},\DSSMlt{l'}{m}{k})$. The \emph{second quantization} of $A$ is $\widehat{A}\in\LL(\DSS{}{n},\DSSMlt{}{m}{k})$ defined by
$$
\widehat{A}\Psi=\frac{\sqrt{L!(L-l+l')!}}{(L-l)!}\left(\left(\extD{3L+1,\ldots,3L+m+k}{3L+1,\ldots,3L+m+k}\symmze{3}{}\right)\circ \left(\extD{3l+1,\ldots ,3L}{3l'+1,\ldots, 3L-3l+3l'}A\right)\right)\Psi,
$$
$$\forall\Psi\in \DSS{L}{n}, \qquad L\geq l;
$$
$$
\widehat{A}\Psi=0,\qquad \forall\Psi\in \DSS{L}{n}, \qquad L<l.
$$
\end{Def}
\begin{rmk}\label{rmk:AhatGenCommutes}
By Remark \ref{rmk:Framework/Augment/algebra}, \emph{the
secondary quantization commutes with evaluation}, i.e. for $A$ as in Definition \ref{def:DS/2Q/AhatGen}
$$
\widehat{\underline{A}[f](\bm{x})}=\underline{\widehat{A}}[f](\bm{x}), \,\forall f\in\SRR{n}, \forall \bm{x}\in\RR{m+k}. 
$$
As a consequence, 
$$\widehat{A}^{\dag}=\widehat{A^{\dag}}.$$
\end{rmk}
\begin{rmk}[Wick's theorem for functions and distributions]\label{rmk:DS/2Qgen/Wick}
Take $l_A,l'_A,l_B,l'_B,n,n',n''\in\NN_{0}$, $A\in\LL\left(\DSS{l_A}{n'},\DSS{l'_A}{n''}\right)$ and $B\in\LL\left(\DSS{l_B}{n},\DSS{l'_B}{n'}\right)$. Then Proposition \ref{prop:DS/2Q/Wick} holds in the same symbolic form with 
$$A\otimes_r B\in \LL\left(\DSS{l_A+l_B-r}{n},\DSS{l'_A+l'_B-r}{n''}\right)$$ 
defined precisely as in Definition \ref{def:DS/2Q/contractions}. The proof goes along the same lines. Generalization to the multiplier and mixed spaces goes along the lines of Subsection \ref{Framework/Mlt}.
\end{rmk}
\begin{rmk}
Let $n,m,k,r\in\NN_0$, $A\in\LLfirst(\DSS{}{n},\DSSMlt{}{m}{k})$, and two tuples of pairwise distinct numbers $\arrs{i}{r}\in\{1,\ldots,n+r\}$ and $\arrs{i'}{r}\in\{1,\ldots,m+k+r\}$. Define
$
B=\extDH{\arrs{i}{r}}{\arrs{i'}{r}}A
$.
Then, by Remark \ref{rmk:Framework/Augment/algebra},
$$
\widehat{B}=\extDH{\arrs{i}{r}}{\arrs{i'}{r}}\widehat{A}.
$$
In other word, \emph{secondary quantization commutes with augmentations}.
\end{rmk}
Further properties of the secondary quantization and its generalization will be stated after some examples.
\subsubsection{Important examples}
We list a few applications of the formalism introduced in this subsection. On the one hand, they illustrate the framework we have constructed. On the other hand, they play important roles in what follows.
\begin{exmp}\label{exmp:DS/2Q/apm}
Define\footnote{Here we use $\DSS{0}{3}=\SRR{3}$. Simple identifications appear in other examples with no further comments.}
$a_{+}\in \LL(\SRR{3},\DSS{}{3})$
$$
a_{+}[f](\vec{p})=f(\vec{p})
$$
and 
$a_{-}\in \LL(\DSS{1}{3},\mathbb{C})$,
$$
a_{-}[f]=\int{f(\vec{p},\vec{p})d^3{\vec{p}}}
.
$$
Then $a_{-}=a_{+}^{\dag}$, thus $\widehat{a}_{-}=\widehat{a}_{+}^{\dag}$. The operator-valued distributions
$
\underline{\widehat{a}_{\pm}}
$
coincide with the the standard creation and annihilation distributions. By Wick's theorem of Remark \ref{rmk:DS/2Qgen/Wick} (or by direct computation),
\begin{equation}
\uwhat{a_{-}}(\vec{p})\uwhat{a_{+}}(\vec{p}')-\uwhat{a_{+}}(\vec{p}')\uwhat{a_{-}}(\vec{p})=\delta^{(3)}(\vec{p}-\vec{p}')\mathbb{1}_{\LDS},
\label{eq:DS/CCR}    
\end{equation}
Where we treat $\delta^{(3)}(\vec{p}-\vec{p}')$ as a (symbolic kernel of a) distribution on $\RR{6}$.
\end{exmp}
\begin{exmp}\label{exmp:DS/2Q/amR}
There is a restriction $a_{-}^R=\restrD{1,2,3}{1,2,3}a_{-}\in\LL(\DS^{3},\SRR{3})$,
$$
a_-^R[\psi](\vec{p})=\psi(\vec{p},\vec{p}),\,\forall \psi\in\DS^{1},\,\forall \vec{p}\in\RR{3}.
$$
It has the secondary quantization $\widehat{a}^{R}_{-}\in\LL(\DS,\DSS{}{3})$ and does not have the conjugate operator. This example is in agreement with the well-known fact that the annihilation (but not the creation) operator is in fact a smooth function of its parameter.
\end{exmp}
\begin{rmk}
The commutation relations for $a_{-}^R$ in place of $a_-$ have the same form, but the delta-function should be understood as a parameter-dependent distribution, namely $$\delta^{(3)}(\vec{p}-\vec{p}')=\mathbb{1}_{\LL(\SRR{3})}$$
(see Notation \ref{notn:Framework/symb/restrict}).
\end{rmk}
\begin{exmp}\label{exmp:DS/2Q/numdist}
Another class of examples comes from the identification
$$
\LL(\SRR{n},\SMlt{m}{k})=\LL(\DSS{0}{n},\DSSMlt{0}{m}{k}).
$$
By Definition \ref{def:DS/2Q/AhatGen}, for $F\in\LL(\SRR{n},\SMlt{m}{k})$,
$$
\widehat{F}=\mathbb{1}_{\DS}\otimes F\in\LL(\DSS{0}{},\DSSMlt{}{m}{k}).
$$
It is consistent with Notation \ref{notn:DS/hat-op}.
The Wick Theorem of Remark \ref{rmk:DS/2Qgen/Wick}, applied to products with one of the multipliers of this class leads to rather trivial result (as the number of contraction is always zero). In particular\footnote{As usual, we put $k=0$ for simplicity},
 for $F\in \LL(\SRR{n},\SRR{m})$, $A\in\LLfirst\left(\DSS{}{n'},\DSS{}{n}\right)$ and $B\in\LLfirst\left(\DSS{}{m},\DSS{}{m'}\right)$ we get $$F\otimes_{\bullet} A=F\otimes_{0} A=\widehat{F}\circ A\in \LLfirst\left(\DSS{}{n'},\DSS{}{m}\right)$$
 and 
 $$B\otimes_{\bullet}F=B\otimes_{0}F=B\circ \widehat{F}\in  \LLfirst\left(\DSS{}{n},\DSS{}{m'}\right).$$
In the symbolic notation we write
$$
\underline{\widehat{F\otimes_{\bullet} A}}\symbDistArgs{\bm{x}}{\bm{y}}=\int
F(\bm{z}|\bm{y})\underline{\widehat{A}}\symbDistArgs{\bm{x}}{\bm{z}}d^{n}\bm{z}, \quad \bm{x}\in\RR{n'}, \bm{y}\in\RR{m},
$$
and
$$
\underline{\widehat{B\otimes_{\bullet}F}}\symbDistArgs{\bm{x}}{\bm{y}}=\int\underline{\widehat{B}}\symbDistArgs{\bm{z}}{\bm{y}}
F\symbDistArgs{\bm{x}}{\bm{z}}d^{m}\bm{z}, \quad,\bm{x}\in\RR{n}, \,\bm{y}\in\bm{m'},
$$
where we omit the hat on the symbolic kernel of the numerical quantization $F$ (as in Notation \ref{notn:DS/symb-opval}). The main lesson we learned here is that composition with (hatted) numerical operator as above can be pulled into the second quantization.
\end{exmp}
We are now ready to define the quantum fields.
\begin{exmp}  \label{exmp:DS/2Q/fields}
Let $\omega_0$ be a massive dispersion function (Definition \ref{def:dispRel}). 
We define $\widetilde{\phi}^{R}_0\in\LLfirst(\DSS{}{3},\DSMlt{}{1})$ as
$$
\widetilde{\phi}^R_0=\widetilde{\phi}^R_{0+}+\widetilde{\phi}^R_{0-},
$$
where\footnote{We use properties of $\omega_0$ postulated in Definition \ref{def:dispRel}}
\begin{equation}\label{eq:DS/2Q/field-def}
\widetilde{\phi}^R_{0\pm}\symbDistArgs{\vec{p}}{t}=a_{\pm}(\pm \vec{p})\varphi_{\pm}(t,\vec{p}),
\end{equation}
$$
\varphi_{\pm} \in \Mlt{4},\quad \varphi_{\pm}(\vec{p},t)=\frac{e^{\mp \ii \omega_0(\vec{p})t}}{\sqrt{(2\pi)^3 2\omega_0(\vec{p})}}\, (\forall \vec{p}\in\RR{3},\,\forall t\in\RR{}).
$$
We also set $\widetilde{\phi}_0=\intDH{1}{1}\widetilde{\phi}_0^R$.
Then
$$
\uwhat{\phi}_{0,\pm }(t,\vec{k})=\uwhat{\phi}_{0,\pm }^R\symbDistArgs{\vec{k}}{t}=\uwhat{a}_{\pm}(\vec{k})\frac{e^{\mp \ii \omega_0(\vec{k})t}}{\sqrt{(2\pi)^3 2\omega_0(\vec{k})}}.
$$
which is nothing but the positive and the negative frequency parts of the partial Fourier transform of the real scalar quantum field.
\end{exmp}
\begin{rmk}
For the completeness we present a translation of (\ref{eq:DS/2Q/field-def}) from the symbolic language. 
We take arbitrary $f\in\DSS{}{3}$ and write
$$
\underline{\widetilde{\phi}^R_{0\pm}\circ f}=\int \underline{a}_{\pm}(\pm \vec{p})\varphi_{\pm}(t,\vec{p})\underline{f}(t,\vec{p})\psi dt d^3{\vec{p}}=\widetilde{\phi}_{0\pm}[f]=\int \underline{a}_{\pm}( \vec{p})\varphi_{\pm}(t,\pm\vec{p})\underline{f}(\pm\vec{p})dt d^3{\vec{p}}.
$$
As $\varphi_{\pm}\in\Mlt{4}$, we present multiplication by it as $\MltM{\varphi_{\pm}}\extD{1,2,3}{2,3,4}{\mathbb{1}_{\Mlt{1}}}$ and add a hat according to Notation \ref{notn:DS/symb-opval}. Treating the linear transform once again by Notations \ref{notn:Framework/symb/stdop} and \ref{notn:DS/symb-opval} we arrive to
$$
\widetilde{\phi}^R_{0\pm}=
 a_{\pm} \circ \extDH{1}{1}\widehat{\left(\pm\mathbb{1}_{\LL(\RR{3})}\right)}  \circ \widehat{\MltM{\varphi_{\pm}}}\circ \extDH{1,2,3}{2,3,4}{\widehat{\mathbb{1}_{\Mlt{1}}}}.
$$
\end{rmk}
We also set the position space presentation for the quantum fields as $\phi_0^R=\widetilde{\phi_0^R}\circ \Fourier{3}{+}$ and $\phi_0=\intDH{1}{1}\phi_0^R$. Symbolically we have, for example,
$$
\phi_0^R\symbDistArgs{t}{\vec{x}}=\int e^{\ii \vec{k}\cdot\vec{x}} \widetilde{\phi}_0^R(\vec{k})d^3\vec{k}.
$$

\begin{rmk}\label{rmk:DS/2Q/symbolic}
With Examples \ref{exmp:DS/2Q/apm}-\ref{exmp:DS/2Q/numdist}  the expression (\ref{eq:DS/2Q-symbolic}) gets a precise sense. Its translation from the symbolic language for $A\in\LL(\DS^{l},\DS^{l'})$  ($l,l'\in\NN_0$) is 
$$
\widehat{A}=\widehat{a}_{+}^{\hotimes l'}\circ \widehat{A\circ \symmze{3}{}}\circ \widehat{a}_{-}^{R\hotimes l},
$$
where $A\circ \symmze{3}{}\in{\LL\left(\SRR{3l},\SRR{3l'}\right)}$ is a numerical distribution, so for $\widehat{A\circ \symmze{3}{}}$ we can use Notation \ref{notn:DS/hat-op}.
Similar expression can be written for general Definition \ref{def:DS/2Q/AhatGen}.
\end{rmk}
\begin{rmk}\label{rmk:DS/2Q/formal-alg-gen}
The Wick's theorem of Remark \ref{rmk:DS/2Qgen/Wick} may be derived from (\ref{eq:DS/CCR}) by means of the symbolic calculus. 
In this sense the spaces $\LLfirst\left(\DSS{}{n},\DSS{}{m}\right)$ can be considered as formal spaces generated by $a_{+}$ and $a_{-}^R$ with the product $\otimes_{\bullet}$, the partially defined involution $\dag$ and operations of Notation \ref{notn:DS/hat-op}. As was anticipated in Remark \ref{rmk:DS/2Q/formal-alg}, this allows to forget about the particular realization of the elements of $\LLfirst\left(\DSS{}{n},\DSS{}{m}\right)$ as parameter-dependent (unbounded) operator-valued distributions acting on $\mathcal{H}_{\rm{phys}}$ and define Hamiltonian perturbative QFT (Section \ref{HpQFT}) on the formal algebra language.
\end{rmk}
For future use we also define the Wick products. Note that the singular products never appear in the non-local quantum field theory, so we can survive with a very simple definition.
\begin{exmp}\label{exmp:DS/2Q/exmp-norm}
For $n\in\NN_0$ and $\arrs{\alpha}{n}\in\{+,-\}$ we set 
\begin{equation}\label{eq:DS/2Q/exmp/normProd}
:\prod_{j=1}^n \uwhat{a}_{\alpha_j}(\vec{p}_j): =\uwhat{a_{:\arrs{\alpha}{n}}:}(\arrs{\vec{p}}{n}),
\end{equation}
where $a_{:\arrs{\alpha}{n}:}\in\LLfirst(\DSS{}{3n},\DS{})$ is uniquely defined by
\begin{equation}
a_{\arrs{\alpha}{n}}\left[\bigotimes_{j=1}^n f_j \right]=\symmze{3}{}\left[\bigotimes_{j=1}^n \underline{a_{\alpha_j}}[f_j]\right], \,\forall \arrs{f}{n}\in\RR{3}.
\label{eq:DS/2Q/exmp/normProd-first}    
\end{equation}
To see that (\ref{eq:DS/2Q/exmp/normProd}) indeed defines the Wick product, note that the tensor product is the same as product with no contructions. In particular, if $\alpha_j$ are normally ordered (i.e. if $\alpha_i=+$ and $\alpha_j=-$, then $i<j$), then 
$$
:\prod_{j=1}^n \uwhat{a}_{\alpha_j}(\vec{p}_j):=\left(\widehat{\bigotimes}_{j=1}^n\uwhat{a}_{\alpha_j} \right)(\arrs{\vec{p}}{n}),
$$
and the other cases follow by obvious symmetry of (\ref{eq:DS/2Q/exmp/normProd-first}). It is, of course, possible to present explicitly for each $l\in\NN_0$ action of such operator on $\DSS{l}{3n}$, but we omit it to avoid unecessary complicated combinatorics.
\end{exmp}
\begin{exmp}\label{exmp:DS/2Q/exmp-norm-fields}
Continuing the previous example, we set 
$$
:\prod_{j=1}^n\uwhat{\widetilde{\phi}}^R_{0\alpha_j}\symbDistArgs{\vec{p}_j}{t_j}:=\uwhat{a_{:\arrs{\alpha}{n}}:}(\arrs{\vec{p}}{n})\prod_{j=1}^n \varphi_{\alpha_j}(t_j,\vec{p}_j),
$$
and
$$:\prod_{j=1}^n\uwhat{\widetilde{\phi}}^R_{0\alpha_j}\symbDistArgs{\vec{p}_j}{t_j}:=\sum_{\arrs{\alpha}{n}=\pm}:\prod_{j=1}^n\uwhat{\phi}^R_{0\alpha_j}\symbDistArgs{\vec{p}_j}{t_j},
$$
and
$$:\prod_{j=1}^n\uwhat{\phi}^R_{0\alpha_j}\symbDistArgs{\vec{x}_j}{t_j}:=\int
:\prod_{j=1}^n\uwhat{\widetilde{\phi}}^R_{0\alpha_j}\symbDistArgs{\vec{p}_j}{t_j} e^{\ii\sum_{j=1}^n\vec{p}_j\cdot \vec{x}_j}
d^{3n}\arrs{\vec{p}}{n}.
$$
Clearly, all these operators can be presented as second quantization of some objects in $\LLfirst(\DSS{}{3n},\DSMlt{}{1})$.  
The unrestricted form and the can be defined similarly.
\end{exmp}


\section{Non-local Hamiltonian Perturbation Quantum Field Theory}\label{HpQFT}
\subsection{Motivation: Hamiltonian Perturbation Quantum Field Theory}
 In order to motivate the technical constructions of this section and fix the terminology we briefly (and rather informally) recall how the Hamiltonian perturbation theory is usually constructed. A rigorous version of this construction for a class of non-local quantum field theories is presented in the next subsections.
 \par
 The goal  is to construct the operator-valued distributions $\phi$, $\pi$ satisfying the commutation relations
 \begin{equation}\label{eq:HpQFT/AdmTh/canonicalQuant}
[\phi(t,\vec{x}),\pi(t,\vec{x}')]=-i\delta(\vec{x}-\vec{x}').    
\end{equation}
and the equations of motion
\begin{equation}
    \label{eq:HpQFT/AdmTh/HamEv}
\partial_t \phi(x)=-i [H(t),\phi(t,\vec{x})],
\end{equation}
\begin{equation}\label{eq:HpQFT/AdmTh/HamEvPi}
\partial_t \pi(x)=-i [H(t),\pi(t,\vec{x})].
\end{equation}
Here the Hamiltonian operator $H(t)$
$$
H(t)=H_0(t)+H_{int}(t). 
$$
The first term, the free Hamiltonian $H_0$ is chosen so that the solution of  (\ref{eq:HpQFT/AdmTh/canonicalQuant}-\ref{eq:HpQFT/AdmTh/HamEvPi}) for $H_0$ in place of $H$ is known. More precisely, we are given $\phi_0$, $\pi_0$ such, that 
 \begin{equation}\label{eq:HpQFT/AdmTh/canonicalQuant0}
[\phi_0(t,\vec{x}),\pi_0(t,\vec{x}')]=-i\delta(\vec{x}-\vec{x}').    
\end{equation}
\begin{equation}
    \label{eq:HpQFT/AdmTh/HamEv0}
\partial_t \phi_0(x)=-i [H(t),\phi_0(t,\vec{x})],
\end{equation}
\begin{equation}\label{eq:HpQFT/AdmTh/HamEvPi0}
\partial_t \pi_0(x)=-i [H(t),\pi_0(t,\vec{x})].
\end{equation}
The interaction part of the Hamiltonian $H_{int}(t)$ has the form
$$
H_{int}(t)=g\int{h_{int}(t,\vec{x})\lambda(t,\vec{x})}d^3\vec{x},
$$
where $h_int(x)$
is a fixed translationally-invariant polynomial functional of the field $\phi(t,\vec{x})$ and its conjugated momentum $\pi(t,\vec{x})$. The function $\lambda(t,\vec{x})$ is the adiabatic cut-off. As mentioned in Introduction, its presence is necessary due to the Haag theorem, forbidding the existence of the unitary equivalence between the free and the interacting theories (\ref{eq:HpQFT/AdmTh/IntPhi}) below for translationally-invariant interaction. At this moment we need $\lambda(t,\vec{x})$ to decay then $t\rightarrow \infty$  in order to make the integral in (\ref{eq:HpQFT/AdmTh/UT}) below converge.  In the next subsection we will see that $\lambda$ should also decay whenever $\vec{x}\rightarrow \infty$ for $H_{int}$ to be a well-defined unbounded operator. The function $\lambda$ should also be smooth enough to make the switching adiabatic. These conditions are guaranteed if we assume $\lambda \in \SRR{4}$. To get physically relevant information we should pass to the limit $\lambda(t,\vec{x})\rightarrow 1$ in the appropriate sense. 
Finally, $g$ is a formal parameter introduced for convenience. 
\par 
Then we construct a formal solution of (\ref{eq:HpQFT/AdmTh/canonicalQuant}-\ref{eq:HpQFT/AdmTh/HamEvPi}) in the form
\begin{equation}\label{eq:HpQFT/AdmTh/IntPhi}
\phi(\vec{x},t)=U(t)^{-1}\phi_0(\vec{x},t)U(t),
\end{equation}
\begin{equation}
\label{eq:HpQFT/AdmTh/IntPi}
\pi(\vec{x},t)=U(t)^{-1}\pi_0(\vec{x},t)U(t),
\end{equation}
where $U(t)$ is given by

\begin{equation}\label{eq:HpQFT/AdmTh/UT}
U(t)=\sum_{n} 
(-i)^n \int_{-\infty}^t dt_1 \int_{-\infty}^{t_1} dt_2\ldots \int_{-\infty}^{t_{n-1}} dt_n {\prod_{j=1}^n H_{I}(t_j)d^{n}\arrs{t}{n}}=
\end{equation}
$$
\timeorder{e^{-i \int_{\tau<t'}{H_{I}(\tau)d\tau}}},
$$
where $\timeorder{\ldots}$ is the symbolic time-ordering operator and
\begin{equation}\label{eq:HpQFT/AdmTh/HI}
H_I(t)=U(t)H_{int}(t)U(t)^{-1}.
\end{equation}
Assuming that $H_{int}$ is symmetric, we may expect that $U(t)$ is (in some sense) unitary.
\par
We claim that (\ref{eq:HpQFT/AdmTh/IntPhi}-\ref{eq:HpQFT/AdmTh/IntPi}) is a solution of  (\ref{eq:HpQFT/AdmTh/canonicalQuant}-\ref{eq:HpQFT/AdmTh/HamEvPi}). First, (\ref{eq:HpQFT/AdmTh/IntPhi}-\ref{eq:HpQFT/AdmTh/IntPi}) together with  (\ref{eq:HpQFT/AdmTh/canonicalQuant0}) imply the commutation relations (\ref{eq:HpQFT/AdmTh/canonicalQuant}). To get the equations of motion we formally differentiate (\ref{eq:HpQFT/AdmTh/UT}) to get
$$
\frac{dU(t)}{dt}=-i H_I(t)U(t).
$$
Now differentiating (\ref{eq:HpQFT/AdmTh/HamEvPi0}-\ref{eq:HpQFT/AdmTh/HamEv0}) and substituting (\ref{eq:HpQFT/AdmTh/HI})
we arrive to (\ref{eq:HpQFT/AdmTh/HamEvPi}-\ref{eq:HpQFT/AdmTh/HamEv}). Now let us assume for simplicity that $H_{int}(t)$ is a polynomial functional of $\phi$ and $\pi$ at the time $t$. Then by (\ref{eq:HpQFT/AdmTh/IntPhi}-\ref{eq:HpQFT/AdmTh/IntPi}) definition of $H_I$ (\ref{eq:HpQFT/AdmTh/HI}) may be rewritten as
\begin{equation}\label{eq:HpQFT/AdmTh/Hloc}
H_I(t)=H_{int}(t)_{\phi\rightarrow \phi_0,\quad \pi\rightarrow \pi_0}.
\end{equation}
\par 
There are (at least) two ways to get physically relevant information from this construction. First of all, we may the scattering operator. For simplicity of the interpretation let us assume that $\lambda(y,\vec{x})$ vanishes for $t>|T|$. Then the same holds for $H_{int}(t)$ and thus $H_{I}(t)$. We have 
$$U(t)=\bm{1}, \quad t<-T;\qquad U(t)=S, \quad t>T $$
 for some unitary operator $S$. Then from (\ref{eq:HpQFT/AdmTh/IntPhi}-\ref{eq:HpQFT/AdmTh/IntPi}) we see that in the distant past the interacting field $\phi$ coincides with the free one, while in the distant future it belongs to a unitary equivalent realization of the same commutation relations. These two realizations induce two interpretations of the Hilbert space as a Fock space, giving rise the states of incoming and the outgoing particles respectively. Then we can recognize the operator $S$, intertwining these two realizations, as the scattering operator. This object has a direct physical interpretation, but its adiabatic limit (see next subsection or the introduction for a discussion of the adiabatic limit) is rather delicate \cite{EG76}. We postpone it for the next publication \cite{PAP1}.
\par 
Another type of objects of interest are the correlators. We consider two families of distributions\footnote{We use a non-standard partially Fourier transformed presentation of all functions for technical convenience.}:
\begin{itemize}
    \item The \emph{Wightman functions} $\mathcal{W}_n\in\SRR{4n}$
    \begin{equation}
    \mathcal{W}_n\left((t_j,\vec{p}_j)_{j=1\ldots n}\right)=\left(\Omega,\prod_{j=1}^{\infty}\widetilde{\phi}(\vec{t}_j,\vec{p}_j)\Omega\right)
    \label{eq:HpQFT/WightmanDef}    
    \end{equation}
    \item The \emph{Green functions} $\mathcal{G}_n\in\SRR{4n}$
    \begin{equation}
    \mathcal{G}_n\left((t_j,\vec{p}_j)_{j=1\ldots n}\right)=\left(\Omega,\timeorder{\prod_{j=1}^{\infty}\widetilde{\phi}(\vec{t}_j,\vec{p}_j)}\Omega\right).
    \label{eq:HpQFT/GreenDef}    
    \end{equation}
    \end{itemize}
Equations (\ref{eq:HpQFT/WightmanDef}-\ref{eq:HpQFT/GreenDef}) are written already in the adiabatic limit (or at least the temporal adiabatic limit $\lambda(\vec{x},t)\rightarrow\lambda_{\mathrm{S}}(\vec{x})$), and $\Omega$ is the vacuum state of the full Hamiltonian. To compute it in the frame of the perturbation theory one may use the Gell-Man and Low theorem \cite{GL, Molinari} giving
\begin{equation}\label{eq:HpQFT/AdmTh/WightDef}
\mathcal{W}_n=\lim_{\lambda(\vec{x},t)\rightarrow\lambda_{\mathrm{S}}(\vec{x})} \frac{W_n}{W_0},\,\forall n\in\NN_0
\end{equation}

and
\begin{equation}\label{eq:HpQFT/AdmTh/GreenDef}
\mathcal{G}_n=\lim_{\lambda(\vec{x},t)\rightarrow\lambda_{\mathrm{S}}(\vec{x})} \frac{G_n}{G_0},\,\forall n\in\NN_0
\end{equation}

where 
\begin{equation}
W_n\left((t_j,\vec{p}_j)_{j=1\ldots n}\right)=
\left(\Omega_0,S\prod_{j=1}^n \widetilde{\phi}_0(t_j,\vec{p}_j)\Omega_0\right)
=
\end{equation}
$$
\left(\Omega_0,U(+\infty,t_1)\prod_{j=1}^n \widetilde{\phi}_0(t_j,\vec{p}_j)U(t_j,t_{j+1})\Omega_0\right)$$
and
\begin{equation}\label{eq:HpQFT/GreenUnNormDef}
    G_n\left((t_j,\vec{p}_j)_{j=1\ldots n}\right)=\left(\Omega_0,S\timeorder{\prod_{j=1}^n \widetilde{\phi}_0(t_j,\vec{p}_j) }\Omega_0\right)=
\end{equation}
$$
\left(\Omega_0,\timeorder{\prod_{j=1}^n\widetilde{\phi}_0(t_j,\vec{p}_j)e^{-i \int_{-\infty}^{+\infty}{H_{I}(\tau)d\tau}}}\Omega_0\right)$$
with $\Omega_0$ being the vacuum vector of the free theory (in accordance with the one defined in Section \ref{DS}), assuming that the limits above exist. This limit is known as the weak adiabatic limit and is the central object of this paper. Comparing the weak and strong adiabatic limit existence theorems \cite{EG73,EG76} we may expect that the former will be much easier to prove than the latter.
\par 
The LSZ procedure\footnote{The LSZ reduction for non-local theories was considered in \cite{PhDThesis}.} gives a connection between the strong and weak adiabatic limit, allowing to retrieve the matrix elements of $S$ from the residues of the Fourier transform of the Green functions.
\par 
Before going further let us briefly summarize the unclarified issues of the exposition above.
\begin{rmk}\label{rmk:HpQFT/AdmTh/fields-dist}
The objects $\phi$ and $\pi$ (respectively $\phi_0$ and $\pi_0$) satisfying (\ref{eq:HpQFT/AdmTh/canonicalQuant}) (respectively (\ref{eq:HpQFT/AdmTh/canonicalQuant0})) may exist only as distributions valued in unbounded operators. We build them as elements of $\LL(\DSS{}{4},\DS{})$ which has restrictions\footnote{Otherwise the simultaneous commutation relations can not be defined.} and the mentioned relations must be rewritten more carefully, as was done in Example \ref{exmp:DS/2Q/fields}. In other words, they make sense then both hands sides are applied to a vector in $\DSS{}{n}$ with appropriate $n\in\mathbb{N}_0$. 
\end{rmk}
\begin{rmk}
We deal with right-hand sides of (\ref{eq:HpQFT/AdmTh/UT}) as with a formal power series in terms of the formal parameter $g$. Thus, in addition to the previous Remark, to make sense of all expressions containing $H$, $H_{int}$, $S$, $U$, $\phi$ and $\pi$ we have to truncate them at arbitrary finite order (see Subsection \ref{Intro/prelim}). 
\end{rmk}
\begin{rmk}
To make sense of  (\ref{eq:HpQFT/AdmTh/IntPhi}-\ref{eq:HpQFT/AdmTh/HI}) and dependent equations we have to treat $t$ as a parameter rather than a distributional argument. In Remark \ref{rmk:HpQFT/AdmTh/fields-dist} we already noted that by writing  (\ref{eq:HpQFT/AdmTh/canonicalQuant}-  \ref{eq:HpQFT/AdmTh/HamEvPi0}) we have implicitly assumed the existence of restrictions of operator-valued distributions $\phi$, $\pi$, $\phi_0$ and $\pi_0$ to fixed values of $t$. We also have to require that $U$ and $H_I$ are functions of $t$. In local QFT, due to the UV singularities, $H_I$ exists only as a distribution and thus all expressions including the time-ordered product require a regularization. This problem does not appear in the class of non-local quantum field theories we define in the rest of the section.
\end{rmk}

\subsection{Admissible Quantum Field Theories}\label{HpQFT/AdmTh}
Here we define a class of admissible quantum field theories for which the construction of the previous subsection becomes precise.
\par 
\subsubsection{Free Quantum Field and admissible dispersion relation}
 Instead of the fields themselves, in the non-local case it is more convenient to work with their partial Fourier transforms. The appropriate definition of such an object was already given in Example \ref{exmp:DS/2Q/fields}. We use it always assuming the mass gap. Analogously, within the same notation we can define the momentum $\widetilde{\pi}_0$ as
$$
\pi_0(t,\vec{k})=\partial_{t}\phi_0(t,\vec{k}).
$$
Using the Wick Theorem of Remark \ref{rmk:DS/2Qgen/Wick} we may show that 
$$
[\uwhat{\phi}(t,\vec{k}),\uwhat{\pi}(t,\vec{k}')]=-i\widehat{\delta}(\vec{k}+\vec{k'})\mathbb{1}_{\LL(\DS)}
$$
which is the precise form of the Fourier transform of (\ref{eq:HpQFT/AdmTh/canonicalQuant0}).
\subsubsection{Admissible interaction}
The next ingredient is the interaction $H_I$. Our approach mostly follows \cite{PhDThesis}. 
\begin{Def}\label{def:HpQFT/AdmInt}
We say  that $h_I\in\LLfirst(\DSS{}{3},\DSMlt{}{1})$ is an \emph{admissible interaction density} if for each  $l,l'\in\mathbb{N}_0$ 
there is$F_{(l,l')}\in\Mlt{3(l+l')}$, which we call \emph{the admissible interaction kernels}  such that
\begin{equation}\label{eq:def:HpQFT/AdmInt}
h_I[\psi]_{l'}(\arrs{\vec{p'}}{l'},t)=\sum_{l=0}^{\infty}\int{F'_{(l',l)}(\arrs{\vec{p'}}{l'},\arrs{\vec{p}}{l})\widehat{\Fourier{-}{3}}\psi\left(\arrs{\vec{p}}{l},\sum_{j=1}^{l'}\vec{p'}_j-\sum_{j=1}^{l}\vec{p}_{j}\right)}\times
\end{equation}
$$
\exp\left(\ii\left(\sum_{j=1}^{l'}\omega_0(\vec{p}'_j)-\sum_{j=1}^{l}\omega_0(\vec{p}_j)\right)t \right)
d^{3l}\arrs{\vec{p}}{l}, $$
$$
\forall \psi\in\DSS{l'}{3}, \quad \forall \arrs{\vec{p'}}{l'}\in \RR{3l'}, \, \forall t\in \RR{},
$$
\begin{equation}\label{eq:def:HpQFT/AdmInt/KernelGrowth}
\left((\arrs{\vec{p}}{l},\arrs{\vec{p}'}{l'})\mapsto 
 F_{(l,l')} s\left(\sum_{j=1}^{l'}\vec{p'}_j-\sum_{j=1}^{l}\vec{p}_{j}\right)\right) \in\SRR{3(l+l')},\, \forall s\in\SRR{3(l+l')}    
\end{equation}
and
$$
h_I^{\dag}=h_I.
$$
\end{Def}
The following remarks clarify the technical definition above.

\begin{rmk}
The condition (\ref{eq:def:HpQFT/AdmInt/KernelGrowth}) basically limits the directions in which $F_{(l,l')}$ can grow. We could instead change the variables of integration in (\ref{eq:def:HpQFT/AdmInt}) and ask the kernel to belong to the mixed space, but that would break the symmetry of Definition \ref{def:HpQFT/AdmInt}. 
\end{rmk}

\begin{rmk} An admissible interaction is characterized by its kernels. These kernels are constrained by the following.
By definition of $\LLfirst$ there can be only finitely many non-zero kernels among $F_{(l',l)}$. Each such kernel should be symmetric with respect to its first $l'$ arguments for $h_I$ to value in $\DSS{}{1}$. Symmetry in the next $l$ arguments can be always assumed as we do from now on. Finally, the ''hermicity'' condition translates to
$$
F'_{(l',l)}(\arrs{\vec{p'}}{l'},\arrs{\vec{p}}{l})=\overline{F'_{(l,l')}(\arrs{\vec{p}}{l},\arrs{\vec{p'}}{l'})}, \quad \forall l,l'\in \mathbb{N}_0, \,\forall \arrs{\vec{p}}{l}\RR{3l},\,\arrs{\vec{p'}}{l'}\in\RR{3l'}. 
$$
It is easy to see that there is a one-to-one correspondence between the admissible interaction densities and the families of kernels satisfying all conditions listed here.
\end{rmk}

\begin{rmk}\label{rmk:HpQFT/AdmTh/forms-of-HI}
In the symbolic notation Definition \ref{def:HpQFT/AdmInt} reads as\footnote{to make it precise we have to use presentation of $F_{l,l'}$ via $F'_{l,l'}$ as in Definition \ref{eq:def:HpQFT/AdmInt}.}
$$
\uwhat{h_I}\symbDistArgs{\vec{x}}{t}=\sum_{l,l'=0}^{\infty}\frac{1}{l!l'!}\int
F_{(l',l)}(\arrs{\vec{p'}}{l'},\arrs{\vec{p}}{l})\prod_{j=1}^{l'}\uwhat{a}_+(\vec{p}'_j)\prod_{j=1}^{l}\uwhat{a}_-(\vec{p}_j)$$
$$
\exp\left(\ii\left(\sum_{j=1}^{l'}\omega_0(\vec{p}'_j)-\sum_{j=1}^{l}\omega_0(\vec{p}_j)\right) -\ii \left(\sum_{j=1}^{l'}\vec{p'}_j-\sum_{j=1}^{l}\vec{p}_{j}\right)\vec{x}\right)
d^{3l'}\arrs{\vec{p}'}{l'}d^{3l}\arrs{\vec{p}}{l}.
$$
Alternatively, one may write
\begin{equation}\label{eq:HpQFT/AdmTh/hI-timeloc}
\uwhat{h_I}\symbDistArgs{\vec{x}}{t}=\sum_{n=0}^{\infty}\frac{1}{n!}\sum_{\arrs{\alpha}{n}\in\{+,-\}}\int \mathcal{F}^{\arrs{\alpha}{n}}(\arrs{\vec{p}}{n})\prod_{j=1}^{n}{\mathpunct{:}}\uwhat{\widetilde{\phi}}_{0,\alpha_j}(t,-\alpha_j\vec{p}_j){\colon}
\end{equation}
$$
e^{\ii\left(\sum_{j=1}^{l}\vec{p}_{j}\right)\vec{x}}d^{3n}\arrs{\vec{p}}{n},
$$
with 
$$
\mathcal{F}^{\overbrace{+\ldots+}^{l'}\overbrace{-\ldots-}^{l}}(\arrs{\vec{p'}}{l'},\arrs{\vec{p}}{l})=\mathcal{F}^{(l',l)}(\arrs{\vec{p'}}{l'},\arrs{\vec{p}}{l})=$$
$$
 F_{(l',l)}(-\arrs{\vec{p'}}{l'},\arrs{\vec{p}}{l})\left(\prod_{j=1}^l\sqrt{(2\pi)^32\omega_0(\vec{p}_k)}\right)\left( \prod_{j=1}^{l'}\sqrt{(2\pi)^32\omega_0(\vec{p}_k)}\right),
$$
$$
\mathcal{F}_{\arrsP{\alpha}{n}{\sigma}}(\arrsP{\vec{k}}{n}{\sigma})=\mathcal{F}_{\arrs{\alpha}{n}}(\arrs{\vec{k}}{n}), \, \forall \arrs{\vec{k}}{n}\in\RR{3n},\, \forall \arrs{\alpha}{n}\in\{+,-\}^{n},  \, \forall \sigma\in\symmgr{n}.
$$
 We use this form in the formulation of the Feynman rules in Appendix \ref{appFR}.
 \end{rmk}
 \begin{rmk} \label{rmk:HpQFT/AdmTh/forms-of-HI-pos}
 If $F_{(l,l')}\in\SRR{3(l+l')}$, we may further write
\begin{equation}\label{eq:HpQFT/AdmTh/hI-timeloc-pos}
    \uwhat{h_I}\symbDistArgs{\vec{x}}{t}=\sum_{n=0}^{\infty}\frac{1}{n!}\sum_{\arrs{\alpha}{n}\in\{+,-\}}\int \mathcal{K}^{\arrs{\alpha}{n}}((\vec{x}_j-\vec{x})_{j=1,\ldots,n})\prod_{j=1}^{n}{\mathpunct{:}}\uwhat{{\phi}}_{0,\alpha_j}(t,\vec{x}_j){\colon}d^{3n}\arrs{\vec{x}}{n},
\end{equation}

where $\mathcal{K}^{\arrs{\alpha}{n}}\in\SRR{3n}$ is given by the relation
$$
\mathcal{K}^{\arrs{\alpha}{n}}(\arrs{\vec{x}}{n})=\int{\mathcal{F}^{\arrs{\alpha}{n}}(\arrs{\vec{x}}{n})
e^{-\ii \sum_{j=1}^n \alpha_j \vec{p}_j\vec{x}_j}}d^{3n}\arrs{\vec{x}}{n}.
$$
 The clear advantage of this form is the manifest translational invariance. Unfortunately,
in general case  $\mathcal{K}$ is only a distribution, so (\ref{eq:HpQFT/AdmTh/hI-timeloc-pos}) may have no sense. 
\end{rmk}
\begin{exmp}
Let $\kappa\in\SRR{4n}$. Then
\begin{equation}\label{eq:HpQFT/AdmInt/exmp-smooth}
\uwhat{h_I}\symbDistArgs{\vec{x}}{t}=
\end{equation}
\begin{equation}
    \int
:\prod_{j=1}^n \uwhat{\widetilde{\phi}}_0(t_j,\vec{x}_j):\kappa\left((\vec{x}_j-\vec{x},{t}_j-t)_{j=1\ldots n}\right)d^{3n}\arrs{\vec{x}}{n}d^n\arrs{t}{n}
\label{eq:HpQFT/AdmInt/hI-kappa}
\end{equation}
defines an admissible interaction density.
\end{exmp}
\begin{proof}
Note, that (\ref{eq:HpQFT/AdmInt/exmp-smooth}) is clearly well-defined, as it is essentially evaluation of a distribution on a test-function. It involves a non-injective linear transform $(t,\arrs{t}{n})\mapsto ((t_j-t)_{j=1\ldots n})$, but this is not a problem since we allow $h_I$ to  be a multiplier.  
\par 
Now we use Example \ref{exmp:DS/2Q/exmp-norm-fields} to write
$$
h_I\symbDistArgs{\vec{x}}{t}= \int
a_{:\arrs{\alpha}{n}:}(\arrs{\vec{p}}{n})\left(\prod_{j=1}^n\varphi_{\alpha_j}(t_j,\alpha_j\vec{p}_j)e^{\alpha_j\ii \vec{p}_j\vec{x}_j}\right)\times$$
$$
\kappa\left((\vec{x}_j-\vec{x},{t}_j-t)_{j=1\ldots n}\right)d^{3n}\arrs{\vec{x}}{n}d^n\arrs{t}{n}d^{3n}\arrs{\vec{p}}{n}=
$$
$$
\int
a_{:\arrs{\alpha}{n}:}(\arrs{\vec{p}}{n})e^{\ii\left(\sum_{j=1}^n\alpha_j\vec{p}_j\right)\cdot \vec{x}-\ii\left(\sum_{j=1}^n\alpha_j\omega_0(\vec{p}_j)\right)t}F_{\arrs{\alpha}{n}}(\arrs{\vec{p}}{n})d^{3n}\arrs{\vec{p}}{n},
$$
where
$$
F_{\arrs{\alpha}{n}}(\arrs{\vec{p}}{n})=\int
\left(\prod_{j=1}^n\varphi_{\alpha_j}(t_j,\alpha_j\vec{p}_j)e^{\alpha_j\ii \vec{p}_j\vec{x}_j}\right)\times$$
$$
\kappa((\vec{x}_j,{t}_j)_{j=1\ldots n})d^{3n}\arrs{\vec{x}}{n}d^n\arrs{t}{n}.
$$
Then it is easy to see that $F_{\arrs{\alpha}{n}}\in\SRR{3n}$ and we arrive to the form of Remark \ref{rmk:HpQFT/AdmTh/forms-of-HI}.
\end{proof}
Informally, one may say that the admissible interactions are the ones of the form (\ref{eq:HpQFT/AdmInt/hI-kappa}), but with slightly singular $\kappa$. There are two allowed types of singularities: first, we can use the fact that the quantum field restricts with respect to time variable, second we can use the fact that $h_I$ is a distribution with respect to $\vec{x}$. The first type we have already seen in Remarks \ref{rmk:HpQFT/AdmTh/forms-of-HI} and \ref{rmk:HpQFT/AdmTh/forms-of-HI-pos}. As for the second one, we consider the following.
\begin{exmp}[Quantum Wick Product]
The Quantum Wick Product is a combination of the Quantum Diagonal Map \cite{BahnsEtAl,BahnsPhD}, a well-behaved replacement of the pointwise product in the DFR quantum spacetimes \cite{AreaDistanceVolume}, with the Wick product of quantum field theory. It defines an admissible interaction 
$$
\uwhat{h_I}\symbDistArgs{\vec{x}}{t}=\int
\delta\left(t-\frac{1}{n}\sum_{j=1}^n t_j\right)\delta^{(3)}\left(\vec{x}-\frac{1}{n}\sum_{j=1}^n \vec{x}_j\right)\exp \left( -\frac{1}{2l_P^2} \sum_{j=1}^n ((t_j-t)^2) \right)$$
$$:\prod_{j=1}^n \uwhat{\widetilde{\phi}}^R_0(t_j|\vec{x}_j): 
\exp \left( -\frac{1}{2l_P^2} \sum_{j=1}^n ((\vec{x}_j-\vec{x})^2) \right)d^{3n}\arrs{\vec{x}}{n}d^n\arrs{t}{n},
$$
where $l_{P}$ is the Planck scale.
\end{exmp}
\begin{proof}
To see well-definedness let us multiply with $s\in\SRR{1}$ (since the result is in the multipliers space) and act on $f\in\DSS{}{3}$. After change of variables we get:
$$
\uwhat{h_I}[f](t)s(t)=\int s(t)
\delta\left(t-\frac{1}{n}\sum_{j=1}^n t_j\right)\delta^{(3)}\left(\vec{y}\right)\times$$
$$\exp \left( -\frac{1}{2l_P^2} \sum_{j=1}^n ((t_j-t)^2) \right)s(t):\prod_{j=1}^n \uwhat{\widetilde{\phi}}^R_0(t_j|\vec{x}_j): \times$$
$$
\exp \left( -\frac{1}{2l_P^2} \sum_{j=1}^n ((\vec{x}_j-\vec{y}-\frac{1}{n}\sum_{j=1}^n \vec{x}_j)^2) \right)\underline{f}\left(\vec{y}+\frac{1}{n}\sum_{j=1}^n \vec{x}_j\right)d^{3n}\arrs{\vec{x}}{n}d^n\arrs{t}{n}.
$$
Reading from the right to the left, we see multiplication by 
$$
\left((\vec{y},\arrs{\vec{x}}{n})\mapsto
\exp \left( -\frac{1}{2l_P^2} \sum_{j=1}^n ((\vec{x}_j-\vec{y}-\frac{1}{n}\sum_{j=1}^n \vec{x}_j)^2) \right)\right)\in\Mlt{3n+3},
$$
(augmented) action of $:\prod_{j=1}^n \uwhat{\widetilde{\phi}}^R_0:\in\LL(\DSS{}{3n},\DSMlt{}{3})$, multiplication by  $$
\left((t,\arrs{t}{n})\mapsto \exp \left( -\frac{1}{2l_P^2} \sum_{j=1}^n ((t_j-t)^2) \right)s(t)\right) \in\SRR{n+1},
$$
and application of a numerical distribution to an element of $\DSS{}{n+3}$.
\par 
To see that the interaction is admissible, proceed as in the previous example.
\end{proof}
\begin{exmp}[Not an example: the star product]
Very popular in physics literature (see e.g. \cite{NekrasovNCQFT}) and initially suggested for the DFR spacetime \cite{DFR}, interaction terms, based on the replacement of the pointwise product with a non-commutative product is \emph{not} admissible. Indeed, it would require
$$
\kappa_{*}(\arrs{\tau}{n}\arrs{\vec{r}}{n})\sim  \exp\left(-2\ii \sum_{i<j}(\tau_i,\vec{r}_i)^{\mu}Q_{\mu\nu}(\tau_j,\vec{r}_j)^{\nu} \right),
$$
where $(\tau_i,\vec{r}_i)^{\mu}$ stands for the $\mu$th component of the four-dimensional vector $(\tau_i,\vec{r}_i)$, $Q_{\mu\nu}$ is the commutator of the coordinates (we use the notation of \cite{DFR}) and the Einstein summation rule is assumed. A notable property of $\kappa_{*}$ is that it does not decay at large displacements $\vec{r}_j$ and for this reason can not lead to smooth kernels $F_{l,l'}$ (note that in the examples above, the interaction kernels are essentialy Fourier transforms of $\kappa$). Physically it means that the interaction does not become negligible at large distances. For this reason the adiabatic cut-off in Definition \ref{def:HpQFT/HI} is not enough to regularize the infrared divergences of the theory. In fact, in \cite{BahnsWick} it was already noted that the infrared divergences are present already before the adiabatic limit is taken.  For rigorous treatment one should introduce further $\vec{r}_i$ adiabatic cut-off functions already in $\kappa$. The adiabatic limit of such theories is more involved and is not considered here.
\end{exmp}
To conclude, we see that Definition \ref{def:HpQFT/AdmInt} allows translational-invariant interaction of type (\ref{eq:HpQFT/AdmInt/hI-kappa}) with $\kappa$ fast decaying and having restricted singularities. 

\subsubsection{Admissible Hamiltonian}
\begin{Def}\label{def:HpQFT/HI}
Let $h_{I}$ be a fixed admissible interaction. Then for each $\tlambda\in\SRR{4}$ we define the \emph{interaction Hamiltonian} $H_I^{[g\tlambda]}[\Psi]\in\LLfirst(\DS,\DSS{}{1})$ by setting 
$$ H_I^{[g\tlambda]}=g\widehat{\DiagM{1}}\widehat{\circ{\extD{4}{1}}}(h_I\circ \widehat{\Fourier{3}{+}})[\Psi\otimes \tlambda],\forall \Psi\in\DS.$$

\end{Def}
\begin{rmk}
The symbolic form of (\ref{def:HpQFT/HI}) is
$$
\underline{H_I^{[g\tlambda]}}(t)=\int
\lambda(t,\vec{x})\underline{h_I}\symbDistArgs{\vec{x}}{t}d^3\vec{x}, t\in\RR{},
$$
where
$$
\lambda(t,\vec{x})=\int{\tlambda(\vec{k},t)e^{i\vec{k}\cdot\vec{x}}}d^{3}\vec{x}.
$$
By direct computation
$$
\underline{H_I^{[g\tlambda]\dag}}(t)=\underline{H_I^{[g\tlambda]}}(t)
$$
whenever
$$
\lambda(t,\vec{x})\in\RR{}, \quad \forall t\in\RR{}, \forall \vec{x}\in\RR{3}.
$$
Naturally, we interpret $H_I$ as the interaction representation of the interaction part of the Hamiltonian and $\lambda$ as the adiabatic cut-off function.  
\end{rmk}

\subsubsection{Evolution and scattering operators}
\begin{prop}\label{prop:HpQFT/US}
Let $h_I$ be fixed interaction density, $\tlambda\in\SRR{4}$ and $H_I^{[g\tlambda]}$ be the corresponding interaction Hamiltonian (see Definition \ref{def:HpQFT/HI}). Then There is unique $U^{[g\tlambda]}\in\LLfirst(\DS{},\DSMlt{}{2})\formalPS{g}$ such that
\begin{equation}
\uwhat{U}^{[g\tlambda]}(t,t)=\mathbb{1}_{\LDS},
\label{eq:prop:HpQFT/US/U(0)}
\end{equation}
\begin{equation}
    \label{eq:prop:HpQFT/US/Urec-dif}
\partial_{t_2}\uwhat{U}^{[g\tlambda]}(t_2,t_1)=-i\uwhat{H_{I}}^{[g\tlambda]}(t_2)\uwhat{U}^{[g\tlambda]}(t_2,t_1).    
\end{equation}
The operator $U^{[g\tlambda]}$ can be found recurrently from
\begin{equation}
\uwhat{U}^{[g\tlambda]}(t_2,t_1)=\mathbb{1}_{\LDS}-i\int_{t_1}^{t_2}\uwhat{H_I}^{[g\tlambda]}(t)\uwhat{U}^{[g\tlambda]}(t,t_1)dt,
\label{eq:prop:HpQFT/US/Urec}    
\end{equation}

or explicitly
\begin{equation}\label{eq:prop:HpQFT/US/Uexpl}
\uwhat{U}^{[g\tlambda]}(t,t_0)=    
\end{equation}
$$
\sum_{n=0}^{\infty}(-i)^n
\int_{t_0}^{t}dt_1\int_{t_0}^{t_1}dt_2\cdots\int_{t_0}^{t_{n-1}}dt_n
\prod_{j=1}^{n}\uwhat{H_I}^{[g\tlambda]}(t_i).
$$
\end{prop}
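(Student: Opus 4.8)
The plan is to solve the integral equation (\ref{eq:prop:HpQFT/US/Urec}) order by order in the formal parameter $g$, which turns it into a \emph{finite} recursion, to check that each coefficient lands in the asserted space, and then to verify (\ref{eq:prop:HpQFT/US/U(0)})--(\ref{eq:prop:HpQFT/US/Urec-dif}) together with uniqueness. The structural point making this work is that, by Definition \ref{def:HpQFT/HI}, the interaction Hamiltonian is divisible by $g$: setting $H'=\left(H_I^{[g\tlambda]}\right)_{g^1}\in\LLfirst(\DS,\DSS{}{1})$ — which is a genuine \emph{Schwartz} operator-valued function of its time argument, the decay in time being supplied by the adiabatic cut-off $\tlambda\in\SRR{4}$ — one has $H_I^{[g\tlambda]}=g\,H'$ as a formal power series, so $\uwhat{H_I}^{[g\tlambda]}(t)=g\,\uwhat{H'}(t)$. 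Extracting the coefficient of $g^n$ from (\ref{eq:prop:HpQFT/US/Urec}) gives
\begin{equation}
\left(\uwhat{U}^{[g\tlambda]}(t_2,t_1)\right)_{g^0}=\mathbb{1}_{\LDS},\qquad
\left(\uwhat{U}^{[g\tlambda]}(t_2,t_1)\right)_{g^n}=-\ii\int_{t_1}^{t_2}\uwhat{H'}(t)\left(\uwhat{U}^{[g\tlambda]}(t,t_1)\right)_{g^{n-1}}dt,\quad n\geq 1,
\end{equation}
which determines every coefficient uniquely once the right-hand side is shown to make sense; since underlining is an invertible operation, this defines $U^{[g\tlambda]}\in\LLfirst(\DS,\DSMlt{}{2})\formalPS{g}$, and a trivial induction in $n$ identifies the $n$-th coefficient with the $n$-th summand of (\ref{eq:prop:HpQFT/US/Uexpl}).

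Next I would prove by induction in $n$ that $U_{(n)}:=\left(U^{[g\tlambda]}\right)_{g^n}$ is a well-defined element of $\LLfirst(\DS,\DSMlt{}{2})$ for $n\geq 1$ (the base case being $U_{(0)}=\mathbb{1}_{\LDS}$). The iterated product $\uwhat{H'}(t_1)\uwhat{H'}(t_2)\cdots\uwhat{H'}(t_n)$ is, by the Wick theorem for operator-valued functions (Remark \ref{rmk:DS/2Qgen/Wick}, i.e. Proposition \ref{prop:DS/2Q/Wick} applied iteratively), the second quantization of an iterated $\otimes_{\bullet}$-product of copies of $H'$; as each $\otimes_r$ is a \emph{finite} composition and $H'$ is Schwartz in its single time variable, this lands in $\LLfirst(\DS,\DSS{}{n})$ — it is jointly Schwartz in $(\arrs{t}{n})$ — and it stays inside $\LLfirst$ because only finitely many factors occur and each restriction $H'|_{\DS^{l}}$ meets only finitely many $\DS^{l'}$. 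Applying to it the iterated moving-boundary integration operator $\intSimplex{n}$ of Subsection \ref{Framework/exmp}, augmented so as to ignore the $\DS$-components (Notation \ref{notn:DS/hat-op}), produces precisely the $n$-th term of (\ref{eq:prop:HpQFT/US/Uexpl}); since $\intSimplex{n}$ sends Schwartz functions to bounded, hence multiplier-class, ones, the result is of multiplier class in $(t_0,t)$, so $U_{(n)}\in\LLfirst(\DS,\DSMlt{}{2})$. Equivalently one may run the induction directly on the recursion: $\uwhat{H'}(t)\,\uwhat{U_{(n-1)}}(t',t_1)$ is Schwartz in $t$ and of multiplier class in $t',t_1$, the two copies of the time variable are identified by the diagonal map $\DiagM{1}$ exactly as in Example \ref{exmp:Framework/2products} (which preserves Schwartz-ness in that variable), and $\int_{t_1}^{t_2}(\cdots)dt$ then lands in $\DSMlt{}{2}$; here one uses Remark \ref{rmk:Framework/Augment/algebra} and the composition/augmentation calculus of Section \ref{Framework} throughout.

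It remains to verify (\ref{eq:prop:HpQFT/US/U(0)}), (\ref{eq:prop:HpQFT/US/Urec-dif}) and uniqueness. Setting $t_2=t_1$ collapses the integration domain of every term of (\ref{eq:prop:HpQFT/US/Uexpl}) with $n\geq 1$, leaving only $U_{(0)}=\mathbb{1}_{\LDS}$, which is (\ref{eq:prop:HpQFT/US/U(0)}). For (\ref{eq:prop:HpQFT/US/Urec-dif}) one differentiates (\ref{eq:prop:HpQFT/US/Uexpl}) in $t_2$: only the outermost integral depends on the upper limit, and by the interaction of the moving-boundary integrals with $\partial$ recorded in (\ref{eq:Framework/exmp/Int-dif}) — extended to $\intSimplex{n}$, one may also route through the $t'\to\infty$ relation (\ref{Framework/exmp/IntLim}) — the $t_2$-derivative of the $n$-th term equals $-\ii\,\uwhat{H'}(t_2)$ post-composed with the $(n-1)$-th term evaluated at $(t_2,t_1)$; summing over $n$ gives (\ref{eq:prop:HpQFT/US/Urec-dif}), all manipulations being legitimate because at each power of $g$ only finitely many terms contribute. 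Conversely, integrating (\ref{eq:prop:HpQFT/US/Urec-dif}) from $t_1$ to $t_2$ and using (\ref{eq:prop:HpQFT/US/U(0)}) recovers (\ref{eq:prop:HpQFT/US/Urec}). Uniqueness follows by the same order-by-order argument: if $U,U'$ both satisfy (\ref{eq:prop:HpQFT/US/U(0)})--(\ref{eq:prop:HpQFT/US/Urec-dif}) then $V=U-U'$ obeys $\uwhat{V}(t_2,t_1)=-\ii\, g\int_{t_1}^{t_2}\uwhat{H'}(t)\,\uwhat{V}(t,t_1)\,dt$ with $\uwhat{V}(t,t)=0$; the right-hand side is divisible by $g$, so $V_{(0)}=0$, and inductively $V_{(n)}=0$ for all $n$.

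The genuine work — everything else being bookkeeping within the framework of Sections \ref{Framework}--\ref{DS} — is the well-definedness step: checking that the nested, time-ordered composition of the operator-valued function $H'$ together with the moving-boundary integrations never leaves $\LLfirst(\DS,\DSMlt{}{2})$, i.e. that the diagonal identifications of Example \ref{exmp:Framework/2products}, the finiteness of the compositions in $\LLfirst$, the Wick contractions of Proposition \ref{prop:DS/2Q/Wick}, and the Schwartz-times-multiplier estimates assemble coherently. The divisibility by $g$ coming from Definition \ref{def:HpQFT/HI} is precisely what keeps every such composition finite and renders the differentiation-under-the-integral and integration-by-parts steps purely formal.
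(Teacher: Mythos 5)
Your proposal is correct and follows essentially the same route as the paper: the paper likewise rewrites the recursion (\ref{eq:prop:HpQFT/US/Urec}) in the operator calculus (via $\widehat{J}$, $\widehat{\DiagM{1}}$, augmentations and $\otimes_{\bullet}$), observes that divisibility of $H_I^{[g\tlambda]}$ by $g$ makes it solvable order by order starting from $\mathbb{1}_{\Mlt{2}}$, uses (\ref{eq:Framework/exmp/Int-dif}) for the equivalence with (\ref{eq:prop:HpQFT/US/U(0)})--(\ref{eq:prop:HpQFT/US/Urec-dif}), and iterates to obtain (\ref{eq:prop:HpQFT/US/Uexpl}). Your additional spelled-out induction for well-definedness and the order-by-order uniqueness argument are consistent elaborations of what the paper leaves implicit.
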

\begin{proof}
To prove the theorem we need only to decipher the symbolic notation of Subsection \ref{Framework/Symbolic} (generalized in Notation \ref{notn:DS/symb-opval}) and use the Wick Theorem of Remark \ref{rmk:DS/2Qgen/Wick} to present the result as a second quantization. For completeness we present all the translations from the symbolic language.
\par 
First, we rewrite (\ref{eq:prop:HpQFT/US/Urec}) explicitly\footnote{We use the integration operator $J$ defined in Subsection \ref{Framework/exmp}.}:
\begin{equation}
U^{[g\tlambda]}=\mathbb{1}_{\Mlt{2}}-i  \left(\extDH{3}{2}\left(\widehat{J}\circ\widehat{\DiagM{1}}\right)\right)\circ \extDH{1,2}{2,3}H_I^{[g\tlambda]}\otimes_{\bullet} U^{[g\tlambda]},
\label{eq:prop:proof:HpQFT/US/Urec-expl}
\end{equation}
where 
$$\mathbb{1}_{\Mlt{2}}\in\Mlt{2}=\LL(\DS^{0},\DSMlt{0}{2}),$$ 
$$
\mathbb{1}_{\Mlt{2}}(t,t')=1, \,\forall t,t'\in\RR{}.
$$
Since $H_I^{[g\tlambda]}$ is proportional to $g$,  (\ref{eq:prop:proof:HpQFT/US/Urec-expl}) gives a well-defined expression of higher orders of $U^{g\tlambda}$ via its lower orders,  and the zeroth order is given by
$$
U^{[g\tlambda]}_{g=0}=\mathbb{1}_{\Mlt{2}}.
$$
Thus (\ref{eq:prop:proof:HpQFT/US/Urec-expl}) can be solved order by order. 
By (\ref{eq:Framework/exmp/Int-dif}) we see that (\ref{eq:prop:HpQFT/US/U(0)}-\ref{eq:prop:HpQFT/US/Urec}) is equivalent to (\ref{eq:prop:HpQFT/US/Urec}).
\par 
Finally, we secondary quantize and formally iterate (\ref{eq:prop:proof:HpQFT/US/Urec-expl}) to get
$$
\widehat{U}^{g\tlambda}=\sum_{i=0}^n(-i)^n \left( \left(\extDH{3}{2}\left(J\circ\widehat{\DiagM{1}}\right)\right)\circ \extDH{1,2}{2,3}\widehat{H_I}^{[g\tlambda]}\right)^{\circ n}\circ \mathbb{1}_{\Mlt{2}},
$$
which is one of the possible forms of (\ref{eq:prop:HpQFT/US/Uexpl}). 
\end{proof}
\begin{=>}\label{=>:HpQFT/Uprop}
For $U^{[g\tlambda]}$ as in proposition above, assuming in addition that $\extD{1}{1}\Fourier{3}{+}\tlambda$ is real-valued,
\begin{enumerate}
    \item For any $t,t_0\in\RR{}$
    \begin{equation}\label{eq:=>:HpQFT/Uprop/Udag}
        \uwhat{U}^{[g\tlambda]}(t,t_0)^{\dag}=\uwhat{U}^{[g\tlambda]}(t_0,t);
    \end{equation}
    \item For any $t_0,t_1,t_2\in\RR{}$:
    \begin{equation}\label{eq:=>:HpQFT/Uprop/Ucompose}
        \uwhat{U}^{[g\tlambda]}(t_2,t_1)\uwhat{U}^{[g\tlambda]}(t_1,t_0)=\uwhat{U}^{[g\tlambda]}(t_2,t_0)
    \end{equation}
    \item There are well-defined limits $U_{\mathrm{R}},U_{\mathrm{A}}\in\LLfirst(\DS{},\DSS{}{1})$ and $S\in\LLfirst(\DS)$ defined by
    $$\uwhat{U_{\mathrm{R}}}^{[g\tlambda]}(t)=\lim_{t_0\rightarrow -\infty}\uwhat{U}^{[g\tlambda]}(t,t_0)$$
    $$\uwhat{U_{\mathrm{A}}}^{[g\tlambda]}(t)=\lim_{t_0\rightarrow +\infty}\uwhat{U}^{[g\tlambda]}(t_0,t)$$
    $$\uwhat{S}^{[g\tlambda]}=\lim_{t\rightarrow +\infty,t_0 \rightarrow -\infty }\uwhat{U}^{[g\tlambda]}(t_0,t);$$
    Moreover, relations (\ref{eq:=>:HpQFT/Uprop/Udag}-\ref{eq:=>:HpQFT/Uprop/Ucompose}) remain correct then one or more of the timestamps goes to $\pm\infty$.
\end{enumerate}
\end{=>}
\begin{proof}
For simplicity we use the symbolic notation. 
\par 
For the first statement follows from (\ref{eq:prop:HpQFT/US/Uexpl}) and Remark \ref{rmk:DS/LDS/conj-product-gen}.
\par 
For the second statement it is more convenient to use the recurrent relation (\ref{eq:prop:HpQFT/US/Urec}) and proceed inductively. The zeroth order is trivial. Assuming that (\ref{eq:=>:HpQFT/Uprop/Ucompose}) is valid up to $g^n$ we have
$$\left(
\uwhat{U}^{[g\tlambda]}(t_2,t_1)\uwhat{U}^{[g\tlambda]}(t_1,t_0)\right)_{g^n} =
$$
$$
\left(\mathbb{1}_{\LDS}-i\int_{t_1}^{t_2}\uwhat{H_I}^{[g\tlambda]}(t)\uwhat{U}^{[g\tlambda]}(t,t_1)\uwhat{U}^{[g\tlambda]}(t_1,t_0)dt\right)_{g^n}=
$$
$$
\left(
\mathbb{1}_{\LDS}-i\int_{t_1}^{t_2}\uwhat{H_I}^{[g\tlambda]}(t)\uwhat{U}^{[g\tlambda]}(t,t_0)\right)_{g^n}dt=\left(
\uwhat{U}^{[g\tlambda]}(t_2,t_0)\right)_{g^n},
$$
where we have used the fact that $H_I$ is proportional to $g$, so we can use the statement (\ref{eq:=>:HpQFT/Uprop/Ucompose}) for lower orders. 
\par 
Finally, the last statement follows from  (\ref{Framework/exmp/IntLim}).
\end{proof}
\begin{rmk}
In particular, we see that $\uwhat{U}^{[g\tlambda]}$,$\uwhat{U_{\mathrm{A}}}^{[g\tlambda]}$, $\uwhat{U_{\mathrm{R}}}^{[g\tlambda]}$ and $S^{[g\tlambda]}$ are valued in the "unitary" operators in the sense that
$$
\uwhat{U}^{[g\tlambda]}(t,t')\uwhat{U}^{[g\tlambda]}(t,t')^{\dag}=\uwhat{U}^{[g\tlambda]}(t,t')^{\dag}\uwhat{U}^{[g\tlambda]}(t,t')=\mathbb{1}_{\LDS}.
$$
\end{rmk}
From the above we see that $\uwhat{U}^{[g\tlambda]}_{\mathrm{R}}$ is an analogue of $U$ from the previous subsection. Similarly,  $\uwhat{S}$ may be interpreted as the scattering operator.
\subsubsection{Interacting fields}
To finish the perturbation quantum field theory construction we need to define the interacting quantum field and its momentum similarly to (\ref{eq:HpQFT/AdmTh/IntPhi}-\ref{eq:HpQFT/AdmTh/IntPi})
\begin{prop}\label{prop:HpQFT/QF}
There are unique operator-valued distributions $\phi^R,\pi^R\in\LLfirst(\DSS{}{3},\DSS{}{1})$ such that
$$
\uwhat{\phi}^{\mathrm{R}[g\tlambda]}\symbDistArgs{\vec{x}}{t}=\uwhat{U_{\mathrm{R}}}^{[g\tlambda]}(t)\uwhat{\phi^{\mathrm{R}}_0}\symbDistArgs{\vec{x}}{t}\uwhat{U_{\mathrm{R}}}^{[g\tlambda]}(t)^{\dag},
$$
$$
\uwhat{\pi}^{\mathrm{R}[g\tlambda]}\symbDistArgs{\vec{x}}{t}=\uwhat{U_{\mathrm{R}}}^{[g\tlambda]}(t)\uwhat{\pi^{\mathrm{R}}_0}\symbDistArgs{\vec{x}}{t}\uwhat{U_{\mathrm{R}}}^{[g\tlambda]}(t)^{\dag}.
$$
They satisfy
$$
[\uwhat{\phi}^{\mathrm{R}[g\tlambda]}\symbDistArgs{\vec{x}}{t},\uwhat{\phi}^{\mathrm{R}[g\tlambda]}\symbDistArgs{\vec{x}'}{t}]=0,
$$
$$
[\uwhat{\pi}^{\mathrm{R}[g\tlambda]}\symbDistArgs{\vec{x}}{t},\uwhat{\pi}^{[\mathrm{R}g\tlambda]}\symbDistArgs{\vec{x}'}{t}]=0,
$$
$$
[\uwhat{\pi}^{\mathrm{R}[g\tlambda]}\symbDistArgs{\vec{x}}{t},\uwhat{\phi}^{\mathrm{R}[g\tlambda]}(\vec{x}'|t)]=\ii(2\pi)^3\delta^{(3)}(\vec{x}-\vec{x}')\mathbb{1}_{\LDS},
$$
$$
\partial_{t}\uwhat{\phi}^{\mathrm{R}[g\tlambda]}\symbDistArgs{\vec{x}}{t}=-\ii[\uwhat{H_I}^{[g\tlambda]}(t),\uwhat{\phi}^{\mathrm{R}[g\tlambda]}\symbDistArgs{\vec{x}}{t}]
$$
$$
\partial_{t}\uwhat{\pi}^{\mathrm{R}[g\tlambda]}\symbDistArgs{\vec{x}}{t}=-\ii[\uwhat{H_I}^{[g\tlambda]}(t),\uwhat{\pi}^{\mathrm{R}[g\tlambda]}\symbDistArgs{\vec{x}}{t}].
$$
\end{prop}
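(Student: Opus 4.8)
\emph{Proof plan.} The plan is to turn the defining symbolic relations into honest identities in $\LLfirst(\DSS{}{3},\DSMlt{}{1})$ by deciphering the notation of Subsection \ref{Framework/Symbolic} (generalised in Notation \ref{notn:DS/symb-opval}), exactly as in the proof of Proposition \ref{prop:HpQFT/US}, and then to verify the algebraic assertions by formal manipulation inside the symbolic calculus, every step being licensed by the results of Sections \ref{Framework} and \ref{DS}. One works order by order in $g$. Since $\phi^{\mathrm R}_0,\pi^{\mathrm R}_0$ are of order $g^0$ (Example \ref{exmp:DS/2Q/fields}, where $\widetilde\phi_0^{\mathrm R}\in\LLfirst(\DSS{}{3},\DSMlt{}{1})$) and $\uwhat{U_{\mathrm{R}}}^{[g\tlambda]}$ is an operator-valued function of $t$ of multiplier class, given order by order by the $t_0\to-\infty$ limit of the explicit expansion (\ref{eq:prop:HpQFT/US/Uexpl}) with zeroth-order term $\mathbb{1}_{\LDS}$ (Corollary \ref{=>:HpQFT/Uprop}), the order-$g^n$ part of $\uwhat{U_{\mathrm{R}}}^{[g\tlambda]}(t)\uwhat{\phi^{\mathrm{R}}_0}\symbDistArgs{\vec x}{t}\uwhat{U_{\mathrm{R}}}^{[g\tlambda]}(t)^{\dag}$ is a finite sum of triple operator products $(\uwhat{U_{\mathrm{R}}}^{[g\tlambda]})_{g^k}(t)\,\uwhat{\phi^{\mathrm{R}}_0}\symbDistArgs{\vec x}{t}\,(\uwhat{U_{\mathrm{R}}}^{[g\tlambda]\dag})_{g^l}(t)$ with $k+l=n$, all three factors carrying the single time parameter $t$. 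Such a product is assembled inside $\LLfirst$ by (i) the Wick theorem of Remark \ref{rmk:DS/2Qgen/Wick}, which rewrites each operator composition as the secondary quantisation of a finite $\otimes_r$-sum --- finite because each factor $\uwhat{H_I}^{[g\tlambda]}$ carries, through its admissible kernels $F_{(l,l')}$, only finitely many creation/annihilation slots (Definition \ref{def:HpQFT/AdmInt}) --- and (ii) the hatted diagonal map $\widehat{\DiagM{1}}$, applied twice, identifying the three copies of $t$. Uniqueness is automatic: by the correspondences of Section \ref{DS} the underlining $A\mapsto\uwhat A$ is a bijection, so a symbolic kernel determines its operator. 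The operator $\pi^{\mathrm R}$ is produced by the same argument with $\pi^{\mathrm R}_0$ in place of $\phi^{\mathrm R}_0$.

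For the equal-time commutation relations the decisive fact is that the conjugating operator is the \emph{same} at the single time $t$ and is ``unitary'': by Corollary \ref{=>:HpQFT/Uprop} and the remark after it, $\uwhat{U_{\mathrm{R}}}^{[g\tlambda]}(t)^{\dag}\uwhat{U_{\mathrm{R}}}^{[g\tlambda]}(t)=\uwhat{U_{\mathrm{R}}}^{[g\tlambda]}(t)\uwhat{U_{\mathrm{R}}}^{[g\tlambda]}(t)^{\dag}=\mathbb{1}_{\LDS}$ (this uses that $\extD{1}{1}\Fourier{3}{+}\tlambda$ is real-valued, which we assume). Hence, for $A,B$ any two of $\phi^{\mathrm R},\pi^{\mathrm R}$, the inner factor $\uwhat{U_{\mathrm{R}}}^{[g\tlambda]}(t)^{\dag}\uwhat{U_{\mathrm{R}}}^{[g\tlambda]}(t)$ cancels in the product $\uwhat A\uwhat B$ and one gets $[\uwhat A\symbDistArgs{\vec x}{t},\uwhat B\symbDistArgs{\vec x'}{t}]=\uwhat{U_{\mathrm{R}}}^{[g\tlambda]}(t)\,[\uwhat{A_0}\symbDistArgs{\vec x}{t},\uwhat{B_0}\symbDistArgs{\vec x'}{t}]\,\uwhat{U_{\mathrm{R}}}^{[g\tlambda]}(t)^{\dag}$. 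It remains to compute the free equal-time commutators: from the explicit form (\ref{eq:DS/2Q/field-def}), $\pi^{\mathrm R}_0=\partial_t\phi^{\mathrm R}_0$, the commutation relation (\ref{eq:DS/CCR}) of Example \ref{exmp:DS/2Q/apm} and the symmetry $\omega_0(-\vec p)=\omega_0(\vec p)$ (Definition \ref{def:dispRel}), a direct computation in which the $t$-dependent phases cancel on the support of the momentum delta gives $[\uwhat{\phi^{\mathrm R}_0}\symbDistArgs{\vec x}{t},\uwhat{\phi^{\mathrm R}_0}\symbDistArgs{\vec x'}{t}]=0$, $[\uwhat{\pi^{\mathrm R}_0}\symbDistArgs{\vec x}{t},\uwhat{\pi^{\mathrm R}_0}\symbDistArgs{\vec x'}{t}]=0$ and $[\uwhat{\pi^{\mathrm R}_0}\symbDistArgs{\vec x}{t},\uwhat{\phi^{\mathrm R}_0}\symbDistArgs{\vec x'}{t}]=\ii(2\pi)^3\delta^{(3)}(\vec x-\vec x')\mathbb{1}_{\LDS}$, the last being the position-space form of the canonical relations recorded in Subsection \ref{HpQFT/AdmTh}. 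Substituting, and using that the latter is a multiple of $\mathbb{1}_{\LDS}$, so it commutes through $\uwhat{U_{\mathrm{R}}}^{[g\tlambda]}(t)$ while $\uwhat{U_{\mathrm{R}}}^{[g\tlambda]}(t)\uwhat{U_{\mathrm{R}}}^{[g\tlambda]}(t)^{\dag}=\mathbb{1}_{\LDS}$, yields the three stated relations.

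For the equations of motion one differentiates the defining relations in $t$ by the Leibniz rule --- valid inside the symbolic calculus by Remark \ref{rmk:Framework/symb/stdop-props} --- and substitutes the differential form of the evolution equation, $\partial_t\uwhat{U_{\mathrm{R}}}^{[g\tlambda]}(t)=-\ii\uwhat{H_I}^{[g\tlambda]}(t)\uwhat{U_{\mathrm{R}}}^{[g\tlambda]}(t)$, obtained from (\ref{eq:prop:HpQFT/US/Urec-dif}) by sending $t_1\to-\infty$ (at each fixed order in $g$ this limit merely replaces a lower integration bound in (\ref{eq:prop:HpQFT/US/Uexpl}), hence commutes with $\partial_t$), together with its conjugate $\partial_t\uwhat{U_{\mathrm{R}}}^{[g\tlambda]}(t)^{\dag}=\ii\uwhat{U_{\mathrm{R}}}^{[g\tlambda]}(t)^{\dag}\uwhat{H_I}^{[g\tlambda]}(t)$ (using $\uwhat{H_I}^{[g\tlambda]\dag}=\uwhat{H_I}^{[g\tlambda]}$, the remark after Definition \ref{def:HpQFT/HI}). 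The two terms produced by differentiating $\uwhat{U_{\mathrm{R}}}^{[g\tlambda]}(t)$ and its conjugate combine into $-\ii[\uwhat{H_I}^{[g\tlambda]}(t),\cdot\,]$ applied to the interacting field; the term from differentiating the free factor, $\uwhat{U_{\mathrm{R}}}^{[g\tlambda]}(t)(\partial_t\uwhat{\phi^{\mathrm R}_0})\uwhat{U_{\mathrm{R}}}^{[g\tlambda]}(t)^{\dag}$ respectively $\uwhat{U_{\mathrm{R}}}^{[g\tlambda]}(t)(\partial_t\uwhat{\pi^{\mathrm R}_0})\uwhat{U_{\mathrm{R}}}^{[g\tlambda]}(t)^{\dag}$, is rewritten by the free equations of motion for $\phi^{\mathrm R}_0,\pi^{\mathrm R}_0$ and conjugated back with $\uwhat{U_{\mathrm{R}}}^{[g\tlambda]}(t)^{\dag}\uwhat{U_{\mathrm{R}}}^{[g\tlambda]}(t)=\mathbb{1}_{\LDS}$; collecting everything gives the stated equations.

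The step I expect to be the main obstacle is the very first one --- checking that the purely formal expression $\uwhat{U_{\mathrm{R}}}^{[g\tlambda]}(t)\uwhat{\phi^{\mathrm R}_0}\symbDistArgs{\vec x}{t}\uwhat{U_{\mathrm{R}}}^{[g\tlambda]}(t)^{\dag}$ is an honest element of $\LLfirst(\DSS{}{3},\DSMlt{}{1})$. The subtlety is that the repeated diagonal map $\widehat{\DiagM{1}}$ merging the three time parameters lands a priori in the multiplier/mixed class rather than the Schwartz class (cf. Remark \ref{rmk:exmp:Framework/2products}), so the construction must be carried out through the multiplier-valued versions of the operations of Section \ref{Framework} (Subsection \ref{Framework/Mlt}), and one must verify that the composition nevertheless lies in $\DSMlt{}{1}$ after pairing with an arbitrary $s\in\SRR{1}$ --- which is exactly where the Schwartz-class decay of the adiabatic cut-off $\tlambda\in\SRR{4}$ built into each $\uwhat{H_I}^{[g\tlambda]}$ enters. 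Once this bookkeeping is in place, the remaining identities are routine applications of the symbolic calculus of Subsection \ref{Framework/Symbolic}.
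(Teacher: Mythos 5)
Your proposal is correct and follows essentially the same route as the paper, whose proof is the one-line ``direct application of Proposition \ref{prop:HpQFT/US}, Corollary \ref{=>:HpQFT/Uprop} and translations of the symbolic expressions'' — your argument is a faithful unpacking of exactly that (order-by-order expansion in $g$, the Wick theorem of Remark \ref{rmk:DS/2Qgen/Wick} to realize the triple products inside $\LLfirst$, unitarity of $\uwhat{U_{\mathrm{R}}}^{[g\tlambda]}(t)$ to reduce the equal-time commutators to the free ones, and differentiation of the conjugation together with (\ref{eq:prop:HpQFT/US/Urec-dif}) for the equations of motion). One small point to make explicit: differentiating $\uwhat{U_{\mathrm{R}}}^{[g\tlambda]}(t)\uwhat{\phi^{\mathrm{R}}_0}\symbDistArgs{\vec{x}}{t}\uwhat{U_{\mathrm{R}}}^{[g\tlambda]}(t)^{\dag}$ also produces the free-evolution term $\uwhat{U_{\mathrm{R}}}^{[g\tlambda]}(t)\bigl(\partial_t\uwhat{\phi^{\mathrm{R}}_0}\symbDistArgs{\vec{x}}{t}\bigr)\uwhat{U_{\mathrm{R}}}^{[g\tlambda]}(t)^{\dag}=\uwhat{\pi}^{\mathrm{R}[g\tlambda]}\symbDistArgs{\vec{x}}{t}$, which does not cancel, so your ``collecting everything'' should state how this surviving term is reconciled with the displayed equation of motion, whose right-hand side contains only the commutator with $\uwhat{H_I}^{[g\tlambda]}(t)$.
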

\begin{proof}
Direct application of Proposition \ref{prop:HpQFT/US}, Corollary \ref{=>:HpQFT/Uprop} and translations of the symbolic expressions.
\end{proof}
\par 
For future use, we introduce several alternative forms of the quantum field and its conjugated momentum.
\par 
First we define the unrestricted form,
$$
\phi^{[g\tlambda]}=\intD{1}{1}\phi^{\mathrm{R}[g\tlambda]}
$$
$$
\pi^{[g\tlambda]}=\intD{1}{1}\pi^{\mathrm{R}[g\tlambda]}
$$
to define the unrestricted versions. 
\par 
Instead of $\phi$ and $\pi$ it is more convenient to work with
$$
\uwhat{\phi_{\pm}}^{\mathrm{R}[g\tlambda]}\symbDistArgs{\vec{x}}{t}=\uwhat{U_{\mathrm{R}}}^{[g\tlambda]}(t)\uwhat{\phi^{\mathrm{R}}_{0,\pm}}\symbDistArgs{\vec{x}}{t}\uwhat{U_{\mathrm{R}}}^{[g\tlambda]}(t)^{\dag}.
$$
Since $\phi^{\mathrm{R}}_{0,\pm}$ is a linear combination of $\phi^{\mathrm{R}}_{0}$ and $\pi^{\mathrm{R}}_{0}$, these objects are well-defined already by Proposition \ref{prop:HpQFT/QF}.
\begin{rmk}
The decomposition of the interacting field into 
$$\widetilde{\phi}=\widetilde{\phi}_{+}+\widetilde{\phi}_{-}$$
have no physical sense. Neither do they represent positive and negative energy parts, no they annihilate the vacuum acting from the right and left. It is introduced for the convenience of formulations only.
\end{rmk}
\subsubsection{Correlators}
Our next goal is to give a precise form to (\ref{eq:HpQFT/WightmanDef}-\ref{eq:HpQFT/GreenUnNormDef}). We are mostly interested in the Green functions due to their role in physical computations of the scattering amplitudes, but for technical reasons it is more convenient to work with the Wightman function because they can be restricted to fixed values of the timestamps. It is also convenient to work with the correlators of the partial fields $\widetilde{\phi}_{\alpha}$ rather than with the field itself.
So, we start by defining
$$
\PWightmanRnag{n}{\alpha}{g\tlambda}
\in \LL(\SRR{3n},\Mlt{n})\formalPS{g},
$$
$$
\PWightmanRnag{n}{\alpha}{g\tlambda}\symbDistArgs{\arrs{\vec{p}}{n}}{\arrs{t}{n}}=\left(\Omega_0,S\prod_{j=1}^n \widetilde{\phi}^{R}_{\alpha}\symbDistArgs{\vec{p}_j} {t_j}\Omega_0\right).
$$
Then we normalize it by introducing
\begin{equation}
\WightmanRnag{n}{\alpha}{g\tlambda}\symbDistArgs{\arrs{\vec{p}}{n}}{\arrs{t}{n}}=\frac{\PWightmanRnag{n}{\alpha}{g\tlambda}\symbDistArgs{\arrs{\vec{p}}{n}  }{\arrs{t}{n}}}{\PWightmanRng{0}{g\tlambda}}.
\label{eq:HpQFT/Wightman-norm}    
\end{equation}
Finally, we define the unrestricted
$$
\WightmanUnag{n}{\alpha}{g\tlambda}=\intD{1,\ldots,n}{1,5,\ldots,4n-3}\WightmanRnag{n}{\alpha}{g\tlambda}
$$
and time-ordered 
\begin{equation}\label{eq:HpQFT/GreenUnag}
\GreenUnag{n}{\alpha}{g\tlambda}((t_j,\vec{p}_j)_{j=1,\ldots,n})=
\end{equation}
$$\sum_{\sigma\in\symmgr{n}}\left(\prod_{j=1}^{n-1}\theta(t_{\sigma_{j}}-t_{\sigma_{j+1}})\right)\WightmanRnagP{n}{\alpha}{g\tlambda}{\sigma}\symbDistArgs{\arrsP{\vec{p}}{n}{\sigma}}{\arrsP{t}{n}{\sigma}}.
$$
versions.

\begin{rmk}
For completeness, we present a form of (\ref{eq:HpQFT/GreenUnag}) not involving the symbolic notation. 
$$
\GreenUnag{n}{\alpha}{g\tlambda}((t_j,\vec{p}_j)_{j=1,\ldots,n})=$$
$$\sum_{\sigma\in\symmze{4}{}}L_{\permK{\sigma}{4}}\circ 
I_{\RR{}}\circ\extD{n}{1}\theta^{\otimes n}
\circ L_{K_n^{-1}} \circ {\DiagM{n}} \circ\extD{1,5,\ldots,4n-3}{n+1,n+2,\ldots,2n}\WightmanRnagP{n}{\alpha}{g\tlambda}{\sigma},
$$
where
$$
K_n: \RR{n}\rightarrow \RR{n}
$$
is
$$
K_n(\arrs{t}{n})=\left((t_j-t_{j+1)_{j=1\ldots n-1}},t_n\right).
$$
\end{rmk}

Putting it all back together one may verify that 
$$
\GreenUnag{n}{\alpha}{g\tlambda}((t_j,\vec{p}_j)_{j=1\ldots n})=\sum_{\sigma\in\symmgr{n}}\prod_{j=1}^{n-1}\theta(t_{\sigma_{j}}-t_{\sigma_{j+1}}) \frac{(\Omega_0,S\prod_{j=1}^n \widetilde{\phi}^{R}_{\alpha_{\sigma_j}}(t_{\sigma_j},\vec{p}_{\sigma_j})\Omega_0) }{(\Omega_0,S\Omega_0)}=
$$
$$
\frac{\left(\Omega_0,S\timeorder{\prod_{j=1}^n \widetilde{\phi}^{R}_{\alpha_{\sigma_j}}(t_{\sigma_j},\vec{p}_{\sigma_j})}\Omega_0\right) }{(\Omega_0,S\Omega_0)}.
$$
The original Wightman and Green functions (\ref{eq:HpQFT/AdmTh/WightDef}-\ref{eq:HpQFT/AdmTh/GreenDef}) are
$$
\WightmanUng{n}{g\tlambda}((t_j,\vec{p}_j)_{j=1\ldots n})=\sum_{\alpha_j\in\{+,-\},j=1,\ldots,n}\WightmanUnag{n}{\alpha}{g\tlambda}((t_j,\vec{p}_j)_{j=1\ldots n}).
$$

$$
\GreenUng{n}{g\tlambda}((t_j,\vec{p}_j)_{j=1\ldots n})=\sum_{\alpha_j\in\{+,-\},j=1,\ldots,n}\GreenUnag{n}{\alpha}{g\tlambda}((t_j,\vec{p}_j)_{j=1\ldots n}).
$$
\subsubsection{Summary and outline of the subsection}

Comparing Proposition \ref{prop:HpQFT/US}, Corollary \ref{=>:HpQFT/Uprop} and Proposition \ref{prop:HpQFT/QF} with the exposition in Subsection \ref{HpQFT/AdmTh} we conclude that we have constructed the non-local Hamiltonian quantum field theory. So it is natural to fix the following.
\begin{Def}
By a massive \emph{admissible Hamiltonian perturbation Quantum Field} (\emph{admissible HpQFT}) Theory we mean a pair $(\omega_0,h_I)$, where $\omega_0\in\SRR{3}$ is a massive dispersion relation\footnote{See Definition \ref{def:dispRel}} and $h_I\in\LLfirst(\SRR{3},\Mlt{1})$ is an admissible interaction density. 
\end{Def}
All admissible HpQFT are non-local in the physical sense (see Remark \ref{rmk:Intro/TwoLocalities} by construction.
The algebraic locality is broken in general. Still, some residual locality persists as explained in the following three remarks. 
\begin{rmk}\label{rmk:HpQFT/loc-phys}
The Hamiltonian can always be presented in the time-localized form (\ref{eq:HpQFT/AdmTh/hI-timeloc-pos}). So, from the physical point of view, the interaction is always "local in time". In view of this, the main advantage of the Hamiltonian approach with respect to the Lagrangian approaches\footnote{Recall that the Lagrangian approaches fail to construct a unitary scattering operator when theory is not local in time \cite{NekrasovNCQFT}} seems to be lost. In fact, as was shown in \cite{PhDThesis}, HpQFT is equivalent to a Lagrangian non-local perturbation theory. The corresponding Lagrangian is the Legendre transform of the Hamiltonian with the time-localized interaction (\ref{eq:HpQFT/AdmTh/hI-timeloc-pos})
\footnote{Note that (\ref{eq:HpQFT/AdmTh/hI-timeloc-pos}) can be rewritten in terms of $\phi_0$ and $\pi_0$ instead of $\phi_{0\pm}$.}. The Lagrangian theories of this form, however, are more complicated and, in particular, require regularization of the UV  divergences despite the spatial non-locality. 
\end{rmk}
\begin{rmk}\label{rmk:HpQFT/loc-alg}
From the algebraic (AQFT) point of view, the commutation relations (\ref{eq:HpQFT/AdmTh/canonicalQuant}) make the theory look like a local one if the time is fixed. Unlike the physical locality in the previous remark, this property is inherent to the theory: if the quantum fields $\phi$ are related to local measurements in any sense, (\ref{eq:HpQFT/AdmTh/canonicalQuant}) has measurable consequences. In particular, it selects a special reference frame, drastically breaking the Lorentz invariance. It is worth noting, that this issue can be bypassed by giving up the direct interpretation of the interacting field. For example, in \cite{HamQFT} it was shown that the Lorentz-covariant approach based on the Yang-Feldman equation can be formulated in the Hamiltonian style by appropriate reference frame-dependent redefinition of the fields. 
\end{rmk}
One may note that these two residual localities are in some sense complementary.
\begin{rmk}
Looking with attention at (\ref{eq:prop:HpQFT/US/Uexpl}) and (\ref{eq:=>:HpQFT/Uprop/Ucompose}), one may conclude that all the admissible HpQFT possess some residual causality property, similar to the causality of local theories \cite{EG73,pAQFT}. Formalization of this property is postponed to \cite{PAP1} where it plays an important role. For now we underline that this property also holds in one particular reference frame only. 
\end{rmk}
\begin{rmk}\label{rmk:HpQFT/higher-g}
We have assumed that the interaction Hamiltonian has precisely the first order in terms of the interaction constant $g$. However, up to some technical change, all construction remains valid for more general Hamiltonians, presented by formal power series in terms of $g$, provided that the zeroth order vanishes.  This is necessary to allow the (finite) counterterms for renormalization. We ignore this detail for simplicity.
\end{rmk}
\begin{rmk}\label{rmk:HpQFT/Hint}
It may look unnatural to assume that the interacting Hamiltonian is fixed already in the interaction picture. In particular, this is the cause of the residual locality observed in Remark \ref{rmk:HpQFT/loc-phys}. In \cite{PhDThesis} an alternative way to construct non-local Hamiltonian theories, based on setting by analogy with (\ref{eq:HpQFT/AdmTh/HI})
$$
\uwhat{h_I}\symbDistArgs{\vec{x}}{t}=\uwhat{U_{\rm{R}}}^{[g\tlambda]}(t)\uwhat{h_{int}}\symbDistArgs{\vec{x}}{t}\uwhat{U_{\rm{R}}}^{\dag[g\tlambda]}(t),
$$
where
$$
\uwhat{h_{int}}\symbDistArgs{\vec{x}}{t}=
$$
$$\int A\left(\arrs{\tau}{n},\frac{\sum_{j=1}^n\vec{x}_j}{n}-\vec{x}\right){\mathpunct{:} }\prod_{j=1}^n\underline{\widehat{\phi}}(t+\tau_j,\vec{x}_j){\mathpunct{:}} \times
$$
$$
B\left(\left(\vec{x}_j-\vec{x}\right)_{j=1,\ldots,n},\arrs{\tau}{n}\right) d^3\vec{x}d^{3n}\arrs{\vec{r}}{n}d^{n}\arrs{\tau}{n},
$$
and $U^{[g\tlambda]}$, $\phi$ are defined as in Proposition \ref{prop:HpQFT/QF}. It is easy to see that this operation is well-defined in the sense of formal series\footnote{Note that in the leading zeroth with respect to $g$ the interacting quantum field can be replaced by the free one, and the higher orders are always expressed via lower ones.}, Propositions \ref{prop:HpQFT/US}, \ref{prop:HpQFT/QF} and Corollary \ref{=>:HpQFT/Uprop} still hold, and
$$
\uwhat{h_{I}}   \symbDistArgs{\vec{x}}{t}=
$$
$$\int A\left(\arrs{\tau}{n},\frac{\sum_{j=1}^n\vec{x}_j}{n}-\vec{x}\right){\mathpunct{:} }\prod_{j=1}^n\uwhat{U}^{[g\tlambda]}(t,t+\tau_j)\underline{\widehat{\phi}_0}(t+\tau_j,\vec{x}_j)\uwhat{U}^{[g\tlambda]}(t+\tau_j,t){\mathpunct{:}} \times
$$
$$
B\left(\left(\vec{x}_j-\vec{x}\right)_{j=1,\ldots,n},\arrs{\tau}{n}\right) d^3\vec{x}d^{3n}\arrs{\vec{r}}{n}d^{n}\arrs{\tau}{n}.
$$
With minimal technical effort, one may show that the correctly written analogue of the equations of motion (\ref{eq:HpQFT/AdmTh/HamEv}) holds. At the same time one may note that the class of effective theories defined in this way is essentially the same.  
\end{rmk}
\subsection{Feynman rules with adiabatic cut-off}
\label{HpQFT/FR}

In this subsection we present a technical version of the Feynman rules for computation of the restricted Wightman function in the presence of the adiabatic cut-off which we use in the proof of Theorem \ref{thm:MainW}. More practical Feynman rules are presented in Appendix. We assume a massive admissible HpQFT to be fixed.
\par 
\subsubsection{Preparation}
We start by the following observation based on construction of $\PWightmanRnag{n}{\alpha}{g\tlambda}$ and $U^{[g\tlambda]}$.
\begin{rmk}\label{rmk:HpQFT/V-expansion}
For any $n\in\NN_0$, $\tlambda\in\SRR{4}$ and $f\in\SRR{3n}$
\begin{equation}\label{eq:HpQFT/FR/V-expansion}
\PWightmanRnag{n}{\alpha}{g\tlambda}[f]=
\sum_{V=0}^{\infty}g^V \PWightmanRnaV{n}{\alpha}{V}[f\otimes\tlambda^{\otimes V}],    
\end{equation}
where $\PWightmanRnaV{n}{\alpha}{V}[f\otimes\tlambda^{\otimes V}]\in\LL(\SRR{3n+4V},\SRR{n})$ is given by
$$
\PWightmanRnaV{n}{\alpha}{V}=\sum_{\substack{v_j\in\NN_0, j=0,\ldots, n\\ v_0+\ldots+v_n=V}}
\PWightmanRnaVarr{n}{\alpha}{v}
$$
and $\PWightmanRnaVarr{n}{\alpha}{v}\in \LL(\SRR{3n+4(\sum_{j=0}^{n}v_j)},\SRR{n})$ is defined by
$$
\PWightmanRnaVarr{n}{\alpha}{v}\symbDistArgs{\arrs{\vec{k}}{n},(\tau_{i,j},\vec{q}_{i,j})_{i=0\ldots n, j=1,\ldots, v_i}}{\arrs{t}{n}}=
$$
$$
\intUSimplex{+}{v_0}\symbDistArgs{(\tau_{0,j})_{j=1,\ldots,v_0}}{t_1}\left(\prod_{i=1}^{n-1}\intSimplex{v_i}\symbDistArgs{(\tau_{i,j})_{j=1,\ldots,v_i}}{t_{i+1},t_i}\right)\intUSimplex{-}{v_n}\symbDistArgs{(\tau_{n,j})_{j=1,\ldots,v_n}}{t_n}
$$
$$
(\Omega_0,\prod_{j=1}^{v_0}\uwhat{\widetilde{h}}_{I}\symbDistArgs{\vec{k}_{0,j}}{\tau_{0,j}} \prod_{i=1}^{n}\uwhat{\phi^{R}_{0\alpha_i}}\symbDistArgs{\vec{k}_i}{\vec{t}_i}\prod_{j=1}^{v_i}\uwhat{\widetilde{h}}_{I}\symbDistArgs{\vec{q}_{i,j}}{\tau_{i,j}}  \Omega_0).
$$
\end{rmk}
\begin{rmk}\label{rmk:HpQFT/FR/combFactor}
Here we use the symbolic presentation of the integration operators. For eventual convenience we present the corresponding symbols as discontinuous functions\footnote{Here we identify the Heaviside function with the corresponding discontinuous function. This is safe as long as no derivatives are involved (see also Remark \ref{rmk:Framework/symb/disc-restriction}).}
$$
\intSimplex{v}\symbDistArgs{\arrs{\tau}{v}}{t,t'}=(-1)^{v\sign (t'-t)}\theta((\tau_n-t)\sign(t'-t))\theta((t'-\tau_1)\sign(t'-t))\times
$$
$$
\prod_{j=1}^{v-1}\theta(\sign (t'-t)(\tau_j-\tau_{j+1}))=
T_{1,\ldots,v;<}^{\arrs{\tau}{v};-\sign (t'-t)} \prod_{j=1}^{v} J\symbDistArgs{\tau_j}{t,t'},
$$
where $T_{1,\ldots,v;<}^{\arrs{\tau}{v};+}\in\{0,1\}$ is defined below.  In the same way, we have
$$
\intUSimplex{\pm 1}{v}\symbDistArgs{\arrs{\tau}{v}}{t}=T_{1,\ldots,v;<}^{\arrs{\tau}{v};-1}J_{\pm}\symbDistArgs{\tau_j}{t}
$$
\end{rmk}
\begin{notn}
For any partially ordered set $(A,\prec)$, $v\in\NN$, any choice of $a_1,\ldots,a_v\in A$ and $\eta=\pm 1$ we set the \emph{ordering indicator function} $\RR{v}\rightarrow \{0,1\}$
$$
(\arrs{\tau}{v})\mapsto T_{a_1,\ldots,a_v;\prec}^{\arrs{\tau}{v};+1}.
$$
so that $T_{a_1,\ldots,a_v;\prec}^{\arrs{\tau}{v};+1}$ if
\begin{equation}\label{eq:HpQFT/FR/ord-agr}
    a_i\prec a_j \Rightarrow \eta(\tau_i-\tau_j)<0, \, i,j=1,\ldots, v.
\end{equation}
\end{notn}
The combinatoric interpretation of the factors above leads to the following results:
\begin{lem}\label{lem:HpQFT/FR/independent}
Let $(A',\prec')$ and $(A'',\prec'')$ be partially ordered sets and define
$$(A,\prec)=(A',\prec')\sqcup (A'',\prec'').$$
Then for any $n',n''\in\NN_0$, any $\arrs{a'}{n'}\in A'$, $\arrs{a''}{n''}\in A''$, $\arrs{\tau'}{n'},\arrs{\tau''}{n''}\in\RR{}$ and $\eta=\pm$ one has  
$$
T_{\arrs{a'}{n'},\arrs{a''}{n''};\prec}^{\arrs{\tau'}{n'},\arrs{\tau''}{n''};\eta}=T_{\arrs{a'}{n'};\prec'}^{\arrs{\tau'}{n'};\eta}T_{\arrs{a''}{n''};\prec''}^{\arrs{\tau''}{n''};\eta}
$$
\end{lem}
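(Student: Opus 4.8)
The plan is to unwind the definition of the ordering indicator function together with the definition of the disjoint union of partially ordered sets; once both are made explicit, Lemma \ref{lem:HpQFT/FR/independent} reduces to an elementary bookkeeping identity. First I would recall that in $(A,\prec)=(A',\prec')\sqcup (A'',\prec'')$ the relation $\prec$ restricts to $\prec'$ on $A'$ and to $\prec''$ on $A''$, while no element of $A'$ is comparable under $\prec$ to any element of $A''$. Since all three indicator functions take values in $\{0,1\}$, the claimed product identity is equivalent to the assertion that the left-hand side equals $1$ precisely when both factors on the right equal $1$.

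Then I would expand the defining condition (\ref{eq:HpQFT/FR/ord-agr}) for the concatenated list $\arrs{a'}{n'},\arrs{a''}{n''}$ with the concatenated timestamps $\arrs{\tau'}{n'},\arrs{\tau''}{n''}$, where $\eta=\pm 1$ is fixed. The index pairs $(i,j)$ that appear in (\ref{eq:HpQFT/FR/ord-agr}) split into three families: both indices in the $A'$-block, both in the $A''$-block, or one in each. For pairs lying in the $A'$-block the implication in (\ref{eq:HpQFT/FR/ord-agr}), evaluated with $\prec$, is exactly the implication defining $T_{\arrs{a'}{n'};\prec'}^{\arrs{\tau'}{n'};\eta}$, since $\prec$ agrees with $\prec'$ there; likewise the $A''$-block reproduces the condition defining $T_{\arrs{a''}{n''};\prec''}^{\arrs{\tau''}{n''};\eta}$. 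For a mixed pair, since $a'_i\not\prec a''_j$ and $a''_j\not\prec a'_i$ in the disjoint union, the hypothesis of the implication is never satisfied, so the condition holds vacuously. Hence the joint condition is precisely the conjunction of the two separate ones.

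I would conclude by noting that this equivalence is exactly the statement that the left-hand side of the identity is $1$ iff both right-hand factors are $1$, which, by $\{0,1\}$-valuedness, is the desired product formula. I expect no genuine obstacle: the single point deserving a word of care is that the disjoint union of posets is taken so that cross-elements are incomparable, which is precisely what makes the mixed index pairs contribute nothing; the rest is immediate from the definitions, and the whole argument should occupy only a few lines.
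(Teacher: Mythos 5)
Your argument is correct and is exactly the paper's reasoning, merely spelled out: the paper's proof just observes that the ordering conditions (\ref{eq:HpQFT/FR/ord-agr}) for the two disjoint subsets are independent so the indicator factors, and your case split (both indices in $A'$, both in $A''$, mixed pairs vacuous by incomparability in the disjoint union) is the explicit content of that remark. No gap; your version is simply a more detailed write-up of the same argument.
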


\begin{lem}\label{lem:HpQFT/FR/incompatible}
Let $(A,\prec)$ be a partially ordered set. Let $\overline{\prec}$ be the set of all linear extensions of $\prec$. Take r any , $v\in\NN$,  $a_1,\ldots,a_v\in A$ , $\eta=\pm 1$ and $\arrs{\tau}{v}$ such that $$i\neq j\Rightarrow \tau_{i}\neq \tau_j, \, i,j=1,\ldots,v.$$
Then
$$
T_{\arrs{a}{v};\prec}^{\arrs{\tau}{v};\eta}=\sum_{\prec'\in\overline{\prec}}T_{\arrs{a}{v};\prec'}^{\arrs{\tau}{v};\eta}.
$$
\end{lem}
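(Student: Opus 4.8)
The plan is to evaluate both sides as indicators of one and the same event. Since the reals $\arrs{\tau}{v}$ are pairwise distinct, the relation $\prec_\tau$ on the set $\{a_1,\ldots,a_v\}$ defined by
\[
a_i \prec_\tau a_j \iff \eta(\tau_i-\tau_j)<0
\]
is a strict total order: it is irreflexive because $\eta(\tau_i-\tau_i)=0$, transitive because $\eta(\tau_i-\tau_k)=\eta(\tau_i-\tau_j)+\eta(\tau_j-\tau_k)$, and total because $\tau_i\neq\tau_j$ forces $\eta(\tau_i-\tau_j)\neq 0$. I read the statement with $A=\{a_1,\ldots,a_v\}$ and the $a_i$ pairwise distinct, which is the situation occurring in the Feynman-rule computations; in the general case one first replaces $A$ and $\prec$ by the sub-poset induced on $\{a_1,\ldots,a_v\}$ and then argues exactly as below. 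I would also record the elementary fact that two strict total orders on the same set, one contained in the other, must coincide: given $a\neq b$, one of the two orders decides the pair $\{a,b\}$ one way, the containment transports that decision to the other order, and totality of the latter forces the two orders to agree on $\{a,b\}$.

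Next I would compute the right-hand side. A linear extension $\prec'\in\overline{\prec}$ is in particular a strict total order on $A=\{a_1,\ldots,a_v\}$, and by the definition of the ordering indicator, $T_{\arrs{a}{v};\prec'}^{\arrs{\tau}{v};\eta}=1$ holds precisely when $a_i\prec'a_j\Rightarrow\eta(\tau_i-\tau_j)<0$ for all $i,j$, i.e., precisely when $\prec'\subseteq\prec_\tau$; by the fact just recorded this forces $\prec'=\prec_\tau$. Hence $\sum_{\prec'\in\overline{\prec}}T_{\arrs{a}{v};\prec'}^{\arrs{\tau}{v};\eta}$ counts the linear extensions of $\prec$ that equal $\prec_\tau$, so it is $1$ if $\prec_\tau\in\overline{\prec}$ and $0$ otherwise; in particular it is $\{0,1\}$-valued, with no need to invoke the existence of linear extensions.

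Finally I would match this with the left-hand side: $\prec_\tau$ extends $\prec$, i.e., $\prec\subseteq\prec_\tau$, exactly when $a_i\prec a_j\Rightarrow\eta(\tau_i-\tau_j)<0$ for all $i,j$, which is verbatim the condition defining $T_{\arrs{a}{v};\prec}^{\arrs{\tau}{v};\eta}=1$. Combining the last two paragraphs yields $T_{\arrs{a}{v};\prec}^{\arrs{\tau}{v};\eta}=\sum_{\prec'\in\overline{\prec}}T_{\arrs{a}{v};\prec'}^{\arrs{\tau}{v};\eta}$, which is the assertion. The step I would flag as the crux — everything else being a matter of unwinding definitions — is the one in the middle paragraph: the observation that the only linear extension of $\prec$ capable of satisfying the indicator constraint is the timestamp-induced order $\prec_\tau$, which collapses the sum over $\overline{\prec}$ to the counting of a single order.
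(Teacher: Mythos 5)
Your proof is correct and follows essentially the same route as the paper's: both arguments rest on the observation that for pairwise distinct timestamps there is exactly one compatible total order (your $\prec_\tau$), so the sum over linear extensions collapses to the indicator of whether $\prec_\tau$ extends $\prec$, which is verbatim the left-hand side. Your version merely spells out the details the paper leaves implicit (the "nested total orders coincide" fact and the tacit assumption that the $a_i$ are distinct and exhaust $A$ — both of which do hold in the Feynman-rule application), so no changes are needed.
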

\begin{proof}[Proof of Lemmas \ref{lem:HpQFT/FR/independent} and \ref{lem:HpQFT/FR/incompatible}]
The statements are similar to standard results in combinatorics and elementary probability. In the first case the ordering conditions (\ref{eq:HpQFT/FR/ord-agr}) for the two disjoint subsets are independent, hence the indicator function factors. 
\par 
In the second case, we note that if all $\arrs{\tau}{v}$ are different, then there is precisely one total $\prec$ order satisfying (\ref{eq:HpQFT/FR/ord-agr}). At the same time, if $\prec$ is a partial order, then it satisfies the condition (\ref{eq:HpQFT/FR/ord-agr}) if and only if one of its linear extensions does. Thus Lemma \ref{lem:HpQFT/FR/incompatible} follows by summing over the incompatible possibilities.
\end{proof}

\subsubsection{Feynman rules for unrenormalized correlators}
The next step is to apply Wick's theorem in the form of Proposition \ref{prop:DS/2Q/Wick} and Remark \ref{rmk:DS/2Qgen/Wick} to get the Feynman rules.
\begin{prop}\label{prop:HpQFT/FR/PW-total}
The distribution
$$\PWightmanRnaVarr{n}{\alpha}{v}\symbDistArgs{\arrs{\vec{k}}{n},(\tau_{i,j},\vec{q}_{i,j})_{i=0\ldots n, j=1,\ldots, v_i}}{\arrs{t}{n}}$$ can be computed by the following Feynman rules\footnote{See Subsection \ref{Intro/prelim} for the terminology.}. 
\begin{itemize}
    \item The relevant are  the Feynman graphs $\Gamma$ with $V=\sum_{j=0}^{n}$ enumerated internal vertices  $\bullet_{(i,j)}$ with $i=0,\ldots,n$ and $j=1,\ldots,v_i$, and $n$ enumerated external vertices $\circ_i$, $i=1,\ldots, n$; 
    \item For convenience we fix total order $\prec_{\Gamma}$ generated by the following relations:
    \begin{equation}\label{eq:HpQFT/FR/ord1}
        \circ_{i+1}\prec_{\Gamma} \circ_{i}, \quad i=1,\ldots n, j=1,\ldots,v_i 
    \end{equation}
    \begin{equation}\label{eq:HpQFT/FR/ord2}
        \bullet_{(i,j)}\prec_{\Gamma} \circ_{i}, \quad i=1,\ldots n, j=1,\ldots,v_i 
    \end{equation}
    \begin{equation}\label{eq:HpQFT/FR/ord3}
        \circ_{i+1}\prec_{\Gamma} \bullet_{(i,j)}, \quad i=1,\ldots n-1, j=1,\ldots,v_i
    \end{equation} 
    \begin{equation}\label{eq:HpQFT/FR/ord4}
        \bullet_{(i+1,j)}\prec_{\Gamma} \bullet_{(i,j)}, \quad i=0,\ldots n, j=1,\ldots,v_i-1.
    \end{equation} 
    \item To each line  assign a free momentum flux $\vec{p}\in\RR{3}$ directed from the earlier to the later end;
    \item
        The factors corresponding to each element are shown in Table \ref{tab:FRules-TM}; 
        \item 
        The overall factor is 
        $$
        C_{\Gamma}=
        \intUSimplex{+}{v_0}\symbDistArgs{(\tau_{0,j})_{j=1,\ldots,v_0}}{t_1}\left(\prod_{i=1}^{n-1}\intSimplex{v_i}\symbDistArgs{(\tau_{i,j})_{j=1,\ldots,v_i}}{t_{i+1},t_i}\right)\times$$
        $$\intUSimplex{-}{v_n}\symbDistArgs{(\tau_{n,j})_{j=1,\ldots,v_n}}{t_n}=
        $$
        $$
        \left(\prod_{j=1}^{v_0}J_{+}\symbDistArgs{\tau_{0,j}}{t_0}\right)
        \left(\prod_{i=1}^{n-1}\prod_{j=1}^{v_i}J\symbDistArgs{\tau_{i,j}}{t_{i+1},t_i}\right)
         \left(\prod_{j=1}^{v_n}J_{-}\symbDistArgs{\tau_{n,j}}{t_n}\right)\times
         $$
        $$
        T_{1,\ldots,v_0;<}^{(\tau_{0,j})_{j=1,\ldots,v_0};-}
         \left(\prod_{i=1}^{n-1}T_{1, \ldots, v_i;<}^{(\tau_{i,j})_{j=1,\ldots,v_i};-\sign (t_{i}-t_{i+1})}\right)
        T_{1,\ldots,v_n;<}^{(\tau_{n,j})_{j=1,\ldots,v_n};-}     %
        $$
\end{itemize}
\end{prop}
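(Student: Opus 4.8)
The plan is to take the closed formula for $\PWightmanRnaVarr{n}{\alpha}{v}$ supplied by Remark \ref{rmk:HpQFT/V-expansion} and to evaluate the vacuum expectation value occurring there by the Wick theorem of Remark \ref{rmk:DS/2Qgen/Wick} (i.e.\ by iterating Proposition \ref{prop:DS/2Q/Wick}), reading the Feynman rules off the resulting sum over contraction schemes. The overall factor is already isolated in Remark \ref{rmk:HpQFT/V-expansion}: the string of nested integration operators $\intSimplex{v_i}$ and $\intUSimplex{v_0}{+},\intUSimplex{v_n}{-}$ is exactly the claimed $C_{\Gamma}$ once each such operator is rewritten, via the discontinuous-kernel identities of Remark \ref{rmk:HpQFT/FR/combFactor}, as a product of the elementary operators $J$, $J_{\pm}$ times an ordering-indicator function $T$. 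So the substantive step is to show that $(\Omega_0,\prod_{j}\uwhat{\widetilde h}_I\,\prod_i\uwhat{\phi^R_{0\alpha_i}}\,\prod_j\uwhat{\widetilde h}_I\,\Omega_0)$ equals the sum, over the graphs described in the statement, of the products of the line and vertex factors of Table \ref{tab:FRules-TM}.

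First I would substitute into the left-hand side the explicit expansion of $\uwhat{\widetilde h}_I$ furnished by Remark \ref{rmk:HpQFT/AdmTh/forms-of-HI}, writing each Hamiltonian factor as a sum over the number of its legs and over the sign assignments of those legs, of an integral of the admissible kernel $\mathcal{F}$ against a normally ordered product of partial free fields $\uwhat{\widetilde\phi}_{0,\pm}$; each external factor $\uwhat{\phi^R_{0\alpha_i}}$ is already a single partial field times the $c$-number multiplier $\varphi_{\alpha_i}$. The expectation value then becomes a scalar series (built from the kernels, the multipliers $\varphi_{\pm}$, and the internal-momentum integrals) times $(\Omega_0,\text{product of normally ordered products of }\uwhat{\widetilde\phi}_{0,\cdot}\,\Omega_0)$; applying Remark \ref{rmk:DS/2Qgen/Wick} repeatedly collapses the latter to a sum over complete pairings of the field operators. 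Because every block is already normally ordered and the sandwich is between vacua, an admissible pairing joins a negative-frequency field to a positive-frequency field standing strictly to its right in the product and never pairs two fields of the same block; this is precisely the assertion that each term is a Feynman graph without tadpole edges, whose internal vertices are the $h_I$-blocks (enumerated by $(i,j)$), whose external vertices are the one-field blocks (enumerated by $i$), and whose edges carry the acyclic orientation induced by left-to-right position in the product. Each contraction yields the two-point function $(\Omega_0,\uwhat{\widetilde\phi}_{0,-}\uwhat{\widetilde\phi}_{0,+}\Omega_0)$, which contains a $\delta^{(3)}$ identifying the two leg momenta --- so a single momentum flux per line --- together with the $\omega_0$- and phase-dependent factors which, regrouped by vertex and by edge, are the entries of Table \ref{tab:FRules-TM}.

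Next I would check the combinatorial bookkeeping: the $1/n!$ normalizations in the expansion of $\uwhat{\widetilde h}_I$, the multinomial factors $r!\binom{l_A}{r}\binom{l'_B}{r}$ produced by each use of Proposition \ref{prop:DS/2Q/Wick}, and the number of leg-labelings of a fixed graph all combine to leave exactly one contribution per isomorphism class, divided by $|\mathrm{Aut}\,\Gamma|$; since (as the paper notes) the symmetry factor is irrelevant for the weak adiabatic limit, it is enough to carry this counting along qualitatively. I would also verify that the total order $\prec_{\Gamma}$ generated by (\ref{eq:HpQFT/FR/ord1})--(\ref{eq:HpQFT/FR/ord4}) is precisely the order in which the blocks occur in the product (latest vertex leftmost, matching the numeration convention of Subsection \ref{Intro/prelim}), so that the indicator functions $T$ in $C_{\Gamma}$ really refer to this order and the momentum orientation ``earlier $\to$ later'' is the one forced by the admissible pairings. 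Well-definedness of each graph's expression as an element of the appropriate operator space is automatic: it is assembled by composition, tensor product and augmentation (Section \ref{Framework}) from the continuous operators $J$, $J_{\pm}$, from multiplication by the admissible kernels (multipliers by the growth condition (\ref{eq:def:HpQFT/AdmInt/KernelGrowth})) and by $\varphi_{\pm}\in\Mlt{4}$, from precompositions with the linear momentum maps, and from the contraction operations $\otimes_r$ of Definition \ref{def:DS/2Q/contractions}.

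The main obstacle is organizational rather than conceptual: propagating the $\pm$ frequency labels consistently through the contraction graph while simultaneously tracking which momenta are genuinely integrated (internal lines) versus retained as arguments (the external $\vec k_i$ and the defect momenta $\vec q_{i,j}$ that will later be paired with the adiabatic function), the precise phases $e^{\pm\ii\omega_0 t}$ attached to each half-edge and their split into vertex versus edge contributions, and the cancellation of the factorial prefactors --- making all of this line up with the exact entries of Table \ref{tab:FRules-TM} and with $\prec_{\Gamma}$ is where essentially all the labor lies.
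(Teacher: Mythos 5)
Your proposal is correct and follows essentially the same route as the paper's (much terser) proof: expand via Remark \ref{rmk:HpQFT/V-expansion}, apply the Wick theorem of Proposition \ref{prop:DS/2Q/Wick} and Remark \ref{rmk:DS/2Qgen/Wick}, identify external vertices with $\phi_0$ insertions, internal vertices with $h_I$ insertions, lines with contractions, and the order $\prec_\Gamma$ with the right-to-left order of operators in the composition. The additional detail you supply on the leg expansions, the origin of $C_\Gamma$, and the combinatorics is consistent with the paper's conventions.
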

\begin{proof}
The proof goes as usual with using the Wick Theorem of Proposition \ref{prop:DS/2Q/Wick} and Remark \ref{rmk:DS/2Qgen/Wick}. Each external vertex is an insertion of $\phi_0$, each internal is an insertion of $h_I$  and each line depicts a contraction as explained in Remark \ref{rmk:DS/2Q/contractions} and the ordering is just the right-to-left ordering of operators in the composition. The internal vertices are labeled with two numbers $(i,j)$
\end{proof}
The ordering (\ref{eq:HpQFT/FR/ord1}-\ref{eq:HpQFT/FR/ord4}) completely fixed by labeling of the vertices is convenient for estimations in the next section. 

\begin{rmk}\label{rmk:HpQFT/FR/symmze}
More standard object would be to deal with symmetrization of $\PWightmanRnaV{n}{\alpha}{V}$, which can be introduced as a formal variational derivative
\begin{equation}
g^{-V}\frac{\delta^W\PWightmanRnag{n}{\alpha}{g\tlambda}{\arrs{\vec{k}}{n}}{\arrs{t}{n}}}{\delta \tlambda(\tau_1,\vec{q}_1)\cdots \delta \tlambda(\tau_V,\vec{q}_V)}=\left(\PWightmanRnaV{n}{\alpha}{V}\circ \extD{1,\ldots,3n}{1,\ldots,3n}\symmze{4}{}\right)\symbDistArgs{\arrs{\vec{k}}{n},(\tau_{i},\vec{q}_{i})_{i=1\ldots n}}{\arrs{t}{n}}.
\label{eq:HpQFT/FR/symmze}    
\end{equation}
As $\symmze{4}{}[\lambda^{\otimes V}]=\lambda^{\otimes V}$, the right-hand sides of (\ref{eq:HpQFT/FR/symmze}) can be safely placed instead of  $\PWightmanRnaV{n}{\alpha}{V}$ in (\ref{eq:HpQFT/FR/V-expansion}). Similarly we can perform partial symmetrization of $\PWightmanRnaVarr{n}{\alpha}{v}$ by introducing $\PWightmanRnaVarrS{n}{\alpha}{v}\in\LL(\SRR{3n+4V}),\SRR{n})$,
\begin{equation}
\PWightmanRnaVarrS{n}{\alpha}{v}=\left(\PWightmanRnaVarr{n}{\alpha}{v}\circ \extD{1,\ldots,3n}{1,\ldots,3n}\bigotimes_{i=0}^{n}\symmze{4}{v_i} \right).
\label{eq:HpQFT/FR/symmzeP}    
\end{equation}
\end{rmk}

\begin{prop}\label{prop:HpQFT/FR/PW-partial}
$\PWightmanRnaVarrS{n}{\alpha}{v}(\arrs{\vec{k}}{n},\symbDistArgs{(\tau_{i,j},\vec{q}_{i,j})_{i=0\ldots n, j=1,\ldots, v_i}}{\arrs{t}{n}}$ can be computed according to the following Feynman rules:
\begin{itemize}
    \item The relevant are all partially ordered Feynman graphs $(\Gamma,\prec_{\Gamma})$ with $V$ internal vertices $\bullet_{j}$, $j=1,\ldots, v$ and $n$ external vertices $\circ_j$, $j=1,\ldots,n$;
    \item   The order is constrained by relations (\ref{eq:HpQFT/FR/ord1}-\ref{eq:HpQFT/FR/ord3});
    \item To each line a momentum flux, directed from earlier to later vertex is assigned;
    \item Factors corresponding to elements of the diagrams are presented in Table \ref{tab:FRules-TM};
        \item The overall factor is
        $$
        C'_{\Gamma,\prec_{\Gamma}}=\left(\prod_{j=1}^{v_0}J_{+}\symbDistArgs{\tau_{0,j}}{t_0}\right)
        \left(\prod_{i=1}^{n-1}\prod_{j=1}^{v_i}J\symbDistArgs{\tau_{i,j}}{t_{i+1},t_i}\right)
         \left(\prod_{j=1}^{v_n}J_{-}\symbDistArgs{\tau_{n,j}}{t_n}\right)\times
         $$
        $$
        T_{(\bullet_{(0,j)
        })_{j=1,\ldots,v_0};\prec_\Gamma}^{(\tau_{(i,0),i=1,\ldots,dj})_{j=1,\ldots,v_0};+} 
         \left(\prod_{i=1}^{n-1}T_{(\bullet_{(i,j)})_{j=1,\ldots,v_i};\prec_{\Gamma}}^{(\tau_{i,j})_{j=1,\ldots,v_i};\sign (t_{i}-t_{i+1})}\right)
        T_{(\bullet_{(n,j)
        })_{j=1,\ldots,v_n};\prec_\Gamma}^{(\tau_{n,j})_{j=1,\ldots,v_n};-}.     %
        $$
\end{itemize}
\end{prop}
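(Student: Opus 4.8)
The proof strategy is to derive Proposition \ref{prop:HpQFT/FR/PW-partial} from Proposition \ref{prop:HpQFT/FR/PW-total} by tracking what the partial symmetrization operator $\bigotimes_{i=0}^n \symmze{4}{v_i}$ of (\ref{eq:HpQFT/FR/symmzeP}) does to the graph sum. First I would start from the totally ordered Feynman rules of Proposition \ref{prop:HpQFT/FR/PW-total}: the value of $\PWightmanRnaVarr{n}{\alpha}{v}$ is a sum over totally ordered graphs $\Gamma$ with a fixed order $\prec_\Gamma$ pinned by the labeling (\ref{eq:HpQFT/FR/ord1}--\ref{eq:HpQFT/FR/ord4}). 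Symmetrizing over the variables $(\tau_{i,j},\vec q_{i,j})_j$ attached to the $v_i$ internal vertices sitting between the external vertices $\circ_i$ and $\circ_{i+1}$ amounts, graph by graph, to summing over the $v_i!$ ways of permuting the labels $j$ within each such block. Since the factors of Table \ref{tab:FRules-TM} depend on the internal vertices only through their adjacency and their block index $i$ (not through the intra-block label $j$), relabelling produces the same underlying Feynman graph but with all the admissible total orders refining the partial order that only records the block structure (\ref{eq:HpQFT/FR/ord1}--\ref{eq:HpQFT/FR/ord3}) and the edge orientations. Thus the symmetrized sum reorganizes into a sum over \emph{partially} ordered graphs $(\Gamma,\prec_\Gamma)$ constrained by (\ref{eq:HpQFT/FR/ord1}--\ref{eq:HpQFT/FR/ord3}), each counted once.

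The second ingredient is the bookkeeping of the overall factor. In Proposition \ref{prop:HpQFT/FR/PW-total} the factor $C_\Gamma$ contains, besides the products of $J_\pm$ and $J$ kernels, the ordering indicator functions $T_{1,\ldots,v_i;<}^{(\tau_{i,j})_j;\,-\sign(t_i-t_{i+1})}$ for the \emph{totally} ordered block $1<\cdots<v_i$. When we sum over the $v_i!$ relabellings within block $i$, Lemma \ref{lem:HpQFT/FR/independent} lets us treat the $n+1$ blocks independently (they sit in disjoint pieces of the order), and Lemma \ref{lem:HpQFT/FR/incompatible}, read backwards, tells us that summing the totally ordered indicator over all relabellings equals the partially ordered indicator $T_{(\bullet_{(i,j)})_j;\prec_\Gamma}^{(\tau_{i,j})_j;\,\sign(t_i-t_{i+1})}$ attached to the partial order $\prec_\Gamma$ restricted to the $i$th block. (Here one must be careful with the sign: the external-edge blocks $i=0,n$ carry a fixed $\mp$, while the internal blocks carry $\sign(t_i-t_{i+1})$, matching the statement.) This is precisely the factor $C'_{\Gamma,\prec_\Gamma}$ in the claimed rules. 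The products of $J$, $J_+$, $J_-$ kernels are untouched by relabelling and carry over verbatim, using the discontinuous-kernel identifications of Remark \ref{rmk:HpQFT/FR/combFactor}.

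I would then assemble the two observations: $\PWightmanRnaVarrS{n}{\alpha}{v} = \PWightmanRnaVarr{n}{\alpha}{v}\circ \extD{1,\ldots,3n}{1,\ldots,3n}\bigotimes_{i=0}^n\symmze{4}{v_i}$ is the average over intra-block relabellings of the totally ordered graph expansion; each relabelling orbit of a totally ordered graph corresponds bijectively to a partially ordered graph satisfying (\ref{eq:HpQFT/FR/ord1}--\ref{eq:HpQFT/FR/ord3}) together with its set of refining total orders; the Table \ref{tab:FRules-TM} factors are constant on the orbit; and the overall factor sums, via Lemmas \ref{lem:HpQFT/FR/independent} and \ref{lem:HpQFT/FR/incompatible}, to the partially ordered overall factor $C'_{\Gamma,\prec_\Gamma}$. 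Dividing by the block factorials $\prod_i v_i!$ implicit in $\symmze{4}{v_i}$ is exactly compensated by the multiplicity of each orbit, so no residual combinatorial coefficient survives. This yields the stated Feynman rules.

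The main obstacle I anticipate is the careful matching of the ordering-indicator signs and of the partial order $\prec_\Gamma$ in the symmetrized picture: one has to verify that after relabelling the only constraints that survive on the internal vertices are (\ref{eq:HpQFT/FR/ord1}--\ref{eq:HpQFT/FR/ord3}) — i.e. that relation (\ref{eq:HpQFT/FR/ord4}), which pins the intra-block total order, is precisely what gets washed out by $\symmze{4}{v_i}$ — and that the edge orientations (momentum flux directed earlier-to-later) remain consistent with every refining total order. This is a genuine but routine check once one sets up the orbit-counting bijection carefully; the rest is an unwinding of definitions and an application of the two combinatorial lemmas already proved.
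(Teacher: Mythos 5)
Your proposal is correct and follows essentially the same route as the paper: symmetrize the totally ordered rules of Proposition \ref{prop:HpQFT/FR/PW-total} over the intra-block permutations (which is exactly what washes out constraint (\ref{eq:HpQFT/FR/ord1})'s companion (\ref{eq:HpQFT/FR/ord4})), group the resulting totally ordered graphs by the partial order they refine, and apply Lemma \ref{lem:HpQFT/FR/incompatible} to resum the ordering indicators into $C'_{\Gamma,\prec_{\Gamma}}$. Your additional invocation of Lemma \ref{lem:HpQFT/FR/independent} and your explicit remark on the cancellation of the $\prod_i v_i!$ normalization only make explicit what the paper leaves implicit.
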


\begin{proof}
Directly symmetrizing the Feynman rules of Proposition \ref{prop:HpQFT/FR/PW-total} with respect to permutations of the internal vertices within sets $\{(\bullet_{(i,j)})|j=1,\ldots,v_i\}$ for each $i=0,\ldots,n$ leads to the Feynman rules as above, but with completely ordered graphs. Indeed, the complete ordering is just a way to describe the permutation restoring of the vertices restoring (\ref{eq:HpQFT/FR/ord4}). 
\par 
The factors assigned to elements of the diagrams depend on the relative order of the vertices connected by a line (since the vertex factor distinguishes the incoming and outgoing lines), so we may consider partially ordered graphs as in the statement, but the overall factor will be
$$
\sum_{\prec'\ii \overline{\prec_{\Gamma}}}C'_{\Gamma,\prec'}=C'_{\Gamma,\prec_{\Gamma}},
$$
where we used notation and statement of Lemma \ref{lem:HpQFT/FR/incompatible} for the last step.
\end{proof}

\begin{table}
    \centering
    \begin{tabular}{|m{60pt}|m{80pt}|m{160pt}|}
    \hline 
    \textbf{Element} 
    &
    \textbf{Figure}
    &
    \textbf{Factor}
    \\
    \hline
    Internal vertex
    &

       \begin{tikzpicture}
       \node[] at (-1.2,1.6) 
    {$\vec{p}'_1$};

       \draw[black, thick] (-1.2,1.2)--(0,0);
       \draw[black, thick,<-] (-1.2,1.4)--(-0.8,1.0);
       
           \node[] at (0.5,1.4) 
    {$\vec{p}_1$};

       \draw[black, thick] (1,1)--(0,0);
       \draw[black, thick,->] (1,1.4)--(0.1,0.6);
         \node[] at (0.6,-1.2) 
    {$\vec{p}_l$};

       \draw[black, thick] (1,-1)--(0,0);
       \draw[black, thick,->] (0.9,-1.2)--(0.3,-0.6);
\filldraw[black] (0,0) circle (2pt) node[anchor=west] {$(i,j)$};
 \node[] at (-1,0.2) 
    {\Huge $\vdots$};
    \node[] at (1,0.2) 
    {\Huge $\vdots$};

     \node[] at (-1,-1.2) 
    {$\vec{p}'_{l'}$};

       \draw[black, thick] (-1.5,-1)--(0,0);
       \draw[black, thick,<-] (-1.4,-1.2)--(-0.8,-0.8);
\end{tikzpicture}  & $$F_{(l',l)}(\arrs{\vec{p}'}{l'},\arrs{\vec{p}}{l})\cdot$$
$$
\delta^{(3)}\left(\sum_{k=1}^{l'}\vec{p}'_k-\sum_{k=1}^{l}\vec{p}_k-\vec{q}_{i,j}\right)\cdot
$$
$$
\exp\left(\ii \left(\sum_{j=1}^{l'}\omega_0(\vec{p}'_j)-\sum_{j=1}^{l}\omega_0(\vec{p}_j)\right)\tau_{i,j}\right)
$$
\\
         \hline 
         Internal line
         & 
         \begin{tikzpicture}
        \node[] at (1,0.8) 
    {$\vec{p}$};
         \filldraw[black] (0,0)  circle (2pt);
         \filldraw[black] (2,0)  circle (2pt);
         \draw[black, thick] (0,0)--(2,0);
         \draw[black, thick,->] (0.5,0.5)--(1.5,0.5);
         \end{tikzpicture}
         &
         $$ 1$$
         \\
         \hline
          External vertex 
         &
        \begin{tikzpicture}
        \node[] at (0.7,0.2) 
    {$\vec{p}$};
        \node[] at (0.8,-1.3) 
    {$i$};
         \draw[black, thick] (0,0)--(1,-1);
         \draw[black, thick,<-] (0.2,0.3)--(.8,-.2);
         \filldraw[black,thick, fill=white] (1,-1)  circle (2pt);
        \node[] at (1.7,-0.4) 
    {or};
          \begin{scope}[shift={(0.5,0)}]
         \node[] at (1.9,0.2)
    {$\vec{p}$};
        \node[] at (1.6,-1.3) 
    {$i$};
         \draw[black, thick] (1.5,-1)--(2.5,0);
         \draw[black, thick,<-] (1.7,-0.3)--(2.3,.2);
         \filldraw[black,thick, fill=white] (1.5,-1)  circle (2pt);
         \end{scope}
        \end{tikzpicture}
        &
        \begin{center}
            $\delta_{\alpha_i,+}\delta^{(3)}(\vec{k}_i-\vec{p})\frac{e^{\ii\omega_0(\vec{p})t_i}}{\sqrt{(2\pi)^32 \omega_0(\vec{p})}}$ or $\delta_{\alpha_i,-}\delta^{(3)}(\vec{k}_i+\vec{p})\frac{e^{-\ii\omega_0(\vec{p})t_i}}{\sqrt{(2\pi)^32 \omega_0(\vec{p})}}$
        \end{center}

         \\
         \hline
    \end{tabular}

    \caption{Ordered Feynman rules with adiabatic cut-off, time-momentum presentation}
    \label{tab:FRules-TM}
\end{table}

\subsubsection{Normalized correlator}

The next step is to identify the denominator of (\ref{eq:HpQFT/Wightman-norm}). For it we note that by Corollary \ref{=>:HpQFT/Uprop} we have
$$
\left(\Omega_0,\uwhat{S}^{[g\tlambda]}\Omega_0\right)=
$$
$$
\left(\Omega_0,\uwhat{U_{\mathrm{R}}}^{[g\tlambda]}(t_1)\prod_{j=1}^{n-1}\uwhat{U}^{[g\tlambda]}(t_j,t_j+1)\uwhat{U_{\mathrm{R}}}^{[g\tlambda]}(t_n)\Omega_0\right),\, \forall n\in\NN, \forall \arrs{t}{n}\in\RR{}.
$$
This leads to the following rather strange version of the Feynman rules.
\begin{prop}\label{prop:HpQFT/FR/denom}
For $n\in\NN_0$ and $\arrs{t}{n}\in\RR{}$ one has
$$
\left(\Omega_0,\uwhat{S}^{[g\tlambda]}\Omega_0\right)=\sum_{\arrsM{v'}{0}{n}\in\NN_0}
g^{\sum_{j=0}^n v_j}
N_{n;\arrs{v'}{n}}[\tlambda^{\otimes(\sum_{j=0}^n v_j)}](\arrs{t}{n}),
$$
where $N_{n;\arrs{v'}{n}}\in\LL\left(\SRR{4(\sum_{j=0}^n v_j)},\SRR{n}\right)$
can be computed according to the Following Feynman rules:
\begin{itemize}
    \item The relevant are all partially ordered Feynman graphs $(\Gamma,\prec_{\Gamma})$ with $V$ internal vertices $\bullet_{(i,j)}$, $i=0,\ldots,n$, $j=1,\ldots, v_i$;
    \item   The order is constrained by relations
    $$
    \bullet_{(i+1,j)}\prec \bullet_{(i,j')}, \quad, i=0,\ldots,n-1,\,j=1,\ldots,v_{i+1},\,j'=1,\ldots,v_i; 
    $$
    \item To each line a momentum flux, directed from earlier to later vertex is assigned;
    \item Factors corresponding to elements of the diagrams are presented in Table \ref{tab:FRules-TM};
        \item The overall factor is
        $$
        C'_{\Gamma,\prec_{\Gamma}}=\left(\prod_{j=1}^{v_0}J_{+}\symbDistArgs{\tau_{0,j}}{t_0}\right)
        \left(\prod_{i=1}^{n-1}\prod_{j=1}^{v_i}J\symbDistArgs{\tau_{i,j}}{t_{i+1},t_i}\right)
         \left(\prod_{j=1}^{v_n}J_{-}\symbDistArgs{\tau_{n,j}}{t_n})\right)\times
         $$
        $$
        T_{(\bullet_{(0,j)
        })_{j=1,\ldots,v_0};\prec_\Gamma}^{(\tau_{(i,0),i=1,\ldots,dj})_{j=1,\ldots,v_0};+} 
         \left(\prod_{i=1}^{n-1}T_{(\bullet_{(i,j)})_{j=1,\ldots,v_i};\prec_{\Gamma}}^{(\tau_{i,j})_{j=1,\ldots,v_i};\sign (t_{i}-t_{i+1})}\right)
        T_{(\bullet_{(n,j)
        })_{j=1,\ldots,v_n};\prec_\Gamma}^{(\tau_{n,j})_{j=1,\ldots,v_n};-}.     %
        $$
\end{itemize}
\end{prop}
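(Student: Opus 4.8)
The plan is to adapt, almost verbatim, the argument behind Propositions \ref{prop:HpQFT/FR/PW-total} and \ref{prop:HpQFT/FR/PW-partial}, the only structural change being that there are now \emph{no external field insertions}. First I would start from the identity recorded just before the statement,
$$
(\Omega_0,\uwhat{S}^{[g\tlambda]}\Omega_0)=\left(\Omega_0,\uwhat{U_{\mathrm R}}^{[g\tlambda]}(t_1)\prod_{j=1}^{n-1}\uwhat{U}^{[g\tlambda]}(t_j,t_{j+1})\uwhat{U_{\mathrm R}}^{[g\tlambda]}(t_n)\Omega_0\right),
$$
which holds for every $n$ and every $\arrs{t}{n}$ by Corollary \ref{=>:HpQFT/Uprop}, and then expand each of the $n+1$ evolution operators into the time-ordered (or, according to the orientation of the slab, anti-time-ordered) integrals of products of $\uwhat{\widetilde h}_I$ supplied by Proposition \ref{prop:HpQFT/US}. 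The factor $\uwhat{U_{\mathrm R}}^{[g\tlambda]}(t_1)$ then contributes $v_0$ copies of $\uwhat{\widetilde h}_I$ integrated over $\intUSimplex{+}{v_0}\symbDistArgs{\cdot}{t_1}$, each $\uwhat{U}^{[g\tlambda]}(t_i,t_{i+1})$ contributes $v_i$ copies integrated over $\intSimplex{v_i}\symbDistArgs{\cdot}{t_{i+1},t_i}$, and $\uwhat{U_{\mathrm R}}^{[g\tlambda]}(t_n)$ contributes $v_n$ copies integrated over $\intUSimplex{-}{v_n}\symbDistArgs{\cdot}{t_n}$; summing over $v_0,\ldots,v_n\in\NN_0$ and collecting powers of $g$ produces the asserted series, with $N_{n;\arrsM{v'}{0}{n}}$ equal to the vacuum expectation of $\prod_{i=0}^n\prod_{j=1}^{v_i}\uwhat{\widetilde h}_I\symbDistArgs{\cdot}{\tau_{i,j}}$ dressed with exactly those simplex integrations. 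In other words, $N_{n;\arrsM{v'}{0}{n}}$ is the object $\PWightmanRnaVarr{n}{\alpha}{v}$ of Remark \ref{rmk:HpQFT/V-expansion} with all the external factors $\uwhat{\phi^{R}_{0\alpha_i}}\symbDistArgs{\vec{k}_i}{t_i}$ deleted, so that the timestamps $t_1,\ldots,t_n$ survive only as the boundaries of the $n+1$ integration slabs.

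Next I would rewrite the nested simplex integrations in terms of the unconstrained $J,J_\pm$ integrations times the ordering indicator functions exactly as in Remark \ref{rmk:HpQFT/FR/combFactor}, and then apply the Wick theorem of Proposition \ref{prop:DS/2Q/Wick} and Remark \ref{rmk:DS/2Qgen/Wick} to the product of $\uwhat{\widetilde h}_I$'s sandwiched between the vacua. Because it is a vacuum expectation value, only the completely contracted terms survive: every creation leg and every annihilation leg must be paired, so the Feynman graphs that arise carry \emph{only} internal vertices $\bullet_{(i,j)}$, each with the $h_I$-vertex factor of Table \ref{tab:FRules-TM} and each line with a momentum flux and the trivial internal-line factor. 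This is exactly what gives the denominator its ``vacuum'' character and reproduces the list of factors and the overall factor $C'_{\Gamma,\prec_{\Gamma}}$ in the statement.

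Finally I would symmetrize over permutations of the internal vertices within each slab $i=0,\ldots,n$, precisely as in the proof of Proposition \ref{prop:HpQFT/FR/PW-partial}: the factors $1/v_i!$ from the exponential series combine with the combinatorial weights of Wick's theorem, the totally ordered graphs become partially ordered ones, and Lemma \ref{lem:HpQFT/FR/incompatible} lets me replace the sum of ordering indicators over the linear extensions of $\prec_{\Gamma}$ by the single indicator built into $C'_{\Gamma,\prec_{\Gamma}}$. Since there are no external vertices here, the analogues of (\ref{eq:HpQFT/FR/ord1})--(\ref{eq:HpQFT/FR/ord3}) are vacuous and the only order relation imposed before symmetrization that connects distinct slabs is the transitive consequence $\bullet_{(i+1,j)}\prec\bullet_{(i,j')}$ of (\ref{eq:HpQFT/FR/ord2})--(\ref{eq:HpQFT/FR/ord4}); after symmetrization this is precisely the constraint in the statement. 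The step to watch is this symmetrization bookkeeping — verifying which order relations survive once the external vertices are gone and that all combinatorial factors match — but it is entirely parallel to Proposition \ref{prop:HpQFT/FR/PW-partial}, so no genuine analytic difficulty arises. The single point that deserves a remark rather than a computation is the ``strange'' feature advertised in the statement: the individual terms $N_{n;\arrsM{v'}{0}{n}}$ and even the slab decomposition depend on $n$ and on $\arrs{t}{n}$, while their sum does not, which is consistent exactly because the opening identity from Corollary \ref{=>:HpQFT/Uprop} holds for every choice of $n$ and $\arrs{t}{n}$ — and it is this freedom that will later permit the denominator to be matched slab by slab against the numerator in the normalized correlator.
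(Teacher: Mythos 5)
Your proposal is correct and follows exactly the route the paper takes: its own proof is the one-line instruction to repeat the constructions of Remark \ref{rmk:HpQFT/V-expansion} and Propositions \ref{prop:HpQFT/FR/PW-total}, \ref{prop:HpQFT/FR/PW-partial} without the $\widetilde{\phi}_0$ insertions, which is precisely what you spell out (including the correct observation that the inter-slab ordering constraint survives as a consequence of the operator composition once the external vertices are deleted). No gaps.
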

\begin{proof}
Repeat all the constructions of Remark \ref{rmk:HpQFT/V-expansion} and Propositions \ref{prop:HpQFT/FR/PW-total}, \ref{prop:HpQFT/FR/PW-partial} without insertions of $\widetilde{\phi}_{0}$ (hence no external vertices in the Feynman graphs).
\end{proof}

\begin{=>}\label{=>:HpQFT/FR/W-total}
For any $n\in\NN_0$, $\tlambda\in\SRR{4}$, $\arrsM{v}{0}{n}$  and $f\in\SRR{3n}$ define $\WightmanRnaVarr{n}{\alpha}{v}\in \LL(\SRR{3n+4(\sum_{j=0}^{n}v_j)},\SRR{n})$ as the distributions computed by the Feynman rules of Proposition \ref{prop:HpQFT/FR/PW-total} but disregarding all graphs with vacuum components.
Then
\begin{equation}
\sum_{\arrsM{v'}{0}{n}}\sum_{\arrsM{v''}{0}{n}\in\NN_0}g^{\sum_{j=0}^{n}(v'_j+v''_j)}
\WightmanRnaVarr{n}{\alpha}{v'}[f\otimes \tlambda^{\otimes (\sum_{j=0}^{n}v'_j)}]\cdot N_{n;\arrsM{v''}{0}{n}}[\tlambda^{\otimes (\sum_{j=0}^{n}v''_j)}]=
\label{eq:=>:HpQFT/FR/W-total}
\end{equation}
$$\sum_{\arrsM{v}{0}{n}}g^{\sum_{j=0}^{n}(v_j)}\PWightmanRnaVarr{n}{\alpha}{v}[f\otimes \tlambda^{\otimes (\sum_{j=0}^{n}v_j)}],\,\forall f\in\SRR{3n}
$$
where $\cdot$ denotes the pointwise multiplication in $\SRR{n}$
\end{=>}
\begin{proof}
Repeating the argument of Remark \ref{rmk:HpQFT/FR/symmze}, we replace $\WightmanRnaVarr{n}{\alpha}{v'}$ with its symmetrization computed by Feynman rules of Proposition \ref{prop:HpQFT/FR/PW-partial} with the graphs with vacuum corrections omitted. Now substitute that symmetrization together with the diagrammatic expansions of Propositions \ref{prop:HpQFT/FR/PW-partial} and \ref{prop:HpQFT/FR/denom} into the left-hand sides of (\ref{eq:=>:HpQFT/FR/W-total}) and expand out the product. Each term of the resulting sum can be bijectively identified with a Feynman graph of Proposition \ref{prop:HpQFT/FR/PW-partial}. Indeed, the factor coming from $N_{n;\arrsM{v''}{0}{n}}$ corresponds to the vacuum components and  $\WightmanRnaVarr{n}{\alpha}{v'}$ to the non-vacuum components. The factors, corresponding to the elements of a graph in all the cases are given by Table \ref{tab:FRules-TM}, so, apart from the overall factors, the contribution of the whole graph is a product of the contributions of the vacuum components and the rest of the graph. To factorize the overall factor we use
Lemma \ref{lem:HpQFT/FR/independent} and the trivial observation that the vacuum components are always disjoint from the rest of the graph\footnote{Recall that the partial orders we consider are always generated by orientation of the edges, so disjoint graph components correspond to disjoint partially ordered subsets.}. Thus the formal sums are equal.
\end{proof}
\begin{=>}
For any $n\in\NN_0$, $\tlambda\in\SRR{4}$ and $f\in\SRR{3n}$
$$\WightmanRnag{n}{\alpha}{g\tlambda}[f]=
\sum_{V=0}^{\infty}g^V \WightmanRnaV{n}{\alpha}{V}[f\otimes\tlambda^{\otimes V}],
$$
where $\WightmanRnaV{n}{\alpha}{V}[f\otimes\tlambda^{\otimes V}]\in\LL(\SRR{3n+4V},\SRR{n})$ is given by
$$
\WightmanRnaV{n}{\alpha}{V}=\sum_{\substack{v_j\in\NN_0, j=0,\ldots, n\\ v_0+\ldots+v_n=V}}
\WightmanRnaVarr{n}{\alpha}{v},
$$
where  $\WightmanRnaVarr{n}{\alpha}{v}$ are as in Corollary \ref{=>:HpQFT/FR/W-total}.
\end{=>}

\begin{notn}
The set of all graphs relevant for computation of $\WightmanRnaVarr{n}{\alpha}{v}$ by Feynman rules of Corollary \ref{=>:HpQFT/FR/W-total} is denoted with 
$\mathfrak{G}_{\arrs{\alpha}{n}}^{\arrsM{v}{0}{n}}$. For $\Gamma\in\mathfrak{G}_{\arrs{\alpha}{n}}^{\arrsM{v}{0}{n}}$ we denote with $\WightmanRnaVarrG{n}{\alpha}{v}{\Gamma}$ the corresponding contribution.
\end{notn}



\section{Weak adiabatic limit}
\subsection{Formulation}
In this subsection we formulate the main result of the paper and show how it implies the weak adiabatic limit existence. The rest of the section is devoted to the proof of this statement, presented as a series of technical lemmas.
\par 
Basically, we want to claim that the Green function restricts to a smooth function of the momentum and (off-shell) energy defects, and thus can be evaluated when all the defects are zero. There is a small complication caused by singularities far from the origin in the space of energy-momentum defects. To avoid it, we temporally bound support of the adiabatic cut-off function $\widetilde{\lambda}$ to remove some unwanted singularities. As we work in the position representation for the time coordinate, it can be done by a convolution.  
For $h\in \SRR{1}$ we define
$
\mathcal{M}_{[h]}^{V,n}\in\LL(\SRR{4(V+n)}) 
$
$$
\mathcal{M}_{[h]}^{V,n}[f](\arrs{(\tau,\vec{q})}{V},\arrs{(t,\vec{p})}{n})=\int{\prod_{j=1}^{V}h(\arrs{\tau'}{V})f((\tau_j-\tau'_j,\vec{q}_j)_{j=1\ldots V},\arrs{(t,\vec{p})}{n})}d^V\arrs{\tau'}{V}.
$$
As explained below, in Corollary \ref{=>:MainW} and Remark \ref{rmk:MainW}, this trick does not affect the physical adiabatic limit.
We also set
$$
Q_{\Delta}=\{h\in\SRR{4}| \mathrm{supp} {\Fourier{1}{-}}^{-1}[h] \in [-\Delta,\Delta] \}.
$$
\begin{thm}\label{thm:MainW}
Fix a massive HpQFT with admissible interaction.
Take $n, V\in\mathbb{N}_0$,  $\arrs{\alpha}{n}\in \{+,-\}$, $\arrs{v}{n}\in\NN_0$, $\Gamma\in \mathfrak{G}_{\arrs{\alpha}{n}}^{\arrsM{v}{0}{n}}$ and $h\in Q_{\frac{M}{V+1}}$.
Then the distribution $\WightmanRnaVarrG{n}{\alpha}{v}{\Gamma} \circ \mathcal{M}_{[h]}^{V,n}$
restricts to $$\restrictH{3n+1,\ldots,3n+4V}{1,\ldots,4V}\left(\WightmanRnaVarrG{n}{\alpha}{v}{\Gamma}\circ \mathcal{M}_{[h]}^{V,n}\right)\in\LL(\SRR{3n},\SMlt{4V}{n}).$$
\end{thm}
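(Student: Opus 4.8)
The plan is to reduce the statement to a restrictability question about a concrete symbolic integral and then verify the hypotheses of Proposition 2.11 (the analog of Proposition 2.11 in \cite{NN}), i.e.\ exhibit the would-be restriction explicitly as a (not a priori continuous) map $\SRR{3n}\to\SMlt{4V}{n}$ and show it lands in the right space. First I would write out $\WightmanRnaVarrG{n}{\alpha}{v}{\Gamma}$ using the Feynman rules of Corollary \ref{=>:HpQFT/FR/W-total} (equivalently Proposition \ref{prop:HpQFT/FR/PW-total}): each external vertex contributes a factor of the form $\delta^{(3)}(\vec{k}_i\mp\vec{p})\varphi_{\pm}(t_i,\vec{p})$, each internal vertex $(i,j)$ contributes $F_{(l',l)}$ times a momentum-conservation delta $\delta^{(3)}(\cdots-\vec{q}_{i,j})$ times a phase $\exp(\ii\,\Omega_{i,j}\,\tau_{i,j})$ where $\Omega_{i,j}$ is the on-shell energy defect at that vertex, internal lines contribute $1$, and the overall factor $C'_{\Gamma,\prec_\Gamma}$ is built from the integration kernels $J,J_\pm$ and the ordering indicator functions $T$. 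After integrating out all the delta functions the line momenta are fixed linear combinations of the external momenta $\arrs{\vec p}{n}$ and the defect momenta $\arrs{\vec q}{V}$, so only the $\tau$-integrals (against $C'_{\Gamma,\prec_\Gamma}$) and the $F$-factors remain.

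Next I would carry out the $\tau$-integrations. Since $C'_{\Gamma,\prec_\Gamma}$ is a product of $J$-type kernels and ordering indicators $T$, and each $\tau_{i,j}$ enters the integrand only through the phase $e^{\ii\Omega_{i,j}\tau_{i,j}}$ and through the simplex-type $\theta$-constraints, each such integral is an iterated integral of an exponential over a bounded-from-one-side simplex: explicitly $\intSimplex{v_i}$, $\intUSimplex{v_i}{\pm}$ applied to products of exponentials. These can be computed in closed form and produce rational-in-the-energies expressions with denominators that are products of partial sums of the $\Omega$'s (and of the on-shell energies $\omega_0(\vec p)$ appearing in $\varphi_\pm$). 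The key point is that the convolution with $h\in Q_{M/(V+1)}$ shifts $\tau_j\mapsto\tau_j-\tau'_j$ with $\mathrm{supp}\,\Fourier{1}{-}{}^{-1}[h]\subset[-\Delta,\Delta]$, $\Delta=M/(V+1)$; in the energy variable dual to $\tau$ this multiplies by $\Fourier{1}{-}{}^{-1}[h]$ of the corresponding partial-sum energy, and since each such partial sum of on-shell energies has modulus at least $M$ (mass gap) and there are at most $V+1$ summands, no pole of the rational $\tau$-integral can lie in the support of the cut-off. Hence after the convolution every denominator is bounded away from zero on the relevant domain, so the result is smooth in all the energy and momentum variables, including at zero off-shell energy defect. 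This is exactly the content of restrictability in the sense of Subsection \ref{Framework/Restrict}: the map $f\mapsto$ (this explicit smooth expression) is the candidate for $\restrictH{3n+1,\ldots,3n+4V}{1,\ldots,4V}(\WightmanRnaVarrG{n}{\alpha}{v}{\Gamma}\circ\mathcal{M}_{[h]}^{V,n})$.

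Finally I would check that this candidate actually takes values in $\SMlt{4V}{n}$, using the characterization of Remark \ref{rmk:Framework/Mlt/Alt}: it suffices to show that multiplying by any $g\in\SRR{n}$ (in the $n$ external-time variables) and inserting any $f\in\SRR{3n}$ in the external momenta gives a Schwartz function of all $4V+\,$(nothing, the $n$ variables are consumed) remaining variables, while the dependence on the $4V$ defect variables $(\tau_j,\vec q_j)$ has at most polynomial growth together with all derivatives. The $\vec q$-dependence is polynomially bounded because the $F_{(l',l)}$ are multipliers and, by the growth condition (\ref{eq:def:HpQFT/AdmInt/KernelGrowth}), become Schwartz once localized in the "defect direction", which the external test function $f$ does not do but which is compensated by the fact that $\WightmanRnaVarrG{n}{\alpha}{v}{\Gamma}$ is by construction a continuous operator into $\SRR{n}$ before restriction — so polynomial growth in $\vec q$ is all we may and need to claim; the $\tau_j$-dependence is polynomially bounded because after the explicit $\tau$-integration the $\tau'_j$ (from the convolution) enter only through bounded oscillatory factors and the $h$-smearing. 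Invoking Proposition 2.11 then upgrades the candidate to a genuine continuous restriction, giving the claim. The main obstacle is the bookkeeping in the $\tau$-integration step: one must track precisely which partial sums of energies appear in the denominators (these are indexed by "cuts" of the ordered graph $\Gamma$) and verify that each involves at most $V+1$ on-shell energies so that the mass-gap bound $|{\cdot}|\ge M$ beats $\Delta=M/(V+1)$ — this is where the hypothesis $h\in Q_{M/(V+1)}$ is used, and getting the counting exactly right (rather than off by one) is the delicate point.
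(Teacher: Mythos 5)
Your overall strategy matches the paper's: write out the graph contribution via the Feynman rules, perform the $\tau$-integrations, and use the support condition on $h$ together with the mass gap to control the resulting energy denominators. However, there is a genuine gap in the temporal step. You apply the spectral argument uniformly to all groups of internal vertices, asserting that ``each such partial sum of on-shell energies has modulus at least $M$'' so that no pole lies in the support of the cut-off. This is only true for the \emph{outer} vertex groups $\bullet_{(0,j)}$ and $\bullet_{(n,j)}$, i.e.\ those preceding all or following all external field insertions, and even there it is not automatic: the lower bound $\Omega^{\mathrm{f}\Gamma}_j\geq M$ of Remark \ref{rmk:Main/MassBound} rests on the fact that the in/out state is the vacuum, so every line emitted inside such a group must be absorbed inside the same group (the negative contributions to the cumulated defect cancel), and connectivity (no vacuum components) leaves at least one uncancelled on-shell energy, whence the mass gap applies. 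Your proposal does not supply this cancellation argument. For the \emph{inner} groups $\bullet_{(i,j)}$, $1\leq i\leq n-1$, the energy defects are not bounded away from zero at all (an elastic vertex has vanishing defect), so the claim that ``every denominator is bounded away from zero'' fails; the closed-form simplex integrals there have only removable singularities, and smoothness plus the required growth estimates in $\tau'$ must be obtained differently. The paper does this by exploiting that the inner integration region $[t,t']$ is compact: Lemma \ref{lem:Main/TInner} rescales the simplex to show the integral of a Schwartz integrand stays Schwartz, with no spectral analysis needed.

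A secondary, smaller gap: when you say that after integrating the deltas the line momenta become linear combinations of the external momenta and the defect momenta, you are implicitly using that the momentum-defect functionals $\kappa^{\Gamma}_{i,j}$ can be completed to a coordinate system, i.e.\ that they are linearly independent. This is Lemma \ref{lem:Main/spatialIR} and requires the induction on the graph distance to an external vertex (again using the absence of vacuum components); it should be stated and proved rather than assumed. With these two points repaired — the inner/outer dichotomy with the vacuum-cancellation argument for the outer groups and the compact-simplex lemma for the inner ones, plus the linear-independence lemma — your argument becomes essentially the paper's proof.
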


\begin{=>}\label{=>:MainW}
Fix a massive admissible HpQFT.
Take $\tlambda\in\SRR{4}$ such that
$$
\int \tlambda(\vec{k},t)d^3{\vec{k}}=1  \,\forall t\in\RR{},
$$
and for any $L>0$ define $\tlambda_{L}\in \SRR{4}$ by
$$
\tlambda_L(\vec{k},t)=L^3\tlambda(\vec{k}L,L^{-1}t).
$$
Then for any $n\in\mathbb{N}$ and any $f\in\SRR{4n}$
\begin{equation}
\lim_{L\rightarrow \infty}\GreenUnag{n}{\alpha}{g\tlambda_L}[f]=\GreenUnagAd{n}{\alpha}{g}[f],
\label{eq:Main:Green}    
\end{equation}

\begin{equation}
    \lim_{L\rightarrow \infty}\WightmanRnag{n}{\alpha}{g\tlambda_L}[f]=\WightmanRnagAd{n}{\alpha}{\alpha}{g}[f],
    \label{eq:Main:WightR}
\end{equation}

\begin{equation}
\lim_{L\rightarrow \infty}\WightmanUnag{n}{\alpha}{g\tlambda_L}[f]=\WightmanUnagAd{n}{\alpha}{g}[f],
    \label{eq:Main:WightU}
\end{equation}

where $\GreenUnagAd{n}{\alpha}{g},\WightmanUnagAd{n}{\alpha}{g}\in\SpRR{4n}\formalPS{g}$ and $\WightmanRnagAd{n}{g}\in\LL(\SRR{3n},\MltT(n))\formalPS{g}$ do not depend on $\tlambda$.
\end{=>}
\begin{proof}
Fix $V\in \mathbb{N}_0$ and $f\in\SRR{3n}$.
Take $h\in\SRR{4}$ as in Theorem \ref{thm:MainW} and for each $L>0$ define 
$$\tlambda^{(1)}_L=\mathcal{M}_{[h]}^{1,0}[\tlambda_L],$$
$$\tlambda^{(2)}_L=\tlambda_L-\tlambda^{(1)}_L.$$
It is easy to see that $\tlambda^{(2)}_L\subset{L\rightarrow \infty}{\longrightarrow} 0$ in $\SRR{4}$, so
$$
\lim_{L\rightarrow \infty}\WightmanRnaV{n}{\alpha}{V}\left[f\otimes \left((\tlambda^{\otimes V}_L-\tlambda^{(1)\otimes V}_L\right)\right]=0.
$$
Now by Theorem \ref{thm:MainW} we have: 
$$
\WightmanRnaV{n}{\alpha}{V}[f\otimes\tlambda^{(1)\otimes V}_L](\arrs{t}{n})=
(\WightmanRnaV{n}{\alpha}{V}\circ \mathcal{M}_{[h]}^{1,0})[f\otimes\tlambda^{\otimes V}_L]\arrs{t}{n}=
$$
$$
\int{\left(\prod_{j=1}^{V} \tlambda_L(\vec{k}_j,\tau)\right)\restrictH{3n+1,\ldots,3n+4V}{1,\ldots,4V}(\WightmanRnaV{n}{\alpha}{V}\circ \mathcal{M}_{[h]}^{1,0})[f]\left(\left(\tau_j,\vec{k}_j\right)_{j=1\ldots V},\arrs{t}{n}\right)d^{3}\vec{k}_jd\tau_j. }$$
Considering now $\tlambda$ as a distribution and noting that by assumptions of the Corollary it weakly converges to $\delta(\vec{k})$, we get that the limit $L\rightarrow \infty$ exists and is equal to
$$
\int{\restrictH{n+1,\ldots,n+4V}{1,\ldots,4V}(\WightmanRnaV{n}{\alpha}{V}\circ \mathcal{M}_{[h]}^{1,0})[f]\left(\arrs{t}{n},\arrs{\left(\tau_i,\vec{0},\right)}{V}\right)d^V\arrs{\tau}{V}. }
$$
This proves (\ref{eq:Main:WightR}). Analogously, one can show (\ref{eq:Main:Green})  and (\ref{eq:Main:WightU}).
\end{proof}
\begin{rmk}\label{rmk:MainW}
Theorem \ref{thm:MainW} says more than Corollary \ref{=>:MainW}. There is no need to take the uniform scaling limit in all directions. Instead, we can take any sequence or family of function 
$$
\tlambda_L(t,\vec{k})\underset{L\rightarrow\infty}{\longrightarrow} \delta(\vec{k})\, \mathrm{in}\, \SpRR{4},
$$
provided that
$$
\tlambda_L-\mathcal{M}_{[h]}^{1,0}\tlambda_L \underset{L\rightarrow\infty}{\longrightarrow} 0\, \mathrm{in}\, \SRR{4}.
$$
Extending the proof above we may also conclude the spatial and temporal limits can be taken in any order. Finally, we can put to the adiabatic limit each vertex of the graph individually. As a result, the Feynman rules in the adiabatic limit can be derived from Subsection \ref{HpQFT/FR}  (some variants are listed in Appendix). 
This nice behavior differs drastically from the strong adiabatic limit \cite{EG73}.
\end{rmk}

\subsection{Applying Feynman rules}
$n, V\in\mathbb{N}_0$,  $\arrs{\alpha}{n}\in \{+,-\}$, $\arrs{v}{n}\in\NN_0$, $\Gamma\in \mathfrak{G}_{\arrs{\alpha}{n}}^{\arrsM{v}{0}{n}}$, $f\in \SRR{3V}$ and $\arrs{t}{n}\in\RR{}$ by Corollary \ref{=>:HpQFT/FR/W-total} we have:

\begin{equation}\label{eq:Main/FormalW}
\WightmanRnaVarrG{n}{\alpha}{v}{\Gamma}[f](\arrs{t}{n})=     \int  \intUSimplex{+}{v_0}\symbDistArgs{(\tau_{0,j})_{j=1,\ldots,v_0}}{t_1}\times
\end{equation}
        $$\left(\prod_{i=1}^{n-1}\intSimplex{v_i}\symbDistArgs{(\tau_{i,j})_{j=1,\ldots,v_i}}{t_{i+1},t_i}\right)
        \intUSimplex{-}{v_n}\symbDistArgs{(\tau_{n,j})_{j=1,\ldots,v_n}}{t_n}$$
$$ F_{\Gamma}(\arrs{\vec{k}}{I_{\Gamma}},\arrs{\vec{p}}{n})
f\left(\arrs{\vec{p}}{n},\left(\tau_{i,j},\vec{q}_{i,j}^{\Gamma}(\arrs{\vec{k}}{I_{\Gamma}},\arrs{\vec{p}}{n})\right)_{i=1\ldots n,j=1\ldots v_n}\right)\times
$$
$$
\exp\left(\ii\sum_{i=0}^{n}\sum_{j=1}^{v_i}\Delta_{i,j}^{\Gamma}(\arrs{\vec{k}}{I_{\Gamma}},\arrs{\vec{p}}{n})\tau_{i,j}\right)
\left(\prod_{i=0}^{n} \prod_{j=1}^{v_i} d\tau_{i,j}\right)d^{3I_{\Gamma}}\arrs{\vec{k}}{I_{\Gamma}}d^{3n}\arrs{\vec{p}}{n}.
$$
The rest of this subsection is devoted to the clarification of the notation used in (\ref{eq:Main/FormalW}).
\par 
For each graph $\Gamma\in \mathfrak{G}_{\arrs{\alpha}{n}}^{\arrsM{v}{0}{n}}$ we set $I_{\Gamma}$ for the total number of internal (i.e. connecting to internal vertices) lines. 
Then $\arrs{\vec{k}}{I_{\Gamma}}\in\RR{3I_{\Gamma}}$ denotes the corresponding internal momenta. $\arrs{\vec{p}}{n}\in\RR{3n}$ are the $n$ external momenta. 
\par 
For a vertex marked by $(i,j)$ of a graph  $\Gamma\in \mathfrak{G}_{\arrs{\alpha}{n};\arrs{t}{n}}$ we denote with $l_{i,j}^{\Gamma}$ and ${l'}_{i,j}^{\Gamma}$ the number of incoming and outgoing lines respectively.   
$$
\kappa^{\Gamma}_{i,j}(\arrs{\vec{k}}{I_{\Gamma}},\arrs{\vec{p}}{n})\in \RR{3(l_{i,j}^{\Gamma}+{l'}_{i,j}^{\Gamma})}
$$

denotes the collection of the corresponding incoming/outgoing momenta as a function of the introduced above internal and external momenta. By construction for $i=0,\ldots,n$ and $j=1,\ldots,v_i$, $\kappa^{\Gamma}_{i,j}$ is a linear map. For convenience, we assume that the first $l'$ and the rest $l$ components are the momenta of the outgoing and incoming particles respectively.  
\par 
Finally for every $\arrs{\vec{k}}{I_{\Gamma}}\in\RR{3V}$ and every vertex $(i,j)$ (where as always $i=0,\ldots,n$ and $j=1,\ldots,v$) we define the momentum and energy defects
$$
\vec{q}_{i,j}^{\Gamma}\left(\arrs{\vec{k}}{I_{\Gamma}},\arrs{\vec{p}}{n}\right)=
\sum_{r=1}^{{l'}_{i,j}^{\Gamma}} \left(
    \kappa^{\Gamma}_{i,j}(\arrs{\vec{k}}{I_{\Gamma}},\arrs{\vec{p}}{n})\right)_r
-\sum_{r={l'}_{i,j}^{\Gamma}+1}^{{l}_{i,j}^{\Gamma}+{l'}_{i,j}^{\Gamma}}\left( \kappa^{\Gamma}_{i,j}(\arrs{\vec{k}}{I_{\Gamma}},\arrs{\vec{p}}{n})\right)_r,
$$
$$\Delta_{i,j}^{\Gamma}\left(\arrs{\vec{k}}{I_{\Gamma}},\arrs{\vec{p}}{n})\right)=
$$
$$
\sum_{r=1}^{{l'}_{i,j}^{\Gamma}} \omega_0\left(\left(
    \kappa^{\Gamma}_{i,j}(\arrs{\vec{k}}{I_{\Gamma}},\arrs{\vec{p}}{n})\right)_r\right)
-\sum_{r={l'}_{i,j}^{\Gamma}+1}^{{l}_{i,j}^{\Gamma}+{l'}_{i,j}^{\Gamma}}\omega_0\left(\left( \kappa^{\Gamma}_{i,j}(\arrs{\vec{k}}{I_{\Gamma}},\arrs{\vec{p}}{n})\right)_r\right),$$
i.e. the sum of the (on-shell) energies (momenta) of all the outgoing particles minus the sum of the (on-shell) energies (momenta) of all the incoming particles. Here by $(\cdot)_r$ we mean projection on the $r$th factor of $\RR{3({l}_{i,j}^{\Gamma}+{l'}_{i,j}^{\Gamma})}$ treated as $\left(\RR{3}\right)^{\otimes({l}_{i,j}^{\Gamma}+{l'}_{i,j}^{\Gamma})}$.
\par
All vertex factors are collected into one function
$$
F_{\Gamma}(\arrs{\vec{k}}{I_{\Gamma}},\arrs{\vec{p}}{n})=
 \prod_{i=1}^{n+1} \prod_{j=1}^{v_i} F_{{l'}_{i,j}^{\Gamma},l_{i,j}^{\Gamma,i,j}}\left(\kappa^{\Gamma}_{i,j}(\arrs{\vec{k}}{I_{\Gamma}},\arrs{\vec{p}}{n})\right), $$
 $$
 \forall \arrs{\vec{k}}{I_{\Gamma}}\in\RR{3I_{\Gamma}}\,  \forall \arrs{\vec{p}}{n}\in\RR{3n}.
$$
\par

\subsection{Spatial adiabatic limit}
We analyze (\ref{eq:Main/FormalW}) part by part, starting from the integral over the momenta.
\par 
The UV convergence in the momenta space is controlled by the vertex factors as follows from the following lemma.  
\begin{lem}\label{lem:Main/SpatialUV}
Using the notation introduced above, for any graph $\Gamma\in \mathfrak{G}_{\arrs{\alpha}{n}}^{\arrsM{v}{0}{n}}$ one has  $F_{\Gamma}\in \SRR{3(I_{\Gamma}+n)}$:
\end{lem}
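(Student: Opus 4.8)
\textbf{Proof plan for Lemma \ref{lem:Main/SpatialUV}.}

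The plan is to unwind the definition of $F_{\Gamma}$ as a product of the vertex factors $F_{(l',l)}$ evaluated on linear combinations of the internal momenta $\arrs{\vec{k}}{I_{\Gamma}}$ and the external momenta $\arrs{\vec{p}}{n}$, and to show that this product lies in $\SRR{3(I_{\Gamma}+n)}$ by exploiting the decay hypothesis (\ref{eq:def:HpQFT/AdmInt/KernelGrowth}) on each kernel. Recall that the admissibility condition states precisely that $F_{(l',l)}$, while only a multiplier in $\Mlt{3(l+l')}$, becomes Schwartz after multiplication by any Schwartz function of the momentum defect $\sum \vec{p}'_j - \sum \vec{p}_j$ of that vertex; equivalently, $F_{(l',l)}$ decays rapidly in all directions \emph{except possibly} the direction of its own momentum defect, and is of polynomial growth there. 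So the first step is to record, for each vertex $(i,j)$, that $F_{{l'}^{\Gamma}_{i,j},l^{\Gamma}_{i,j}}\left(\kappa^{\Gamma}_{i,j}(\arrs{\vec{k}}{I_{\Gamma}},\arrs{\vec{p}}{n})\right)$ decays rapidly along every direction transverse to the hyperplane on which the defect $\vec{q}^{\Gamma}_{i,j}$ is constant.

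The key observation, and the heart of the argument, is that the map
$$
(\arrs{\vec{k}}{I_{\Gamma}},\arrs{\vec{p}}{n}) \longmapsto \left(\vec{q}^{\Gamma}_{i,j}(\arrs{\vec{k}}{I_{\Gamma}},\arrs{\vec{p}}{n})\right)_{i=0\ldots n,\, j=1\ldots v_i}
$$
together with the projection onto $\arrs{\vec{p}}{n}$ is surjective with a right inverse that is itself linear; in fact, because the momentum flux on each internal line is an independent free variable and momentum defect at a vertex is the signed sum of fluxes incident to it, assigning the internal momenta freely determines all the defects, and the defects together with the external momenta can be used to parametrise a complement. More precisely, I would argue that for every internal momentum $\vec{k}_a$ there is a vertex whose defect depends nontrivially on $\vec{k}_a$, so that the collection of all vertex defects $\{\vec q^{\Gamma}_{i,j}\}$ already spans a space of dimension at least $3 I_{\Gamma}$ in the $\vec{k}$-variables (after fixing the $\vec p$'s) — this uses that $\Gamma$ has no line starting and ending at the same vertex, and that, since we threw away vacuum components, each internal line is genuinely a contraction between two distinct vertices. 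Consequently one can perform a linear change of variables on $\RR{3(I_{\Gamma}+n)}$ whose new coordinates include all the $\vec{q}^{\Gamma}_{i,j}$ (plus the $\vec p$'s, plus whatever is left over), and in these coordinates $F_{\Gamma}$ is a product of functions each of which is Schwartz in all of the new coordinates except, at worst, polynomially growing in its own defect coordinate.

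The final step is a bootstrapping on the polynomial-growth directions: having written $F_{\Gamma}$ as a product $\prod_{i,j} g_{i,j}$ where $g_{i,j}$ is Schwartz transversally to the coordinate $\vec q^{\Gamma}_{i,j}$ and polynomially bounded (with all derivatives) along it, I would multiply and divide by suitable Gaussian or Schwartz factors: pick $s\in\SRR{3n}$ nowhere vanishing; then $F_{(l',l)}\cdot s(\text{defect})$ is Schwartz by (\ref{eq:def:HpQFT/AdmInt/KernelGrowth}), and since the various defect-directions are linearly independent in the new coordinates one can cover all of $\RR{3(I_{\Gamma}+n)}$ — the product of the finitely many Schwartz factors $g_{i,j}\cdot s(\vec q^{\Gamma}_{i,j})$ is Schwartz, and dividing back by $\prod s(\vec q^{\Gamma}_{i,j})$, which is a strictly positive multiplier whose reciprocal has at most polynomial growth of all derivatives, still leaves something rapidly decaying provided one has used \emph{enough} defect directions to dominate every coordinate direction. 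Because there are $I_{\Gamma}+n$ coordinate blocks of $\RR{3}$ and the defects span a subspace reaching every $\vec k$-block and the $\vec p$-blocks are directly present, this domination is achievable; the bookkeeping that every coordinate direction is covered by some Schwartz factor is the main obstacle, and it is here that the connectedness/structure of the Feynman graph and the no-self-loop convention are essential. I expect the graph-combinatorial claim that the vertex defects span the internal-momentum space to be the genuinely delicate point, while the subsequent analytic manipulation with multipliers is routine given the framework of Subsection \ref{Framework/Mlt}.
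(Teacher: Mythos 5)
Your proposal diverges from the paper's proof at the very first step, and that is where the problems lie. The paper's argument is short: it reads each vertex kernel as a genuine Schwartz function, $F_{(l',l)}\in\SRR{3(l+l')}$, so that each factor $F_{(l',l)}\circ\kappa^{\Gamma}_{i,j}$ (a Schwartz function precomposed with a linear map) is bounded together with all its derivatives on $\RR{3(I_{\Gamma}+n)}$; since every $\vec k_a$ and every $\vec p_r$ occurs literally as a component of $\kappa^{\Gamma}_{i,j}$ for at least one internal vertex $(i,j)$, that one factor already supplies rapid decay in that block of variables, and estimates of type (\ref{eq:xy<x}) assemble these into joint rapid decay of the product. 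No change of variables, no defect analysis and no division by auxiliary functions is needed. The linear independence of the defect functionals, which you place at the heart of your argument, is the content of the separate Lemma \ref{lem:Main/spatialIR}, proved by an induction on graph distance to the external vertices and used by the paper only afterwards, for the restriction to zero defect --- not for the Schwartz property.

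Two of your intermediate claims are false as stated. First, there are only $V$ vertex defects, each valued in $\RR{3}$, so they span at most $3V$ dimensions; whenever two internal vertices are joined by more than one line one has $I_{\Gamma}>V$, so the defects cannot span $3I_{\Gamma}$ dimensions of the internal-momentum space (two vertices joined by five lines: two defects, fifteen momentum components). Your inference from ``every $\vec k_a$ appears in some defect'' to the spanning statement is invalid. Second, a nowhere-vanishing element of $\SRR{3}$ decays faster than any polynomial, so its reciprocal grows \emph{faster} than any polynomial; the concluding ``divide back by $\prod s(\vec q^{\Gamma}_{i,j})$'' step therefore does not leave a rapidly decaying function. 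More seriously, under the reading of admissibility you adopt --- condition (\ref{eq:def:HpQFT/AdmInt/KernelGrowth}) alone, i.e.\ rapid decay only transverse to the defect direction --- the conclusion itself fails: along a direction in which all vertex defects blow up simultaneously, every factor may grow polynomially. For instance, take two internal vertices joined by two internal lines and carrying one external line each, and send $\vec k_1=\vec k_2\to\infty$ with the external momenta fixed; a kernel of the form $(1+\|\vec q\|^{2})$ times a Gaussian of a complementary coordinate is admissible and makes $F_{\Gamma}$ grow along this ray. In the full expression (\ref{eq:Main/FormalW}) such growth is compensated by the test function $f$, whose arguments include the $\vec q^{\Gamma}_{i,j}$, not by $F_{\Gamma}$ itself; the lemma as stated requires the stronger, Schwartz reading of the kernels that the paper's proof uses.
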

\begin{proof}
Since each $F_{l',l}\in\SRR{3(l+l')}$ and $\kappa^{\Gamma}_{i,j}$ is a linear function, it is clear that $F_{\Gamma}$ is smooth. For the same reason each its factor is at least bounded together with all its partial derivatives considered as a function on $\RR{3(I_{\Gamma}+n)}$. At the same time each of $\vec{k}_j$, $j=1,\ldots, I_{\Gamma}$ and each of $\vec{p}_{j=1,\ldots,n}$ appears as an argument of at least one of the factors. From this the standard estimations of type (\ref{eq:xy<x}) one concludes that $F_{\Gamma}$, as well as all its partial derivatives, decays at infinity faster than any polynomial.
\end{proof}
 So, we see that convergence of the integral over the internal momenta in (\ref{eq:Main/FormalW}) is completely controlled by the part independent of both the adiabatic cut-off (hidden in the test function $f$) and the timestamps $\tau_{i,j}$. 
 
 \par
For the infrared behavior we need also to effectively resolve all momentum conservation constraints expressing the internal momenta via external momenta, momentum defects, and independent loop momenta.
Formally this is expressed by the following lemma.
\begin{lem}\label{lem:Main/spatialIR}
For any choice of  $n,V\in\mathbb{N}$ and $\arrsM{v}{0}{n}\in \mathbb{N}_0^{n+1}$ with $\sum_{i=0}^n v_i=V$ and a graph $\Gamma\in \mathfrak{G}_{\arrs{\alpha}{n}}^{\arrsM{v}{0}{n}}$ the functionals 
$$\kappa_{i,j}^{\Gamma}\in{\RR{3V}}',\quad i=1,\ldots,n, \quad, j=1,\ldots,v_i$$
are linearly independent.
\end{lem}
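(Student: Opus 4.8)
\textbf{Proof proposal for Lemma \ref{lem:Main/spatialIR}.}

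The plan is to reduce the claimed linear independence to a purely graph-theoretic fact about the incidence structure of $\Gamma$. Recall that $\arrs{\vec{k}}{I_{\Gamma}}$ labels the internal edges of $\Gamma$, each oriented from its earlier to its later endpoint, and that $\kappa_{i,j}^{\Gamma}$ collects the momenta of the lines incident to the internal vertex $\bullet_{(i,j)}$, with a sign determined by whether the line is outgoing or incoming at that vertex. Strictly speaking $\kappa_{i,j}^{\Gamma}$ is $\RR{3}$-vector-valued, so it is really a triple of functionals on $\RR{3V}$; since the three spatial components decouple, it suffices to prove independence of the corresponding scalar functionals on $\RR{V}$, i.e. of the rows of the (signed) vertex-edge incidence matrix of $\Gamma$ restricted to the internal edges and internal vertices. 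I would state this reduction first, so that the rest of the argument works with a single incidence matrix $B$ whose rows are indexed by internal vertices and whose columns are indexed by internal edges, with $B_{(i,j),e}=+1$ if $e$ is outgoing at $\bullet_{(i,j)}$, $-1$ if incoming, $0$ otherwise.

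Next I would observe that $\kappa_{i,j}^{\Gamma}$, as a functional on $\RR{3V}$, is not quite a row of $B$: it also sees the external momenta. But $f$ in \eqref{eq:Main/FormalW} is evaluated only at the external momenta $\arrs{\vec{p}}{n}$ and at the defects $\vec{q}_{i,j}^{\Gamma}$, and the statement of the lemma concerns the $\kappa_{i,j}^{\Gamma}$ viewed as elements of $(\RR{3V})'$, i.e. as functions of the internal momenta alone. So I would set the external momenta to zero (they enter affinely and play no role in a linear-independence statement over the internal variables) and work with the genuine incidence rows. The key step is then the standard fact from algebraic graph theory: for a directed graph, a set of rows of the incidence matrix is linearly dependent if and only if the corresponding set of vertices contains a full connected component of the graph — equivalently, the rank of the incidence matrix of a connected graph on $m$ vertices is $m-1$, with the unique dependency being the all-ones combination. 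Here the relevant vertices are \emph{all} internal vertices, but the relevant edges are \emph{only} the internal edges; the external edges (each joining an internal vertex to an external vertex $\circ_i$) are excluded.

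The crux of the argument — and the step I expect to be the main obstacle to write cleanly — is therefore to show that no connected component of $\Gamma$ consists entirely of internal vertices; equivalently, every connected component of $\Gamma$ contains at least one external vertex, so that within that component the internal vertices fail to support the all-ones dependency once the external edges are dropped. This is exactly where the hypothesis $\Gamma\in\mathfrak{G}_{\arrs{\alpha}{n}}^{\arrsM{v}{0}{n}}$ is used: by Corollary \ref{=>:HpQFT/FR/W-total}, the graphs in $\mathfrak{G}$ are precisely those with no vacuum components, i.e. every connected component contains an external vertex. Given this, I would argue component by component: fix a component $C$, let it contain $m_C$ internal vertices and $I_C$ internal edges; a nontrivial combination $\sum c_{i,j}\kappa_{i,j}^{\Gamma}=0$ would, restricted to $C$, force $c_{i,j}B_{(i,j),e}$ to cancel on each internal edge $e$ of $C$, which for a connected $C$ makes $c_{i,j}$ constant on $C$ (walk along internal edges), say equal to $\gamma_C$; but then the external line attached to some internal vertex of $C$ — which exists because $C$ is not a vacuum component and external vertices have degree one so the line to it is not internal — contributes $\gamma_C$ to the external momentum slot with nothing to cancel it, unless $\gamma_C=0$. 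Wait — I should be careful: the external line's momentum is one of the $\vec{p}_r$, not an internal $\vec{k}$, so it does not appear in the functional on $\RR{3V}$ at all; the cleaner way to see $\gamma_C=0$ is that the vertex $\bullet_{(i,j)}$ adjacent to an external leg then has $\kappa_{i,j}^{\Gamma}$ containing that external momentum with coefficient $\pm1$, and more to the point, momentum conservation at that vertex expresses one internal momentum of $C$ in terms of the others plus the external leg, so the incidence rows of $C$'s internal vertices restricted to $C$'s internal edges already have rank $m_C$ (not $m_C-1$), because $C$-with-external-legs is connected but $C$'s internal-edge subgraph on $m_C$ vertices has at most $m_C-1$... hmm. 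Let me instead phrase the decisive step as: the internal-edge subgraph of a non-vacuum component $C$ is a forest-plus-loops whose vertex set strictly exceeds a spanning structure only if... The honest formulation, which I would carry out in detail in the actual proof, is: delete the external vertices and their (external) edges from $\Gamma$; in the resulting graph each old component $C$ becomes a graph on $m_C$ internal vertices with $I_C$ internal edges, and since $C$ originally had an external leg, the map from $\{c_{i,j}\}_{(i,j)\in C}$ to the internal-edge constraints is injective — because a nonzero constant assignment $\gamma_C$ would violate conservation at the vertex that lost an external leg. I expect the remaining work to be bookkeeping: assembling the per-component injectivity into global linear independence of all $V$ functionals $\kappa_{i,j}^{\Gamma}$, and spelling out that the three spatial components are handled identically. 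I would end by remarking that this is the precise point at which the exclusion of vacuum components (Corollary \ref{=>:HpQFT/FR/W-total}) is indispensable, since for a vacuum component the internal momenta would indeed satisfy one linear relation (overall momentum conservation), making the $\kappa_{i,j}^{\Gamma}$ dependent.
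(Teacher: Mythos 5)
Your proposal contains the right ingredients, but as written it has a genuine internal contradiction that prevents it from being a proof. The fatal move is the early reduction where you ``set the external momenta to zero'' and work with the incidence matrix of internal vertices versus internal edges only. Those rows are \emph{not} linearly independent: summing the momentum defects $\kappa_{i,j}^{\Gamma}$ over all internal vertices of a connected component, every internal edge contributes once with $+1$ and once with $-1$, so the restriction of the all-ones combination to the internal momenta is identically zero. (In the extreme case of a component consisting of a single internal vertex attached only to external legs, its defect restricted to the internal momenta is the zero functional outright.) The lemma is true only because each $\kappa_{i,j}^{\Gamma}$ is a functional on the \emph{full} momentum space $\RR{3(I_{\Gamma}+n)}$, including the external momenta --- the $\RR{3V}$ in the statement is a typo, as the paper's own proof makes clear by evaluating the putative relation at a nonzero external momentum $\vec{p}_r$. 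Discarding the external momenta throws away exactly the data that breaks the per-component all-ones relation, and your ``decisive step'' visibly stalls at this point: you notice the problem (``Wait --- I should be careful: the external line's momentum \ldots does not appear in the functional on $\RR{3V}$ at all''), but never resolve it, leaving the key paragraph as an unfinished sequence of alternatives.

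The repair is small and brings you essentially onto the paper's route. Keep the external momenta in the domain. Your edge-by-edge cancellation argument correctly shows that any coefficients $c_{i,j}$ in a vanishing combination are constant, say $\gamma_C$, on each connected component $C$ (removing the degree-one external vertices does not disconnect $C$, so one can walk along internal edges). Then, since $\Gamma$ has no vacuum components, $C$ contains an external vertex $\circ_r$, and the coefficient of $\vec{p}_r$ in the combination is $\pm\gamma_C$, forcing $\gamma_C=0$. This is exactly the content of the paper's induction on the distance $d_{\Gamma}(i,j)$ to the nearest external vertex, just organized outward-in rather than inward-out; the anchoring at the external leg is the paper's base case, and the propagation along internal edges is the inductive step. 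You should also state explicitly that the three spatial components decouple, which you do, and drop the appeal to the general rank formula for incidence matrices, which is not needed once the external legs are retained.
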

This fact (although not in this form) is well-known in quantum physics, but we still formally prove it for completeness.
\begin{proof}
Assume that there are $\alpha_{i,j}\in\mathbb{R}$, $i=0,\ldots,n$, $j=1,\ldots,v_i$ such that
\begin{equation}\label{eq:Main/KappaIndependent}
\sum_{i=0}^n\sum_{j=1}^{v_i}\alpha_{i,j}\kappa_{i,j}^{\Gamma}=0.
\end{equation}
For each internal vertex $(i,j)$ of the graph define $d_{\Gamma}(i,j)$, the shortest length of a path in $\Gamma$, connecting $(i,j)$ with an external vertex. Recall that all connected components of the relevant graphs contain some external vertices, so this function is well-defined. Thus we can prove that $\alpha_{i,j}=0$ for any vertex $(i,j)$ by induction in $d_{\Gamma}(i,j)$. 
\begin{itemize}
    \item Base of induction: if $d_{\Gamma}(i,j)=1$ then there is an external vertex connected with the internal vertex $(i,j)$.Let it be the $r$th external vertex. We evaluate (\ref{eq:Main/KappaIndependent}) by setting all the internal momenta and all the external momenta except $\vec{p}_r$ to zero. Then only $\kappa_{i,j}^{\Gamma}$ survives and we get $\alpha_{i,j}=0$.
    \item If we already know that $\alpha_{i,j}=0$ for all vertices $(i,j)$ such that $d_{\Gamma}(i,j)\leq d_0$. Take a vertex $(i',j')$  such that $d_{\Gamma}(i',j')=d_0+1$. Then it is connected with a vertex, say, $(i,j)$ with $d_{\Gamma}(i,j)=d_0$. Assume that one of the connecting edges is marked by $r$. We evaluate (\ref{eq:Main/KappaIndependent}) setting all the external and internal momenta except $\vec{k}_r$ to zero. Then only $\kappa_{i,j}^{\Gamma}$ and $\kappa_{i',j'}^{\Gamma}$ survive, but we already know that $\alpha_{i,j}=0$. Thus $\alpha_{i',j'}=0$ too. 
\end{itemize}
\end{proof}

\begin{=>}\label{=>:Main/Spatial}
The expression (\ref{eq:Main/FormalW}) may be rewritten as
\begin{equation}\label{eq:Main/FormalWS}\WightmanRnaVarrG{n}{\alpha}{v}{\Gamma}[f](\arrs{t}{n})=
 \int  \intUSimplex{+}{v_0}\symbDistArgs{(\tau_{0,j})_{j=1,\ldots,v_0}}{t_1}\left(\prod_{i=1}^{n-1}\intSimplex{v_i}\symbDistArgs{(\tau_{i,j})_{j=1,\ldots,v_i}}{t_{i+1},t_i}\right)\times
\end{equation}
$$
\intUSimplex{-}{v_n}\symbDistArgs{(\tau_{n,j})_{j=1,\ldots,v_n}}{t_1}
$$
$$F^{\mathrm{S}}_{\Gamma}(\left(\vec{\kappa}_{i,j}\right)_{i=1\ldots n,j=1\ldots v_n},\arrs{\vec{q}}{I_{\Gamma}+n-V})\times$$
$$
f\left(\arrs{\vec{p}_{\Gamma}(\left(\vec{\kappa}_{i,j}\right)_{i=1\ldots n,j=1\ldots v_n},\arrs{\vec{q}}{I_{\Gamma}+n-V})}{n},\left(\tau_{i,j},\vec{\kappa}_{i,j}\right)_{i=1\ldots n,j=1\ldots v_n}\right)\times
$$

$$
\exp\left(\ii\sum_{i'=0}^{n}\sum_{j'=1}^{v_{i'}}\Delta_{i',j'}^{\mathrm{S}\Gamma}\left(\left(\vec{\kappa}_{i,j}\right)_{i=1\ldots n,j=1\ldots v_i},\arrs{\vec{q}}{I_{\Gamma}+n-V}\right)\tau_{i',j'}\right)\times
$$
$$
\left(\prod_{i=0}^{n} \prod_{j=1}^{v_i} d\tau_{i,j}d\kappa_{i,j}\right)d^{3(I_{\Gamma}+n-V)}\arrs{\vec{q}}{I_{\Gamma}+n-V}.
$$

Here $\vec{p}_{\Gamma}$ is a linear function and $F^{\mathrm{S}}_{\Gamma}$, $\Delta_{i,j}^{\mathrm{S}\Gamma}$ are precompositions of $F^{\mathrm{S}}_{\Gamma}$, $\Delta_{i,j}^{\mathrm{S},\Gamma}$ with an invertible linear transform.
In particular, $F^{\mathrm{S}}_{\Gamma}\in\SRR{3(n+I_{\Gamma}
)}$.

\end{=>}
\begin{proof}
By Lemma \ref{lem:Main/spatialIR} we can add coordinates $\vec{q}^{\Gamma}_i\in\RR{3(n+I_{\Gamma})*}$, $i=1,\ldots,n+I_{\Gamma}-V$ completing the family of  functionals $\kappa_{i,j}^{\Gamma}$ to the basis. Then the form (\ref{eq:Main/FormalWS}) exists. The last statement is a reformulation of Lemma \ref{lem:Main/SpatialUV}.
\end{proof}
\subsection{Temporal adiabatic limit}\label{Main/Temp}
Now we focus on integrals over the timestamp positions, starting from the inner vertices. 
From (\ref{eq:Main/FormalW}) we get
\begin{equation}\label{eq:Main/FormalWS*h}
\left(\WightmanRnaVarrG{n}{\alpha}{v}{\Gamma}\circ \mathcal{M}_{[h]}^{V,n}\right)[f](\arrs{t}{n})=     
\end{equation}
$$
\int  \intUSimplex{+}{v_0}\symbDistArgs{(\tau_{0,j})_{j=1,\ldots,v_0}}{t_1}\left(\prod_{i=1}^{n-1}\intSimplex{v_i}\symbDistArgs{(\tau_{i,j})_{j=1,\ldots,v_i}}{t_{i+1},t_i}\right)\times
$$
$$
\intUSimplex{-}{v_n}\symbDistArgs{(\tau_{n,j})_{j=1,\ldots,v_n}}{t_1}
$$
$$F^{\mathrm{S}}_{\Gamma}(\left(\vec{\kappa}_{i,j}\right)_{i=1\ldots n,j=1\ldots v_n},\arrs{\vec{q}}{I_{\Gamma}+n-V})\times\prod_{i=0}^{n}\prod_{j=1}^{v_i}h(\tau_{i,j}-\tau'_{i,j})$$
$$
f\left(\arrs{\vec{p}_{\Gamma}(\left(\vec{\kappa}_{i,j}\right)_{i=1\ldots n,j=1\ldots v_n},\arrs{\vec{q}}{I_{\Gamma}+n-V})}{n},\left(\tau'_{i,j},\vec{\kappa}_{i,j}\right)_{i=1\ldots n,j=1\ldots v_n}\right)\times
$$

$$
\exp\left(\ii\sum_{i'=0}^{n}\sum_{j'=1}^{v_{i'}}\Delta_{i',j'}^{\mathrm{S}\Gamma}\left(\left(\vec{\kappa}_{i,j}\right)_{i=1\ldots n,j=1\ldots v_i},\arrs{\vec{q}}{I_{\Gamma}+n-V}\right)\tau_{i',j'}\right)\times
$$
$$
\left(\prod_{i=0}^{n} \prod_{j=1}^{v_i} d\tau_{i,j}d\tau'_{i,j}d\kappa_{i,j}\right)d^{3(I_{\Gamma}+n-V)}\arrs{\vec{q}}{I_{\Gamma}+n-V}.
$$
Since all the integrals above are absolutely convergent, the integration can be performed in any order. This way we arrive to
\begin{equation}\label{eq:Main/Temp-spec}
\left(\WightTf{n}{\alpha}{\arrsM{v}{0}{n}}\circ \mathcal{M}_{[h]}^{V,n}\right)[f](\arrs{t}{n})=    \sum_{\Gamma\in\mathfrak{G}_{\arrs{\alpha}{n}}^{\arrsM{v}{0}{n}}}  \int 
\end{equation}

$$F^{\mathrm{S}}_{\Gamma}(\left(\vec{\kappa}_{i,j}\right)_{i=1\ldots n,j=1\ldots v_n},\arrs{\vec{q}}{I_{\Gamma}+n-V})\times$$
$$
O_{[h]\Gamma,0}\left(\left(\tau'_{0,j}\right)_{j=1\ldots v_{0}},\left(\vec{\kappa}_{i,j}\right)_{i=1\ldots n,j=1\ldots v_i},\arrs{\vec{q}}{I_{\Gamma}+n-V},t_{1}
\right)
$$
$$\prod_{i'=1}^{n-1}
O_{[h]\Gamma,i'}\left(\left(\tau'_{i',j}\right)_{j=1\ldots v_{i'}},\left(\vec{\kappa}_{i,j}\right)_{i=1\ldots n,j=1\ldots v_i},\arrs{\vec{q}}{I_{\Gamma}+n-V},t_{i},t_{i+1}
\right)\times
$$
$$
O_{[h]\Gamma,n}\left(\left(\tau'_{,n}\right)_{j=1\ldots v_{n}},\left(\vec{\kappa}_{i,j}\right)_{i=1\ldots n,j=1\ldots v_i},\arrs{\vec{q}}{I_{\Gamma}+n-V},t_{n}
\right)
$$
$$
f\left(\arrs{\vec{p}_{\Gamma}(\left(\vec{\kappa}_{i,j}\right)_{i=1\ldots n,j=1\ldots v_i},\arrs{\vec{q}}{I_{\Gamma}+n-V})}{n},\left(\tau'_{i,j},\vec{\kappa}_{i,j}\right)_{i=1\ldots n,j=1\ldots v_n}\right)\times
$$
$$
\left(\prod_{i=0}^{n} \prod_{j=1}^{v_i} d\tau'_{i,j}d\kappa_{i,j}\right)d^{3(I_{\Gamma}+n-V)}\arrs{\vec{q}}{I_{\Gamma}+n-V},
$$
where for $i'=1,\ldots,n-1$
\begin{equation}
    \label{eq:Main/OGamma-Inner}
O_{[h]\Gamma,i'}\left(\arrs{\tau'}{v_{i'}},\left(\vec{\kappa}_{i,j}\right)_{i=1\ldots n,j=1\ldots v_i},\arrs{\vec{q}}{I_{\Gamma}+n-V},t',t\right)=
\end{equation}
$$
\int_{ t}^{t'}d\tau_{1}\int_{ t}^{\tau_{1}}d\tau_{2}\cdots \int_{ t}^{\tau_{v_i-1}}d\tau_{v_i} 
\exp\left(\ii\sum_{j'=1}^{v_{i'}}\Delta_{i',j'}^{\mathrm{S}\Gamma}\left(\left(\vec{\kappa}_{i,j}\right)_{i=1\ldots n,j=1\ldots v_i},\arrs{\vec{q}}{I_{\Gamma}+n-V}\right)\tau_{i',j'}\right)\times
$$
$$
\prod_{j=1}^{v_{i'}}h(\tau_{j}-\tau'_{j})
\prod_{j=1}^{v_{i'}} d\tau_{j},
$$
and 
\begin{equation}
    \label{eq:Main/OGamma-Pre}
O_{[h]\Gamma,0}\left(\arrs{\tau'}{v_{0}},\left(\vec{\kappa}_{i,j}\right)_{i=1\ldots n,j=1\ldots v_i},\arrs{\vec{q}}{I_{\Gamma}+n-V},t',t\right)=
\end{equation}
$$
\int_{t}^{+\infty}d\tau_{1}\int_{ t}^{\tau_{1}}d\tau_{2}\cdots \int_{ t}^{\tau_{v_0-1}}d\tau_{v_0} 
\exp\left(\ii\sum_{j'=1}^{v_{0}}\Delta_{0,j'}^{\mathrm{S}\Gamma}\left(\left(\vec{\kappa}_{i,j}\right)_{i=1\ldots n,j=1\ldots v_i},\arrs{\vec{q}}{I_{\Gamma}+n-V}\right)\tau_{0,j'}\right)\times
$$
$$
\prod_{j=1}^{v_{0}}h(\tau_{j}-\tau'_{j})
\prod_{j=1}^{v_{0}} d\tau_{j},
$$

\begin{equation}
    \label{eq:Main/OGamma-Post}
O_{[h]\Gamma,n}\left(\arrs{\tau'}{v_{n}},\left(\vec{\kappa}_{i,j}\right)_{i=1\ldots n,j=1\ldots v_i},\arrs{\vec{q}}{I_{\Gamma}+n-V},t\right)=
\end{equation}
$$
\int_{-\infty}^{t}d\tau_{1}\int_{ -\infty}^{\tau_{1}}d\tau_{2}\cdots \int_{-\infty}^{\tau_{v_n-1}}d\tau_{v_n} 
\exp\left(\ii\sum_{j'=1}^{v_{n}}\Delta_{n,j'}^{\mathrm{S}\Gamma}\left(\left(\vec{\kappa}_{i,j}\right)_{i=1\ldots n,j=1\ldots v_i},\arrs{\vec{q}}{I_{\Gamma}+n-V}\right)\tau_{n,j'}\right)\times
$$
$$
\prod_{j=1}^{v_{n}}h(\tau_{j}-\tau'_{j})
\prod_{j=1}^{v_{n}} d\tau_{j}.
$$
Note that here we use the direct functional interpretation of distributions $\intSimplex{n}$ and $\intUSimplex{n}{\pm}$ rather than the combinatoric version of Remark \ref{rmk:HpQFT/FR/combFactor}. The main advantage of this approach is the manifest smoothness with respect to $\arrs{t}{n}$.
\par 
There are two very different kinds of contributions: the \emph{inner vertices} $\bullet_{(i,j)}$, $i=1,\ldots,n_1$,  and \emph{outer vertices}  $\bullet_{(0,j)}$ and $\bullet_{(n,j)}$. We treat them separately in two subsections. The terminology comes from the fact that the inner vertices correspond to operators standing between two source insertions (the external vertices), while the operators corresponding to the outer ones appear either before or after all the insertions. This distinction is crucial for the treatment of the corresponding factors.
\subsubsection{Inner vertices: compact region of integration}
In (\ref{eq:Main/OGamma-Inner}) the region of integration is compact and all the estimations are rather direct. We have to deal with the fact that the integration region depends on the arguments but they do not bring anything but a polynomial factor. It is convenient to first formalize this fact.
\begin{lem}\label{lem:Main/TInner}
Fix $v,m\in\mathbb{N}_0$ and $f\in\SRR{m+v+2}$. Define a function $H$ on $\RR{m+v+2}$ by
$$
H(\arrs{\tau'}{v},x,t',t)=
\int_{ t}^{t'}d\tau_{1}\int_{ t}^{\tau_{1}}d\tau_{2}\cdots \int_{ t}^{\tau_{v-1}}d\tau_v g((\tau_j-\tau'_j)_{j=1\ldots v},\bm{x},t',t)d^{v}\arrs{\tau}{v}, 
$$
$$
\forall t,t'\in \RR{}, \forall \arrs{\tau'}{v}\in\RR{v}, \, \forall \bm{x}\in\RR{m}
$$
Then $H\in\SRR{n+v+2}$.
\end{lem}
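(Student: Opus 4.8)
The plan is to show that $H$ is smooth and that all its partial derivatives decay faster than any polynomial. Smoothness follows from differentiation under the integral sign and the fundamental theorem of calculus: differentiating with respect to any of the $\tau'_j$ simply differentiates $g$ in the corresponding slot, while differentiating with respect to $t'$ or $t$ additionally produces boundary terms in which one of the integration variables is set equal to $t'$ or $t$. All of these are again iterated integrals of the same shape (possibly with fewer integration variables and with some arguments identified), so an induction on $|\alpha|$ shows $H\in C^\infty(\RR{m+v+2})$ and gives an explicit finite expression for $\partial^\alpha H$ as a sum of terms, each of which is an iterated integral over a simplex of a bounded region, of some partial derivative of $g$ with some of the time arguments identified with $t$ or $t'$.

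Next I would establish the decay. The key geometric observation is that on the region of integration one has, for each $j$, that $\tau_j$ lies between $t$ and $t'$, hence $\tau_j - \tau'_j$ lies between $t - \tau'_j$ and $t' - \tau'_j$; consequently
$$
1 + \|(\tau_j - \tau'_j)\| \le (1+\|t-\tau'_j\|)(1+\|t'-\tau'_j\|)\cdot C
$$
by the elementary estimates (\ref{eq:xy<x}), and similarly $1+|t|, 1+|t'|$ are bounded by products of $1+\|\tau_j-\tau'_j\|$, $1+\|\tau'_j\|$, $1+|t'-t|$, etc. Thus any polynomial in $(\arrs{\tau'}{v},\bm{x},t',t)$ can be dominated, uniformly over the integration region, by a polynomial in $((\tau_j-\tau'_j)_{j},\bm{x},t',t)$ of the same degree times a polynomial in the difference $t'-t$. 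Since $g\in\SRR{m+v+2}$, for any such bounding polynomial $P$ and any $N$ we have $|P\cdot \partial^\beta g|\le C_N (1+\|t'-t\| + \ldots)^{-N}$; choosing $N$ large enough and absorbing the volume of the simplex (which is polynomial in $|t'-t|$) we obtain that $(1+\|(\arrs{\tau'}{v},\bm{x},t',t)\|)^k|\partial^\alpha H|$ is bounded, which is exactly what we need. Estimating the boundary terms produced by the $t,t'$-derivatives is even easier, as they involve fewer integrations.

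I expect the main obstacle to be purely bookkeeping: organizing the induction so that the boundary terms arising from differentiating the moving limits of integration are kept under control, and setting up the chain of elementary inequalities that converts a seminorm of $H$ into a seminorm of $g$ without losing track of the polynomial factors coming from the simplex volume and from the identifications $\tau_j\mapsto t$ or $\tau_j\mapsto t'$. None of this is deep — it is the same mechanism already used in the proof of Remark \ref{rmk:Framework/VectorTest} and in Lemma \ref{lem:Main/SpatialUV} — but it has to be done carefully. Once Lemma \ref{lem:Main/TInner} is in place, the functions $O_{[h]\Gamma,i'}$ of (\ref{eq:Main/OGamma-Inner}) for $i'=1,\ldots,n-1$ are recognized as instances of $H$ with $g$ built from the oscillating exponential times $\prod_j h(\tau_j-\tau'_j)$ (note the exponential is a multiplier and $h\in\SRR{1}$, so the product is Schwartz in the relevant variables after the linear change of variables absorbing $\bm{\kappa},\bm{q}$), which is what the subsequent argument for the inner vertices requires.
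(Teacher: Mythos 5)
Your proposal is correct and follows essentially the same route as the paper: the decay comes from the observation that each $\tau_j$ is trapped between $t$ and $t'$, so that $1+|\tau'_j|\leq(1+|\tau_j-\tau'_j|)(1+|t|)(1+|t'|)$ transfers the Schwartz decay of $g$ to decay of $H$, with the polynomial simplex volume absorbed by the rapid decay of $g$ in $t,t'$. The only difference is that the paper first rescales via $\xi_j=(\tau_j-t)/(t'-t)$ to a fixed simplex (picking up the Jacobian $(t'-t)^v$), which makes the integration region independent of the arguments and thereby eliminates all of the boundary-term bookkeeping you anticipate when differentiating the moving limits.
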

\begin{proof}
The smoothness is clear. In particular, $H$ is continuous together with all its derivatives, so it is enough to estimate it for $t\neq t'$. In this case we can change variables of integration to 
$$
\xi_i=\frac{\tau_i-t}{t'-t}, \quad, i=1,\ldots,v.
$$
This way for any allowed arguments we get 
$$
H(\arrs{\tau'}{v},\bm{x},t',t)=$$
$$(t'-t)^{v}
\int_{0}^1d\xi_1\int_{0}^{\xi_1}d\xi_2\ldots\int_{0}^{\xi_{n-1}}d\xi_n g((\xi_j(t'-t)+t-\tau'_j)_{j=1\ldots v},\bm{x},t',t)d^{v}\arrs{\xi}{v}.
$$
The integrand is clearly fast decaying with respect to $t$, $t'$ and $\bm{x}$. For $\tau'_j$ note, that
whenever $0\leq\xi_j\leq 1$, we have
$$
(1+|\xi_j(t'-t)+t-\tau'_j|)(1+|t|)(1+|t'|)\geq 1+|\tau'_j|,
$$
so the integrand is fast decaying with respect to $\tau'$. Then using standard estimations of type (\ref{eq:x<xy}-\ref{eq:xy<x}) we get that the integral is fast decaying, and the partial derivatives can be treated in the same way. 
\end{proof}
\begin{=>}\label{=>:Main/TempI}
For $n,V,\arrs{v}{n},\arrs{\alpha}{n},\Gamma$ as in Theorem \ref{thm:MainW}, $i\in\{1,\ldots,n\}$ and  $h\in Q_{\frac{M}{V+1}}$ one has

$$O_{[h]\Gamma,i}\in\SMlt{v_i}{3(I^{\Gamma}+n)+2}.$$
\end{=>}
\begin{proof}
By Remark \ref{rmk:Framework/Mlt/Alt} it is enough to prove that for any $\chi\in\SRR{3(I_{\Gamma}+n)+2}$ $\MltMn{\chi}{v}[O_{[h]\Gamma,i'}]\in\SRR{3(I_{\Gamma}+n)+2+v_i}$.
\par
For that use Lemma \ref{lem:Main/TInner} with $v=v_i$, $m=3(I_{\Gamma}+n)$ and
$$
g(\arrs{\tau}{v_i},\bm{x},t',t)=$$
$$
\exp\left(\ii\sum_{j'=1}^{v_{i'}}\Delta_{i',j'}^{\mathrm{S}\Gamma}\left(\bm{x}\right)\tau_{j'}\right)
\prod_{j=1}^{v_{i'}}h(\tau_{j})
\chi\left(\bm{x},t',t\right), \forall \bm{x}\in\RR{3(I_{\Gamma}+n)+2}.
$$
Then 
$$
\MltMn{\chi}{v}[O_{[h]\Gamma,i'}]\left(\arrs{\tau'}{v_i},\bm{x},t',t\right)=H(\arrs{\tau'}{v_i},\bm{x},t',t)R(\arrs{\tau'}{v_i},\bm{x})
$$
where $H\in\SRR{v_i+3(I_{\Gamma}+n)+2}$ is constructed in Lemma \ref{lem:Main/TInner} and 
$$
R(\arrs{\tau'}{v_i},\bm{x})=
\exp\left(\ii\sum_{j'=1}^{v_{i'}}\Delta_{i',j'}^{\mathrm{S}\Gamma}\left(\bm{x}\right)\tau'_{i',j'}\right).
$$
The statement then follows by $R\in\Mlt{v_i+3(I_{\Gamma}+n)}$.
\end{proof}

\subsubsection{Outer vertices: spectral properties}
The integration regions in the right-hand sides of (\ref{eq:Main/OGamma-Pre}-\ref{eq:Main/OGamma-Post}) are not compact, so at the first sight, the convergence is controlled by the adiabatic cut-off function only. The situation becomes more clear in the spectral representation. Then $h\in Q_{\frac{M}{V+1}}$ makes the integration region compact. The key observation is that physically these groups of vertices describe creation of several particles from the vacuum (if $i=n$) or completion annihilation of particles up to the vacuum (if $i=0$) as an on-shell process. 
\par 
We consider in details the case $i=0$.
\par 
It is convenient to shift the variables of integration to $\arrs{s^{\mathrm{f}}}{v}$ as follows
$$\tau_{j}=t+\sum_{j'=1}^{j}s^{\mathrm{f}}_{j'}.$$
    Then the integration region becomes $\RR{v_0}_{+}$ and
    \begin{equation}\label{eq:Main/O0}
    O_{[h]\Gamma,0}\left(\arrs{\tau'}{v_0},\left(\vec{\kappa}_{i,j}\right)_{i=1\ldots n,j=1\ldots v_i},\arrs{\vec{q}}{I_{\Gamma}+n-V},t\right)=
\end{equation}
$$
\exp\left(-\ii\Omega_{v_0}^{\mathrm{f},\Gamma}\left(\left(\vec{\kappa}_{i,j}\right)_{i=1\ldots n,j=1\ldots v_i},\arrs{\vec{q}}{I_{\Gamma}+n-V}\right)t\right)\times
$$
$$
    \int_{\RR{v_0}_+} 
    \exp\left(-\ii\sum_{j=1}^{v_{0}}\Omega_{j}^{\mathrm{0}\Gamma}\left(\left(\vec{\kappa}_{i,j}\right)_{i=1\ldots n,j=1\ldots v_i},\arrs{\vec{q}}{I_{\Gamma}+n-V}\right)s_j\right)\times
    $$
    $$
    \prod_{j=1}^{v_{0}}h\left(t_0+\sum_{j'=1}^{j}s_{j'}-\tau'_{j}\right) ds_j.
    $$
    Here we introduced the commulated energy deffects
    \begin{equation}\label{eq:Main/Omega-f}
    \Omega_{j}^{\mathrm{f}\Gamma}=\sum_{j'=j}^{v_0}    (-\Delta_{0,j'}^{\mathrm{S}\Gamma}).
    \end{equation}
    The following observation is important. 
    \begin{rmk}\label{rmk:Main/MassBound}
    For any values of the arguments \begin{equation}\label{eq:Main/OmegaBound}
        \Omega_{j}^{\mathrm{f}\Gamma}\left(\left(\vec{\kappa}_{i,j}\right)_{i=1\ldots n,j=1\ldots v_i},\arrs{\vec{q}}{I_{\Gamma}+n-V}\right)\geq M.
    \end{equation}
    \end{rmk}
    \begin{proof}
    It follows from the fact that all lines started at some of vertices $(0,j)$ with $j=1,\ldots,v'$ must end at some other vertex from the same set, because the outgoing state is the vacuum and all external vertices may appear only before $(0,j)$. Let us see some details.
    \par 
        Each $(-\Delta_{0,j'}^{\mathrm{S}\Gamma})$ contains both positive contributions of the incoming lines and negative contributions of the outgoing lines of the vertex $(0,j')$. 
        Each outgoing line should end at some later vertex $(0,j'')$, $j''>j$. In other words, all negative contributions to the right-hand sides of (\ref{eq:Main/Omega-f}]) cancel out. Finally, if for some vertex all positive contributions were also canceled it would mean that this vertex belongs to a disconnected pieces of the diagram which is forbidden. Then by the mass gap assumed we get (\ref{eq:Main/OmegaBound}).
    \end{proof}
   Let us see how it helps to deal with $O_{[h]\Gamma,0}$. As long as $h\in Q_{\frac{M}{V+1}}$, (\ref{eq:Main/O0}) is equivalent to
    \begin{equation}\label{eq:Main/O0-spec}
    O_{[h]\Gamma,0}\left(\arrs{\tau'}{v_0},\left(\vec{\kappa}_{i,j}\right)_{i=1\ldots n,j=1\ldots v_i},\arrs{\vec{q}}{I_{\Gamma}+n-V},t\right)=
    \end{equation}
    $$
    \int\exp\left(-\ii\left(\sum_{j=1}^{v_0}\Delta_j+\Omega_{v_0}^{\mathrm{f}\Gamma}\left(\left(\vec{\kappa}_{i,j}\right)_{i=1\ldots n,j=1\ldots v_i},\arrs{\vec{q}}{I_{\Gamma}+n-V}\right)\right)t\right)\times
    $$
    $$
     \prod_{j=1}^{v_0}
    \frac{\ii \exp \left(\ii\tau'_j\Delta_j\right)\widetilde{h}(\Delta_j) d\Delta_j}
    {-\sum_{j'=j}^{v_0}\Delta_{j'}-\Omega_{j}^{\mathrm{f}\Gamma}\left(\left(\vec{\kappa}_{i,j}\right)_{i=1\ldots n,j=1\ldots v_i},\arrs{\vec{q}}{I_{\Gamma}+n-V}\right)+\ii0}.
    $$
    \begin{rmk}\label{rmk:Main/SpectralInt}
    In principle, this expression involves symbolic integral which should be understood as an evaluation of a distribution 
    $$
    F(\arrs{\Delta}{v_0})= \prod_{j=1}^{v_0}
    \frac{\ii \exp \left(\ii\tau'_j\Delta_j\right)\widetilde{h}(\Delta_j)}
    {-\sum_{j'=j}^{v_0}\Delta_{j'}-\Omega_{j}^{\mathrm{f}\Gamma}\left(\left(\vec{\kappa}_{i,j}\right)_{i=1\ldots n,j=1\ldots v_i},\arrs{\vec{q}}{I_{\Gamma}+n-V}\right)+\ii0}
    $$
    on a test function. But by Remark \ref{rmk:Main/MassBound} and $h\in Q_{\frac{M}{V+1}}$ the denominators are bounded from below,
    \begin{equation}
        \label{eq:Main/TempOuterDenF0}
    \left|
    -\sum_{j'=j}^{v_0}\Delta_{j'}-\Omega_{j}^{\mathrm{0}\Gamma}\left(\left(\vec{\kappa}_{i,j}\right)_{i=1\ldots n,j=1\ldots v_i},\arrs{\vec{q}}{I_{\Gamma}+n-V}\right)\right|\geq
    \end{equation}
    $$ M-v_0\frac{M}{V+1}\geq \frac{M}{V+1}.
    $$
    So, the integral in (\ref{eq:Main/O0-spec})  may be treated literally and $+\ii0$ can be ignored.
\end{rmk}
In the same way we get 
\begin{equation}\label{eq:Main/On-spec}
    O_{[h]\Gamma,n}\left(\arrs{\tau'}{v_n},\left(\vec{\kappa}_{i,j}\right)_{i=1\ldots n,j=1\ldots v_i},\arrs{\vec{q}}{I_{\Gamma}+n-V},t'\right)=
    \end{equation}
    $$
    \int\exp\left(-\ii\left(\sum_{j=1}^{v_n}\Delta_j-\Omega_{v_n}^{\mathrm{f},\Gamma}\left(\left(\vec{\kappa}_{i,j}\right)_{i=1\ldots n,j=1\ldots v_i},\arrs{\vec{q}}{I_{\Gamma}+n-V}\right)\right)t'\right)\times
    $$
    $$
     \prod_{j=1}^{v_n}
    \frac{\ii \exp \left(\ii\tau'_j\Delta_j\right)\widetilde{h}(\Delta_j) d\Delta_j}
    {\sum_{j'=1}^{j'}\Delta_{j'}-\Omega_{j}^{\mathrm{i}\Gamma}\left(\left(\vec{\kappa}_{i,j}\right)_{i=1\ldots n,j=1\ldots v_i},\arrs{\vec{q}}{I_{\Gamma}+n-V}\right)+\ii0},
    $$
    where 
    \begin{equation}
     \label{eq:Main/Omega-i}
    \Omega_{i}^{\mathrm{i}\Gamma}=\sum_{j'=1}^{j}    (\Delta_{0,j'}^{\mathrm{S}\Gamma}). 
    \end{equation}
    
    Similarly to Remark \ref{rmk:Main/MassBound} we have
    \begin{equation}
    \label{eq:Main/TempOuterDenF}
     \left |\sum_{j'=1}^{j'}\Delta_{j'}-\Omega_{j}^{\mathrm{i}\Gamma}\left(\left(\vec{\kappa}_{i,j}\right)_{i=1\ldots n,j=1\ldots v_i},\arrs{\vec{q}}{I_{\Gamma}+n-V}\right)\right|\geq \frac{M}{V+1},  
    \end{equation}
    and in particular the $+\ii0$ symbol can be ignored.
    \begin{lem}\label{lem:Main/TempOuter}
    For For $n,V,\arrs{v}{n},\arrs{\alpha}{n},\Gamma$ as in Theorem \ref{thm:MainW},
    $$O_{[h]\Gamma,i}\in \SMlt{v_i}{3(I_{\Gamma}+n)+1}, \, i=0,n.$$
    \end{lem}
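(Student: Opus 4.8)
The plan is to estimate directly from the spectral representations (\ref{eq:Main/O0-spec}) and (\ref{eq:Main/On-spec}). I would treat the case $i=0$ in detail; the case $i=n$ is identical, using (\ref{eq:Main/On-spec}), the definition (\ref{eq:Main/Omega-i}) of $\Omega^{\mathrm i\Gamma}_j$ and the bound (\ref{eq:Main/TempOuterDenF}) in place of (\ref{eq:Main/O0-spec}), (\ref{eq:Main/Omega-f}) and (\ref{eq:Main/TempOuterDenF0}). Write $\bm z\in\RR{3(I_\Gamma+n)}$ for the collection of all momentum variables of the graph and $t$ for the single remaining time parameter, so that $O_{[h]\Gamma,0}$ is a function of $(\arrs{\tau'}{v_0},\bm z,t)$. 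By Remark \ref{rmk:Framework/Mlt/Alt} it is enough to show that for every $\chi\in\SRR{3(I_\Gamma+n)+1}$ the pointwise product $(\arrs{\tau'}{v_0},\bm z,t)\mapsto\chi(\bm z,t)\,O_{[h]\Gamma,0}(\arrs{\tau'}{v_0},\bm z,t)$ lies in $\SRR{v_0+3(I_\Gamma+n)+1}$, i.e. is Schwartz jointly in all of its variables.

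First I would record the elementary facts that make the integrand of (\ref{eq:Main/O0-spec}) tractable. By $h\in Q_{\frac{M}{V+1}}$ (see Remark \ref{rmk:Main/SpectralInt}) the function $\widetilde h$ occurring there is smooth and, together with all its derivatives, vanishes identically outside $[-\frac{M}{V+1},\frac{M}{V+1}]$, so the $\Delta$-integration runs over the fixed compact box $B=[-\frac{M}{V+1},\frac{M}{V+1}]^{v_0}$. Each $\Omega^{\mathrm f\Gamma}_j$ and $\Omega^{\mathrm f\Gamma}_{v_0}$ is a finite sum of $\pm\omega_0$ composed with linear maps ($\omega_0\in\Mlt{3}$ by Definition \ref{def:dispRel}), hence a multiplier, so $e^{-\ii\Omega^{\mathrm f\Gamma}_{v_0}(\bm z)t}$ is a multiplier in $(\bm z,t)$; likewise each denominator $D_j=-\sum_{j'=j}^{v_0}\Delta_{j'}-\Omega^{\mathrm f\Gamma}_j(\bm z)$ is affine in $\Delta$ and a multiplier in $\bm z$, and by (\ref{eq:Main/TempOuterDenF0}) satisfies $|D_j|\ge\frac{M}{V+1}$ uniformly on $B\times\RR{3(I_\Gamma+n)}$. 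Since $\partial_{\Delta_j}D_{j''}\in\{0,-1\}$ and $\partial_{\bm z}D_{j''}$ is polynomially bounded, $1/D_j$ is smooth on $B\times\RR{3(I_\Gamma+n)}$ with all $(\Delta,\bm z)$-derivatives bounded by a polynomial in $\bm z$, uniformly for $\Delta\in B$. In particular (\ref{eq:Main/O0-spec}) is an absolutely convergent integral of a smooth integrand over the fixed compact set $B$, so $O_{[h]\Gamma,0}$ is smooth and differentiation under the integral sign is legitimate; differentiating in $(\bm z,t)$ only produces further polynomially bounded factors, and differentiating in $\tau'_j$ only brings down bounded powers of $\Delta_j$.

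The one genuinely nontrivial point — and the main obstacle — is to gain Schwartz decay in the $\tau'_j$, since these enter (\ref{eq:Main/O0-spec}) only through the unit-modulus phases $e^{\ii(\tau'_j-t)\Delta_j}$. I would obtain it by repeated integration by parts in $\Delta_j$: writing $e^{\ii(\tau'_j-t)\Delta_j}=\big(\ii(\tau'_j-t)\big)^{-1}\partial_{\Delta_j}e^{\ii(\tau'_j-t)\Delta_j}$ and integrating by parts $m_j$ times — with no boundary contribution because $\widetilde h$ vanishes outside $B$ — one gains the factor $(\tau'_j-t)^{-m_j}$ at the cost of $\partial^{m_j}_{\Delta_j}$ acting on $\widetilde h(\Delta_j)\prod_{j''\le j}D_{j''}^{-1}$; by the bounds above and the Leibniz rule the resulting $\Delta$-integrand is again a bounded function supported in $B$ times a function polynomially bounded in $(\bm z,t)$. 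Doing this for each $j$ (and keeping the trivial bound where $|\tau'_j-t|\le1$), then multiplying by $\chi$, yields for every multi-index $\alpha$ and all $\arrs{m}{v_0}\in\NN_0$ an estimate of the form
\begin{equation*}
\big|\partial^{\alpha}_{(\tau',\bm z,t)}\big(\chi\,O_{[h]\Gamma,0}\big)(\arrs{\tau'}{v_0},\bm z,t)\big|\;\le\;C_{\alpha,m}\;\prod_{j=1}^{v_0}\big(1+|\tau'_j-t|\big)^{-m_j}\;\big|\Phi_{\alpha}(\bm z,t)\big|,
\end{equation*}
where $\Phi_\alpha\in\SRR{3(I_\Gamma+n)+1}$ is built from finitely many derivatives of $\chi$ times polynomially bounded functions. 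Since $(1+|\tau'_j|)\le(1+|\tau'_j-t|)(1+|t|)$ by (\ref{eq:xy<x}) and $\Phi_\alpha$ decays faster than any power of $(1+|t|)$, taking the $m_j$ large enough converts this into decay faster than any power in all of $(\arrs{\tau'}{v_0},\bm z,t)$; hence $\chi\,O_{[h]\Gamma,0}\in\SRR{v_0+3(I_\Gamma+n)+1}$ and therefore, by Remark \ref{rmk:Framework/Mlt/Alt}, $O_{[h]\Gamma,0}\in\SMlt{v_0}{3(I_\Gamma+n)+1}$. Everything else is the routine ``Schwartz $\times$ multiplier $=$ Schwartz'' bookkeeping from Section \ref{Framework}.
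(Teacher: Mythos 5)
Your proof is correct and follows essentially the same route as the paper: both rest on the spectral representations (\ref{eq:Main/O0-spec})--(\ref{eq:Main/On-spec}), the compact support of $\widetilde h$ guaranteed by $h\in Q_{\frac{M}{V+1}}$, and the uniform lower bounds (\ref{eq:Main/TempOuterDenF0}) and (\ref{eq:Main/TempOuterDenF}) on the denominators. The only difference is presentational: the paper identifies $O_{[h]\Gamma,i}$ as the partial Fourier transform of $\widetilde O_{[h]\Gamma,i}\in\SMlt{v_i}{3(I_{\Gamma}+n)+1}$ and invokes the invariance of that space under the Fourier transform in the Schwartz-class variables, whereas you re-derive exactly that invariance by hand through repeated integration by parts in the $\Delta_j$.
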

\begin{proof}
By (\ref{eq:Main/O0-spec}) and (\ref{eq:Main/On-spec})  for $i=0,n$ $O_{[h]\Gamma,i}$ is a Fourier transform of $\widetilde{O}_{[h]\Gamma,i}$, 
$$\widetilde{O}_{[h]\Gamma,0}\left(\arrs{\Delta}{v_0},\left(\vec{\kappa}_{i,j}\right)_{i=1\ldots n,j=1\ldots v_i},\arrs{\vec{q}}{I_{\Gamma}+n-V},t\right)=$$
$$
\exp\left(-\ii\left(\sum_{j=1}^{v_0}\Delta_j+\Omega_{v_0}^{\mathrm{f}\Gamma}\left(\left(\vec{\kappa}_{i,j}\right)_{i=1\ldots n,j=1\ldots v_i},\arrs{\vec{q}}{I_{\Gamma}+n-V}\right)\right)t\right)\times
    $$
    $$
     \prod_{j=1}^{v_0}
    \frac{\ii \widetilde{h}(\Delta_j)}
    {-\sum_{j'=j}^{v_0}\Delta_{j'}-\Omega_{j}^{\mathrm{f}\Gamma}\left(\left(\vec{\kappa}_{i,j}\right)_{i=1\ldots n,j=1\ldots v_i},\arrs{\vec{q}}{I_{\Gamma}+n-V}\right)+\ii0}
    $$
and 

$$\widetilde{O}_{[h]\Gamma,n}\left(\arrs{\Delta}{v_n},\left(\vec{\kappa}_{i,j}\right)_{i=1\ldots n,j=1\ldots v_i},\arrs{\vec{q}}{I_{\Gamma}+n-V},t'\right)=$$
$$
\exp\left(-\ii\left(\sum_{j=1}^{v_n}\Delta_j-\Omega_{v_n}^{\mathrm{f},\Gamma}\left(\left(\vec{\kappa}_{i,j}\right)_{i=1\ldots n,j=1\ldots v_i},\arrs{\vec{q}}{I_{\Gamma}+n-V}\right)\right)t'\right)\times
    $$
    $$
     \prod_{j=1}^{v_n}
    \frac{\ii\widetilde{h}(\Delta_j) }
    {\sum_{j'=1}^{j'}\Delta_{j'}-\Omega_{j}^{\mathrm{i}\Gamma}\left(\left(\vec{\kappa}_{i,j}\right)_{i=1\ldots n,j=1\ldots v_i},\arrs{\vec{q}}{I_{\Gamma}+n-V}\right)+\ii0},
    $$
 where we used the accumulated energy defects (\ref{eq:Main/Omega-f}) and  
(\ref{eq:Main/Omega-i}). Taking into account (\ref{eq:Main/TempOuterDenF0}) and (\ref{eq:Main/TempOuterDenF}) we get
    $$\widetilde{O}_{[h]\Gamma,i}\in \SMlt{v_i}{3(I_{\Gamma}+n)+1}, \, i=0,n.$$
    Thus the same holds for its Fourier transform.
\end{proof}
\begin{rmk}
It is worth noting that the argument above is applicable only because both initial and final states are the vacuum. In particular, it does not work for the adiabatic limit of scattering amplitudes.
\end{rmk}
\subsection{Proof of Theorem \ref{thm:MainW}}
\begin{proof}
We have only to sum up the results of the two previous subsections. 
Indeed, applying Corollary \ref{=>:Main/TempI} and Lemma \ref{lem:Main/TempOuter} to the form (\ref{eq:Main/Temp-spec}), we have 
\begin{equation*}
\left(\WightTf{n}{\alpha}{\arrsM{v}{0}{n}}\circ \mathcal{M}_{[h]}^{V,n}\right)[f](\arrs{t}{n})=    \sum_{\Gamma\in\mathfrak{G}_{\arrs{\alpha}{n}}^{\arrsM{v}{0}{n}}}  \int 
\end{equation*}

$$N^{[h]}_{\Gamma}(\left(\tau'_{i,j}\right)_{i=0\ldots n, j=1\ldots v_i},\left(\vec{\kappa}_{i,j}\right)_{i=1\ldots n,j=1\ldots v_n},\arrs{\vec{q}}{I_{\Gamma}+n-V},\arrs{t}{n})\times$$
$$
f\left(\arrs{\vec{p}_{\Gamma}(\left(\vec{\kappa}_{i,j}\right)_{i=1\ldots n,j=1\ldots v_i},\arrs{\vec{q}}{I_{\Gamma}+n-V})}{n},\left(\tau'_{i,j},\vec{\kappa}_{i,j}\right)_{i=1\ldots n,j=1\ldots v_n}\right)\times
$$
$$
\left(\prod_{i=0}^{n} \prod_{j=1}^{v_i} d\tau'_{i,j}d\kappa_{i,j}\right)d^{3(I_{\Gamma}+n-V)}\arrs{\vec{q}}{I_{\Gamma}+n-V},
$$
where 
$$
N^{[h]}_{\Gamma}\in\SMltT{V+3(I_{\Gamma}+n)}{n},
$$
and $\vec{p}_{\Gamma}$ is a linear function from Lemma \ref{lem:Main/spatialIR}.
Then the necessary restriction is defined by 
$$
\restrictH{n+1,\ldots,n+4V}{1,\ldots,4V}(\WightmanRnaV{n}{\alpha}{V}\circ \mathcal{M}_{[h]}^{1,0})[f]((\tau'_{i,j},\kappa_{i,j})_{i=1\ldots n, j=1\ldots v_i},\arrs{t}{n})=    \sum_{\Gamma\in\mathfrak{G}_{\arrs{\alpha}{n}}^{\arrsM{v}{0}{n}}}  \int 
$$
$$N^{[h]}_{\Gamma}(\left(\tau'_{i,j}\right)_{i=0\ldots n, j=1\ldots v_i},\left(\vec{\kappa}_{i,j}\right)_{i=1\ldots n,j=1\ldots v_n},\arrs{\vec{q}}{I_{\Gamma}+n-V},\arrs{t}{n})\times$$
$$
f\left(\arrs{\vec{p}_{\Gamma}(\left(\vec{\kappa}_{i,j}\right)_{i=1\ldots n,j=1\ldots v_i},\arrs{\vec{q}}{I_{\Gamma}+n-V})}{n},\left(\tau'_{i,j},\vec{\kappa}_{i,j}\right)_{i=1\ldots n,j=1\ldots v_n}\right)\times
$$
$$d^{3(I_{\Gamma}+n-V)}\arrs{\vec{q}}{I_{\Gamma}+n-V},
$$

\end{proof}

\appendix 
\section{Feynman rules for the weak adiabatic limit}\label{appFR}Here we list the different Feynman rules that can be obtained from Corollary \ref{=>:HpQFT/FR/W-total} in the weak adiabatic limit. We use more physically natural form-factors $\mathcal{F} $ (see \ref{eq:HpQFT/AdmTh/hI-timeloc}) instead of $F$ to make a connection with the standard Feynman rules more explicit. For example, the usual $\phi^n$ interaction would correspond to constant factors $\mathcal{F}$. 
\par 
The first two versions (Remarks \ref{rmk:FR/MT-ord}, \ref{rmk:FR/ME-ord}) are generalizations of the results of \cite{BahnsPhD}, the other two (Remarks \ref{rmk:FR/MT-unord},\ref{rmk:FR/ME-unord} ) appeared in \cite{PhDThesis}
\begin{table}
    \centering
       \begin{tabular}{|m{60pt}|m{90pt}|m{150pt}|}
    \hline 
    \textbf{Element} 
    &
\textbf{Figure}
    &
    \textbf{Factor}
    \\
    \hline
    Internal vertex
    &

       \begin{tikzpicture}
       \node[] at (-1.2,1.6) 
    {$\vec{p}'_1$};

       \draw[black, thick] (-1.2,1.2)--(0,0);
       \draw[black, thick,<-] (-1.2,1.4)--(-0.8,1.0);
       
           \node[] at (0.5,1.4) 
    {$\vec{p}_1$};

       \draw[black, thick] (1,1)--(0,0);
       \draw[black, thick,->] (1,1.4)--(0.1,0.6);
         \node[] at (0.6,-1.2) 
    {$\vec{p}_l$};

       \draw[black, thick] (1,-1)--(0,0);
       \draw[black, thick,->] (0.9,-1.2)--(0.3,-0.6);
\filldraw[black] (0,0) circle (2pt) node[anchor=west] {$\tau$};
 \node[] at (-1,0.2) 
    {\Huge $\vdots$};
    \node[] at (1,0.2) 
    {\Huge $\vdots$};

     \node[] at (-1,-1.2) 
    {$\vec{p}'_{l'}$};

       \draw[black, thick] (-1.5,-1)--(0,0);
       \draw[black, thick,<-] (-1.4,-1.2)--(-0.8,-0.8);
\end{tikzpicture}  & $$\mathcal{F}_{(l',l)}(\arrs{\vec{p}'}{l'},\arrs{\vec{p}}{l})\cdot$$
$$\delta^{(3)}\left(\sum_{j=1}^{l'}\vec{p}'_j-\sum_{j=1}^{l}\vec{p}_j\right)\cdot$$
$$
\exp\left(\ii \left(\sum_{j=1}^{l'}\omega_0(\vec{p}'_j)-\sum_{j=1}^{l}\omega_0(\vec{p}_j)\right)\tau\right)
$$
\\
         \hline 
         Internal line
         & 
         \begin{tikzpicture}
        \node[] at (1,0.8) 
    {$\vec{k}$};
         \filldraw[black] (0,0)  circle (2pt);
         \filldraw[black] (2,0)  circle (2pt);
         \draw[black, thick] (0,0)--(2,0);
         \draw[black, thick,->] (0.5,0.5)--(1.5,0.5);
         \end{tikzpicture}
         &
         $$ \frac{1}{(2\pi)^3 2\omega_0(\vec{k})}$$
         \\
         \hline
        External vertex
         &
        \begin{tikzpicture}
        \node[] at (0.7,0.2) 
    {$\vec{p}$};
        \node[] at (0.8,-1.3) 
    {$i$};
         \draw[black, thick] (0,0)--(1,-1);
         \draw[black, thick,<-] (0.2,0.3)--(.8,-.2);
         \filldraw[black,thick,fill=white] (1,-1)  circle (2pt);
        \node[] at (1.7,-0.4) 
    {or};
          \begin{scope}[shift={(0.5,0)}]
         \node[] at (1.9,0.2)
    {$\vec{p}$};
        \node[] at (1.8,-1.3) 
    {$i$};
         \draw[black, thick] (1.5,-1)--(2.5,0);
         \draw[black, thick,<-] (1.7,-0.3)--(2.3,.2);
         \filldraw[black,thick,fill=white] (1.5,-1)  circle (2pt);
         \end{scope}
        \end{tikzpicture}
        &
        \begin{center}
            $e^{\ii\omega_0(\vec{p})t_i}\delta_{\alpha_i,+}\delta^{(3)}(\vec{k}_j-\vec{p})$ or $e^{-\ii\omega_0(\vec{p})t_i}\delta_{\alpha_i,-}\delta^{(3)}(\vec{k}_j+\vec{p})$
        \end{center}

         \\
         \hline
    \end{tabular}

    \caption{Ordered Feynman rules in the adiabatic limit, time-momentum presentation}
    \label{tab:FRules-TMAd}
\end{table}

\begin{table}
    \centering
    \begin{tabular}{|m{60pt}|m{90pt}|m{150pt}|}
    \hline 
    \textbf{Element} 
    &
\textbf{Figure}
    &
    \textbf{Factor}
    \\
    \hline
    Internal vertex
    &

       \begin{tikzpicture}
       \node[] at (-1.2,1.6) 
    {$\vec{p}'_1$};

       \draw[black, thick] (-1.2,1.2)--(0,0);
       \draw[black, thick,<-] (-1.2,1.4)--(-0.8,1.0);
       
           \node[] at (0.5,1.4) 
    {$\vec{p}_1$};

       \draw[black, thick] (1,1)--(0,0);
       \draw[black, thick,->] (1,1.4)--(0.1,0.6);
         \node[] at (0.6,-1.2) 
    {$\vec{p}_l$};

       \draw[black, thick] (1,-1)--(0,0);
       \draw[black, thick,->] (0.9,-1.2)--(0.3,-0.6);
\filldraw[black] (0,0) circle (2pt) node[anchor=west] {$\tau$};
 \node[] at (-1,0.2) 
    {\Huge $\vdots$};
    \node[] at (1,0.2) 
    {\Huge $\vdots$};

     \node[] at (-1,-1.2) 
    {$\vec{p}'_{l'}$};

       \draw[black, thick] (-1.5,-1)--(0,0);
       \draw[black, thick,<-] (-1.4,-1.2)--(-0.8,-0.8);
\end{tikzpicture}  & $$\mathcal{F}_{(l',l)}(\arrs{\vec{p}'}{l'},\arrs{\vec{p}}{l})\cdot$$
$$\delta^{(3)}\left(\sum_{j=1}^{l'}\vec{p}'_j-\sum_{j=1}^{l}\vec{p}_j\right)\cdot$$
\\
         \hline 
         Internal line
         & 
         \begin{tikzpicture}
        \node[] at (1,0.8) 
    {$\vec{k}$};
         \filldraw[black] (0,0)  circle (2pt);
         \filldraw[black] (2,0)  circle (2pt);
         \draw[black, thick] (0,0)--(2,0);
         \draw[black, thick,->] (0.5,0.5)--(1.5,0.5);
         \end{tikzpicture}
         &
         $$ \frac{1}{(2\pi)^3 2\omega_0(\vec{k})}$$
         \\
         \hline
        External vertex
         &
        \begin{tikzpicture}
        \node[] at (0.7,0.2) 
    {$\vec{p}$};
        \node[] at (0.8,-1.3) 
    {$i$};
         \draw[black, thick] (0,0)--(1,-1);
         \draw[black, thick,<-] (0.2,0.3)--(.8,-.2);
         \filldraw[black,thick,fill=white] (1,-1)  circle (2pt);
        \node[] at (1.7,-0.4) 
    {or};
          \begin{scope}[shift={(0.5,0)}]
         \node[] at (1.9,0.2)
    {$\vec{p}$};
        \node[] at (1.8,-1.3) 
    {$i$};
         \draw[black, thick] (1.5,-1)--(2.5,0);
         \draw[black, thick,<-] (1.7,-0.3)--(2.3,.2);
         \filldraw[black,thick,fill=white] (1.5,-1)  circle (2pt);
         \end{scope}
        \end{tikzpicture}
        &
        \begin{center}
            $\delta_{\alpha_i,+}\delta^{(3)}(\vec{k}_j-\vec{p})$ or $\delta_{\alpha_i,+}\delta^{(3)}(\vec{k}_j+\vec{p})$
        \end{center}

         \\
         \hline
    \end{tabular}

    \caption{Ordered Feynman rules in the adiabatic limit, energy-momentum presentation}
    \label{tab:FRules-EMAd}
\end{table}

\begin{table}
    \centering
    \begin{tabular}{|m{80pt}|m{80pt}|m{120pt}|}
    \hline 
    \textbf{Element} 
    &
\textbf{Figure}
    &
    \textbf{Expression}
    \\
    \hline
    Internal vertex
    &
       \begin{tikzpicture}
       \node[] at (-1.2,1.6) 
    {$\vec{k}_1$};
    \node[] at (-0.2,0.4) 
    {$\alpha_1$};
       \draw[black, thick] (-1,1)--(0,0);
       \draw[black, thick,->] (-1.2,1.4)--(-1,1.2);
           \node[] at (0.1,1.4) 
    {$\vec{k}_2$};
    \node[] at (0.4,0.4) 
    {$\alpha_2$};
       \draw[black, thick] (0.2,1)--(0,0);
       \draw[black, thick,->] (0.4,1.5)--(0.3,1);
         \node[] at (0.4,-1.2) 
    {$\vec{k}_3$};
    \node[] at (0.5,-0.4) 
    {$\alpha_3$};
       \draw[black, thick] (0.5,-1)--(0,0);
       \draw[black, thick,->] (0.8,-1.2)--(0.7,-1);
\filldraw[black] (0,0) circle (2pt) node[anchor=west] {$\tau$};
 \node[] at (-0.3,-0.7) 
    {\huge $\ldots$};
     \node[] at (-1,-1.2) 
    {$\vec{k}_n$};
    \node[] at (-0.6,-0.2) 
    {$\alpha_n$};
       \draw[black, thick] (-1.5,-1)--(0,0);
       \draw[black, thick,->] (-1.6,-1.2)--(-1.3,-1);
\end{tikzpicture}  & $$\mathcal{F}^{\alpha_1,\ldots,\alpha_n}(\vec{k}_1,\ldots,\vec{k}_n)\cdot$$
$$\delta(\vec{k}_1+\cdots+\vec{k}_n)$$ \\
         \hline 
         Internal line
         & 
         \begin{tikzpicture}
         \node[] at (1,0.8) 
    {$\vec{k}$};
    \node[] at (0,0.2) 
    {$\alpha$};
    \node[] at (2,0.2) 
    {$\beta$};
        \node[] at (0,-0.2) 
    {$t$};
    \node[] at (2,-0.2) 
    {$t'$};
         \filldraw[black] (0,0)  circle (2pt);
         \filldraw[black] (2,0)  circle (2pt);
         \draw[black, thick] (0,0)--(2,0);
         \draw[black, thick,->] (0.5,0.5)--(1.5,0.5);
         \end{tikzpicture}
         &
         $$\delta_{\beta,-\gamma} \theta(\beta(t-t'))\frac{e^{-i\omega_0({\vec{k}})(t-t')}}{2\omega_0({\vec{k}})(2\pi)^3}$$
         \\
         \hline
        External vertex
         &
      \begin{tikzpicture}
        \node[] at (0.7,0.2) 
    {$\vec{p}$};
        \node[] at (1,-1.3) 
    {$i$};
         \draw[black, thick] (0,0)--(1,-1);
         \draw[black, thick,<-] (0.2,0.3)--(.8,-.2);
         \filldraw[black,thick,fill=white] (1,-1)  circle (2pt);
    \end{tikzpicture}
    &
    $$\delta^{(3)}(\vec{k}_j-\alpha_j\vec{p})$$
         \\
         \hline
    \end{tabular}

    \caption{Unordered Feynman rules in the time-momentum space in the adiabatic limit}
    \label{tab:FRules-TM-unord}
\end{table}

\begin{table}
    \centering
    \begin{tabular}{|m{70pt}|m{80pt}|m{140pt}|}
        \hline 
    \textbf{Element} 
    &
\textbf{Figure}
    &
    \textbf{Factor}
    \\

    \hline
    Internal vertex
    &
       \begin{tikzpicture}
       \node[] at (-1.2,1.6) 
    {$\vec{k}_1,\omega_1$};
    \node[] at (-0.2,0.4) 
    {$\alpha_1$};
       \draw[black, thick] (-1,1)--(0,0);
       \draw[black, thick,->] (-1.2,1.4)--(-1,1.2);
           \node[] at (0.1,1.8) 
    {$\vec{k}_2,\omega_2$};
    \node[] at (0.4,0.4) 
    {$\alpha_2$};
       \draw[black, thick] (0.2,1)--(0,0);
       \draw[black, thick,->] (0.4,1.5)--(0.3,1);
         \node[] at (0.4,-1.2) 
    {$\vec{k}_3,\omega_3$};
    \node[] at (0.5,-0.4) 
    {$\alpha_3$};
       \draw[black, thick] (0.5,-1)--(0,0);
       \draw[black, thick,->] (0.8,-1.2)--(0.7,-1);
\filldraw[black] (0,0) circle (2pt);
 \node[] at (-0.3,-0.7) 
    {\huge $\ldots$};
     \node[] at (-1,-1.2) 
    {$\vec{k}_n,\omega_n$};
    \node[] at (-0.6,-0.2) 
    {$\alpha_n$};
       \draw[black, thick] (-1.5,-1)--(0,0);
       \draw[black, thick,->] (-1.6,-1.2)--(-1.3,-1);
\filldraw[black] (0,0) circle (2pt) node[anchor=west] {$\tau$};
\end{tikzpicture}  &  $$-i(2\pi)^4 F^{\beta_1,\ldots,\beta_m}( \vec{k}_{1},\ldots ,\vec{k}_{m})$$
$$\delta(\omega'_1+\cdots +\omega'_n)\delta^{(3)}(\vec{k}_1+\cdots \vec{k}_m),$$\\
         \hline 
         Line
         &
         \begin{tikzpicture}
         \node[] at (1,0.8) 
    {$\vec{k},\omega$};
    \node[] at (0,0.2) 
    {$\alpha$};
    \node[] at (2,0.2) 
    {$\beta$};
         \filldraw[black] (0,0)  circle (2pt);
         \filldraw[black] (2,0)  circle (2pt);
         \draw[black, thick] (0,0)--(2,0);
         \draw[black, thick,->] (0.5,0.5)--(1.5,0.5);
         \end{tikzpicture}
         &
         $$\delta_{\alpha,-\beta}\cdot$$
         $$\frac{-i \alpha}{2\omega_0({\vec{p}})(2\pi)^4(\omega+\alpha (\omega_0({\vec{p}})-i0))}$$
         \\
         \hline
         External vertex
         &
        \begin{tikzpicture}
        \node[] at (0.7,0.2) 
    {$\vec{p},\omega$};
        \node[] at (0.8,-1.3) 
    {$i$};
         \draw[black, thick] (0,0)--(1,-1);
         \draw[black, thick,<-] (0.2,0.3)--(.8,-.2);
         \filldraw[black,thick,fill=white] (1,-1)  circle (2pt);
        \end{tikzpicture}
        &
        \begin{center}
            $$\delta^{(3)}(\vec{k}_i-\alpha\vec{p})\delta(\omega_i-\alpha_i\omega)$$    \end{center}

         \\
         \hline
    \end{tabular}

    \caption{Unordered Feynman rules in the energy-momentum space in the adiabatic limit}
    \label{tab:FRules-EM}
\end{table}

\begin{rmk} \label{rmk:FR/MT-ord}
For $n\in\NN_0$, $f\in\SRR{4n}$ $\arrs{\alpha}{n}\in\{+,-\}$ and a formal parameter $g$ the symbolic expression for 
$\GreenUnagAd{n}{\alpha}{g}[f]$ may be computed by the following Feynman rules:
\begin{itemize}
 \item The relevant are all partially ordered Feynman graphs $(\Gamma,\prec_{\Gamma})$ with $V$ non enumerated internal vertices $\bullet$, $j=1,\ldots, v$ and $n$ external vertices $\circ_j$, $j=1,\ldots,n$ and no vacuum corrections;
    \item To each line a momentum flux, directed from earlier to later vertex is assigned;
    \item To each internal vertex is assigned a timestamp (a variable in $\RR{}$); 
    \item For convenience with the external vertex $\circ_i$ we associate the timestamp $t_i$ and the momentum $\vec{p}_i$ which are treated as free variables;
    \item Factors corresponding to elements of the diagrams are presented in Table \ref{tab:FRules-TMAd};
    \item The overall factor is $f((t_i,\vec{k}_i)_{i=1\ldots n})$;
    \item The integration region is bounded by the condition that timestamps corresponding to earlier vertices (internal or external) have smaller values than the ones corresponding to the later.
\end{itemize}

\end{rmk}

For practice it is often more convenient to work with energy and momentum presentation rather than with the mixed one as above. We define
$$
\GreenEUnagAd{n}{\alpha}{g}\in\SpRR{4n}
$$
by
$$
\GreenUnagAd{n}{\alpha}{g}((\vec{p}_i,t_i)_{i=1\ldots n})=\int \GreenEUnagAd{n}{\alpha}{g}((\vec{p}_i,\omega_i)_{i=1\ldots n})e^{-\ii\sum_{i=1}^{n}\omega_i t_i}d^{n}\arrs{\omega}{n}.
$$
Then it is convenient to compute the integrals over the timestamps of both internal and external timestamps similarly to treatment of the outer vertices in Subsubsection \ref{Main/Temp} following result. For that we should use total rather than partial order.
\begin{rmk}\label{rmk:FR/ME-ord}
For $n\in\NN_0$, $f\in\SRR{4n}$ $\arrs{\alpha}{n}\in\{+,-\}$ and a formal parameter $g$ the symbolic expression for 
$\GreenUnagAd{n}{\alpha}{g}[f]$ may be computed by the following Feynman rules:
\begin{itemize}
 \item The relevant are all totally ordered Feynman graphs $(\Gamma,\prec_{\Gamma})$ with $V$ non enumerated internal vertices $\bullet$, $j=1,\ldots, v$ and $n$ external vertices $\circ_j$, $j=1,\ldots,n$ and no vacuum corrections;
    \item To each line a momentum flux and energy, directed from the earlier to the later vertex is assigned;
    \item For convenience with the external vertex $\circ_i$ we associate the energy $\omega_i$ and the momentum $\vec{p}_i$ which are treated as free variables;
    \item Factors corresponding to elements of the diagrams are presented in Table \ref{tab:FRules-EMAd};
    \item The overall factor is $f((\omega_i,\vec{k}_i)_{i=1\ldots n})\frac{\delta(\sum_{\diamond} \Omega_{\diamond})}{\prod'_{\diamond} \sum_{\diamond'\prec\diamond }\Omega_{\diamond'}}$, where the product goes over all (internal and external) vertices $\diamond$ except for the last one, sum goes over all vertices $\diamond'$ preceding $\diamond$ and the on-shell energy defects are defined as
    $\Omega_{\diamond}$ are computed according to Table \ref{rmk:FR/MT-unord};
    \item Alternatively the vertex factor can be replaced by
    $f((\omega_i,\vec{k}_i)_{i=1\ldots n})\frac{\delta(\sum_{\diamond} \Omega_{\diamond})}{\prod'_{\diamond} \sum_{-\diamond'\succ\diamond }\Omega_{\diamond'}+i0}$, where this time the product goes over all (internal and external) vertices except for the first one;
\end{itemize}
\end{rmk}
The two versions of the Feynman rules are quite effective for computation of the contribution of a small graph. But they are too much different from the traditional formulation in the physical literature which makes even qualitative comparison with the local theories more difficult. In addition to that the standard techniques to manage large classes of Feynman graphs, such as resummations of the one-particle corrections,  are not available for the formulation of Remark \ref{rmk:FR/ME-ord} where the total ordering makes all the pieces entangled. 
\par 
To avoid that we replace summations over ordering with summations over orientations of lines. Moreover, we formally allow lines that are incoming or outgoing at both ends. For that we modify the propagator so that it vanishes whenever the order of timestamps and orientations of the lines are not consistent. Note that in this formulation the direction of the energy and momentum flows is not fixed anymore.

\begin{rmk}\label{rmk:FR/MT-unord}
For $n\in\NN_0$, $f\in\SRR{4n}$ $\arrs{\alpha}{n}\in\{+,-\}$ and a formal parameter $g$ the symbolic expression for 
$\GreenUnagAd{n}{\alpha}{g}[f]$ may be computed by the following Feynman rules:
\begin{itemize}
 \item The relevant are all (unordered) Feynman graphs $\Gamma$ with $V$ non enumerated internal vertices $\bullet$,  and $n$ external vertices $\circ_j$, $j=1,\ldots,n$, and no vacuum corrections;
    \item To each line a momentum flux is assigned;
    \item To each end of each line is assigned a time-orientation (i.e. a sign $+$ or $-$);
    \item To each internal vertex is assigned a timestamp (a variable in $\RR{}$); 
    \item For convenience with the external vertex $\circ_i$ we associate the timestamp $t_i$ and the momentum $\vec{p}_i$ which are treated as free variables;
    \item Factors corresponding to elements of the diagrams are presented in Table \ref{tab:FRules-EM};
    \item The overall factor is $f((t_i,\vec{k}_i)_{i=1\ldots n})$;
    \item The integration region is bounded by the condition that timestamps corresponding to earlier vertices (internal or external) have smaller values than the ones corresponding to the later.
\end{itemize}
\end{rmk}

After the Fourier transform we get
\begin{rmk} \label{rmk:FR/ME-unord}For $n\in\NN_0$, $f\in\SRR{4n}$ $\arrs{\alpha}{n}\in\{+,-\}$ and a formal parameter $g$ the symbolic expression for 
$\GreenEUnagAd{n}{\alpha}{g}[f]$ may be computed by the following Feynman rules:
\begin{itemize}
 \item The relevant are all (unordered) Feynman graphs $\Gamma$ with $V$ non enumerated internal vertices $\bullet$,  and $n$ external vertices $\circ_j$, $j=1,\ldots,n$, and no vacuum corrections;
    \item To each line an energy and a momentum flux;
    \item To each end of each line is assigned a time-orientation (i.e. a sign $+$ or $-$);
    \item For convenience with the external vertex $\circ_i$ we associate the timestamp $t_i$ and the momentum $\vec{p}_i$ which are treated as free variables;
    \item Factors corresponding to elements of the diagrams are presented in Table \ref{tab:FRules-EM};
    \item The overall factor is $f((t_i,\vec{k}_i)_{i=1\ldots n})$;
    \item The integration region is bounded by the condition that timestamps corresponding to earlier vertices (internal or external) have less values than the ones corresponding to the later.
\end{itemize}
\end{rmk}
One may note that in the local case $$\mathcal{F}_{\alpha_1,\ldots,\alpha_m}(\vec{k}_1,\ldots,\vec{k}_m)=\lambda_m$$
summation over the time-orientation of the propagators may be performed leading to the standard Feynman propagator.
\par 
We refer to \cite{PhDThesis} where the Feynman rules of Remark \ref{rmk:FR/ME-unord} were used to make contact of the time-dependent Hamiltonian perturbation theory with the time-independent one used in \cite{Kossow} to compute the quantum corrections to dispersion relations. It was shown, that as in the conventional QFT the dispersion relation corrections appear as shifts of the pole of the corrected propagator. We postpone the precise interpretation of this result to further publications.


\begin{thebibliography}{99}

\bibitem{NN77}Bogolubov, N. N., A. A. Logunov, A. I. Oksak, and I. T. Todorov. General Principles of Quantum Field Theory. Dordrecht: Springer Netherlands, 1990. \url{https://doi.org/10.1007/978-94-009-0491-0}.
\bibitem{HaagThm}
 Haag, R., ``On quantum field theories" (PDF). Matematisk-fysiske Meddelelser. 29 (1955) 12.
 
\bibitem{DFR}Doplicher, Sergio, Klaus Fredenhagen, and John E. Roberts. “The Quantum Structure of Spacetime at the Planck Scale and Quantum Fields.” Communications in Mathematical Physics 172, no. 1 (1995): 187–220. http://www.springerlink.com/index/W37T04X8018R0801.pdf. \\
Doplicher, Sergio, Klaus Fredenhagen, and John E. Roberts. “Spacetime Quantization Induced by Classical Gravity.” Physics Letters B 331, no. 1–2 (June 1994): 39–44. \url{https://doi.org/10.1016/0370-2693(94)90940-7}.

\bibitem{NekrasovNCQFT}
Douglas, Michael R., and Nikita A. Nekrasov. “Noncommutative Field Theory.” Reviews of Modern Physics 73, no. 4 (November 2001): 977–1029. \url{https://doi.org/10.1103/RevModPhys.73.977}.
\bibitem{Bronstein}
Bronstein, M., “Quantentheorie schwacher Gravitationsfelder”, Phys. Z. Sowjetunion, 9,
140–157 (1936)
\bibitem{Klein}
0. Klein, Kosmos 32, 33 (1954); Helv. Phys. Acta Suppl.
4, 58 (1956); and Pie/s Bohr arid Ihe Development of Physics
(Pergamon Press, London, 1955),p. 96.
\bibitem{Deser}
Deser, S.,``General Relativity and the Divergence Problem in Quantum Field Theory". Rev. of Mod.Phy., vol. 29 (1957), Issue 3, pp. 417-423. \url{https://doi.org/10.1103/RevModPhys.29.417}
\bibitem{BahnsPhD}Bahns, Dorothea. “Perturbative Methods on the Noncommutative Minkowski Space.” Hamnurg University, 2003.

\bibitem{EG76}Epstein, H., and V. Glaser. “Adiabatic Limit in Perturbation Theory,” 1976, 32–112. \url{https://doi.org/10.1007/978-94-010-1490-8_7}.

\bibitem{EG73}Epstein, H., and V. Glaser. “The Role of Locality in Perturbation Theory.” Annales de l’institut Henri Poincaré,  Physique Théorique, 19 (1973): 211–95.


\bibitem{BahnsEtAl}Bahns, D., S. Doplicher, K. Fredenhagen, and G. Piacitelli. “Ultraviolet Finite Quantum Field Theory on Quantum Spacetime.” Communications in Mathematical Physics 237, no. 1 (June 2003): 221–41. \url{https://doi.org/10.1007/s00220-003-0857-x}.

\bibitem{PhDThesis}
A. Bykov. Adiabatic limits and renormalization in quantum spacetime. PhD thesis. University of Rome "Tor Vergata", department of mathematics, A.A. 2020-2021. \url{http://www.mat.uniroma2.it/~dott/Theses/2020/Bykov.pdf}.

\bibitem{PS}Peskin, Michael, and Daniel Scroeder. An Introduction to Quantum Field Theory. Avalon Publishing, 1995.

\bibitem{SimonReed} Reed, Michael, and Barry Simon. Methods of Modern Mathematical Physics, 1: Functional Analysis. Rev. and enl. Ed. New York: Academic Press, 1980.\\
Reed, Michael, and Barry Simon. Methods of Modern Mathematical Physics. 2: Fourier Analysis, Self-Adjointness. Nachdr. San Diego: Acad. Pr, 2007.

\bibitem{NN}Bogolubov, N. N., A. A. Logunov, A. I. Oksak, and I. T. Todorov. General Principles of Quantum Field Theory. Dordrecht: Springer Netherlands, 1990. https://doi.org/10.1007/978-94-009-0491-0.

\bibitem{Lars}Hörmander, Lars. The Analysis of Linear Partial Differential Operators. 1: Distribution Theory and Fourier Analysis. 2. ed., hard Cover. Die Grundlehren Der Mathematischen Wissenschaften in Einzeldarstellungen 256. Berlin Heidelberg: Springer, 1990.

\bibitem{Amman}
Amann, H. Linear and Quasilinear Parabolic Problems. Birkhauser Verlag. Vol. 1: Abstract linear theory. Monographs in Mathematics 89. Basel; Boston; Berlin, 1995. \\ Amann, H. Vector-Valued Distributions and Fourier Multipliers. (informal publication of auxilliary chapters to the above). Zurich, 2003. Available at \url{http://user.math.uzh.ch/amann/files/distributions.pdf}.

\bibitem{Treves}Francois, Treves. Topological Vector Spaces, Distributions and Kernels. 8. print. Pure and Applied Mathematics 25. San Diego: Academic Press, 1994.


\bibitem{pAQFT}Rejzner, Kasia. Perturbative Algebraic Quantum Field Theory. Mathematical Physics Studies. Cham: Springer International Publishing, 2016. https://doi.org/10.1007/978-3-319-25901-7.

\bibitem{EucNCQFT}
Panero, Marco. “Quantum Field Theory in a Non-Commutative Space: Theoretical Predictions and Numerical Results on the Fuzzy Sphere,” Vol. 081. Budapest, Hungary, 2006. https://doi.org/10.3842/SIGMA.2006.081.
\\
Panero, Marco. “Quantum Field Theory in a Non-Commutative Space: Theoretical Predictions and Numerical Results on the Fuzzy Sphere,” Vol. 081. Budapest, Hungary, 2006. https://doi.org/10.3842/SIGMA.2006.081.
\bibitem{BahnsWick}Bahns, Dorothea. “The Ultraviolet Infrared Mixing Problem on the Noncommutative Moyal Space.” arXiv, December 16, 2010. http://arxiv.org/abs/1012.3707.
\bibitem{PAP1} Bykov, A. Strong adiabatic limit in non-locla Hamiltonian perturbative Quantum Field theory. To appear.
\bibitem{GL}Gell-Mann, Murray, and Francis Low. “Bound States in Quantum Field Theory.” Physical Review 84, no. 2 (October 15, 1951): 350–54. \url{https://doi.org/10.1103/PhysRev.84.350.}

\bibitem{Molinari}Molinari, Luca Guido. “Another Proof of Gell-Mann and Low’s Theorem.” Journal of Mathematical Physics 48, no. 5 (May 2007): 052113. https://doi.org/10.1063/1.2740469.

\bibitem{AreaDistanceVolume}Bahns, Dorothea, Sergio Doplicher, Klaus Fredenhagen, and Gherardo Piacitelli. “Quantum Geometry on Quantum Spacetime: Distance, Area and Volume Operators.” Communications in Mathematical Physics 308, no. 3 (December 2011): 567–89. https://doi.org/10.1007/s00220-011-1358-y.

\bibitem{HamQFT}Hayashi, Chushiro. “Hamiltonian Formalism in Non-Local Field Theories.” Progress of Theoretical Physics 10, no. 5 (November 1953): 533–48. https://doi.org/10.1143/PTP.10.533.
\bibitem{Kossow}Hayashi, Chushiro. “Hamiltonian Formalism in Non-Local Field Theories.” Progress of Theoretical Physics 10, no. 5 (November 1953): 533–48. https://doi.org/10.1143/PTP.10.533.

\end{thebibliography}
\end{document}